 \title{Identifiability of Level-1 Species Networks from Gene Tree Quartets
 \thanks{\today \funding{ESA and JAR were partially supported by NSF grant DMS-2051760,  HB by NSF grant DMS-2331660 and MG-L by NIH grant P20GM103395,
 		an NIGMS Institutional Development Award (IDeA) and by the Borsa FSB 2020 from Fundació Ferran Sunyer i Balaguer.
 		Part of this research was performed while MG-L and JAR visited the Institute for Mathematical and Statistical Innovation (IMSI), 
 		which is supported by the National Science Foundation (Grant DMS-1929348)} }
} 
 \author{Elizabeth S. Allman\thanks{University of Alaska Fairbanks (\email{e.allman@alaska.edu}, \email{j.rhodes@alaska.edu} ).}
 \and Hector Ba\~nos\thanks{California State University San Bernadino (\email{hector.banos@csusb.edu}).}
 \and Marina Garrote-Lopez\thanks{Max Planck Institute for Mathematics in the Sciences (\email{marina.garrote@mis.mpg.de}).}
\and John A. Rhodes\footnotemark[2]}
\begin{document}

\maketitle
 	
\begin{abstract}
When hybridization or other forms of lateral gene transfer have occurred, evolutionary relationships of 
species are better represented by phylogenetic networks than by trees. 
While inference of such networks remains challenging, several recently proposed methods are 
based on quartet concordance factors --- the probabilities that a tree relating a 
gene sampled from the species displays the possible 4-taxon relationships. 
Building on earlier results, we investigate what level-1 network features are identifiable from 
concordance factors under the network multispecies coalescent model. We obtain results on both topological 
features of the network, and  numerical parameters, 
uncovering a number of failures of identifiability related to 3-cycles in the network.
\end{abstract}

\begin{keywords} 
Species network, coalescent model, concordance factors, identifiability
\end{keywords}

\begin{MSCcodes}
92D15, 92B10, 	13P25 
\end{MSCcodes}
 	 	
 \section{Introduction}
 	
Statistical inference of phylogenetic networks, showing evolutionary relationships between species when hybridization or 
other horizontal gene transfer has occurred,
 poses substantial theoretical and practical problems. With data in the form of many sequenced and aligned genes, 
 standard phylogenetic 
 methods can be used to infer gene trees. However, due to both horizontal inheritance and the population genetic 
 effect of incomplete lineage sorting, 
 these gene trees reflect the species network topology only indirectly. Extracting the network signal with an 
 acceptable computational time, and even 
 determining what aspects of the network can be inferred under the Network Multispecies Coalescent  (NMSC) 
 model, is challenging.

Several recently-developed network inference methods  utilize summaries of (inferred) gene trees 
through counts of their displayed quartet trees, that is, empirical quartet \emph{concordance factors
($CF$s)}.
SNaQ \cite{Solis-Lemus2016} using pseudolikelihood on these $CF$s to pick an optimal network among those of level 1.
NANUQ \cite{Allman2019}  also uses quartet counts in the level-1 setting, but avoids even pseudolikelihood computations, 
by conducting hypothesis tests for each quartet, followed by a distance-based approach to avoid searching over networks. 
(PhyloNet \cite{Nakhleh2015} similarly  uses pseudolikelihood, though with rooted triple counts and without the level-1 restriction.) 

While these methods strike a balance between thorough statistical analysis and computational effort, a 
complete exploration of 
what level-1 network features are 
identifiable from CFs under the NMSC has yet to be undertaken. First results in this direction  \cite{Solis-Lemus2016} 
showed certain semidirected level-1 network topologies 
were distinguishable from those obtained by dropping a hybrid edge, and that in some cases  numerical 
parameters were  identifiable 
up to a finite number of possibilities, 
i.e., were locally identifiable. Topological identifiability was later investigated \cite{Banos2019}, establishing 
that semidirected  level-1 network topologies are identifiable up to contraction 
of 2- and 3-cycles and 
directions of hybrid edges in 4-cycles, for generic parameters. While these works provide our starting point, we seek to fill in 
unaddressed gaps. (\Cref{app:appeasement} of the supplementary materials gives more detail on 
how this work complements its predecessors, and discusses the claims and arguments 
in \cite{Solis-Lemus2020} for work described in \cite{Solis-Lemus2016}.) 

We rigorously establish what can be identified, and what cannot, from quartet $CF$s under the NMSC.
Our concern here is only the theoretical question of  identifiability. We thus delineate what \emph{might} be consistently inferred by a method 
using quartet counts, 
although particular methods may not be able to do so. 
Our results also imply parameter identifiability results for  
 data types from which quartet counts can be obtained 
(e.g., topological gene trees, or metric gene trees), although for such data it is possible 
that stronger identifiability claims could be established.

\smallskip 

Our main results address identifiability of the full semidirected topology of a binary network, including hybrid edge directions  
(\cref{thm:mainTop}), and the numerical parameters of edge lengths and hybridization (or inheritance) probabilities (\cref{thm:mainNum}). 
One interesting aspect is that the presence of a 3-cycle can generally be detected, but whether the hybrid node of that cycle can be 
identified or not depends on the numerical parameters. The subsets of parameters on which this question has a positive or 
negative answer both have positive measure, and thus neither set can be dismissed as non-generic.

The precise statements of these theorems have exceptions for cycles adjacent to pendant edges. However, these simplify if one has multiple samples 
per taxon. Then the semidirected network topology is generically identifiable except for the presence of 2-cycles and (sometimes) hybrid nodes 
in 3-cycles. If the semidirected network topology with 2-cycles removed is known, then all numerical parameters except those relating to 3-cycles 
and their adjacent edges are generically identifiable. 

\smallskip

Underlying our results are  analyses of algebraic varieties associated with certain small networks using computational algebra software. These lead to algebraic (polynomial equality) tests of quartet $CF$s for different network substructures. However, 3-cycle identifiability results depend on semialgebraic tests (polynomial inequalities). These tests were motivated by
equalities found for related networks, but their construction is not purely computational.

Our identifiability results for numerical parameters are based on explicit rational formulas for parameters in terms of $CF$s,  so if a topological network is known or proposed, one could in principle estimate numerical parameters with them. But while some of these formulas are simple, others are quite complicated,  and should not be expected to provide good estimates from data. These formulas may, however, give useful initial estimates of parameters that could then be refined through optimization, such as with likelihood methods.

\smallskip

\Cref{sec:defs,sec:NMSC} give definitions and earlier results that we use as our starting point. In \Cref{sec:topID} we study topological identifiability of level-1 binary networks from quartet $CF$s, and in \Cref{sec:numID} the identifiability of numerical network parameters from the same information. \Cref{sec:imp} discusses implications for data analysis.

In  the Supplementary Materials, \Cref{app:appeasement} explains how our results complement earlier work, and  \Cref{app:props}  catalogs the computational results for specific networks that underly our arguments.

 	
 \section{Definitions}\label{sec:defs}
 	\subsection{Rooted and unrooted phylogenetic networks}\label{sec::def}
	
A \emph{topological binary rooted phylogenetic network}  $N^ + $  is a finite rooted graph, with all edges directed away from the root, 
whose non-root internal nodes form two classes: \emph{Tree nodes} have indegree 1 and outdegree 2, while 
\emph{hybrid nodes} have indegree 2 and outdegree 1. \emph{Hybrid edges} and \emph{tree edges} are classified 
according to their child nodes. Leaves of the network are bijectively labelled by \emph{taxa} in a set $X$. 
A network is 
\emph{metric} if in addition each tree edge is assigned a positive length, each hybrid edge a non-negative length and a 
positive probability $\gamma$, such that for every pair of hybrid edges $e, e'$ with a common child $\gamma+\gamma'=1$. 
More formal definitions of  phylogenetic networks appear in \cite{Steel2016,Solis-Lemus2016,Banos2019}.

We often depict these networks with their root at the top, referring to edges and nodes as \emph{above} or \emph{below} one another in the natural way.
	
As explained more fully in \cite{Allman2019,Banos2019}, for gene quartet-based methods of  inference a useful form of an unrooted 
network is more subtle than that for a tree, since substructures above the \emph{least stable ancestor} (LSA) of the taxa \cite{Steel2016} 
are undetectable by them. The \emph{topological unrooted phylogenetic network induced from $N^ + $}, is the semidirected network $N=N^-$,  
obtained by  $N^ + $ by deleting all vertices and nodes above the 
LSA, undirecting tree edges, and suppressing the LSA.
Since our concern in this work is the identifiability of unrooted phylogenetic networks, we will often use $N$ rather than the 
more cumbersome $N^-$ to denote them. We refer to $N$ as unrooted or semidirected interchangeably.  Note that $N$ naturally inherits a 
metric structure if $N^+$ has one. 

\Cref{fig::net} shows an example of a network $N^+$ and its semidirected network $N^-$. While in that example all leaves are equidistant from the root of $N^+$, we do not assume that generally.

Tree edges can be further partitioned into \emph{cut} and \emph{noncut} edges, according to whether their deletion results in a graph with 
2 connected components or not. Note that hybrid edges are never cut edges.

 \begin{figure}
	\include{Figures/bigNetwork}
	\caption{(L) A rooted network $N^ + $ on $X$ with root $r = \text{LSA}(X)$, and (R) The unrooted network $N^-$ obtained from $N^ + $.}\label{fig::net}
\end{figure}
	
\smallskip
 
Of particular interest are unrooted networks on four taxa obtained from a larger network by restricting the taxon set. 
Recall that a binary unrooted topological tree on four taxa $a,b,c,d$ is called a \emph{quartet} $ab|cd$ if deletion 
 	of its sole internal edge gives connected components with taxa $\{a, b\}$ and $\{c,d\}$.  When $n \ge 4$, 
 	an $n$-taxon tree \emph{displays} a quartet $ab|cd$ if the induced unrooted tree on the four taxa is $ab|cd$.
 A formal extension of this concept to quartet networks follows.

\begin{defi}	
Let $N^ + $ be a rooted network on $X$, and let $a,b,c,d\in X$. The \emph{induced quartet network} on $a,b,c,d$ is the unrooted network obtained by 	
\begin{enumerate}
			\item retaining only the nodes and edges of $N^ + $ ancestral to at least one of $a,b,c,d$,
			\item suppressing nodes of degree 2, and
			\item unrooting the resulting network.
\end{enumerate}
If $N^ + $ is a metric network, its quartet networks naturally are as well.
\end{defi}
	
An analogous definition for induced quartet networks of $N$ is given in \cite{Banos2019}, which also shows the quartet networks induced from $N^ + $ and $N$ are isomorphic.   \Cref{fig::quartet} shows  some metric quartet networks induced from the networks of \cref{fig::net}.  
\begin{figure}
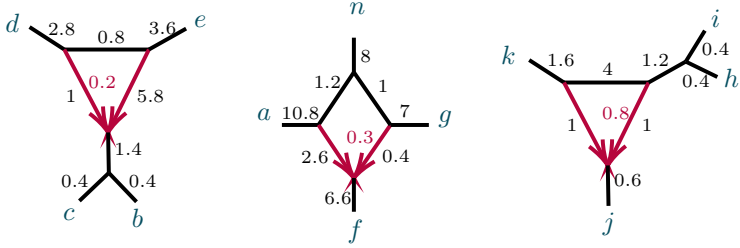

  \begin{center}
  \include{Figures/subNetworks_bigNetworks_v2}
 \end{center}
\caption{Several semidirected quartet  networks induced from the network in \cref{fig::net}.}\label{fig::quartet}
\end{figure}

\subsection{Level-1 networks}
 We restrict our study to the  family of level-1 phylogenetic networks. These have been the focus of many works \cite{Huber2015,HusonRuppScorn,Gusfield2007,Rossello2009,Solis-Lemus2016,Banos2019, Allman2019}, though only a few of these
incorporate the coalescent model that is central here. 

By a \emph{cycle} in either a rooted or unrooted phylogenetic network we mean a set of edges and nodes that form a cycle when all edges are treated as undirected.
 		
\begin{defi}\label{level1}Let $N$ be a (rooted or unrooted) binary topological  network. If no two 
cycles in the undirected graph of $N$ share a node, then $N$ is \emph{level-1}. \
 \end{defi}

In some works  level-1 networks are defined as those in which no cycles share an edge; i.e., cycles are edge-disjoint rather than the stricter vertex-disjoint condition we adopt. However, in our context of binary networks they are equivalent  \cite{Rossello2009}.

 In a level-1 network  a cycle that is composed of $m$ edges, (2 hybrid edges and $m-2$ tree edges) is said to be an $m$-cycle.
More specifically, it is an \emph{$m_k$-cycle} if there are exactly $k$ taxa descended from its unique hybrid node  \cite{Banos2019}.
This terminology can be used for semidirected networks, since `descended from a hybrid node' is unambiguous, regardless of where the network is rooted.

 Let $N$ be an unrooted level-1 network on $X$ with an $m$-cycle $C$. Then $C$ induces a partition of $X$ into $m$ subsets according to the connected components obtained by deleting all edges in the cycle. Elements of this partition are the \emph{blocks} of $C$.  The  \emph{hybrid block of $C$} is the block of taxa descended from the hybrid node in $C$.  If the blocks of $C$ have $n_1,n_2,\dots,n_m$ taxa, then we say $C$ induces a $(n_1,n_2,\dots,n_m)$ partition.
 

\section{The Network Multispecies Coalescent Model and quartet concordance factors}\label{sec:NMSC}
	
 The Network Multispecies Coalescent (NMSC) model \cite{Meng2009} mechanistically describes the formation of gene trees within a species network, as gene lineages are traced backward in time to common ancestors in the edge populations of the network. Under it, gene trees may differ in topology from any displayed trees on the species network.  Given a metric rooted phylogenetic network, the NMSC assigns positive probabilities to all resolved metric gene trees, and, through marginalization, to topological gene trees and induced gene quartet topologies.

  \begin{defi}
 		Let  $N^ + $ be a metric rooted network on a taxon set $X$, and $A$, $B$, $C$, $D$ a  gene sampled from 
 		individuals in species $a,b,c,d\in X$ respectively.  The \emph{(scalar) quartet concordance factor} 
 		$CF_{ab|cd} = CF_{ab|cd} (N^ + )$ is the probability under the NMSC on $N^ + $ that a gene tree displays the quartet $AB|CD$. The 
 		\emph{(vector) quartet concordance factor} $\overline{CF}_{abcd} = \overline{CF}_{abcd}(N^ + )$ is  the triple  
 		$$\overline{CF}_{abcd}=(CF_{ab|cd},CF_{ac|bd},CF_{ad|bc})$$
 		of concordance factors of each possible quartet on the taxa $a,b,c,d$.
\end{defi}

That $CF$s for quartet networks depend only on the semidirected quartet network, was proved in \cite{Banos2019}.
That result implies the following.

\begin{lemma}\label{lem:root}
Under the NMSC on a level-1 network $N^+$ the values of the quartet $CF$s  depend only on the induced semidirected network $N$. 
\end{lemma}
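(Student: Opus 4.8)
The plan is to reduce the statement, one taxon-quadruple at a time, to the fact---established in \cite{Banos2019}---that the $CF$s of a \emph{four-taxon} network depend only on its semidirected form. Fix $a,b,c,d\in X$. Let $\widetilde{Q}^+$ denote the rooted four-taxon network obtained from $N^+$ by performing the first two steps in the definition of the induced quartet network, namely retaining only the nodes and edges ancestral to at least one of $a,b,c,d$ and suppressing the resulting degree-$2$ nodes (summing the lengths of their incident edges), but \emph{not} unrooting. Write $Q$ for the induced (semidirected) quartet network of $N^+$ on $a,b,c,d$, which is the semidirected form of $\widetilde{Q}^+$. I would then chain three facts: (i) $\overline{CF}_{abcd}(N^+)=\overline{CF}_{abcd}(\widetilde{Q}^+)$; (ii) by \cite{Banos2019}, $\overline{CF}_{abcd}(\widetilde{Q}^+)$ depends only on the semidirected network $Q$; and (iii) by \cite{Banos2019}, $Q$ is isomorphic to the quartet network induced from $N=N^-$, hence is determined by $N$. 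Combining these, $\overline{CF}_{abcd}(N^+)$ is a function of $N$ alone, and since $a,b,c,d$ were arbitrary the full collection of quartet $CF$s depends only on $N$.

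Facts (ii) and (iii) are precisely the two cited results, so the content of the proof lies in the marginalization identity (i). To establish it I would argue directly from the NMSC. Under the model one samples a single gene lineage at each of $a,b,c,d$ and traces the four lineages backward through the edge populations of $N^+$; within each population the surviving lineages coalesce at the standard rate, and at each hybrid node each surviving lineage independently ascends one of the two parent edges with the prescribed probabilities $\gamma,\gamma'$. The key observations are that the four lineages can only ever enter edges lying on a directed path to some leaf in $\{a,b,c,d\}$, so edges ancestral to none of the four taxa are never visited and cannot affect the joint law of the coalescent history; and that if a node is ancestral to one of the four taxa then so are both of its parent edges, so no hybrid node loses a parent under the restriction, and the only nodes that become degree $2$ are former tree nodes that have lost their single non-ancestral descendant subtree. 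Suppressing such a degree-$2$ node merges two consecutive populations through which no sampled lineage can gain or lose a potential coalescent partner; since coalescence probabilities depend only on the total elapsed time in coalescent units, the law of the four-lineage history is unchanged when these two populations are replaced by a single one of summed length, exactly as prescribed by the induced metric. Hence deleting non-ancestral material and suppressing degree-$2$ nodes leaves $\overline{CF}_{abcd}$ invariant, giving (i).

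The main obstacle is this marginalization step, and within it the bookkeeping at hybrid nodes: one must verify that restricting to the four sampled lineages never disturbs the two-parent structure of a retained hybrid node (handled by the second observation above) and that the $\gamma$-weighted parent choices carry over unchanged. The passage from the rooted network $\widetilde{Q}^+$ to its semidirected form $Q$---that is, the irrelevance of all structure above the least stable ancestor of $\{a,b,c,d\}$ and of the root location---I would \emph{not} reprove, since it is exactly the statement imported as fact (ii). With (i)--(iii) in hand the lemma follows immediately.
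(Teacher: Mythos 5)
Your proposal is correct and follows essentially the same route as the paper: the paper's proof is precisely the chain you describe, citing \cite{Banos2019} both for the fact that quartet-network $CF$s depend only on the semidirected quartet network and for the isomorphism of the quartet networks induced from $N^+$ and $N$, with the marginalization step (i) treated as implicit. Your explicit NMSC argument for (i), including the check that hybrid nodes retain both parent edges and that suppressing degree-2 nodes merely concatenates populations, simply fills in details the paper leaves to the reader.
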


	Following on the first steps investigating level-1 network  identifiability from quartet $CF$s  taken in \cite{Solis-Lemus2016},  the next result, that most topological features of a level-1 species network are identifiable from quartet $CF$s, appeared in \cite{Banos2019}.
		
\begin{theorem}\label{thm:topExcept4}\cite{Banos2019} 
Let $N$ be a binary semidirected metric  level-1 species network.  Let
$ N'$ be the semidirected topological network obtained from $N$ by contracting all 2- and 3-cycles, and undirecting hybrid edges in 4-cycles.
Under the NMSC model with generic numerical parameters, the network $ N'$ is identifiable from quartet $CF$s for $N$.
\end{theorem}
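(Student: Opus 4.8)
The plan is to reduce the global identifiability question to a finite case analysis on four taxa and then reassemble. By \cref{lem:root} the quartet $CF$s are functions of the semidirected network $N$ alone, and each $\overline{CF}_{abcd}$ is determined by the induced quartet network on $\{a,b,c,d\}$. So the information carried by all $CF$s is equivalent to knowing $\overline{CF}_{abcd}$ for every $4$-subset, and the task splits into (i) recovering, for each $4$-subset, the induced quartet network of $N'$ from $\overline{CF}_{abcd}$, and (ii) recovering $N'$ globally from its collection of induced quartet networks.

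For step (i) I would first enumerate the finitely many topological types of quartet network that a level-1 network can induce on four taxa: the three quartet trees, and the quartet networks carrying a single cycle of size $m\in\{2,3,4\}$, organized further by the block sizes $(n_1,\dots,n_m)$ and in particular by the hybrid block. For each type I would compute $\overline{CF}_{abcd}$ symbolically in terms of the edge parameters and $\gamma$ (the computations cataloged in \cref{app:props}) and read off a qualitative signature: the pattern of equalities and inequalities among the three coordinates. The facts to extract are that a quartet tree $ab|cd$ forces $CF_{ac|bd}=CF_{ad|bc}<CF_{ab|cd}$; that a quartet network whose only cycle is a $2$- or $3$-cycle yields exactly the $CF$ vector of the quartet network obtained by contracting that cycle; and that a quartet network with a $4$-cycle produces a $CF$ vector whose coordinate pattern both betrays the cycle and is invariant under reversing its hybrid edge. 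Genericity enters here: each distinguishing relation is an equality or inequality of real-analytic functions of the parameters, and one checks that it is not an identity, so the asserted signature holds off a measure-zero set.

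These signatures let me attach to each $4$-subset the induced quartet network of the contracted network $N'$: coordinate equalities locate quartet-tree substructures and thereby cut edges, while the residual asymmetry patterns locate induced $4$-cycles and their hybrid blocks, up to the unavoidable ambiguity of hybrid-edge direction. For step (ii) I would invoke the combinatorial principle that a binary level-1 semidirected network is determined by its collection of induced quartet networks, applied to $N'$; assembling the per-quartet reconstructions (including recovering the larger surviving cycles of $N'$ from the pattern of $4$-cycle–inducing subsets) yields $N'$.

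The main obstacle is step (i): proving that the qualitative $CF$ signatures genuinely separate the topological types. This rests on the explicit symbolic $CF$ formulas and on verifying, type by type, both that the asserted equalities hold identically (for trees and for $2$-/$3$-cycle contractions) and that the asserted inequalities do not vanish identically (so that genericity can be invoked). Particular care is needed to confirm that no $4$-cycle signature is distinguishable from its hybrid-edge reversal --- this non-identifiability is exactly what forces the ``undirecting hybrid edges in $4$-cycles'' clause --- and that the various $(n_1,\dots,n_m)$ block configurations do not accidentally collide in signature with an unrelated type.
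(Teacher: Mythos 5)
Your overall route --- reduce to induced quartet networks via \cref{lem:root}, classify the possible $4$-taxon $CF$ signatures, then reassemble $N'$ combinatorially from its quartets --- is essentially the route of the source this theorem is imported from: the paper does not prove \cref{thm:topExcept4} itself but cites \cite{Banos2019}, whose argument is exactly a detailed case analysis of level-1 quartet networks and their $CF$s followed by a combinatorial reconstruction (see the discussion in \cref{app:appeasement}).

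There is, however, a genuine flaw in your step (i), and it sits on a \emph{positive-measure} set of parameters, so the ``generic parameters'' hypothesis cannot absorb it. Your second extracted fact --- that a quartet network whose only cycles are 2- or 3-cycles yields exactly the $CF$ vector of its contraction, so that a tree quartet $ab|cd$ of $N'$ always presents the signature $CF_{ac|bd}=CF_{ad|bc}<CF_{ab|cd}$ --- is true for 2-cycles and for $3_1$-cycles, but false for $3_2$-cycles. As \cref{prop:3-cyc11}, case (2), shows (this is the ``anomalous'' phenomenon in the sense of \cite{AneEtAl2023}), on a positive-measure set of parameters a quartet network with a $3_2$-cycle produces $CF_{ac|bd}=CF_{ad|bc}>1/3>CF_{ab|cd}$, a vector that no quartet tree can produce. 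Such a point matches neither your tree signature (the distinguished coordinate is now the smallest, not the largest) nor your 4-cycle signature (the coordinates are not all distinct), so your classification procedure has a hole precisely where this paper later locates the detectability of 3-cycles. The repair is to make the tree-quartet call using only the \emph{equality pattern}: the cut invariant $CF_{ac|bd}=CF_{ad|bc}$ holds exactly when the quartet of $N$ has a cut edge separating $\{a,b\}$ from $\{c,d\}$ (equivalently, when the quartet of $N'$ is the tree $ab|cd$), and it fails generically for 4-cycle quartets; the magnitude ordering of the coordinates must not be used, and the claim of $CF$-vector equality with the contraction must be dropped. Two further points: your enumeration should also admit quartet networks carrying several disjoint cycles, which level-1 permits; and the ``combinatorial principle'' you invoke in step (ii) is not a citable generality --- it is precisely the second, combinatorial half of the proof in \cite{Banos2019} and must itself be established.
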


We take this theorem as our starting point, and in \Cref{sec:topID} focus on the remaining questions of topological identifiability: From quartet $CF$s can any aspects of 2-cycles or 3-cycles can be identified, and  for 4-cycles can the hybrid node be identified?  In \Cref{sec:numID} we turn to identifiability of the numerical parameters of edge lengths and hybridization probabilities. While these were not a focus in \cite{Banos2019}, partial results on local identifiabiity of numerical parameters were given  in \cite{Solis-Lemus2016}.

\smallskip

Unless explicitly stated otherwise, we assume that exactly 1 gene lineage is sampled per taxon. If 2 lineages were sampled for a taxon, say $a$,  `pseudotaxa,' $a_1$ and $a_2$ 
can be introduced by attaching a cherry leading to these at the leaf $a$ of the network. Under the NMSC, $CF$s for the modified network with 1 sample from each $a_i$ are 
identical to those for the original network with 2 samples from $a$. Sampling more than 2 lineages per taxon only introduces new $CF$s in which 3 or 4 pseudotaxa  from 
the same taxon appear, but due to exchangeability of lineages under the NMSC these $CF$s
are always 1/3. Thus identifiability results for any multiple sampling scheme 
will follow from the single sample case on a modified network. No edge lengths are needed in the pseudotaxa cherries, since no coalescent event may occur on them.
 
 \smallskip

Under the NMSC one can derive formulas for $CF$s for any fixed network in terms
of the numerical parameters. These have the form of polynomials in the hybridization parameters $\gamma$
and the $\exp(-t)$ for all edge lengths $t$.
The expression $\exp(-t)$ has a simple interpretation as the probability that two gene lineages entering an edge of length $t$ coalescent units (tracing time backwards) do not coalesce within that edge.
By reparameterizing using \emph{edge probabilities} $\ell=\exp(-t)\in (0,1]$ rather than lengths $t\in[0,\infty)$, all formulas for $CF$s  are given by polynomial 
formulas in the $\ell$s and $\gamma$s. 

The $3{ n \choose 4}$ scalar quartet $CF$s  for a fixed topological  network $N$ on $n$ taxa then define a polynomial map from the numerical parameter space 
into $\mathbb R^{3{n \choose 4}}$. Extending the map to allow complex $\ell,\gamma$, gives a parameterized algebraic variety. The set of multivariate polynomials 
in the $CF$s that vanish on the parameterization's image is an ideal, denoted $\mathcal I(N)=\mathcal I(N^+)=\mathcal I(N^-)$. The zero set $\mathcal V(N)$ of the 
polynomials in $\mathcal  I(N)$ is the Zariski closure of the parameterized variety. These notions from applied algebraic geometry provide a framework for our work. 
Elements of $\mathcal I(N)$ are called \emph{invariants}, and depend only the network topology, and not its numerical parameters.

Our arguments use symbolic computations with CFs from specific networks, performed and verified by
the software {\tt Singular} \cite{Singular} and {\tt Macaulay2} \cite{M2}. Despite their essential role, for brevity all  computational results are stated in the supplementary materials, \Cref
{app:props}. That section also contains an exposition of certain linear invariants that can be derived without computation, and which help simplify both computations and statements of results.


\section{Identifiability of  semidirected network topologies} \label{sec:topID}

\subsection{2-cycles}\label{ssec:2cycle}

We first show 2-cycles (parallel edges) in level-1 networks are never identifiable. By \emph{replacing} a 2-cycle with parental node $u$ and 
child node $v$ \emph{by an edge}, we mean removing  its two edges, introducing a new directed edge $(u,v)$ with a specified edge probability, 
and suppressing resulting nodes of degree 2.

The content of the following Lemma was essentially given in \cite{Solis-Lemus2016}, and has appeared in other works subsequently, 
most recently \cite{AneEtAl2023}. We restate it here for completeness. 

\begin{lemma}\label{lem:2cyc}
Let $N^ + $ be a level-1 rooted binary metric phylogenetic network, with a 2-cycle composed of hybrid edges with edge probabilities $
h_1,h_2$, and corresponding hybridization parameters $\gamma_1,\ \gamma_2=1-\gamma_1$. Then quartet $CF$s  for $N^ + $ under the NMSC 
are unchanged if the 2-cycle is replaced by an edge with edge probability $\ell \in (0,1)$ determined by the equation
$$1-\ell =\gamma_1^2(1-h_1)  + (1-\gamma_1)^2(1-h_2).$$
\end{lemma}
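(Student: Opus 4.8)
The plan is to argue directly on the coalescent process, invoking \Cref{lem:root} to pass to the induced semidirected network, and to exploit the fact recorded earlier that a quartet $CF$ is a polynomial in the edge probabilities $\ell$ and the $\gamma$'s. Concretely, I would use the tips-to-root bookkeeping of the NMSC, in which one processes each population (edge) carrying only the labeled set of gene lineages entering it from below, so that an edge enters the computation only through its length via the per-edge coalescence probability $1-\ell$. Let $u,v$ be the parental and child nodes of the $2$-cycle, with hybrid edges $e_1,e_2$ of probabilities $h_1,h_2$ and inheritance probabilities $\gamma_1,\gamma_2$. Since the two networks agree everywhere except on the $u$--$v$ region, it suffices to show that the $2$-cycle and the replacement edge induce, as far as quartet $CF$s can detect, the same distribution on the set of lineages that survive uncoalesced from $v$ up to the base of the parent population of $u$.

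First I would record an elementary but decisive principle: the unrooted topology of a four-lineage gene tree is $xy\mid zw$ exactly when $\{x,y\}$ is the pair that coalesces first, since every later coalescence preserves this cherry and hence the split. This reduces each $CF$ to the probability that a prescribed pair coalesces before any crossing pair.

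Next comes a case analysis on the number $k$ of quartet taxa descended from the hybrid node $v$. If $k=0$ the $u$--$v$ region is pruned from the induced quartet network, and if $k=1$ a lone lineage traverses it with no possible coalescence, so in both cases the region is inert and the networks agree trivially. If $k\ge 3$, any coalescence occurring strictly below $v$ already fixes the quartet cherry identically in both networks; conditioned on the complementary event that all descended lineages reach $v$, both the $2$-cycle and the edge treat these lineages \emph{exchangeably}—the inheritance draws and the within-population coalescences ignore labels, as does the process above $u$—so by symmetry each of the three topologies receives probability $1/3$ under either network. The substantive case is $k=2$, with descended taxa $a,b$ and $c,d$ outside the cycle. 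Conditioning on $a,b$ not coalescing below $v$ (which otherwise forces $ab\mid cd$ in both networks), both lineages reach $v$, and I compute the probability they coalesce before reaching $u$: each independently enters $e_1$ or $e_2$, and coalescence is possible only inside a common edge, giving $\gamma_1^2(1-h_1)+(1-\gamma_1)^2(1-h_2)$, which is exactly $1-\ell$.

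The crux, and the step I expect to demand the most care, is justifying that the $2$-cycle influences a quartet $CF$ \emph{only} through this single coalescence probability. The internal data of the cycle—which hybrid edge each lineage took and when any within-cycle coalescence occurred—must be shown to leave no downstream trace: in the population-by-population computation all that is handed to the base of $u$'s parent population is the labeled set of surviving lineages. For two entering lineages this set has the same law under both models (the pair $\{a,b\}$ with probability $\ell$, a single merged lineage with probability $1-\ell$), matching the replacement edge by the choice of $\ell$; and the exchangeability argument for $k\ge 3$ shows that the genuine discrepancy between the two laws on three or more lineages never reaches a quartet $CF$. Assembling the cases, every $\overline{CF}_{abcd}$ agrees between $N^+$ and the network obtained by replacing the $2$-cycle with the edge of probability $\ell$, which is the assertion.
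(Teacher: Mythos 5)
Your proof is correct, but note that it does not parallel an argument in the paper at all: the paper states \cref{lem:2cyc} without proof, remarking that its content ``was essentially given in \cite{Solis-Lemus2016}, and has appeared in other works subsequently, most recently \cite{AneEtAl2023},'' so your write-up is a self-contained replacement for a citation rather than an alternative to an in-text proof. What your argument supplies, and what the cited-results route leaves implicit, is precisely the delicate point: the $2$-cycle and the replacement edge do \emph{not} induce the same law on surviving lineage sets once three or more lineages enter at $v$ (with three lineages split $2$--$1$ between the hybrid edges at most one coalescence can occur, whereas a single edge permits two), so a naive ``same local input-output law'' argument fails. You correctly neutralize this discrepancy by the exchangeability observation that, conditioned on $3$ or $4$ uncoalesced quartet lineages reaching a common node, every quartet topology has probability $1/3$ under either structure --- the same device the paper itself uses in the proof of \cref{prop:3-cyc11} --- and you reduce the only substantive case, $k=2$, to the computation $\gamma_1^2(1-h_1)+(1-\gamma_1)^2(1-h_2)=1-\ell$, which is exactly the defining equation in the lemma. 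Two small points of polish: first, your opening principle is misstated as an equivalence (``the topology is $xy\mid zw$ exactly when $\{x,y\}$ coalesces first'' is false, since $\{z,w\}$ coalescing first also yields $xy\mid zw$), though your immediately following formulation --- a non-crossing pair coalesces before any crossing pair --- is the correct one and is what you actually use; second, the appeal to \cref{lem:root} is harmless but unnecessary, since your case analysis runs directly on the rooted network, and you should also note (one sentence) that suppressing the degree-$2$ nodes created by the replacement merely multiplies edge probabilities along a path, which changes nothing in the coalescent law.
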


Since varying the 2-cycle parameters in the above expression causes  
$\ell$ to range over the full interval $(0,1)$, we obtain the following.

\begin{corollary} Using quartet $CF$s,  under the NMSC a topological level-1 phylogenetic network $N$ with a 2-cycle cannot be 
	distinguished from the network $\widetilde N$ obtained by  replacing that two cycle with an edge.\end{corollary}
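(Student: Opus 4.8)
The plan is to prove the corollary as a statement about equality of the two parameterized families of quartet-$CF$ vectors: I will show that the set of $CF$ triples producible by $N$ (ranging over all of its numerical parameters) coincides with the set producible by $\widetilde N$, so that no observed collection of $CF$s can favor one topology over the other. \Cref{lem:2cyc} already supplies one of the two inclusions, so the real work is to set up the reparametrization bookkeeping and check the reverse inclusion.

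First I would fix matching notation. Away from the 2-cycle, $N$ and $\widetilde N$ are the same graph, so their remaining edge-probability and hybridization parameters are in a canonical bijection; write $\theta$ for this shared block. A full parameter vector for $N$ is then $(\theta,\gamma_1,h_1,h_2)$, recording in addition the 2-cycle's hybridization parameter and its two hybrid-edge probabilities, while a full parameter vector for $\widetilde N$ is $(\theta,\ell)$, where $\ell\in(0,1)$ is the probability on the single cut edge replacing the cycle (this edge is a tree edge, hence has positive length and $\ell\in(0,1)$). Since the 2-cycle is a feature of the semidirected network by hypothesis, I would pick any rooted representative $N^+$ inducing $N$; the 2-cycle, being a pair of hybrid edges with common parent and child, lifts to $N^+$ unchanged, and by \Cref{lem:root} it suffices to argue at the level of these rooted representatives because the $CF$s depend only on the induced semidirected network.

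Next I would establish both inclusions of image sets. For the forward inclusion, \Cref{lem:2cyc} asserts precisely that $(\theta,\gamma_1,h_1,h_2)$ on $N$ and $(\theta,\ell)$ on $\widetilde N$ produce identical quartet $CF$s whenever $1-\ell=\gamma_1^2(1-h_1)+(1-\gamma_1)^2(1-h_2)$; thus every $CF$ vector realizable on $N$ is realizable on $\widetilde N$. For the reverse inclusion I would start from an arbitrary $(\theta,\ell)$ on $\widetilde N$ and build a preimage on $N$ keeping the same $\theta$: by the range observation recorded just before the statement, the map $(\gamma_1,h_1,h_2)\mapsto\ell$ is onto $(0,1)$, so some choice of 2-cycle parameters realizes exactly this $\ell$, and \Cref{lem:2cyc} again forces the $CF$s to agree. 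Hence the two image sets are equal.

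The conclusion is then immediate: any $CF$ vector consistent with a valid parameter choice on $N$ is consistent with one on $\widetilde N$ and conversely, so no exact or statistical test on $CF$s can separate the two topologies. I do not expect a genuine obstacle here, since the corollary is a direct consequence of \Cref{lem:2cyc}; the only point that needs care is the surjectivity used in the reverse inclusion, namely that the reparametrizing map sweeps out \emph{all} of $(0,1)$ rather than a proper subinterval. This is exactly what the preceding range remark supplies — for instance one can fix $\gamma_1$ and vary the $h_i$ to cover part of the interval, then vary $\gamma_1$ to reach the remainder up to the open endpoints — so the argument is complete once that surjectivity is invoked.
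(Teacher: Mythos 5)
Your proposal is correct and follows essentially the same route as the paper: the corollary is obtained directly from \cref{lem:2cyc} together with the observation that the map $(\gamma_1,h_1,h_2)\mapsto \ell$ given by $1-\ell=\gamma_1^2(1-h_1)+(1-\gamma_1)^2(1-h_2)$ sweeps out all of $(0,1)$, which yields the two inclusions of $CF$ image sets you spell out. The paper compresses this into a single sentence, while you write out the bookkeeping explicitly, but the mathematical content is identical.
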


\subsection{3-cycles}\label{ssec:3top}

Using
\cref{thm:topExcept4}, the question of identifying topological 3-cycles in a network is reduced to distinguishing between the network that theorem identifies, and networks obtained from it by replacing some set of non-cycle tree nodes with 3-cycles. We only consider networks with 5 or more taxa, as the 4-taxon case is fully studied in  \cite{Banos2019}.
 
 \subsubsection {3-cycles near leaves}
 We begin with a non-identifiability result, for certain 3-cycles adjacent to two pendant edges of a network, as shown in \cref{fig:two3cyc11}.
 
 \begin{figure}
	\begin{center}
	\tikzset{every picture/.style={line width=0.75pt}} 

\begin{tikzpicture}[x=0.75pt,y=0.75pt,yscale=-0.9,xscale=0.9]

\draw [line width=1.5]    (43.76,51.98) -- (91.14,51.98) ;
\draw [line width=1.5]    (91.14,51.98) -- (108.57,45.57) ;
\draw [color={rgb, 255:red, 182; green, 5; blue, 60 }  ,draw opacity=1 ][line width=1.5]    (43.76,51.98) -- (66.88,95.82) ;
\draw [shift={(68.28,98.47)}, rotate = 242.19] [color={rgb, 255:red, 182; green, 5; blue, 60 }  ,draw opacity=1 ][line width=1.5]    (14.21,-4.28) .. controls (9.04,-1.82) and (4.3,-0.39) .. (0,0) .. controls (4.3,0.39) and (9.04,1.82) .. (14.21,4.28)   ;
\draw [color={rgb, 255:red, 182; green, 5; blue, 60 }  ,draw opacity=1 ][line width=1.5]    (91.14,51.98) -- (69.61,95.78) ;
\draw [shift={(68.28,98.47)}, rotate = 296.18] [color={rgb, 255:red, 182; green, 5; blue, 60 }  ,draw opacity=1 ][line width=1.5]    (14.21,-4.28) .. controls (9.04,-1.82) and (4.3,-0.39) .. (0,0) .. controls (4.3,0.39) and (9.04,1.82) .. (14.21,4.28)   ;
\draw [line width=1.5]    (67.73,117.29) -- (68.28,98.47) ;
\draw  [fill={rgb, 255:red, 155; green, 155; blue, 155 }  ,fill opacity=0.55 ][line width=1.5]  (108.57,45.57) -- (117.72,10.79) -- (144,53.89) -- cycle ;
\draw [line width=1.5]    (207.77,39.77) -- (227.26,52.37) ;
\draw [color={rgb, 255:red, 182; green, 5; blue, 60 }  ,draw opacity=1 ][line width=1.5]    (227.26,52.37) -- (271.64,52.37) ;
\draw [shift={(274.64,52.37)}, rotate = 180] [color={rgb, 255:red, 182; green, 5; blue, 60 }  ,draw opacity=1 ][line width=1.5]    (14.21,-4.28) .. controls (9.04,-1.82) and (4.3,-0.39) .. (0,0) .. controls (4.3,0.39) and (9.04,1.82) .. (14.21,4.28)   ;
\draw [line width=1.5]    (274.64,52.37) -- (292.07,45.96) ;
\draw [color={rgb, 255:red, 0; green, 0; blue, 0 }  ,draw opacity=1 ][line width=1.5]    (227.26,52.37) -- (251.78,98.86) ;
\draw [color={rgb, 255:red, 182; green, 5; blue, 60 }  ,draw opacity=1 ][line width=1.5]    (273.31,55.06) -- (251.78,98.86) ;
\draw [shift={(274.64,52.37)}, rotate = 116.18] [color={rgb, 255:red, 182; green, 5; blue, 60 }  ,draw opacity=1 ][line width=1.5]    (14.21,-4.28) .. controls (9.04,-1.82) and (4.3,-0.39) .. (0,0) .. controls (4.3,0.39) and (9.04,1.82) .. (14.21,4.28)   ;
\draw [line width=1.5]    (251.23,117.68) -- (251.78,98.86) ;
\draw  [fill={rgb, 255:red, 155; green, 155; blue, 155 }  ,fill opacity=0.55 ][line width=1.5]  (292.07,45.96) -- (301.22,11.18) -- (327.5,54.28) -- cycle ;
\draw  [fill={rgb, 255:red, 0; green, 0; blue, 0 }  ,fill opacity=1 ] (272.87,52.37) .. controls (272.87,51.39) and (273.66,50.6) .. (274.64,50.6) .. controls (275.61,50.6) and (276.4,51.39) .. (276.4,52.37) .. controls (276.4,53.34) and (275.61,54.13) .. (274.64,54.13) .. controls (273.66,54.13) and (272.87,53.34) .. (272.87,52.37) -- cycle ;
\draw [line width=1.5]    (24.27,39.38) -- (43.76,51.98) ;

\draw (11.53,24.93) node [anchor=north west][inner sep=0.75pt]  [font=\normalsize]  {$b$};
\draw (61.94,122.19) node [anchor=north west][inner sep=0.75pt]  [font=\normalsize]  {$a$};
\draw (195.03,24.47) node [anchor=north west][inner sep=0.75pt]  [font=\normalsize]  {$b$};
\draw (245.94,122.19) node [anchor=north west][inner sep=0.75pt]  [font=\normalsize]  {$a$};
\draw (270.04,32.03) node [anchor=north west][inner sep=0.75pt]  [font=\normalsize]  {$v$};

\end{tikzpicture}\vspace*{-1cm}
 	\end{center}
	\caption{Networks with 3-cycles inducing $(1,1,n-2)$ partitions. The shaded triangle represents an arbitrary semidirected subnetwork. (L,R) correspond to cases (1,2) of \cref{prop:3-cyc11}.
}\label{fig:two3cyc11} 
\end{figure}
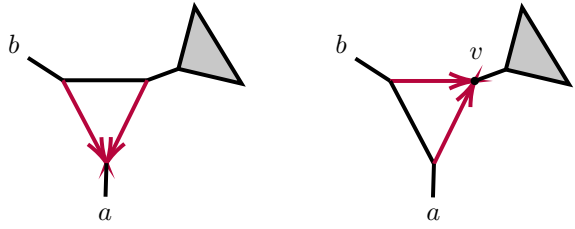
 
\begin{proposition} \label{prop:3-cyc11}
Suppose a binary semidirected network $ N$ on $n\ge 4$ taxa has a 3-cycle $C$ inducing a $(1,1,n-2)$ partition of the taxa. Let $N'$ be the network
obtained by contracting $C$ to a node. Then under the NMSC

\begin{enumerate}
\item \label{case:1} If $C$ is a $3_1$-cycle, so its hybrid node has only 1 descendant taxon, 
the topologies of $N$ and $N'$ cannot be distinguished using quartet  $CF$s.
That is, for any choice of parameters on one of these networks, there exist parameters on the other giving identical $CF$s. 
Moreover, the parameters other than those associated to $C$ and internal edges adjacent to $C$
 may be chosen to be identical on both networks.

\item \label{case:2} 
If $C$ is a $3_k$-cycle with $k=n-2\ge2$, and the parameter spaces are extended to allow all real edge lengths in the 
CF formulas,  then for any choice of extended parameters on $N$
there are extended parameters on  $N'$ giving identical  $CF$s, and \emph{vice versa}. 
Moreover, the parameters other than those associated to $C$  and internal edges 
adjacent to $C$ may be chosen to be identical on
$N$ and $N'$.

\smallskip
Furthermore, for strictly
positive edge lengths on $N$ and $N'$,  
there are two positive-measure subsets of parameters, 
$\Theta_1, \Theta_2$,
for $N$, such that  on 
$\Theta_1$
the topologies of $N$  and $N'$ are not distinguishable using quartet  $CF$s, 
and 
on
$\Theta_2$
are distinguishable.
\end{enumerate}
\end{proposition}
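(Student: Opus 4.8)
The plan is to reduce the whole comparison to a single family of four-taxon quartets and then to locate the sharp $CF$ threshold separating matchable from unmatchable values. First I would fix the semidirected picture (\cref{lem:root}) and ask, quartet by quartet, where $N$ and $N'$ can possibly differ. Label the two singleton blocks $a,b$ and let $x,y$ denote taxa of the hybrid (big) block. For any quartet other than one of the form $\{a,b,x,y\}$, at most one of $a,b$ is present, so one tree node of $C$ loses its pendant edge and is suppressed; the two hybrid edges together with the merged tree edge of $C$ then form a $2$-cycle, which by \cref{lem:2cyc} collapses to a single edge. Thus every such quartet induces a tree in $N$, matching the tree it induces in $N'$, and the extended-parameter correspondence asserted in the first part of \cref{prop:3-cyc11} makes these agree. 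Consequently distinguishability is governed entirely by the quartets $\{a,b,x,y\}$, whose induced network is the $4$-taxon $(1,1,2)$ $3$-cycle $M$ in $N$ and the tree $ab|xy$ in $N'$.

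Next I would extract the semialgebraic certificate. In $N'$ the taxa $a,b$ sit at the single contracted node, so the induced quartet is the tree $ab|xy$, and the MSC gives equal minor factors together with $CF_{ab|xy}\ge 1/3$ for every choice of positive parameters. In $N$ the symmetry $x\leftrightarrow y$ (leaf-pendant lengths enter no $CF$) still forces $CF_{ax|by}=CF_{ay|bx}$, so the only free scalar is $CF_{ab|xy}$; but the $3$-cycle can push it below $1/3$. Heuristically, when the edge below the hybrid node is short and $\gamma$ is near $1/2$, an uncoalesced pair $x,y$ splits across the two hybrid edges toward $a$ and $b$, inflating the discordant resolutions. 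Using the explicit rational $CF$ formula for $M$ recorded in \cref{app:props}, I would exhibit one positive parameter point with $CF_{ab|xy}<1/3$ and take $\Theta_2$ to be an open neighborhood of it inside the positive-parameter region. On $\Theta_2$ no valid parameters of $N'$ can reproduce the $CF$s, so the topologies are distinguishable, and openness gives positive measure.

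For $\Theta_1$ I would take the complementary open region on which the correspondence $\phi$ of the first part carries $\theta$ to strictly positive parameters of $N'$; by the formulas in \cref{app:props} this is exactly where the effective replacement edge probability lies in $(0,1)$, equivalently where $CF_{ab|xy}>1/3$ for the relevant pairs. There $N'$ with parameters $\phi(\theta)$ reproduces all $CF$s, so $N$ and $N'$ are indistinguishable, and again the set is open of positive measure. The main obstacle is precisely this semialgebraic step, rather than any Zariski statement: the two networks share one variety (by the equivalence just above), so the content is to prove that the threshold $CF_{ab|xy}\ge 1/3$ is forced on the \emph{entire} valid image of $N'$ while being genuinely violated on an open piece of the valid image of $N$. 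The $N'$ bound is the elementary MSC tree fact, but confirming that $N$ escapes it, and that no alternative valid parameterization of $N'$ re-enters the picture, is not an ideal-membership question and cannot be settled by a Gröbner basis; it requires the explicit formula for $M$ and an honest inequality-and-continuity argument of the semialgebraic kind flagged in the introduction.
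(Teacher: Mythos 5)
Your treatment of case~(2) is essentially sound and, despite the different packaging, is the paper's own argument. The paper likewise reduces everything to what happens when exactly two uncoalesced lineages enter the hybrid node, computes the conditional $CF$s of the small $3_2$-cycle quartet network (the formula \cref{eq:CFacbc}, which is derived inside the proof rather than recorded in \Cref{app:props}), and then splits parameter space according to whether the effective replacement edge probability is an honest probability (giving your $\Theta_1$) or produces an anomalous quartet with discordant $CF$ exceeding $1/3$ (the paper's condition $pz>1$, giving your $\Theta_2$, since every quartet tree has discordant $CF<1/3$). Your quartet-by-quartet reduction of quartets containing at most one of $a,b$ via \cref{lem:2cyc} is a legitimate substitute for the paper's exchangeability argument for those quartets, and your $1/3$ threshold is exactly the paper's test.

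The genuine gap is case~(1). Your setup declares the big block to be the hybrid block (``let $x,y$ denote taxa of the hybrid (big) block''), which is precisely the $3_{n-2}$ situation of case~(2); nowhere do you treat the $3_1$-cycle, in which the hybrid node's unique descendant is one of the singletons. That case has a stronger conclusion --- $N$ and $N'$ are \emph{never} distinguishable, for every choice of positive parameters, with no excursion into extended parameters --- and it does not follow from your threshold analysis. You would need to show that for a $3_1$-cycle the effective internal edge probability separating $a,b$ from the rest is always a genuine probability in $(0,1)$, so that the tree constraints hold identically and matching positive parameters on $N'$ always exist (in both directions). The paper proves this by observing that the only relevant event is whether the two single lineages $a$ and $b$ coalesce before exiting the cycle, which occurs with probability strictly between $0$ and $1$; the anomaly of case~(2) cannot arise because no pair of exchangeable lineages passes through the hybrid node. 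Without this, half the proposition is unproved, and a reader applying your case-(2) reasoning to case~(1) would wrongly expect a distinguishable regime there too. Two smaller defects: your appeal to ``the extended-parameter correspondence asserted in the first part of \cref{prop:3-cyc11}'' is circular, since that correspondence is part of what must be proved (your map $\phi$ in the final paragraph supplies it, so the fix is organizational); and the explicit $CF$ formula you cite from \Cref{app:props} for the four-taxon $(1,1,2)$ network is not there --- only five- and six-taxon analogues are --- so the conditional formula must be derived as in the paper.
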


Note that case \eqref{case:1} implies that if $N$ is as shown in \cref{fig:two3cyc11}(L) then $N$ and $N'$ also cannot be distinguished from the 
network obtained from $N$ by interchanging the 
$a$ and $b$ labels.
In case \eqref{case:2}, if parameters are such that $N$ is not distinguishable from $N'$,
then case \eqref{case:1} implies that they are also not
distinguishable from the two networks obtained by redesignating the hybrid node in the 3-cycle to be a singleton.
When
$N$ is distinguishable from $N'$, 
then by case \eqref{case:1} it is distinguishable from those two other networks
as well.

\begin{proof}
Let $a,b$ denote the taxa in the singleton blocks.
For case \eqref{case:1},  we may assume the  network is rooted, with the root outside $C$ and not on the pendant edges leading to 
$a,b$ (\cref{fig:two3cyc11} (L)). Then under the NMSC there is a probability $p\in(0,1)$, 
depending on the numerical parameters of the 3-cycle, that lineages $a$ and $b$ fail to coalesce before leaving the 3-cycle. 
Replacing the 3-cycle
and its adjacent edges by a 3-leaf tree where the edge leading toward the $n-2$
taxa has edge probability $p$ leaves the distribution of topological gene trees, and hence quartet $CF$s, unchanged. Varying 
parameters over the 3-cycle or over the 3-leaf tree allows all probabilities $p\in (0,1)$ to be achieved.
 
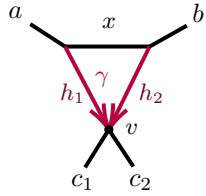
\begin{figure}
	\begin{center}
	\tikzset{every picture/.style={line width=0.75pt}} 

\begin{tikzpicture}[x=0.75pt,y=0.75pt,yscale=-0.9,xscale=0.9]

\draw [line width=1.5]    (50.57,24.38) -- (70.06,36.98) ;
\draw [line width=1.5]    (70.06,36.98) -- (117.44,36.98) ;
\draw [color={rgb, 255:red, 182; green, 5; blue, 60 }  ,draw opacity=1 ][line width=1.5]    (70.06,36.98) -- (93.18,80.82) ;
\draw [shift={(94.58,83.47)}, rotate = 242.19] [color={rgb, 255:red, 182; green, 5; blue, 60 }  ,draw opacity=1 ][line width=1.5]    (14.21,-4.28) .. controls (9.04,-1.82) and (4.3,-0.39) .. (0,0) .. controls (4.3,0.39) and (9.04,1.82) .. (14.21,4.28)   ;
\draw [color={rgb, 255:red, 182; green, 5; blue, 60 }  ,draw opacity=1 ][line width=1.5]    (117.44,36.98) -- (95.91,80.78) ;
\draw [shift={(94.58,83.47)}, rotate = 296.18] [color={rgb, 255:red, 182; green, 5; blue, 60 }  ,draw opacity=1 ][line width=1.5]    (14.21,-4.28) .. controls (9.04,-1.82) and (4.3,-0.39) .. (0,0) .. controls (4.3,0.39) and (9.04,1.82) .. (14.21,4.28)   ;
\draw [line width=1.5]    (81.3,104.75) -- (94.58,83.47) ;
\draw [line width=1.5]    (117.44,36.98) -- (138.3,25.25) ;
\draw [line width=1.5]    (108.3,104.25) -- (94.58,83.47) ;
\draw  [fill={rgb, 255:red, 0; green, 0; blue, 0 }  ,fill opacity=1 ] (92.58,83.47) .. controls (92.58,82.37) and (93.48,81.47) .. (94.58,81.47) .. controls (95.69,81.47) and (96.58,82.37) .. (96.58,83.47) .. controls (96.58,84.58) and (95.69,85.47) .. (94.58,85.47) .. controls (93.48,85.47) and (92.58,84.58) .. (92.58,83.47) -- cycle ;

\draw (139.83,11.43) node [anchor=north west][inner sep=0.75pt]  [font=\normalsize]  {$b$};
\draw (36.74,12.69) node [anchor=north west][inner sep=0.75pt]  [font=\normalsize]  {$a$};
\draw (71.74,107.19) node [anchor=north west][inner sep=0.75pt]  [font=\normalsize]  {$c_{1}$};
\draw (104.24,107.19) node [anchor=north west][inner sep=0.75pt]  [font=\normalsize]  {$c_{2}$};
\draw (85.2,49.2) node [anchor=north west][inner sep=0.75pt]  [font=\small,color={rgb, 255:red, 182; green, 5; blue, 60 }  ,opacity=1 ]  {$\gamma $};
\draw (65.6,56) node [anchor=north west][inner sep=0.75pt]  [font=\small,color={rgb, 255:red, 108; green, 4; blue, 35 }  ,opacity=1 ]  {$h_{1}$};
\draw (109.8,56) node [anchor=north west][inner sep=0.75pt]  [font=\small,color={rgb, 255:red, 108; green, 4; blue, 35 }  ,opacity=1 ]  {$h_{2}$};
\draw (89.2,19.2) node [anchor=north west][inner sep=0.75pt]  [font=\small]  {$x$};
\draw (102.33,78.4) node [anchor=north west][inner sep=0.75pt]  [font=\small]  {$v$};

\end{tikzpicture}
\vspace*{-1cm} 
	\end{center} 
	\caption{Figure for the proof of \cref{prop:3-cyc11}, case \eqref{case:2}. A  $3_2$-cycle quartet network with 
		internal cut edge contracted to length 0, other edge probabilities $h_1,h_2,x$, and hybridization parameter $\gamma$.}\label{fig:3-cyc11} 
\end{figure}
 
In case \eqref{case:2}, let $v$ denote the hybrid node in the 3-cycle, so $a,b$ are not descendants of $v$ for any rooting. 
(\Cref{fig:two3cyc11} (R)).
The value of any $CF$ involving at most one of $a,b$ is determined by the network and numerical parameters below $v$, 
since as a gene tree forms either a coalescent event occurs below $v$,
or 3 or 4 lineages reach $v$, so that
all three gene quartet topologies have probability 1/3. Thus the 3-cycle only affects values of $CF$s involving both $a$
and $b$, and only through events in which no coalescence has occurred below $v$. We may thus replace the cycle and its adjacent edges to $a,b$ 
with any graphical structure and parameters that produce the same 
probabilities of gene quartet topologies when exactly two lineages enter at $v$. These conditional probabilities are
the CFs of the quartet network shown in \cref{fig:3-cyc11}:
\begin{gather}
CF_{ac|bc}=\left( \gamma^2 h_1+(1-\gamma)^2h_2 \right )/3 +\gamma(1-\gamma)\left ( 1-x/3 \right ),\ \ \ \ CF_{ab|cc}= 1-2  CF_{ac|bc}. \notag\\
CF_{ac|bc}=   (\gamma^2 h_1+(1-\gamma)^2h_2 )/3 +\gamma (1 - \gamma) (1 - x/3 )    \label{eq:CFacbc}
\end{gather}
Note that we have dropped the subscripts $1,2$ from the $c$ taxa, since by 
 exchangeability of those lineages under the NMSC, they may be assigned arbitrarily.
 
 Now a quartet tree with topology $ab|cc$ and internal edge probability $z$ yields 
$$
CF_{ac|bc}=z/3,\ \ \ \ 
 CF_{ab|cc}= 1-2  CF_{ac|bc}.
$$
  so,  using (\ref{eq:CFacbc}),  without changing the $CF$s the 3-cycle and edges to $a,b$ in $N$ 
  could be replaced by a 3-leaf tree with an edge leading to an $ab$ cherry  having edge probability 
  $$z=  \gamma^2 h_1+(1-\gamma)^2h_2 +\gamma(1-\gamma)\left ( 3-x \right ).$$
  provided
 $0 < z < 1$.
 Since this inequality holds on a set of positive measure in 
parameter space, on that set the topologies $ N$ and $N'$ are not distinguishable.

However, $z>1$ also occurs on a set of positive measure.
Suppose in this case that the edge  $e=(v,w)$ below $v$ has as its child $w$ a node outside of a cycle, and let $c_1,c
_2$ be taxa chosen from distinct taxon blocks below that node. Then if parameters on $ N$ are in the set determined by $z>1$  
and the edge probability  $p$ for $e$ satisfies $pz>1$, then for $ N$ 
$$CF_{ac|bc}= pz/3>1/3.$$
Since for a quartet tree $CF_{ac|bc}<1/3,$ $N$ is distinguishable from $N'$ on this set. 

If $w$ is  instead in a cycle, a similar argument applies.
 \end{proof}
  
This proof essentially follows arguments given in \cite{Banos2019} for quartet networks with a $3_1$-cycle and  $3_2$-cycle. In case (2) the parameters for which the 3-cycle is topologically identifiable are ones that make the quartet network anomalous, in the sense of \cite{AneEtAl2023}.

\subsubsection{3-cycles on small networks: Algebraic conditions}\label{sec:3smallalg}

\begin{figure}
\begin{center}
	\tikzset{every picture/.style={line width=0.75pt}} 

\begin{tikzpicture}[x=0.75pt,y=0.75pt,yscale=-0.9,xscale=0.9]

\draw [line width=1.5]    (46,49.17) -- (71,47.17) ;
\draw [line width=1.5]    (71,47.17) -- (96.5,71.67) ;
\draw [line width=1.5]    (81.72,127.45) -- (95,106.17) ;
\draw [line width=1.5]    (96.5,71.67) -- (125,58.67) ;
\draw [line width=1.5]    (108.72,126.95) -- (95,106.17) ;
\draw [line width=1.5]    (96.5,71.67) -- (95,106.17) ;
\draw [line width=1.5]    (219.57,50.17) -- (239.06,62.77) ;
\draw [line width=1.5]    (239.06,62.77) -- (286.44,62.77) ;
\draw [color={rgb, 255:red, 182; green, 5; blue, 60 }  ,draw opacity=1 ][line width=1.5]    (239.06,62.77) -- (262.18,106.61) ;
\draw [shift={(263.58,109.26)}, rotate = 242.19] [color={rgb, 255:red, 182; green, 5; blue, 60 }  ,draw opacity=1 ][line width=1.5]    (14.21,-4.28) .. controls (9.04,-1.82) and (4.3,-0.39) .. (0,0) .. controls (4.3,0.39) and (9.04,1.82) .. (14.21,4.28)   ;
\draw [color={rgb, 255:red, 182; green, 5; blue, 60 }  ,draw opacity=1 ][line width=1.5]    (286.44,62.77) -- (264.91,106.57) ;
\draw [shift={(263.58,109.26)}, rotate = 296.18] [color={rgb, 255:red, 182; green, 5; blue, 60 }  ,draw opacity=1 ][line width=1.5]    (14.21,-4.28) .. controls (9.04,-1.82) and (4.3,-0.39) .. (0,0) .. controls (4.3,0.39) and (9.04,1.82) .. (14.21,4.28)   ;
\draw [line width=1.5]    (286.44,62.77) -- (307.3,51.04) ;
\draw [line width=1.5]    (263.58,109.26) -- (264,131.96) ;
\draw [line width=1.5]    (58,24.67) -- (71,47.17) ;
\draw [line width=1.5]    (307.3,51.04) -- (318,33.67) ;
\draw [line width=1.5]    (307.3,51.04) -- (325,59.67) ;
\draw [line width=1.5]    (219.57,50.17) -- (207,33.17) ;
\draw [line width=1.5]    (200.5,55.17) -- (219.57,50.17) ;
\draw [line width=1.5]    (420.57,50.17) -- (440.06,62.77) ;
\draw [line width=1.5]    (440.06,62.77) -- (487.44,62.77) ;
\draw [color={rgb, 255:red, 182; green, 5; blue, 60 }  ,draw opacity=1 ][line width=1.5]    (440.06,62.77) -- (463.18,106.61) ;
\draw [shift={(464.58,109.26)}, rotate = 242.19] [color={rgb, 255:red, 182; green, 5; blue, 60 }  ,draw opacity=1 ][line width=1.5]    (14.21,-4.28) .. controls (9.04,-1.82) and (4.3,-0.39) .. (0,0) .. controls (4.3,0.39) and (9.04,1.82) .. (14.21,4.28)   ;
\draw [color={rgb, 255:red, 182; green, 5; blue, 60 }  ,draw opacity=1 ][line width=1.5]    (487.44,62.77) -- (465.91,106.57) ;
\draw [shift={(464.58,109.26)}, rotate = 296.18] [color={rgb, 255:red, 182; green, 5; blue, 60 }  ,draw opacity=1 ][line width=1.5]    (14.21,-4.28) .. controls (9.04,-1.82) and (4.3,-0.39) .. (0,0) .. controls (4.3,0.39) and (9.04,1.82) .. (14.21,4.28)   ;
\draw [line width=1.5]    (487.44,62.77) -- (508.3,51.04) ;
\draw [line width=1.5]    (464.58,109.26) -- (465,131.96) ;
\draw [line width=1.5]    (420.57,50.17) -- (408,33.17) ;
\draw [line width=1.5]    (401.5,55.17) -- (420.57,50.17) ;
\draw [line width=1.5]    (465,131.96) -- (448.5,147.67) ;
\draw [line width=1.5]    (465,131.96) -- (480.5,148.17) ;

\draw (30.33,44.93) node [anchor=north west][inner sep=0.75pt]  [font=\normalsize,color={rgb, 255:red, 6; green, 79; blue, 97 }  ,opacity=1 ]  {$b_{1}$};
\draw (69.74,129.69) node [anchor=north west][inner sep=0.75pt]  [font=\normalsize,color={rgb, 255:red, 6; green, 79; blue, 97 }  ,opacity=1 ]  {$a_{1}$};
\draw (128.24,49.69) node [anchor=north west][inner sep=0.75pt]  [font=\normalsize,color={rgb, 255:red, 6; green, 79; blue, 97 }  ,opacity=1 ]  {$c$};
\draw (105.24,129.19) node [anchor=north west][inner sep=0.75pt]  [font=\normalsize,color={rgb, 255:red, 6; green, 79; blue, 97 }  ,opacity=1 ]  {$a_{2}$};
\draw (99.8,83.3) node [anchor=north west][inner sep=0.75pt]  [font=\small,color={rgb, 255:red, 0; green, 0; blue, 0 }  ,opacity=1 ]  {$\ell _{1}$};
\draw (319.83,20.22) node [anchor=north west][inner sep=0.75pt]  [font=\normalsize,color={rgb, 255:red, 6; green, 79; blue, 97 }  ,opacity=1 ]  {$b_{1}$};
\draw (184.24,48.98) node [anchor=north west][inner sep=0.75pt]  [font=\normalsize,color={rgb, 255:red, 6; green, 79; blue, 97 }  ,opacity=1 ]  {$a_{1}$};
\draw (259.24,133.98) node [anchor=north west][inner sep=0.75pt]  [font=\normalsize,color={rgb, 255:red, 6; green, 79; blue, 97 }  ,opacity=1 ]  {$c$};
\draw (327,53.07) node [anchor=north west][inner sep=0.75pt]  [font=\normalsize,color={rgb, 255:red, 6; green, 79; blue, 97 }  ,opacity=1 ]  {$b_{2}$};
\draw (252.2,74.99) node [anchor=north west][inner sep=0.75pt]  [font=\small,color={rgb, 255:red, 182; green, 5; blue, 60 }  ,opacity=1 ]  {$\gamma $};
\draw (226.1,41.29) node [anchor=north west][inner sep=0.75pt]  [font=\small,color={rgb, 255:red, 0; green, 0; blue, 0 }  ,opacity=1 ]  {$\ell _{1}$};
\draw (293.8,59.59) node [anchor=north west][inner sep=0.75pt]  [font=\small,color={rgb, 255:red, 0; green, 0; blue, 0 }  ,opacity=1 ]  {$\ell _{2}$};
\draw (258.2,49.99) node [anchor=north west][inner sep=0.75pt]  [font=\small]  {$x$};
\draw (41.33,7.93) node [anchor=north west][inner sep=0.75pt]  [font=\normalsize,color={rgb, 255:red, 6; green, 79; blue, 97 }  ,opacity=1 ]  {$b_{2}$};
\draw (82.8,43.8) node [anchor=north west][inner sep=0.75pt]  [font=\small,color={rgb, 255:red, 0; green, 0; blue, 0 }  ,opacity=1 ]  {$\ell _{2}$};
\draw (188.24,22.48) node [anchor=north west][inner sep=0.75pt]  [font=\normalsize,color={rgb, 255:red, 6; green, 79; blue, 97 }  ,opacity=1 ]  {$a_{2}$};
\draw (384.33,45.72) node [anchor=north west][inner sep=0.75pt]  [font=\normalsize,color={rgb, 255:red, 6; green, 79; blue, 97 }  ,opacity=1 ]  {$b_{1}$};
\draw (437.74,150.48) node [anchor=north west][inner sep=0.75pt]  [font=\normalsize,color={rgb, 255:red, 6; green, 79; blue, 97 }  ,opacity=1 ]  {$a_{1}$};
\draw (392,16.57) node [anchor=north west][inner sep=0.75pt]  [font=\normalsize,color={rgb, 255:red, 6; green, 79; blue, 97 }  ,opacity=1 ]  {$b_{2}$};
\draw (455.2,74.99) node [anchor=north west][inner sep=0.75pt]  [font=\small,color={rgb, 255:red, 182; green, 5; blue, 60 }  ,opacity=1 ]  {$\gamma $};
\draw (466.6,114.29) node [anchor=north west][inner sep=0.75pt]  [font=\small]  {$\ell _{1}$};
\draw (428.3,39.09) node [anchor=north west][inner sep=0.75pt]  [font=\small]  {$\ell _{2}$};
\draw (459.2,47.99) node [anchor=north west][inner sep=0.75pt]  [font=\small]  {$x$};
\draw (476.24,150.48) node [anchor=north west][inner sep=0.75pt]  [font=\normalsize,color={rgb, 255:red, 6; green, 79; blue, 97 }  ,opacity=1 ]  {{$a_{2}$}};
\draw (511.24,41.98) node [anchor=north west][inner sep=0.75pt]  [font=\normalsize,color={rgb, 255:red, 6; green, 79; blue, 97 }  ,opacity=1 ]  {$c$};
\draw (481.8,77.59) node [anchor=north west][inner sep=0.75pt]  [font=\small,color={rgb, 255:red, 108; green, 4; blue, 35 }  ,opacity=1 ]  {$h_{2}$};
\draw (433.3,78.09) node [anchor=north west][inner sep=0.75pt]  [font=\small,color={rgb, 255:red, 108; green, 4; blue, 35 }  ,opacity=1 ]  {$h_{1}$};

\end{tikzpicture}

\vspace*{-1cm}
\end{center}
\caption{ (L) The 5-taxon unrooted binary tree $ T_5$;
 (C) the 5-taxon network $N_{5-3_1}$ with a $3_1$-cycle; and (R) the 5-taxon network $N_{5-3_2}$ with a $3_2$-cycle,
with numerical parameters shown. Edge probabilities of hybrid edges in $N_{5-3_1}$ and of pendant edges in networks are omitted, 
since they do not appear in formulas for the $CF$s. }\label{fig:5_3cycle}
\end{figure}
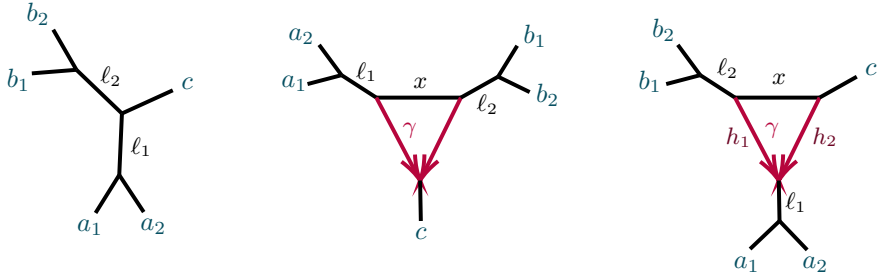

\Cref{fig:5_3cycle} shows a 5-taxon tree, $T_5$, and  two 5-taxon networks with 3-cycles, $N_{5-3_1}, N_{5-3_2}$.
 \Cref{prop:T5,prop:N5-3-1,prop:N5-3-2} of the supplementary materials give computational results on the ideals $\mathcal I(T_5)$, $\mathcal I(N_{5-3_1})$, and $\mathcal I(N_{5-3_2})$, showing that the polynomial
\begin{equation}
f_{abc}= 3CF_{ab|ac}CF_{ab|bc} - CF_{ab|ab}\label{eq:f}
 \end{equation}
 is in $\mathcal I(T_5)$, but not in $\mathcal I(N_{5-3_1})$ nor $\mathcal I(N_{5-3_2})$. 
 Using expressions for CFs in terms of parameters from \cref{prop:T5}, $f_{abc}$ can be 
 interpreted as expressing the total internal path length in the tree $T_5$ is the sum of the 
 lengths of the two internal edges. This polynomial, and variants of it, will play an important role in 
 identifying 3-cycles. The first result in this direction is the following.

\begin{theorem}\label{thm:5tax3cycle} Under the NMSC model, the vanishing of 
	$f_{abc}$ distinguishes a 5-taxon unrooted tree $T_5$ from the 5-taxon semidirected 
	networks with a central 3-cycle whose contraction yields the tree $T_5$, for generic numerical parameters.
\end{theorem}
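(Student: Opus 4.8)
The plan is to read \cref{thm:5tax3cycle} as asserting a clean two-sided test: the value of $f_{abc}$ computed from the quartet $CF$s is forced to be $0$ on $T_5$, but is generically nonzero on each competing 3-cycle network. The first half is immediate from the computational input. Since $f_{abc}\in\mathcal I(T_5)$ (\cref{prop:T5}), and $\mathcal I(T_5)$ is by definition the ideal of all polynomials vanishing on the image of the $CF$ parameterization of $T_5$, the polynomial $f_{abc}$ evaluates to $0$ at the quartet $CF$s of $T_5$ for \emph{every} choice of numerical parameters, with no genericity needed. Here I would also note that the individual $CF$s appearing in $f_{abc}$ are well defined because, by exchangeability of lineages under the NMSC, they do not depend on which sample of a doubled taxon is chosen.

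For the converse I would first reduce the family of ``$5$-taxon networks with a central 3-cycle whose contraction yields $T_5$'' to the two networks $N_{5-3_1}$ and $N_{5-3_2}$ of \cref{fig:5_3cycle}. Expanding the central degree-$3$ node of $T_5$ into a triangle amounts to distributing the three incident blocks $\{a_1,a_2\}$, $\{b_1,b_2\}$, and $\{c\}$ among the three cycle nodes and designating one of them as the hybrid node; the two non-hybrid nodes are interchangeable, and the blocks $\{a_1,a_2\}$ and $\{b_1,b_2\}$ play symmetric roles. Hence every such network is, up to relabeling, either the $3_1$-cycle case (with $c$ descended from the hybrid node) or the $3_2$-cycle case (with a doubled taxon descended from it), that is, $N_{5-3_1}$ or $N_{5-3_2}$. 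Since $f_{abc}$ is symmetric under the interchange $a\leftrightarrow b$, the fixed labeling used to define it is compatible with both representatives.

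Next I would invoke the non-membership facts $f_{abc}\notin\mathcal I(N_{5-3_1})$ and $f_{abc}\notin\mathcal I(N_{5-3_2})$ (\cref{prop:N5-3-1,prop:N5-3-2}). Let $\phi$ denote the polynomial $CF$-parameterization of one of these networks, which has rational coefficients and is defined on the parameter domain of edge probabilities in $(0,1]$ and hybridization parameters in $(0,1)$. By definition, $f_{abc}\notin\mathcal I(N)$ means precisely that the pullback $g:=f_{abc}\circ\phi$ is a nonzero polynomial in the numerical parameters. The parameter domain contains a nonempty Euclidean-open set, on which a nonzero polynomial cannot vanish identically, so the real-algebraic set $\{g=0\}$ is a proper closed subset of Lebesgue measure zero; thus $g\neq 0$, i.e. $f_{abc}\neq 0$, for generic parameters. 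Combining this with the first paragraph, the test of whether $f_{abc}=0$ returns $0$ for \emph{all} parameters on $T_5$ and a nonzero value for generic parameters on either 3-cycle network, so it distinguishes $T_5$ from the 3-cycle networks generically, as claimed.

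The only genuine content lies in the ideal membership and non-membership assertions, which are the symbolic {\tt Singular}/{\tt Macaulay2} computations recorded in \cref{prop:T5,prop:N5-3-1,prop:N5-3-2}, and I expect this to be the main obstacle. It is substantive both because the $CF$ polynomials must first be derived in the edge-probability parameterization and because verifying $f_{abc}\notin\mathcal I(N_{5-3_i})$ requires either a Gr\"obner-basis ideal-membership test or the exhibition of an explicit witness point at which $f_{abc}\neq 0$. Granting those computational results, the reduction to the two representative networks and the measure-zero genericity argument are routine.
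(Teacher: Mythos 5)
Your proposal is correct and follows essentially the same route as the paper: it vanishes identically on $T_5$ because $f_{abc}\in\mathcal I(T_5)$, and since $f_{abc}\notin\mathcal I(N_{5-3_1}),\mathcal I(N_{5-3_2})$ (with the $a\leftrightarrow b$ relabeled network handled by the symmetry of $f_{abc}$, which the paper instead treats as a fourth explicit network), the pullback to each non-tree parameter space is a nonzero polynomial and hence vanishes only on a measure-zero set. Your additional details — the reduction of all central-3-cycle networks to the representative cases and the explicit measure-zero argument from ideal non-membership — simply make explicit what the paper's terse proof leaves implicit.
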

\begin{proof} Consider the networks of \cref{fig:5_3cycle} and a fourth obtained by 
	interchanging the $a,b$ taxa in \cref{fig:5_3cycle}(R). Since $f_{abc}\notin \mathcal I(N)$ 
	for the non-tree $N$, it does not vanish for all parameters on them, and is zero only for a 
	set of measure zero in their parameter space. Thus generically  the vanishing of $f_{abc}$ distinguishes $T_5$ from the others.
\end{proof}

 \cref{prop:T5,prop:N5-3-1,prop:N5-3-2}  also show that the two 5-taxon networks of \cref{fig:5_3cycle} have the same associated ideals, 
 $\mathcal I(N_{5-3_1})=\mathcal I(N_{5-3_2})\subset \mathcal I(T_5)$. As a result, there is no purely algebraic means (using only polynomial equalities) of distinguishing them using $CF$s. 
 
\begin{figure}
	\begin{center}
	\tikzset{every picture/.style={line width=0.75pt}} 

\begin{tikzpicture}[x=0.75pt,y=0.75pt,yscale=-0.9,xscale=0.9]

\draw [line width=1.5]    (43.5,61.67) -- (68.5,59.67) ;
\draw [line width=1.5]    (68.5,59.67) -- (94,84.17) ;
\draw [line width=1.5]    (79.22,139.95) -- (92.5,118.67) ;
\draw [line width=1.5]    (94,84.17) -- (127,69.17) ;
\draw [line width=1.5]    (106.22,139.45) -- (92.5,118.67) ;
\draw [line width=1.5]    (94,84.17) -- (92.5,118.67) ;
\draw [line width=1.5]    (253.57,49.17) -- (273.06,61.77) ;
\draw [line width=1.5]    (273.06,61.77) -- (320.44,61.77) ;
\draw [color={rgb, 255:red, 182; green, 5; blue, 60 }  ,draw opacity=1 ][line width=1.5]    (273.06,61.77) -- (296.18,105.61) ;
\draw [shift={(297.58,108.26)}, rotate = 242.19] [color={rgb, 255:red, 182; green, 5; blue, 60 }  ,draw opacity=1 ][line width=1.5]    (14.21,-4.28) .. controls (9.04,-1.82) and (4.3,-0.39) .. (0,0) .. controls (4.3,0.39) and (9.04,1.82) .. (14.21,4.28)   ;
\draw [color={rgb, 255:red, 182; green, 5; blue, 60 }  ,draw opacity=1 ][line width=1.5]    (320.44,61.77) -- (298.91,105.57) ;
\draw [shift={(297.58,108.26)}, rotate = 296.18] [color={rgb, 255:red, 182; green, 5; blue, 60 }  ,draw opacity=1 ][line width=1.5]    (14.21,-4.28) .. controls (9.04,-1.82) and (4.3,-0.39) .. (0,0) .. controls (4.3,0.39) and (9.04,1.82) .. (14.21,4.28)   ;
\draw [line width=1.5]    (320.44,61.77) -- (341.3,50.04) ;
\draw [line width=1.5]    (297.58,108.26) -- (298,130.96) ;
\draw [line width=1.5]    (55.5,37.17) -- (68.5,59.67) ;
\draw [line width=1.5]    (341.3,50.04) -- (352,32.67) ;
\draw [line width=1.5]    (341.3,50.04) -- (359,58.67) ;
\draw [line width=1.5]    (253.57,49.17) -- (241,32.17) ;
\draw [line width=1.5]    (234.5,54.17) -- (253.57,49.17) ;
\draw [line width=1.5]    (298,130.96) -- (281.5,146.67) ;
\draw [line width=1.5]    (298,130.96) -- (313.5,147.17) ;
\draw [line width=1.5]    (127,69.17) -- (137.7,51.8) ;
\draw [line width=1.5]    (127,69.17) -- (144.7,77.8) ;

\draw (25.83,53.43) node [anchor=north west][inner sep=0.75pt]  [font=\normalsize,color={rgb, 255:red, 6; green, 79; blue, 97 }  ,opacity=1 ]  {$b_{1}$};
\draw (66.24,141.69) node [anchor=north west][inner sep=0.75pt]  [font=\normalsize,color={rgb, 255:red, 6; green, 79; blue, 97 }  ,opacity=1 ]  {$a_{1}$};
\draw (102.74,141.69) node [anchor=north west][inner sep=0.75pt]  [font=\normalsize,color={rgb, 255:red, 6; green, 79; blue, 97 }  ,opacity=1 ]  {$a_{2}$};
\draw (77.3,94.8) node [anchor=north west][inner sep=0.75pt]  [font=\small]  {$\ell _{1}$};
\draw (352.83,19.22) node [anchor=north west][inner sep=0.75pt]  [font=\normalsize,color={rgb, 255:red, 6; green, 79; blue, 97 }  ,opacity=1 ]  {$c_{1}$};
\draw (218.24,47.98) node [anchor=north west][inner sep=0.75pt]  [font=\normalsize,color={rgb, 255:red, 6; green, 79; blue, 97 }  ,opacity=1 ]  {$b_{1}$};
\draw (362,53.07) node [anchor=north west][inner sep=0.75pt]  [font=\normalsize,color={rgb, 255:red, 6; green, 79; blue, 97 }  ,opacity=1 ]  {$c_{2}$};
\draw (286.2,74.99) node [anchor=north west][inner sep=0.75pt]  [font=\small,color={rgb, 255:red, 182; green, 5; blue, 60 }  ,opacity=1 ]  {$\gamma $};
\draw (260.1,40.29) node [anchor=north west][inner sep=0.75pt]  [font=\small]  {$\ell _{2}$};
\draw (327.8,58.59) node [anchor=north west][inner sep=0.75pt]  [font=\small]  {$\ell _{3}$};
\draw (292.2,48.99) node [anchor=north west][inner sep=0.75pt]  [font=\small]  {$x$};
\draw (39.83,21.43) node [anchor=north west][inner sep=0.75pt]  [font=\normalsize,color={rgb, 255:red, 6; green, 79; blue, 97 }  ,opacity=1 ]  {$b_{2}$};
\draw (79.3,55.8) node [anchor=north west][inner sep=0.75pt]  [font=\small]  {$\ell _{2}$};
\draw (222.24,18.48) node [anchor=north west][inner sep=0.75pt]  [font=\normalsize,color={rgb, 255:red, 6; green, 79; blue, 97 }  ,opacity=1 ]  {$b_{2}$};
\draw (270.74,149.48) node [anchor=north west][inner sep=0.75pt]  [font=\normalsize,color={rgb, 255:red, 6; green, 79; blue, 97 }  ,opacity=1 ]  {$a_{1}$};
\draw (309.24,149.48) node [anchor=north west][inner sep=0.75pt]  [font=\normalsize,color={rgb, 255:red, 6; green, 79; blue, 97 }  ,opacity=1 ]  {$a_{2}$};
\draw (139.83,39.4) node [anchor=north west][inner sep=0.75pt]  [font=\normalsize,color={rgb, 255:red, 6; green, 79; blue, 97 }  ,opacity=1 ]  {$c_{1}$};
\draw (147,73.26) node [anchor=north west][inner sep=0.75pt]  [font=\normalsize,color={rgb, 255:red, 6; green, 79; blue, 97 }  ,opacity=1 ]  {$c_{2}$};
\draw (108.8,78.8) node [anchor=north west][inner sep=0.75pt]  [font=\small]  {$\ell _{3}$};
\draw (300.08,114.16) node [anchor=north west][inner sep=0.75pt]  [font=\small]  {$\ell _{1}$};
\draw (267.58,82.66) node [anchor=north west][inner sep=0.75pt]  [font=\small,color={rgb, 255:red, 108; green, 4; blue, 35 }  ,opacity=1 ]  {$h_{1}$};
\draw (312.08,84.16) node [anchor=north west][inner sep=0.75pt]  [font=\small,color={rgb, 255:red, 108; green, 4; blue, 35 }  ,opacity=1 ]  {$h_{2}$};

\end{tikzpicture}
\vspace*{-1cm} 
	\end{center}
	\caption{(L) The 6-taxon tree $T_6$ with three cherries. (R) The 6-taxon network $N_a$ with a central 3-cycle surrounded by 3 cherries, with $a_1,a_2$ descending from the hybrid node. 
	The network $N_b$ is obtained by `rotating' the three pairs of taxa so $b_1,b_2$  descend from the hybrid node, and similarly for $N_c$.} \label{fig:222nets}
\end{figure}
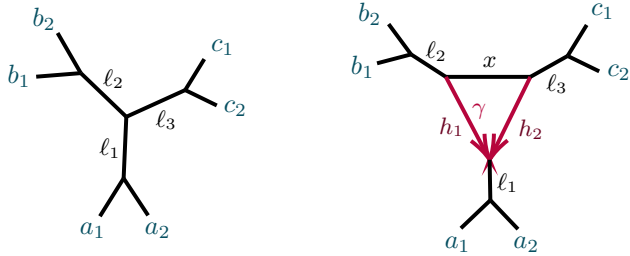

\smallskip 

Computational results for the 6-taxon networks $T_6$ and $N_a$ of \cref{fig:222nets} appear in \cref{prop:T6,prop:N6-2-2-2}. Note that $\mathcal I(T_6)$ contains 3 polynomials, $f_{abc}, f_{bca}, f_{cab}$, none of which are in $\mathcal I(N_a)$, expressing three different internal path length relationships in the tree.
\cref{prop:N6-2-2-2} implies $\mathcal I(N_a)=\mathcal I(N_b)=\mathcal I(N_c)$, where $N_b$ and $N_c$ differ from $N_a$ in which taxa are below the hybrid node. Thus the hybrid node of the 3-cycle in these three networks cannot be determined from purely algebraic conditions on   $CF$s. While the vanishing of any of the three 
$f_{abc},f_{bca}, f_{cab}$ (and hence all) distinguishes the tree $T_6$ from $N_a$, $N_b,$ and $N_c$, that was already implicit in \cref{thm:5tax3cycle}. 

\smallskip

\subsubsection{3-cycles on small networks: Semialgebraic conditions}

While \cref{sec:3smallalg} has shown the presence of a 3-cycle can be
detected
in some networks, that result pertains only to the undirected cycle.
To obtain information on the hybrid node, we use a semialgebraic approach, focusing on polynomial inequalities.
						
\begin{proposition} \label{prop:5tax3cyc}
Let $N$ be one of $T_5$, $N_{5-3_1}$, $N_{5-3_2}$   of \Cref{fig:5_3cycle}, or the network 
$N'_{5-3_2}$ obtained from interchanging the $a,b$ taxon labels on $N_{5-3_2}$.
Let $f_{abc}$ be as in \cref{eq:f}.
Then  for generic numerical parameters under the NMSC,			 
			\begin{equation*}
			N=\begin{cases}
			T_5, & \text{ if, and only if, }  f_{abc}=0\\
			N_{5-3_2}\text{ or } N'_{5-3_2}, & \text{ if  } f_{abc}<0\\
 			N_{5-3_1}, N_{5-3_2},\text{ or } N'_{5-3_2}, &\text{ if  } f_{abc}>0
 			\end{cases}
 		\end{equation*}
		
Moreover, $f_{abc}$ is identical on the networks $N_{5-3_2}$  and $N'_{5-3_2}$ for the same parameter values, so $f_{abc}$ gives no information to distinguish between these. 
		
Finally, there are positive measure subsets of the numerical parameter space for  $N_{5-3_2}$ on which  $f_{abc}<0$ and on which  $f_{abc}>0$.
\end{proposition}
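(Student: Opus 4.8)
The plan is to build the trichotomy from the behavior of $f_{abc}$ on each of the four networks, and to treat the symmetry and positive-measure assertions as separate, largely self-contained pieces. The equivalence $f_{abc}=0\iff N=T_5$ (generically) is already \cref{thm:5tax3cycle}: $f_{abc}\in\mathcal I(T_5)$ makes it vanish identically on $T_5$, while $f_{abc}\notin\mathcal I(N_{5-3_1})=\mathcal I(N_{5-3_2})$ confines its zero set on each network to a measure-zero subvariety. For the symmetry claim I would argue purely by relabeling: letting $\sigma$ interchange the $a$- and $b$-labels, so that $N'_{5-3_2}=\sigma(N_{5-3_2})$ and $CF_Q(\sigma N)=CF_{\sigma^{-1}Q}(N)$ for every quartet $Q$, one gets $CF_{ab|ac}(N')=CF_{ab|bc}(N)$, $CF_{ab|bc}(N')=CF_{ab|ac}(N)$ and $CF_{ab|ab}(N')=CF_{ab|ab}(N)$; since $f_{abc}=3\,CF_{ab|ac}CF_{ab|bc}-CF_{ab|ab}$ is symmetric in its first two factors, $f_{abc}(N'_{5-3_2})=f_{abc}(N_{5-3_2})$ as functions of the shared parameters. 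In particular $f_{abc}\notin\mathcal I(N'_{5-3_2})$ as well, and $f_{abc}$ carries no information distinguishing $N_{5-3_2}$ from $N'_{5-3_2}$.

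The crux of the trichotomy is the inequality $f_{abc}\ge 0$ on $N_{5-3_1}$, since this is what excludes the $3_1$-cycle when $f_{abc}<0$. I would exploit that the hybrid block of $N_{5-3_1}$ is the single taxon $c$: exactly one gene lineage ever reaches the hybrid node $v$ from below, so it cannot coalesce there and simply follows hybrid edge~$1$ with probability $\gamma$ or hybrid edge~$2$ with probability $1-\gamma$. Hence every $CF$ of $N_{5-3_1}$ is the honest mixture $\gamma\,CF(T_1)+(1-\gamma)\,CF(T_2)$, where $T_1,T_2$ are the metric trees obtained by routing $c$ through each hybrid edge. Both have the topology of $T_5$ (a five-leaf tree only changes metrically as $c$'s attachment point slides along the path between the two cherries), so $f_{abc}(T_i)=0$, i.e.\ $CF_{ab|ab}(T_i)=3\,A_iB_i$ with $A_i=CF_{ab|ac}(T_i)$, $B_i=CF_{ab|bc}(T_i)$. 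Substituting the mixture and using these identities, the quadratic terms collapse to
\begin{equation*}
f_{abc}(N_{5-3_1})=-3\,\gamma(1-\gamma)\,(A_1-A_2)(B_1-B_2).
\end{equation*}
Because $a_1a_2$ and $b_1b_2$ remain cherries, $A_i=\tfrac13 e^{-d(u_a,m_i)}$ and $B_i=\tfrac13 e^{-d(u_b,m_i)}$, where $u_a,u_b$ are the cherry nodes, $m_i$ is $c$'s attachment point, and $d$ is path length in coalescent units. Since $m_i$ lies on the geodesic between $u_a$ and $u_b$, the sum $d(u_a,m_i)+d(u_b,m_i)$ equals the (routing-independent) cherry-to-cherry distance $d(u_a,u_b)$, so $A_iB_i$ is the same for $i=1,2$; as $A_i,B_i>0$, this forces $A_1-A_2$ and $B_1-B_2$ to have opposite signs (both vanishing only in the degenerate case). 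Thus $(A_1-A_2)(B_1-B_2)\le 0$ and $f_{abc}(N_{5-3_1})\ge 0$, strictly for generic parameters. Together with $f_{abc}\equiv 0$ on $T_5$, this gives $f_{abc}<0\Rightarrow N\in\{N_{5-3_2},N'_{5-3_2}\}$ and $f_{abc}>0\Rightarrow N\in\{N_{5-3_1},N_{5-3_2},N'_{5-3_2}\}$, completing the trichotomy.

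For the positive-measure statement I would show that the mechanism above fails on $N_{5-3_2}$: there the hybrid block $\{a_1,a_2\}$ has two lineages, which may reach $v$ without coalescing and then split between the two hybrid edges, so the $CF$s are no longer a two-tree mixture and $f_{abc}$ is no longer sign-constrained. Concretely, I would take the explicit polynomial expressions for the $CF$s of $N_{5-3_2}$ from \cref{prop:N5-3-2}, form $f_{abc}$ as a polynomial in $\ell_1,\ell_2,\ell_3,x,\gamma,h_1,h_2$, and exhibit one parameter choice at which it is strictly positive and one at which it is strictly negative. The conditional ``coalesced'' computation, in which the two $a$-lineages do coalesce below $v$ and the residual term has sign $-\operatorname{sign}(h_1-h_2)$, indicates that interchanging the relative sizes of the two hybrid edges is the natural way to flip the sign. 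Since $f_{abc}$ is a polynomial, hence continuous, function on the parameter domain, which is a product of open intervals, the sets $\{f_{abc}>0\}$ and $\{f_{abc}<0\}$ are open; being nonempty, each has positive measure.

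The main obstacle is the negative witness on $N_{5-3_2}$: unlike $N_{5-3_1}$, there is no structural decomposition that fixes the sign, so I must rely on the explicit and cumbersome $CF$ formulas and check the sign at a concrete point. By contrast, the nonnegativity on $N_{5-3_1}$, which is the conceptual heart, is made tractable by the mixture identity and the constancy of $A_iB_i$; the only care needed there is to confirm that $m_i$ genuinely lies on the $u_a$--$u_b$ geodesic, so that the two separating path lengths are complementary and the cherry-to-cherry distance is itself independent of the routing.
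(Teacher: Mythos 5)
Your proposal is correct, and for the crucial step it takes a genuinely different route from the paper. The paper's proof of the trichotomy disposes of $N_{5-3_1}$ by direct substitution: it plugs the $CF$ formulas of \cref{prop:N5-3-1} into $f_{abc}$ and simplifies to $f_{abc}=\ell_1\ell_2\gamma(1-\gamma)(x-1)^2/3>0$, a two-line computation that leans entirely on the appendix. You instead exploit the structural fact that the hybrid node of $N_{5-3_1}$ has the single descendant $c$, so exactly one lineage traverses the hybrid edges and the $CF$s are the honest mixture $\gamma\,CF(T_1)+(1-\gamma)\,CF(T_2)$ over the two displayed metric trees; since each $T_i$ has topology $T_5$, the tree invariant gives $CF_{ab|ab}(T_i)=3A_iB_i$, and the mixture collapses to
\begin{equation*}
f_{abc}(N_{5-3_1})=-3\gamma(1-\gamma)(A_1-A_2)(B_1-B_2)\ge 0,
\end{equation*}
with the sign forced by your constant-product observation $A_1B_1=A_2B_2$ (both attachment points lie on the $u_a$--$u_b$ path, so the separating path lengths are complementary). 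I verified this reproduces the paper's closed form exactly: $A_1-A_2=\tfrac13\ell_1(1-x)$ and $B_1-B_2=-\tfrac13\ell_2(1-x)$ give $\tfrac13\ell_1\ell_2\gamma(1-\gamma)(1-x)^2$. What the paper's computation buys is brevity and uniformity with its other computational propositions; what yours buys is an explanation of \emph{why} a $3_1$-cycle can never produce $f_{abc}<0$ (single-lineage hybridization is indistinguishable from a tree mixture, and mixtures of trees satisfying $f=0$ can only push $f$ in one direction here), an argument that transfers to $3_1$-type configurations in larger networks without redoing algebra. The remaining parts of your proof --- the $T_5$ characterization via \cref{thm:5tax3cycle}, the relabeling symmetry $f_{abc}(N'_{5-3_2})=f_{abc}(N_{5-3_2})$, and the positive-measure claim via sign witnesses on $N_{5-3_2}$ plus continuity --- coincide with the paper's approach. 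One small remark: like the paper, you assert rather than exhibit the positive and negative witnesses on $N_{5-3_2}$; for completeness, note that with the formulas of \cref{prop:N5-3-2} one has $f_{abc}=\tfrac13\ell_1\ell_2\,\gamma(1-\gamma)\left[-\gamma h_1+(1-\gamma)h_2+3\gamma-2\right]$ at $x=0$, which is negative for small $\gamma$ and positive for $\gamma$ near $1$ (with $h_1<1$), so both witnesses exist.
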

 		
 \begin{proof}  \Cref{thm:5tax3cycle} states that $f_{abc}=0$ for generic parameters if, and only if, $N=T_5$.
				If $N = N_{5-3_1}$ then using the formulas for $CF$s in \cref{prop:N5-3-1} gives, for $\gamma,x, \ell_1, \ell_2 \in(0,1)$,
 			\begin{align*}
 				f_{abc} &= [\ell_1\ell_2 (\gamma + x-\gamma x)(1-\gamma + \gamma x) -\ell_1\ell_2x]/3 \\
 				&=\ell_1\ell_2\gamma(1-\gamma)(x-1)^2/3>0.
 			\end{align*}		

Since $f_{abc}$ is invariant under interchanging the $a$s and $b$s, its values for $N_{5-3_2}$  and $N'_{5-3_2}$ are the same.
		
Specific examples of parameters on $N_{5-3_2}$ show both $f_{abc}<0$ and $f_{abc}>0$ can occur, 
and by continuity there are positive measure subsets of parameter space on which these occur. 		
\end{proof}

If a 5-taxon network does have a 3-cycle $C$, then this proposition may provide some information on the hybrid node's location. 
For instance,  $f_{abc}<0$ implies the taxon $c$ which is not in a cherry on the tree 
obtained by contracting $C$ to a vertex
is also not a hybrid descendant of the 3-cycle. However, for other numerical  parameters
$f_{abc} > 0$, in which case there is no information on the hybrid location.

\smallskip

To further develop semialgebraic tests for 3-cycle hybrid nodes, we again consider the 
6-taxon networks $N_a,N_b,N_c$ described in  \cref{fig:222nets}.
Define the following functions of the $CF$s, building on the $f_{xyz}$:
\begin{align}\label{eq:Gs}
		G_{abc} &=-f_{abc}CF_{ac|bc}+2 f_{bca}C_{ab|ac} -f_{cab}C_{ab|bc}\notag \\
		&=CF_{ac|ac}CF_{ab|bc}-2CF_{bc|bc}CF_{ab|ac} + CF_{ab|ab}CF_{ac|bc},\\
		G_{cab}&=CF_{bc|bc}CF_{ab|ac}-2CF_{ab|ab}CF_{ac|bc} + CF_{ac|ac}CF_{ab|bc}\notag,\\
		G_{bca}&=CF_{ab|ab}CF_{ac|bc}-2CF_{ac|ac}CF_{ab|bc} + CF_{bc|bc}CF_{ab|ac}.\notag
\end{align}
Note that $G_{xyz}\in\mathcal  I(T_6)$, $G_{xyz}=G_{xzy}$ and $G_{abc} + G_{cab} + G_{bca}=0$.

\begin{proposition}\label{prop:222semi}
Under the NMSC, for $CF$s arising from the tree $T_6$,  $G_{xyz}=0$
for all $x,y,z$, while $G_{xyz}>0$
for $CF$s arising from the network $N_x$.

 If a network is known to have one of the topologies $N_a,N_b,N_c$,
  then at least one of these topologies can be ruled out by the signs of 
$G_{abc},G_{cab},G_{bca}$: If $G_{xyz}<0$ then the network is not $N_x$.

Finally, there are positive measure subsets of the numerical parameter space for  
$N_y$ and $N_z$ on which  $G_{xyz}<0$ and on which  $G_{xyz}>0$.

\end{proposition}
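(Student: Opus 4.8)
The plan is to dispatch the four assertions in turn, relying on the explicit $CF$ formulas recorded in \cref{prop:T6,prop:N6-2-2-2} together with the threefold symmetry relating $N_a$, $N_b$, $N_c$. The vanishing on $T_6$ is nearly formal: by \cref{prop:T6} each of $f_{abc}$, $f_{bca}$, $f_{cab}$ lies in $\mathcal I(T_6)$, and since the first expression for $G_{xyz}$ in \cref{eq:Gs} writes it as a $CF$-coefficient combination of these three polynomials, $G_{xyz}\in\mathcal I(T_6)$ as well; hence $G_{xyz}$ vanishes on every $CF$ vector coming from $T_6$. For the positivity claim I would substitute the parametric $CF$s for $N_a$ from \cref{prop:N6-2-2-2} into the expanded (second) form of $G_{abc}$ in \cref{eq:Gs} and simplify, aiming to present the result as a product of factors each visibly positive when the edge probabilities and $\gamma$ lie in $(0,1)$ --- in direct analogy with the factorization $f_{abc}=\ell_1\ell_2\,\gamma(1-\gamma)(x-1)^2/3$ obtained for $N_{5-3_1}$ in \cref{prop:5tax3cyc}. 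Granting $G_{abc}>0$ on $N_a$, no further computation is needed for the other two networks: the cyclic relabeling $a\to b\to c\to a$ sends $N_a\mapsto N_b\mapsto N_c$ and simultaneously $G_{abc}\mapsto G_{bca}\mapsto G_{cab}$, so $G_{xyz}>0$ on $N_x$ for every $x$.

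The rule-out statement is then an immediate contrapositive: because $G_{xyz}>0$ holds throughout the parameter space of $N_x$, any observed value $G_{xyz}<0$ is incompatible with $N_x$. For the concluding sign-variation claim I would show that $G_{abc}$ attains both signs on each of $N_b$ and $N_c$. The transposition $a\leftrightarrow b$ carries $N_b\mapsto N_a$ and $G_{abc}\mapsto G_{bca}$, so the sign behavior of $G_{abc}$ on $N_b$ coincides with that of $G_{bca}$ on $N_a$; likewise $a\leftrightarrow c$ matches $G_{abc}$ on $N_c$ with $G_{cab}$ on $N_a$. The identity $G_{abc}+G_{bca}+G_{cab}=0$ forces $G_{bca}+G_{cab}=-G_{abc}<0$ on $N_a$, already guaranteeing that negative values occur; to exhibit both signs on positive-measure sets I would produce two explicit parameter choices on $N_a$ realizing $G_{bca}<0$ and $G_{bca}>0$ (and similarly for $G_{cab}$), then invoke continuity of the polynomial $CF$ map to conclude each sign persists on an open, hence positive-measure, subset, exactly as in the closing argument of \cref{prop:5tax3cyc}.

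The one genuinely hard step is the strict inequality $G_{abc}>0$ on all of $N_a$. The Gr\"obner-basis computations underlying \cref{prop:T6,prop:N6-2-2-2} certify only ideal membership, i.e.\ equalities, and carry no sign information; positivity is a semialgebraic fact that must be extracted directly from the parametrization. The obstacle is thus to reorganize the (high-degree, several-variable) polynomial obtained after substitution into a transparently nonnegative shape --- a product of positive factors, or a sum of squares with positive coefficients --- valid on the open box of edge probabilities and $\gamma$. I expect this to require insight rather than routine expansion, and I would use the $f$-representation of $G_{abc}$ in \cref{eq:Gs} to structure the algebra and to chase a $\gamma(1-\gamma)(\cdot)^2$-type certificate mirroring the five-taxon case.
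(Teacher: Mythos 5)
Your proposal follows the paper's proof essentially step for step: ideal membership (via the $f$-representation) for the vanishing on $T_6$, substitution of the parametric $CF$s from \cref{prop:N6-2-2-2} to certify $G_{abc}>0$ on $N_a$ with relabeling symmetry handling $N_b,N_c$, the contrapositive for the rule-out claim, and specific parameter choices plus continuity for the positive-measure assertions. The one step you leave open --- reorganizing the substituted polynomial into visibly positive factors --- is exactly what the paper carries out by direct expansion, obtaining $9\,G_{abc}=\gamma(1-\gamma)(1-x)^2\ell_1\ell_2\ell_3\,\bigl[h_1\gamma+h_2(1-\gamma)+2\bigr]>0$, precisely the $\gamma(1-\gamma)(\cdot)^2$-type certificate you predicted, so no further idea is needed there.
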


\begin{proof} That $G_{xyz}=0$ for $T_6$ restates  that $G_{xyz}\in\mathcal  I(T_6)$.
Using formulas from \cref{prop:N6-2-2-2}, for $CF$s from $N_a$,
\begin{align*}
9\, G_{abc}&=CF_{ac|ac}CF_{ab|bc}-2CF_{bc|bc}CF_{ab|ac} + CF_{ab|ab}CF_{ac|bc}\\
&=\ell_1\ell_3(\gamma^2h_1x + \gamma^2h_2-2\gamma^2-2\gamma h_2 + 2\gamma + h_2) 
\ell_2(x + \gamma-\gamma x) \\
&  \ \ \ \ -2x\ell_2\ell_3  \ell_1(\gamma^2h_1 + \gamma^2h_2 + \gamma^2x-3\gamma^2-2\gamma h_2-\gamma x + 3\gamma + h_2)\\
&\ \ \ \ + \ell_1\ell_2(\gamma^2h_1 + \gamma^2h_2x-2\gamma^2-2\gamma h_2x + 2\gamma + h_2x) \ell_3(\gamma x-\gamma + 1)\\
&=\gamma(1-\gamma)(1-x)^2\ell_1\ell_2\ell_3  [h_1\gamma  + h_2(1-\gamma)  + 2]>0.
\end{align*}

Since $G_{abc}+G_{cab}+G_{bca}=0$  and one of these terms is positive for each of $N_a,N_b,N_c$, at  least one is negative.

One can find specific parameters on $N_x$ for which $G_{xyz}<0$ and $G_{xyz}>0$,
and by continuity these conditions hold on sets of positive measure.
\end{proof} 

\begin{figure}
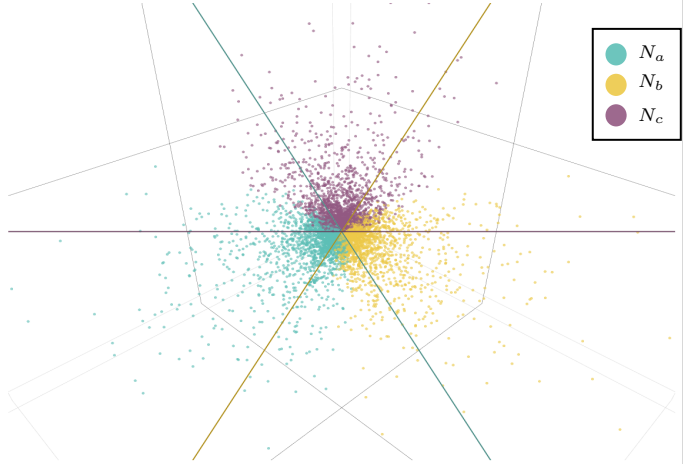

\begin{center}
\include{Figures/Gxyz_newColors.tex}
\end{center}
\caption{Values of $(G_{abc}, G_{bca},G_{cab})$ plotted in three dimensions, for random numerical parameter 
	values on each of the three networks $N_a,N_b,N_c$.
Color indicates network topology. Plotted points lie in the plane $x+y+z=0$, which is viewed orthogonally. 
The three coordinate planes $x=0,y=0,z=0$ intersect this plane in the colored lines, separating the points by color 
into overlapping half-planes.  
Numerical parameters for networks were chosen uniformly from the interval $[0,1]$. }\label{fig:Gplot}
\end{figure}

\Cref{fig:Gplot} illustrates the proposition, showing $(G_{abc}, G_{bca},G_{cab})$ for randomly chosen numerical parameters on 
each of the networks $N_a,N_b,N_c$, with color indicating the network topology. Since the points lie in a plane $P$ through the origin, 
the axes have been rotated to view the plane orthogonally.
The three planes $G_{xyz}=0$ intersect $P$ in lines which divide the plot into six sectors. On three of these sectors 
exactly
one color appears, indicating that the network topology is determined by the positivity of 
exactly one $G_{xyz}$. On the  
3 sectors where two colors appear, two of the $G_{xyz}$ are positive, 
so only one of the network topologies is ruled out. 

\begin{rmk} 
It is natural to ask if $f_{xyz}$ or $G_{xyz}$ could be used to detect 3-cycles in situations where incomplete lineage sorting 
is negligible, so that all gene trees are displayed on the species network.
This scenario is modeled by immediate coalescence of gene lineages on entering a common network edge
or, equivalently, by a limiting model of the NMSC, in which all edge probabilities go to 0. (See \cite[Section 6.2]{logdetNet} for more details.)
The formulae for 
$CF$s given in this work still apply, and for all the 5-taxon networks of  \cref{prop:5tax3cyc} $f_{abc}=0$, while for
all the 6-taxon networks of \cref{prop:222semi} $G_{xyz}=0$. Indeed, these functions depend only on 
$CF$s for quartets \emph{not}  displayed on the networks, which
are therefore all zero. 
A coalescent process is thus essential to detecting 3-cycles with these functions.
\end{rmk}

\cref{prop:222semi} and \cref{fig:Gplot}  suggest determining which of $N_a,N_b,$ or $N_c$ produced certain numerical $CF$s
may be impossible, which we rigorously show by the following example.

\begin{example}[Non-identifiability of the hybrid node in a 3-cycle]\label{ex:NonID}
Consider the network $N_a$ with parameters
	\begin{gather*}
		\gamma^{(a)}=\frac{28}{100} ,\quad h_1^{(a)}=\frac{83}{100}, \quad h_2^{(a)}=\frac{78}{100}, \quad x^{(a)}=\frac{98}{100},\\ 
			\quad \ell_1^{(a)}=\frac{88}{100}, \quad \ell_2^{(a)}=\frac{61}{100}, \quad \ell_3^{(a)}=\frac{50}{100},
	\end{gather*}
%
	and the network $N_b$ with parameters 
	\begin{align*}
		&\gamma^{(b)}=\frac{236700}{253367}, \quad h_1^{(b)}=\frac{84456638}{87243675}, \quad h_2^{(b)}=\frac{27286250}{31593489}, \\
		&x^{(b)}=\frac{2722883}{2976250},  \quad \ell_1^{(b)}=\frac{809409}{1315000}, \quad \ell_2^{(b)}=\frac{1}{2}, \quad \ell_3^{(b)}=\frac{26191}{31250},
	\end{align*}
	where the parameters for $N_b$ are as shown for $N_a$ in \cref{fig:222nets} but with taxon labels 
	$(a_1,a_2)$, $(b_1,b_2)$ and $(c_1,c_2)$ replaced by $(b_1,b_2)$, $(c_1,c_2)$ and $(a_1,a_2)$ respectively.
	Then the $CF$s of $N_a$ and $N_b$ are equal. Specifically, for both $N_a$ and $N_b$,
	\begin{align*}
		&CF_{bc|bc} = \frac{2989}{30000}, \quad CF_{ab|ab}=\frac{906412969}{5859375000} ,\quad CF_{ac|ac}=\frac{29951713}{234375000}, \\
		&CF_{ab|ac}=\frac{602701}{2343750} \quad CF_{ab|bc}=\frac{9394}{46875}, \quad CF_{ac|bc}=\frac{1243}{7500}.
	\end{align*}
\end{example}

In fact, there is a neighborhood in $\mathcal V(N_a)=\mathcal V(N_b)$ of the $CF$ point of this example 
contained in the image of the parameterizations of both $N_a$ and $N_b$.
Indeed, a computation of the Jacobians for the two parameterization maps at the example parameters shows that 
locally the images are of dimension 6, which matches the dimension of the variety. A sufficiently small neighborhood 
of the $CF$ point is thus in the image of the parameterizations for both $N_a$ and $N_b$, with inverse images of positive measure. 
One may similarly show, using a $CF$ point that arises only from $N_a$ (lying in a uniformly colored sector in \cref{fig:Gplot}), 
that there  is a set of positive measure in the $N_a$ parameter space which gives $CF$s in the image of the parametrization of $N_a$ only.  
We combine these results formally in the following theorem.

\begin{theorem}\label{thm:222twosets}
There exists a positive measure subset of the numerical parameter space of $N_a$ for which it is distinguishable from $T_6$, $N_b$, and $N_c$, and a positive measure subset of the parameter space for 
which only the undirected network can be distinguished from $T_6$, with 1 node in the 3-cycle determined to be non-hybrid.
\end{theorem}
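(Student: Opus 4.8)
The plan is to produce the two positive-measure subsets separately, in each case exhibiting a single suitable $CF$ point and then propagating the relevant strict (in)equalities to an open neighborhood by continuity; the harder half converts the full-rank Jacobian data of \cref{ex:NonID} into an honest neighborhood of $CF$ values realized by both $N_a$ and $N_b$. Throughout, the candidate topologies are exactly $\{T_6,N_a,N_b,N_c\}$, so ``distinguishing $N_a$'' means excluding the other three. For the \textbf{identifiable} subset I would first use \cref{fig:Gplot} (or a direct computation) to pick numerical parameters $\theta^\ast$ on $N_a$ whose $CF$ point lies in a uniformly $N_a$-colored sector, i.e.\ with $G_{bca}<0$ and $G_{cab}<0$ (forcing $G_{abc}>0$). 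Since each $G_{xyz}$ is a fixed polynomial in the $CF$s and the $CF$s are polynomial in the numerical parameters, the strict inequalities $G_{bca}<0$, $G_{cab}<0$ persist on an open neighborhood $\Theta_{\mathrm{id}}$ of $\theta^\ast$, which therefore has positive measure. On $\Theta_{\mathrm{id}}$ the competing topologies are excluded exactly as in \cref{prop:222semi}: because $G_{bca}\ge 0$ for all parameters on $N_b$ and $G_{cab}\ge 0$ for all parameters on $N_c$ (the formulas of \cref{prop:222semi} being manifestly nonnegative), the observed $CF$s cannot arise from $N_b$ or $N_c$, while $G\not\equiv 0$ rules out $T_6$; hence $N_a$ is distinguishable from $T_6$, $N_b$, and $N_c$.

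For the \textbf{non-identifiable} subset I would build on \cref{ex:NonID}, whose $CF$ point $p^\ast$ lies simultaneously in the images of the parameterization maps $\phi_a,\phi_b$ of $N_a,N_b$, say at parameters $\theta_a^\ast,\theta_b^\ast$. The Jacobian computation recorded before the theorem shows $D\phi_a(\theta_a^\ast)$ and $D\phi_b(\theta_b^\ast)$ each have rank $6=\dim\mathcal V(N_a)=\dim\mathcal V(N_b)$, so $p^\ast$ is a smooth point of the common variety and each map is locally a submersion onto $\mathcal V(N_a)=\mathcal V(N_b)$. By the constant-rank (submersion) theorem, the image of each of $\phi_a,\phi_b$ contains a neighborhood of $p^\ast$ that is open in $\mathcal V$; intersecting these gives an open $U\subseteq\mathcal V$ with $p^\ast\in U$ lying in both images. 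Pulling back, $\Theta_{\mathrm{ni}}:=\phi_a^{-1}(U)$ is open and nonempty in the parameter space of $N_a$, hence of positive measure, and for every $\theta\in\Theta_{\mathrm{ni}}$ the point $\phi_a(\theta)\in U$ is also realized by $N_b$, so $N_a$ is not distinguishable from $N_b$ on $\Theta_{\mathrm{ni}}$. Finally, $p^\ast\in\operatorname{im}\phi_a\cap\operatorname{im}\phi_b$ forces $G_{abc}>0$ and $G_{bca}>0$, whence $G_{abc}+G_{bca}+G_{cab}=0$ gives $G_{cab}<0$ at $p^\ast$; shrinking $U$ so this strict inequality persists, on $\Theta_{\mathrm{ni}}$ we still have $G_{cab}<0$ and $G\not\equiv 0$, so $N_c$ and $T_6$ are excluded. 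Thus the undirected $3$-cycle is detected and node $c$ is certified non-hybrid, while the hybrid node ($a$ versus $b$) remains undetermined, which is the stated conclusion.

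The main obstacle is precisely the passage in the non-identifiable case from the numerical full-rank Jacobian to a genuine common neighborhood of $CF$ values. This needs $p^\ast$ to be a \emph{smooth} point of $\mathcal V(N_a)=\mathcal V(N_b)$, so that the local images of $\phi_a$ and $\phi_b$ are one and the same open piece of the variety rather than two transverse sheets; rank equal to $\dim\mathcal V$ is exactly what supplies this, but it must be argued rather than asserted. Care is also needed because the parameter space is $7$-dimensional while $\mathcal V$ is $6$-dimensional, so the maps are not locally injective: one must check that the extra (non-effective) parameter direction merely enlarges the fibers of $\phi_a,\phi_b$ and does not obstruct the submersion conclusion, which is why $\Theta_{\mathrm{ni}}=\phi_a^{-1}(U)$ is open and full-dimensional (hence positive measure) rather than a lower-dimensional slice.
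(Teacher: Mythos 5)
Your proposal follows the paper's own route almost exactly: the identifiable set comes from an $N_a$ parameter point whose $CF$s lie in a uniformly colored sector of \cref{fig:Gplot} (so $G_{bca},G_{cab}<0$), with the strict inequalities propagated by continuity and the competitors excluded via \cref{prop:222semi}; the non-identifiable set comes from \cref{ex:NonID} together with the rank-$6$ Jacobian computation, yielding a common neighborhood of the $CF$ point in $\mathcal V(N_a)=\mathcal V(N_b)$ whose preimages under both parameterizations are open and of positive measure. Your explicit derivation of $G_{cab}<0$ at the shared point (from $G_{abc},G_{bca}>0$ and $G_{abc}+G_{bca}+G_{cab}=0$) usefully makes the ``one node determined to be non-hybrid'' clause explicit, which the paper leaves implicit.

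The one place where your reasoning needs repair is the claim, in your final paragraph, that ``rank equal to $\dim\mathcal V$ is exactly what supplies'' smoothness of the $CF$ point on $\mathcal V(N_a)=\mathcal V(N_b)$. That implication is false in general: for the nodal cubic $y^2=x^2(x+1)$ with parameterization $t\mapsto(t^2-1,\,t^3-t)$, the derivative at $t=\pm 1$ has full rank $1=\dim$, yet the image point is the node, and the two local images are distinct analytic branches meeting only at that point --- precisely the ``two transverse sheets'' scenario you want to exclude. So full-rank Jacobians alone do not force the local images of $\phi_a$ and $\phi_b$ to coincide. In the present situation the gap closes for a different and much simpler reason: by \cref{prop:N6-2-2-2}(b), $\mathcal I(N_a)=\mathcal I(N_b)=\mathcal J(T_6)$, and $\mathcal J(T_6)$ is generated by the trivial, cut, and exchange invariants, all of which are linear. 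Hence $\mathcal V(N_a)=\mathcal V(N_b)$ is an affine-linear subspace of dimension $6$, smooth at every point. With that in hand your submersion argument goes through verbatim: both local images are open subsets of this single linear $6$-manifold, so they share an open neighborhood $U$ of the $CF$ point, and $\phi_a^{-1}(U)$ is the desired positive-measure set.
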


Again using the parameter values in \cref{ex:NonID}, an analog of this result for 5-taxon networks 
with a single 3-cycle can be established. 

\begin{theorem}\label{thm:221twosets}
There exist positive measure subsets of the numerical parameter spaces of $N_{5-3_1}$  and $N_{5 -3_2}$ for 
which the semidirected network topologies are 
distinguishable from the other networks among $T_5,N_{5-3_1}, N_{5-3_2}, N'_{5-3_2},$
and  positive measure subset of the parameter spaces for which they are not distinguishable from at least one other of  $N_{5-3_1}, N_{5-3_2}, N'_{5-3_2}$.
\end{theorem}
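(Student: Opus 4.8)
The plan is to transfer the argument of \cref{thm:222twosets} to the 5-taxon setting, exploiting the restriction that relates the 6-taxon networks $N_a,N_b,N_c$ of \cref{fig:222nets} to $N_{5-3_2},N'_{5-3_2},N_{5-3_1}$. Deleting one leaf of the $c$-cherry turns these three networks, respectively, into $N_{5-3_2}$ (hybrid block $a$), $N'_{5-3_2}$ (hybrid block $b$), and $N_{5-3_1}$ (the $c$-cherry collapsing to a single hybrid descendant); the induced 5-taxon $CF$s are exactly the subcollection of 6-taxon $CF$s on the retained taxa, with the edge subtending the deleted cherry becoming a pendant edge that drops out of the formulas. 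Two facts will be used repeatedly: by \cref{prop:N5-3-1,prop:N5-3-2} the three networks share one (complex) ideal and variety $\mathcal V$, so that distinguishability is entirely a question of which \emph{real} parameterization image a $CF$ point lies in; and by \cref{prop:5tax3cyc} the function $f_{abc}$ of \cref{eq:f} is strictly positive on the whole image of $N_{5-3_1}$, while on $N_{5-3_2}$ and $N'_{5-3_2}$ it takes both signs and is symmetric under the $a\leftrightarrow b$ swap that interchanges those two networks.

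For the non-distinguishability half I would first restrict the coincidence of \cref{ex:NonID} to the five taxa obtained by deleting a $c$-leaf: the equal 6-taxon $CF$s of $N_a$ and $N_b$ descend to a single 5-taxon $CF$ point realized by both $N_{5-3_2}$ and $N'_{5-3_2}$. Exactly as in the discussion following \cref{ex:NonID}, computing the Jacobians of the two 5-taxon parameterizations at these parameters and checking that their ranks equal $\dim\mathcal V$ certifies that a full-dimensional neighborhood of the $CF$ point lies in both images with inverse images of positive measure, producing the claimed positive-measure set on which $N_{5-3_2}$ and $N'_{5-3_2}$ agree. For $N_{5-3_1}$ I would produce a second common point lying in $\{f_{abc}>0\}$ --- for instance by running the same coincidence construction after the cyclic relabeling of cherries carrying $N_a$ to $N_c$, so that the restriction realizes a shared $CF$ point for $N_{5-3_1}$ and an $N_{5-3_2}$-type network --- and again verify full Jacobian rank to extend it to a positive-measure set.

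For distinguishability, separation of $N_{5-3_2}$ from $T_5$ and from $N_{5-3_1}$ is immediate on the positive-measure set where $N_{5-3_2}$ has $f_{abc}<0$: by \cref{prop:5tax3cyc}, $T_5$ forces $f_{abc}=0$ and $N_{5-3_1}$ forces $f_{abc}>0$, so neither reproduces such $CF$s. The new content is to separate $N_{5-3_2}$ from its relabeling $N'_{5-3_2}$ and to find $N_{5-3_1}$-parameters whose $CF$s lie outside both $3_2$-images. For the first I would introduce an $a\leftrightarrow b$-asymmetric semialgebraic test built from $f$-type combinations of the sub-quartets $\{a_1,a_2,b_1,c\}$ and $\{a_1,b_1,b_2,c\}$, which display a $3_2$- versus a $3_1$-structure according to whether $a$ or $b$ is the hybrid block, and use the explicit $CF$ formulas of \cref{prop:N5-3-2} to show its sign is forced on a positive-measure subset of $N_{5-3_2}$-parameters, ruling out $N'_{5-3_2}$ there; positivity of the measure then follows by continuity, as in \cref{prop:222semi} and \cref{thm:222twosets}. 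An entirely analogous asymmetric/sign computation, seeded by an $N_{5-3_1}$-point verified to lie outside the $3_2$-images, handles $N_{5-3_1}$.

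The principal obstacle is this last separation, of $N_{5-3_2}$ from $N'_{5-3_2}$. Since the two networks share an ideal and variety, and since $f_{abc}$ --- the one invariant under explicit control --- is symmetric under the swap that exchanges them, no polynomial equality and no use of $f_{abc}$ alone can distinguish them; likewise $N_{5-3_1}$ and $N_{5-3_2}$ cannot be told apart by $f_{abc}$ on the common region $\{f_{abc}>0\}$. The distinction is purely one of real parameterization image, an image-membership question that the ideal computations behind \cref{thm:5tax3cycle} cannot resolve. It must instead be certified semialgebraically --- through an asymmetric sign condition together with a Jacobian/dimension check at explicit rational parameters, for which the values of \cref{ex:NonID} and their cyclic relabelings supply the base points.
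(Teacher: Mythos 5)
Your treatment of three of the four claims tracks the paper's proof closely. The non-identifiability sets for both $N_{5-3_2}$ and $N_{5-3_1}$ are obtained exactly as in the paper: restrict the coincidence of \cref{ex:NonID} by deleting one leaf (deleting a leaf of the hybrid cherry of $N_b$ already yields a shared $CF$ point for a $3_1$-type and a $3_2$-type network, so the cyclic relabeling you invoke is unnecessary), then check Jacobian ranks to get positive measure. Likewise your separation of $N_{5-3_2}$ from $T_5$, $N_{5-3_1}$, and $N'_{5-3_2}$ is in substance the paper's: the ``$a\leftrightarrow b$-asymmetric test'' on the quartet $\{a_1,a_2,b_1,c\}$ that you gesture at is the anomaly condition $CF_{ab|ac}>1/3$ coming from \cref{prop:3-cyc11}, case (2), which can occur only when the $a$-cherry is the hybrid block and which in fact rules out all three alternatives at once, since each of $T_5$, $N_{5-3_1}$, and $N'_{5-3_2}$ forces $CF_{ab|ac}\le 1/3$. (If you instead insist on intersecting with $\{f_{abc}<0\}$, you must also check that the anomalous set and $\{f_{abc}<0\}$ have positive-measure intersection; they do, but you did not address this.)

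The genuine gap is the distinguishability of $N_{5-3_1}$. Your plan --- ``an entirely analogous asymmetric/sign computation, seeded by an $N_{5-3_1}$-point verified to lie outside the $3_2$-images'' --- cannot be carried out as stated. By \cref{prop:3-cyc11}, case (1), every induced quartet network of $N_{5-3_1}$ is $CF$-identical to a quartet tree, so no quartet-level asymmetry or anomaly exists for $N_{5-3_1}$ at all; and $N_{5-3_2}$ in its non-anomalous regime also produces tree-like quartets, so the tests that separated the $3_2$ cases give nothing here. Moreover, ``verifying'' that a point lies outside the images of the $N_{5-3_2}$ and $N'_{5-3_2}$ parameterizations is precisely the image-membership problem that the shared ideal makes algebraically invisible; it demands a semialgebraic certificate, namely an inequality valid on the \emph{entire} image of $N_{5-3_2}$ but violated on an open subset of the image of $N_{5-3_1}$. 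The one inequality you control, $f_{abc}>0$, fails as such a certificate because it also holds on a positive-measure subset of $N_{5-3_2}$'s parameter space. Supplying this certificate is the paper's key new construction: $\tilde f=f_{abc}-\tfrac12 CF_{ab|ab}$ of \cref{eq:ftilde}, for which the paper proves $\tilde f<0$ for \emph{all} parameters on $N_{5-3_2}$ (hence on $N'_{5-3_2}$ by the $a\leftrightarrow b$ symmetry of $\tilde f$, and on $T_5$ since $f_{abc}=0$ there), while $\tilde f>0$ on an open set of $N_{5-3_1}$ parameters; the global sign bound requires a nontrivial computation reducing to the boundary values $h_1=0$, $h_2\in\{0,1\}$. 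Without producing $\tilde f$ or an equivalent certificate, your argument for $N_{5-3_1}$ does not close.
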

\begin{proof}
First, suppose the network is $N_{5-3_2}$. Dropping a taxon to pass to a quartet network with a $3_2$-cycle,  
\cref{prop:3-cyc11} implies that the semidirected topology is identifiable on some positive measure subset of parameters. 
That there is such a set on which the semidirected topology is not identifiable follows from using the parameter values 
of \cref{ex:NonID} (after dropping an appropriately chosen taxon) on such networks with different hybrid cherries, and computing Jacobians to verify 
that an open set  of such examples exists. 

To investigate identifiability for the network $N_{5-3_1}$, consider the function 
\begin{equation}\tilde f=f_{abc}-(1/2)CF_{ab|ab}.\label{eq:ftilde}
\end{equation}
We first show that $\tilde f <0$ for all parameters on $N_{5-3_2}$. Using \cref{prop:N5-3-2} to expand in terms of parameters, 
\begin{multline*}\tilde f=\ell_1\ell_2[ (\gamma^2h_1 + \gamma(1-\gamma)(3-x)+ (1-\gamma)^2h_2) (\gamma+ (1-\gamma) x)/3\\ 
-(\gamma^2h_1 + 2\gamma(1-\gamma)+ (1-\gamma)^2 h_2x)/2].\end{multline*}
Since $h_1$ appears linearly in this expression with a negative coefficient, we set $h_1=0$ to bound $\tilde f$ above. 
The coefficient of $h_2$, which also appears linearly, may be positive or negative, so we consider $h_2=0$ and $1$.
If $h_2=0$,
\begin{align*}\tilde f&=\ell_1\ell_2\left [ ( \gamma(1-\gamma)(3-x)) (\gamma+ (1-\gamma) x)/3 -\gamma(1-\gamma)\right ]\\
&=-\ell_1\ell_2\gamma(1-\gamma)\left [3(1-x)(1-\gamma)+x(\gamma+ (1-\gamma) x)\right ]/3<0,
 \end{align*}
while if $h_2=1$,
\begin{align*}\tilde f&=\ell_1\ell_2 \left[ ( \gamma(1-\gamma)(3-x)+ (1-\gamma)^2) (\gamma+ (1-\gamma) x)/3-( 2\gamma(1-\gamma)+ (1-\gamma)^2 x)/2\right]\\
&=-\ell_1\ell_2(1-\gamma) \left [ \gamma x(\gamma+ (1-\gamma) x)+
(1-\gamma)\left( 2 \gamma(1-x)  + x/2\right)\right ]/3<0.
\end{align*}
 It is easy, however, to find an open set of parameters for $N_{5-3_1}$ for which  $\tilde f>0$, and on that set $c$ is identifiable as the hybrid block.

We obtain a set on which the semidirected topology  of $N_{5-3_1}$ is not identifiable by again using the parameter values in \cref{ex:NonID}.
\end{proof}

\smallskip

\subsubsection{Large networks with 3-cycles}\label{sec:large3}

After considering specific 5- and 6-taxon networks with a single 3-cycle, we shift focus to 3-cycles in 
general networks  $N^ + $. We extend the previous results on semialgebraic identifiability of both cycles and hybrid nodes, using 
a decomposition of $N^ + $ into 4 subnetworks, as in \cref{fig:net3cycles}.  A similar decomposition 
is used in \cite{GrossEtAl2023}, of a level-1 network into trees and `sunlets,'  but that work does 
not model coalescence,  so the details are quite different. 
Our decomposition extends to larger cycles
but we present only the 3-cycle case needed here.

\begin{figure}\label{fig:decomp}
\begin{center}
\end{center}
\caption{(L) A decomposition of a level-1 network $ N^ + $ with a 3-cycle into 4 subnetworks,
	denoted
	$A,B,C,D$, with root in $C$.
(R) The semidirected 3-cycle network $N_{6-3_2}$ with 3 cherries, which is a simple instance of the
 network on the left.}\label{fig:net3cycles}
\end{figure}

The subnetworks in \cref{fig:net3cycles} are:
\begin{description}

\item[\ \ \ \ $D$:] The 3-cycle and its three adjacent cut edges, with pendant\\ vertices $a,b,c$, 
where $a$ is the child of the hybrid node of the cycle;
 
\item [\ \ \ \ $A,B,C$:] The connected components containing $a,b,c$, respectively,\\ when the edges and internal nodes of $D$  are deleted from $N^ + $. 

\end{description}
Note that $a,b,c$ are each in two of these subnetworks.
Since the root must be above $D$'s hybrid node, and the semidirected network is unchanged  by moving the root along tree edges, we may assume the root lies in $B$ or $C$, and, after renaming, in $C$.

\smallskip

The $CF$ of any quartet  under the NMSC on $N^ + $ has an algebraic decomposition into terms associated to the subnetworks $A,B,C,D$, which we next develop.  We use two facts about coalescent events between 4 lineages leading to gene quartets: 
\begin{enumerate}
\item The first coalescent event between 2 of the lineages determines the gene quartet tree that forms, and 
\item Conditioned on  3 or 4 lineages reaching a common node with no previous coalescence, by exchangeability of lineages each quartet has probability 1/3.
\end{enumerate}

For $S\in \{A,B,C,D\}$ and a gene quartet $xy|zw$ where $x,y,z,w\in X$ are taxa on $N^+$, 
we define an event, denoted $\mathcal C_S \to xy|zw,$ 
that captures whether the behavior of gene lineages in $S$ ensures that under the coalescent model 
the gene tree $xy|zw$ is formed, or will be, with a determined probability. This may be due to a coalescent 
event occurring in $S$, or 3 or 4 lineages reaching a common node in $S$ without having yet coalesced. 
Since a coalescent event between the lineages occuring before they 
enter $S$ would already determine the quartet tree, we define this event conditional on  
lineages from $x,y,z,w$ entering $S$ distinctly.
More formally, consider the events: 
\begin{align*}
E=E(S,xy|zw)&= \text{No coalescence between any of the lineages $x,y,z,w$ }\\ 
&\hskip.25in \text{that may enter $S$ occurs before they enter $S$.}\\
F=F(S,xy|zw)&=\text{3 or 4 of the $x,y,z,w$ lineages reach a common node in $S$ }\\
&\hskip.25in \text{before any coalescence, and afterwards $xy|zw$ forms.}\\
G=G(S,xy|zw)&= \text{A first coalescence occurs in $S$ between $x,y$ or between }\\
&\hskip.25in \text{$z,w$, without 3 or 4 lineages having reached a }\\
&\hskip.25in\text{common node previously}
\end{align*}
Then $\mathcal C_S \to xy|zw$ denotes $(F\cup G)|E$.

Let  $P(\mathcal C_S \to xy|zw)$ denote the conditional probability of the event $\mathcal C_S \to xy|zw$.
Then with $a_i,b_i,c_i$ distinct taxa from $A,B,C$, respectively, a few example decompositions of $CF$s  are:
\begin{align*}
CF_{a_1a_2|a_3 b_1}&=P(\mathcal C_{A}\to a_1a_2|a_3b_1),\\
CF_{a_1a_2|b_1 c_1}&=P(\mathcal C_A\to a_1a_2|b_1 c_1) + (1- P(\mathcal C_A\to a_1a_2|b_1 c_1)  )P(\mathcal C_{D}\to a_1a_2|b_1 c_1),\\
CF_{a_1a_2|c_1 c_2}&=P(\mathcal C_A\to a_1a_2|c_1 c_2) + (1-P(\mathcal C_A\to a_1a_2|c_1 c_2) )P(\mathcal C_{D}\to a_1a_2|c_1 c_2) \\
&\hskip -.1in +  (1-P(\mathcal C_A\to a_1a_2|c_1 c_2) )(1-P(\mathcal C_D\to a_1a_2|c_1 c_2) )P(\mathcal C_C\to a_1a_2|c_1 c_2).
\end{align*}

For calculating probabilities associated to $D$, we  suppress indices on taxa. This is allowable since, conditioned on the lineages entering $D$ distinctly, those from $A$ are exchangeable, as are those from $B$.
Thus, for instance, 
$$P(\mathcal C_D\to {ab|bc})=P(\mathcal C_D\to {a_1b_1|b_2c_1})=P(\mathcal C_D\to {a_2b_2|b_1c_2}).$$ 

Significantly, all $CF$s for $N^+$
can be computed using only the following probabilities associated to $D$ together with expressions dependent only on $A,B,C$:

\begin{align*}
p_1=P(\mathcal C_D\to ab|cc) &=1-\ell_3( 1-\gamma +\gamma x), \\
p_2=P(\mathcal C_D\to aa|cc) &= 1-\ell_1\ell_3(\gamma^2h_1x+2\gamma(1-\gamma) +(1-\gamma)^2h_2),\\
p_3= P(\mathcal C_D\to bb|cc) &=1-x\ell_2\ell_3,\\
p_4=P(\mathcal C_{D}\to ab|bc) &=\ell_2\left ( \gamma + (1 -  \gamma) x  \right ) /3\\
  P(\mathcal C_{D}\to bb|ac) &=1-2p_4,\\
p_5=P(\mathcal C_{D}\to ab|ac) &= \ell_1(\gamma^2h_1 + \gamma(1-\gamma)(3-x) + (1-\gamma)^2 h_2  )/3,\\
  P(\mathcal C_{D}\to aa|bc) &=1-2 p_5,\\
p_6=P(\mathcal C_{D}\to ab|ab) &= \ell_1\ell_2(\gamma^2h_1  + 2\gamma(1-\gamma)  + (1-\gamma)^2 h_2x)/3,\\
  P(\mathcal C_{D}\to aa|bb) &=1-2 p_6.
\end{align*}

The 6 linear independent polynomials, $p_1,p_2,\dots, p_6$ parameterize a variety, $\mathcal V(D)$. 
Combined with the previous discussion of decomposing $CF$ formulas, this yields the following.

\begin{proposition}\label{prop:mapfactor}
Let $\mathcal V_{ N}$ be the $CF$ variety for a semidirected network (not necessarily level-1) 
$N$ with the form shown in \cref{fig:net3cycles}, and  numerical parameter space 
$\Theta(N)=\Theta_{A,B,C}\times \Theta_D$. Let $\mathcal V_D$ denote the Zariski closure 
of the image of the  parameterization $\phi:\mathbb C^7\to\mathbb C^6$, defined by 
$$\phi(\gamma,\ell_1,\ell_2,\ell_3,h_1,h_2,x)=(p_1,p_2,p_3,p_4,p_5,p_6),$$ 
with the $p_i$ given above. 
Then the map $CF: \Theta(N)\to\mathbb C^{3{n\choose 4}} $ factors as 
\begin{equation}
CF: \Theta( N) =\Theta_{A,B,C}\times \Theta_D \xrightarrow{\pi\times \phi}\Theta_{A,B,C}\times \mathcal V_D\to \mathcal V_N\subset\mathbb C^{3{n\choose 4}}. \label{eq:factor}
\end{equation}
where $\pi$ is the map projecting $\Theta(N)$ onto the numerical parameters on $A,B,C$ only.
\end{proposition}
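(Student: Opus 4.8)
The plan is to reduce the variety statement to one algebraic claim: that each coordinate $CF_{xy|zw}$, regarded as a function on $\Theta(N)$, is a polynomial in the coordinates of $\Theta_{A,B,C}$ and in $p_1,\dots,p_6$. Granting this, the second arrow of \eqref{eq:factor} is the polynomial map $\Psi\colon\Theta_{A,B,C}\times\mathbb C^6\to\mathbb C^{3{n\choose 4}}$ whose coordinates are exactly these polynomials, so $CF=\Psi\circ(\pi\times\phi)$ on $\Theta(N)$ holds by construction. To see that $\Psi$ sends $\Theta_{A,B,C}\times\mathcal V_D$ into $\mathcal V_N$, I would argue by Zariski density: $\phi(\Theta_D)$ is dense in $\mathcal V_D=\overline{\phi(\mathbb C^7)}$ and $\Theta_{A,B,C}$ is dense in its ambient affine space, so $\Theta_{A,B,C}\times\phi(\Theta_D)$ is dense in $\Theta_{A,B,C}\times\mathcal V_D$; since the morphism $\Psi$ carries this dense set onto $CF(\Theta(N))\subseteq\mathcal V_N$ and $\mathcal V_N$ is closed, it carries the closure into $\mathcal V_N$. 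All the substance therefore lies in the polynomial claim.

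For that claim I would invoke the decomposition set up above. Ordering the four subnetworks by the backward-time flow of lineages --- those from $A$ and from $B$ entering $D$, and all survivors then entering $C$, which contains the root --- and using the two coalescent facts (the first coalescence fixes the quartet, and three or four lineages meeting without prior coalescence yield each quartet with probability $1/3$), every $CF_{xy|zw}$ expands by the same conditioning shown in the displayed examples. This presents $CF_{xy|zw}$ as a polynomial in the conditional probabilities $P(\mathcal C_S\to xy|zw)$, $S\in\{A,B,C,D\}$, each factor $1-P(\mathcal C_S\to\,\cdot\,)$ recording that $S$ leaves the quartet undetermined. For $S\in\{A,B,C\}$ the probability $P(\mathcal C_S\to\,\cdot\,)$ is computed within $S$ and so is a polynomial in the parameters of $S$ alone, i.e.\ in the coordinates of $\Theta_{A,B,C}$.

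It then remains to show that every $D$-probability $P(\mathcal C_D\to xy|zw)$ equals one of $p_1,\dots,p_6$, an affine complement $1-2p_i$, or a constant. Conditioning on the quartet lineages entering $D$ distinctly, exchangeability of the lineages arriving from $A$ (resp.\ $B$) makes $P(\mathcal C_D\to xy|zw)$ depend only on how many of $x,y,z,w$ lie in each of $A,B,C$. I would enumerate these patterns. With at most two taxa per block the six possibilities $(2,2,0),(2,0,2),(0,2,2),(2,1,1),(1,2,1),(1,1,2)$ for $(|A|,|B|,|C|)$ arise; matching each quartet topology against which pairs can make a first coalescence inside $D$ --- and recalling that $C$ lies above $D$, so no pair containing a $C$-taxon coalesces there --- reproduces precisely $p_1,\dots,p_6$, the complements $1-2p_4,1-2p_5,1-2p_6$, and the value $0$. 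The remaining patterns place three or four taxa in one block; here all lineages from $A$ funnel into $D$ through the single node $a$ and its cut edge (and likewise for $B$), so three or more of them must either coalesce on that edge or meet at the hybrid node, determining the quartet within $D$. Their exchangeability then forces all three quartet resolutions to share the $D$-probability, whose common value is the constant $1/3$ (and is $0$ when the excess lies in $C$, whose lineages never enter $D$).

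I expect the main obstacle to be not any individual computation but the clean formulation of the decomposition in the second step: confirming that the conditioning yields a genuine polynomial identity for \emph{every} quartet, with no double counting across the parallel contributions of $A$ and $B$ and the subsequent passage into $C$. Once that decomposition is fixed, the pattern enumeration --- with the funnel-and-exchangeability observation disposing of all multi-lineage cases --- is mechanical, and the variety-level conclusion follows from the density argument of the first paragraph.
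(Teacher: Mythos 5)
Your proposal is correct and follows essentially the same route as the paper: the paper's proof of this proposition is precisely the preceding discussion decomposing each $CF$ via the events $\mathcal C_S \to xy|zw$, using the two coalescent facts and exchangeability to reduce all $D$-probabilities to $p_1,\dots,p_6$, their complements $1-2p_i$, and constants. Your added Zariski-density argument for why the second arrow lands in $\mathcal V_N$ simply makes explicit what the paper leaves implicit.
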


\cref{prop:factorVar} shows that $\mathcal V_D=\mathbb C^6$, and thus $\phi$ is an 
infinite-to-1 map, establishing the following.

\begin{corollary} Consider a semidirected topological network $N$ with a 3-cycle, with decomposition 
as in \cref{fig:net3cycles} (L). Then no test using polynomial equalities in quartet $CF$s can  identify the 
hybrid node in the 3-cycle. 

Specifically,
if N's root must be in the subnetwork $C$ because of the semidirected topology of $C$,
then the network $N_B$ which has $A,B$ interchanged from $N=N_A$, so that $B$ is below the 3-cycle's hybrid node, 
leads to the same ideal of invariants, that is, $\mathcal I(N_A)=\mathcal I(N_B)$. 
If the semidirected topology of $N$ allows 
for rooting in either subnetwork $B$ or $C$, then 
$\mathcal I(N_A)=\mathcal I(N_B)= \mathcal I(N_C)$.
\end{corollary}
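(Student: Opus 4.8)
The plan is to show that the quartet-$CF$ variety $\mathcal V_N$ depends on the 3-cycle subnetwork $D$ only through the \emph{values} that the six probabilities $p_1,\dots,p_6$ can take, and then to use \cref{prop:factorVar} to conclude that this set is all of $\mathbb C^6$ no matter which node of the undirected triangle carries the hybrid. Concretely, I would begin from the factorization \eqref{eq:factor} of \cref{prop:mapfactor}, writing $CF=\Psi\circ(\pi\times\phi)$, where $\phi\colon\Theta_D\to\mathbb C^6$ records the dependence on the parameters internal to $D$, and $\Psi\colon\Theta_{A,B,C}\times\mathcal V_D\to\mathcal V_N$ is the ``assembly map'' that combines the $p_i$ with the coalescent contributions of $A,B,C$.

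The central observation is that $\Psi$ does not see the hybrid node. The networks $N_A,N_B,N_C$ share a single undirected graph — the triangle with blocks $A,B,C$ pendant at ports $a,b,c$ — and differ only in which triangle vertex is designated the hybrid node (equivalently, which block is the hybrid descendant); the taxon-to-block assignment is fixed throughout. Since each $p_i$ is defined as a conditional probability $P(\mathcal C_D\to\,\cdot\,)$ of a quartet event phrased purely in terms of the fixed blocks $a,b,c$, and since the decomposition preceding \cref{prop:mapfactor} writes every $CF$ as the same polynomial in the $p_i$ (and their determined complements $1-2p_i$) and in the $A,B,C$-data regardless of hybrid placement, the map $\Psi$ is literally identical for $N_A$ and $N_B$; all dependence on the hybrid location is absorbed into $\phi$. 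I would then conclude
$$\mathcal V_{N_A}=\overline{\Psi\bigl(\Theta_{A,B,C}\times\mathcal V_D\bigr)}=\mathcal V_{N_B},$$
invoking \cref{prop:factorVar} that $\mathcal V_D=\mathbb C^6$ for either placement; by the relabeling symmetry $a\leftrightarrow b$ the two placements differ only by a permutation of the six coordinates, which fixes $\mathbb C^6$. Equality of Zariski closures gives $\mathcal I(N_A)=\mathcal I(N_B)$, so no polynomial equality in the $CF$s can separate the two hybrid positions.

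For the three-way equality I would track the root. Interchanging $A$ and $B$ moves the hybrid between ports $a$ and $b$ while leaving the root in $C$, so \cref{prop:mapfactor} applies verbatim to $N_B$ and the argument above goes through unconditionally. To realize $N_C$, however, the hybrid sits at port $c$, so the root can no longer lie in $C$ and must be relocated to $B$; this produces a valid semidirected network to which the decomposition applies, now with the root-containing block $B$ playing the structural role of $C$, precisely when the topology permits rooting in $B$. Under that hypothesis the identical computation yields $\mathcal V_{N_C}=\mathcal V_{N_A}$, and since by \cref{lem:root} the choice of root within a block does not affect the $CF$s, one obtains $\mathcal I(N_A)=\mathcal I(N_B)=\mathcal I(N_C)$.

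The main obstacle is justifying that $\Psi$ is genuinely hybrid-independent, i.e.\ that the event-based decomposition is complete: every $CF$ of $N^+$ must factor through exactly the six listed probabilities together with $A,B,C$-data unaffected by events inside $D$. This rests on the two coalescent facts that a first coalescence fixes the quartet and that three or four lineages meeting a common node without prior coalescence contribute $1/3$ to each topology; once these pin the combinatorial role of each $p_i$ to quartet type alone, the hybrid node enters nowhere except through the numerical values of the $p_i$. The remaining points — that $\overline{\operatorname{im}\phi}=\mathbb C^6$ for each placement and that equal varieties yield equal ideals — are then formal.
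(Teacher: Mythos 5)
Your proposal follows the paper's own route: the corollary is deduced from \cref{prop:mapfactor} and \cref{prop:factorVar}, and the elaboration you supply --- that the ``assembly map'' $\Psi$ combining the $p_i$ with the $A,B,C$ data is the same polynomial map for every placement of the hybrid node, and that swapping $a\leftrightarrow b$ merely permutes the coordinates of $\mathcal V_D$ by $(p_2\,p_3)(p_4\,p_5)$, hence fixes $\mathbb C^6$ --- is exactly the content the paper compresses into ``follows directly.'' Your treatment of the root (exchanging $A,B$ keeps the root in $C$; realizing $N_C$ requires rootability in $B$) also matches the statement being proved.

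The one place you fall short of the paper's proof is the case where some block is a single taxon, which the corollary's hypotheses do not exclude. The paper treats this case separately: when some $n_i=1$, it realizes the network by deleting taxa from one with all $n_i\ge 2$, so that $\mathcal I(N)$ is the intersection of the larger network's ideal with the polynomial ring in the surviving $CF$s, and equality of the larger ideals forces equality of these elimination ideals. Your argument is silent here, and as written it leans on the claim that every $CF$ factors through ``exactly the six listed probabilities,'' which is not literally available when, say, $A=\{a\}$: the events defining $p_2$, $p_5$, $p_6$ involve a second $a$-lineage that no quartet of $N$ can realize, so those coordinates are never exercised by the parameterization. The repair is mild --- $\Psi$ simply does not depend on the unused coordinates, and density of $\operatorname{im}\phi$ in $\mathbb C^6$ still gives density of its projection onto the coordinates that do appear, so your closure argument survives --- but either that remark or the paper's deletion/elimination step is needed for the corollary in its stated generality.
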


\begin{proof} If deleting the 3-cycle from the network induces a $(n_1,n_2,n_3)$  partition of the taxa with all $n_i\ge2$, 
	then the corollary follows directly from \cref{prop:mapfactor,prop:factorVar}.
Cases with $n_i=1$ then follow by deleting taxa from an appropriate network with all $n_i\ge 2$, intersecting the ideals
with a ring generated by fewer CFs.
\end{proof}

Note that this corollary applies to networks with more than one 3-cycle.
However, when multiple cycles are present, the location of
one cycle's hybrid node indicates that one of the nodes in a descendant cycle cannot
be hybrid.  Thus for  a network with $k$ 3-cycles, there are between $2^k$ and $3^k$ 
networks differing only in the choice of hybrid nodes in the 3-cycles,
all of which are algebraically indistinguishable using $CF$s.

Nonetheless, using semialgebraic tests, we can obtain additional information on hybrid node location, as the following generalization of \cref{prop:222semi} shows.

\begin{proposition} \label{prop:3cycle} Consider a partition of a taxon set $X$ into three blocks of size 
at least 2. For any network $N$ (not necessarily level-1) with a node or 3-cycle inducing these blocks, 
denote the node or 3-cycle and its 
adjacent edges by $D$, and the 
subgraphs attached to $D$ as $A,B,C$ (as in \cref{fig:decomp} for a cycle).

Let $G_{abc},\ G_{cab},\ G_{bca}$ be as defined by equation \eqref{eq:Gs}, for any distinct taxa $a_i$ on $A$, $b_i$ on $B$, and $c_i$ on $C$.
		Then  $D$ is:
		\begin{equation*}
			\begin{cases}
			\text{a 3-leaf tree} & \text{ if  }G_{abc}=G_{bca}=G_{cab}=0,\\
			\text{a 3-cycle and adjacent edges}
			  & \text{ if  } G_{xyz}>0,\ G_{yzx}\le 0,\ G_{zxy}\le 0 \\
			\text{\ \ with $x$ below the hybrid node} &\text{\ \ \  \ \ for $\{x,y,z\}=\{a,b,c \}$.}\\
			\text{a 3-cycle and adjacent edges with}
			  & \text{ if  } G_{xyz}>0,\ G_{yzx}> 0,\ G_{zxy}< 0 \\
			\text{\ \ $x$ or $y$ below the hybrid node} &\text{\ \ \  \ \ for $\{x,y,z\}=\{a,b,c \}$.}
 			\end{cases}
			\end{equation*}		
Moreover, for a network $N$ with a 3-cycle $D$ and descendants of  $x$ forming its hybrid block, 
there exist positive measure subsets of parameters on $D$ on which $G_{yzx}$ and $G_{zxy}$ satisfy both of the above sign conditions.	
\end{proposition}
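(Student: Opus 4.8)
The plan is to reduce every assertion about the large network $N$ to the six-taxon computations of \cref{prop:222semi} via the decomposition of \cref{prop:mapfactor}, and then to extract the case analysis from the signs of $G_{abc},G_{bca},G_{cab}$ together with the relation $G_{abc}+G_{bca}+G_{cab}=0$ noted after \eqref{eq:Gs}. First I would fix distinct taxa $a_1,a_2\in A$, $b_1,b_2\in B$, $c_1,c_2\in C$; by \cref{lem:root} each of the six $CF$s entering a given $G_{xyz}$ is computed on the induced six-taxon network, and by \cref{prop:mapfactor} each such $CF$ splits into a part \emph{determined within $D$} (recorded by $p_1,\dots,p_6$) and a part \emph{determined above $D$} in the subnetwork $C$. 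The target is to show that $G_{xyz}(N)$ is a positive scalar multiple of the value of $G_{xyz}$ on the corresponding six-taxon network of \cref{prop:222semi}, so that its sign is governed by that computation: it is $0$ when $D$ is a node, and $>0$ precisely when the block containing $x$ is the hybrid block of the $3$-cycle.

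The step I expect to be the main obstacle is proving that the subnetworks $A,B,C$ enter $G_{xyz}$ only through an overall positive factor, equivalently that the above-$D$ contributions cancel. This is delicate because the six $CF$s have genuinely different dependence on $C$ (e.g.\ $CF_{ac|ac}$ requires the two $c$-lineages to survive while $CF_{ab|ac}$ involves only one), yet the quadratic combination must come out clean. The structural fact I would exploit is that any lineage passing through $D$ without a determining coalescence exits through the \emph{single} cut edge at the pendant vertex $c$, so all such survivors enter $C$ at one node and are exchangeable there (and symmetrically for $A,B$); hence the probabilities of the discordant resolutions settled above $D$ are invariant under permuting survivor labels, and feeding these symmetric quantities into the $(+1,-2,+1)$ coefficient pattern of \eqref{eq:Gs}---the pattern placing $G_{xyz}$ in $\mathcal I(T_6)$---makes the above-$D$ terms collapse. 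Concretely I would prove the clean statement $G_{xyz}(N_0)=0$ for arbitrary $A,B,C$, where $N_0$ contracts the $3$-cycle to a node, and then argue $G_{xyz}(N)-G_{xyz}(N_0)$ isolates the pure $3$-cycle contribution. The explicit six-taxon formula $9\,G_{abc}=\gamma(1-\gamma)(1-x)^2\ell_1\ell_2\ell_3[h_1\gamma+h_2(1-\gamma)+2]$ of \cref{prop:222semi}, which contains \emph{no} parameter from above the cycle, is direct evidence that this cancellation occurs; the general case is the same cancellation with the $A,B,C$-survival factors pulled out front. The rigorous work is the bookkeeping of the $C$-dependent conditional probabilities and checking the collapse.

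Granting this, the remaining steps are routine. Since the multiple is positive, all three $G$'s vanish iff $D$ is a node (a nondegenerate $3$-cycle, with $\gamma\in(0,1)$, top-edge probability $x<1$, and positive edge probabilities, forces the hybrid block's $G$ to be $>0$ by the formula above); if $G_{yzx}\le 0$ the block of $y$ cannot be hybrid, and if $G_{zxy}\le 0$ the block of $z$ cannot be hybrid, so combined with $\sum G=0$ these eliminations yield exactly the two stated cycle cases. Finally, for the positive-measure claim I would work on the six-taxon network $N_x$, where $G_{xyz}>0$ and $G_{yzx}+G_{zxy}=-G_{xyz}<0$: choosing $D$-parameters symmetric in the two non-hybrid blocks (for instance $\gamma=\tfrac12$, $h_1=h_2$, $\ell_2=\ell_3$) forces $G_{yzx}=G_{zxy}=-G_{xyz}/2<0$, realizing the first sign condition, while an explicit asymmetric choice---such as the values of \cref{ex:NonID}, or any point landing in a two-colored sector of \cref{fig:Gplot}---gives $G_{yzx}>0>G_{zxy}$, realizing the second. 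As the $G$'s are polynomial in the parameters, each condition holds on a set of positive measure, and these pull back through the factorization to positive-measure subsets of the $D$-parameters of $N$.
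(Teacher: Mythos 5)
Your proposal follows essentially the same route as the paper's proof: decompose each of the six $CF$s into conditional coalescent probabilities attached to $A,B,C,D$, use exchangeability of the surviving lineages (all entering each subnetwork through a single point) to equate the $A$-, $B$-, and $C$-factors across the different $CF$s, so that $G_{xyz}(N)$ factors as a positive multiple of $G_{xyz}(N_{6-3_2})$, and then read off the classification from \cref{prop:222semi} together with $G_{abc}+G_{bca}+G_{cab}=0$; your intermediate ``difference'' step $G_{xyz}(N)-G_{xyz}(N_0)$ is an unnecessary detour, but harmless since your stated target is exactly this factorization. One minor improvement over the paper: for the positive-measure set where both $G_{yzx}\le 0$ and $G_{zxy}\le 0$, your explicit symmetry argument ($\gamma=\tfrac12$, $h_1=h_2$, $\ell_2=\ell_3$ forces $G_{yzx}=G_{zxy}=-G_{xyz}/2<0$, then perturb) is cleaner and more constructive than the paper's terse citation of \cref{prop:222semi}.
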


\begin{proof}
First suppose $D$ is a 3-cycle and, without loss of generality, $A$ is below the hybrid node.
Then we decompose formulas for $CF$s  for $N$ as
\begin{align*}
       CF_{ab|ab} &=(1-P(\mathcal C_A\to aa|bb))   (1-P(\mathcal C_B\to aa|bb)) P(\mathcal C_D\to ab|ab), \\ 
       CF_{ac|bc} &= (1-P(\mathcal C_D\to ab|cc))P(\mathcal C_C\to ac|bc),\\
       CF_{ac|ac} &=(1-P(\mathcal C_A \to aa|cc)) (1-P(\mathcal C_D\to aa|cc))P(\mathcal C_C\to ac|ac),\\
      CF_{ab|bc} &=(1-P(\mathcal C_B \to ac|bb)) P(\mathcal C_D \to ab|bc),\\
      CF_{bc|bc}&=(1-P(\mathcal C_B \to bb|cc))   (1-P(\mathcal C_D\to bb|cc))P(\mathcal C_C\to bc|bc),\\
       CF_{ab|ac}&= (1-P(\mathcal C_A \to aa|bc))  P(\mathcal C_D \to ab|ac).
\end{align*}    
Since 
   \begin{align*}
   P(\mathcal C_A\to aa|bb)&=P(\mathcal C_A \to aa|cc)=P(\mathcal C_A \to aa|bc),\\
   P(\mathcal C_B\to aa|bb)&=P(\mathcal C_B \to bb|cc)=P(\mathcal C_B \to ac|bb),\\
   P(\mathcal C_C\to ac|bc)&=P(\mathcal C_C\to ac|ac)=P(\mathcal C_C\to bc|bc),
   \end{align*}
it follows that 
	\begin{multline*}
	G_{abc}(N)= 
	(1-P(\mathcal C_A\to aa|bb) )(1-P(\mathcal C_B\to aa|bb)) P(\mathcal C_C\to ac|ac)\: \times\\
	\big [ (1-P(\mathcal C_D\to aa|cc) ) P(\mathcal C_D\to ab|bc) 
	 -2 (1-P(\mathcal C_D\to bb|cc)) P(\mathcal C_D\to ab|ac)   \\
	 + P(\mathcal C_D\to ab|ab)(1-P(\mathcal C_D\to ab|cc)  \big ].
\end{multline*}
But the terms in the last factor, all of which depend only on $D$, arise as multiples of $CF$s  on the network $N_{6-3_2}$ of  \cref{fig:net3cycles}
(R),
    \begin{align*}
    1- P(\mathcal C_D\to aa|cc) &= 3CF_{ac|ac}(N_{6-3_2}), &
    P(\mathcal C_D\to ab|bc) &=CF_{ab|bc}(N_{6-3_2}),\\
    1- P(\mathcal C_D\to bb|cc) &=3CF_{bc|bc}(N_{6-3_2}), &
    P(\mathcal C_D\to ab|ac)&=CF_{ab|ac}(N_{6-3_2}),\\
    P(\mathcal C_D\to ab|ab) &=CF_{ab|ab}(N_{6-3_2}), &
    1-P(\mathcal C_D\to ab|cc) &=3CF_{ac|bc}(N_{6-3_2}).
    \end{align*}
Thus, by \cref{prop:222semi}, $G_{abc}(N)=$
$$
3(1-P(\mathcal C_A\to aa|bb) )(1-P(\mathcal C_B\to aa|bb)) P(\mathcal C_C\to ac|ac)G_{abc}(N_{6-3_2})>0.
$$

Since $G_{abc}+G_{cab} +G_{bca}=0$, either 1 or 2 of these terms are positive, and the two cases  for 3-cycle $D$s
follow. The case of the network $N$ with $D$ a node
is obtained by setting $x=h_1=h_2=0$ in the formulas 
for any of the 3-cycle networks, showing, for instance, that $G_{abc}(N)$ is a multiple of $G_{abc}(T_6)=0.$

The final statement on positive measure subsets of parameter space  follows from \cref{prop:222semi}.
\end{proof}

\smallskip

\cref{prop:3cycle} yields the following generalization of \cref{thm:222twosets}.

\begin{theorem}\label{thm:3cyc} Consider a partition of a taxon set $X$ into three blocks of size at least 2. Then for all networks (not necessarily level-1) with a node or 3-cycle inducing these blocks, the presence of the node or the (undirected) 3-cycle along with one non-hybrid block is identifiable. If the network has a 3-cycle then there are positive measure subsets of its parameter space on which the hybrid node can be determined, and on which it cannot.
\end{theorem}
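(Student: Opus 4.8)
The plan is to deduce everything from the semialgebraic test of \cref{prop:3cycle} together with the factorization of the $CF$ map in \cref{prop:mapfactor}, the identity $\mathcal V_D=\mathbb C^6$ of \cref{prop:factorVar}, and the explicit collision of \cref{ex:NonID}.

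First I would dispose of the topological claim directly from \cref{prop:3cycle}. The three quantities $G_{abc},G_{cab},G_{bca}$ are fixed polynomial functions of the $CF$s, with $G_{abc}+G_{cab}+G_{bca}=0$. If $D$ is a node they all vanish, while if $D$ is a 3-cycle at least one is strictly positive; since they sum to zero, this forces at least one of them to be strictly negative. By the sign dictionary of \cref{prop:3cycle}, a strictly negative $G$-value picks out a block that is not the hybrid block. Hence from the $CF$s alone one distinguishes a node (all $G$'s zero) from a 3-cycle (some $G\neq 0$), and in the 3-cycle case identifies at least one non-hybrid block. This establishes the first assertion.

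Next, for the positive-measure set on which the hybrid node can be determined, I would invoke the final statement of \cref{prop:3cycle}: there is a positive-measure set of parameters on $D$ for which exactly one $G$ is positive and the other two negative (the one-positive regime). Crossing this with any parameters on $A,B,C$ gives a positive-measure subset of the parameter space of $N$. On it the sign pattern of $(G_{abc},G_{cab},G_{bca})$ uniquely names the hybrid block; since the $G$'s are functions of the $CF$s, any competing network producing the same $CF$s has the same signs, hence the same forced hybrid block. Thus the hybrid node is identifiable on this set.

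The main work, and the expected obstacle, is producing a positive-measure set on which the hybrid node is genuinely unidentifiable, not merely a set on which the $G$-test is inconclusive. Here the plan is to lift the $6$-taxon collision of \cref{ex:NonID} along the factorization \eqref{eq:factor}. By \cref{prop:mapfactor} the $CF$s of $N=N_A$ (hybrid block $A$) are obtained from the $A,B,C$ parameters and the image point $\phi(\theta_D)\in\mathcal V_D$ by an assembly map that depends only on the block structure of the claw and the placement of the root in $C$, and not on which arm of $D$ carries the hybrid; the same assembly serves the network $N_B$ obtained by moving the hybrid arm to $B$. Consequently $CF(N_A;\theta_{ABC},\theta_D)=CF(N_B;\theta_{ABC},\theta_D')$ whenever the two $D$-parameterizations hit the same point of $\mathcal V_D$. \Cref{ex:NonID} supplies such a coincidence at a point $p^\ast\in\mathcal V_D$ realized by valid $D$-parameters $\theta_D^{(a)}$ on $N_A$ and $\theta_D^{(b)}$ on $N_B$. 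Because $\mathcal V_D=\mathbb C^6$ by \cref{prop:factorVar}, the map $\phi$ is dominant onto a $6$-dimensional variety; checking that its Jacobian has rank $6$ at $\theta_D^{(a)}$ and at $\theta_D^{(b)}$ (a finite computation, as in \cref{thm:222twosets}) shows both images contain a common Euclidean neighborhood $U$ of $p^\ast$, with positive-measure preimages. Taking all parameters on $A,B,C$ together with the positive-measure set $\{\theta_D:\phi(\theta_D)\in U\}$ then yields a positive-measure subset of the parameter space of $N_A$ on which, for every point, a matching parameter choice on $N_B$ reproduces the $CF$s exactly, so $N_A$ and $N_B$ are indistinguishable and the hybrid node cannot be determined. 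I expect the delicate points to be verifying that the assembly map is genuinely hybrid-independent, so that equality in $\mathcal V_D$ suffices for equality of $CF$s, and the full-rank Jacobian check at the example point; the reduction of blocks of size larger than $2$ is automatic here, since the $A,B,C$ parameters enter only through their induced coalescent probabilities and may be fixed to any admissible values.
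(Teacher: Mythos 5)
Your proposal is correct and follows essentially the same route as the paper: the first claim and the positive-measure identifiability claim come directly from \cref{prop:3cycle}, and non-identifiability is obtained by placing the parameters of \cref{ex:NonID} on 3-cycles with different hybrid nodes so that the $p_i$ coincide, whence by the factorization of \cref{prop:mapfactor} all $CF$s coincide. The paper's proof is terser---it leaves the Jacobian/open-neighborhood argument for positive measure implicit, relying on the discussion surrounding \cref{thm:222twosets}---but the underlying logic matches yours step for step.
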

\begin{proof} By \cref{prop:3cycle}, an undirected 3-cycle is signaled by the non-vanishing of at least one of $G_{abc}$, $G_{bca}$ or $G_{cab}$, and
for a 3-cycle, a non-hybrid block is identifiable since one of the $G$s must be negative. That the hybrid node can be identified on a positive measure set follows from the existence of such a set for which only one $G$ is positive. That the hybrid node cannot be identified on another set is seen by choosing specific parameters on 3-cycles with different hybrid nodes (e.g., using parameters given in \cref{ex:NonID} for the 3-cycle and adjacent edge parameters) which produce the same values for the 
$p_i$.
\end{proof}

If $n_i=1$  for some $i$, then similar arguments as given for \cref{prop:3cycle} and \cref{thm:3cyc} shows the function $f_{xyz}$  can identify the presence of a 3-cycle, but possibly not its hybrid node. While we omit the proof, we state the result.

\begin{proposition} Consider a partition of a taxon set $X$ into three blocks of size $1, n_1,n_2$ with $n_i\ge 2$. For any network $N$ (not necessarily level-1) with a node or 3-cycle inducing these blocks, let $D$ denote the node or 3-cycle and adjacent edges, and $A,B,C$ the subgraphs attached to $D$ by the adjacent edges, with $C$ being a single node.
Let
$f_{abc}$ be as in \cref{prop:5tax3cyc}, for any distinct $a_i$ on $A$, $b_i$ on $B$, and $c$ on $C$.
Then
for generic numerical parameters, $D$ is:			 
			\begin{equation*}
			\begin{cases}
			\text{a 3-leaf tree} & \text{ if, and only if, }f_{abc}=0,\\
			\text{a 3-cycle and adjacent edges with}\\
			\text{\ \ $A$ or $B$ below the hybrid node}&\text{ if  }f_{abc}<0,\\
 			\text{a 3-cycle and adjacent edges with} \\
			\text{\ \ $A$, $B$, or $C$ below the hybrid node}&\text{ if  }f_{abc}>0.
 			\end{cases}
 		\end{equation*}
		
  Finally, there are positive measure subsets of the numerical parameter space for  the networks with a 3-cycle $D$ and either $A$ or $B$ below its hybrid node on which  $f_{abc}<0$ and on which  $f_{abc}>0$.
\end{proposition}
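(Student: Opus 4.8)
The plan is to specialize the proof of \cref{prop:3cycle} to the present setting, using that $f_{abc}=3CF_{ab|ac}CF_{ab|bc}-CF_{ab|ab}$ involves only the three concordance factors $CF_{ab|ab},CF_{ab|ac},CF_{ab|bc}$, each of which uses at most one taxon from the singleton block $C=\{c\}$. Consequently $C$ need contribute no resolving factor, and for distinct $a_1,a_2$ on $A$, $b_1,b_2$ on $B$, and $c$ on $C$ the three decompositions I need are precisely three of the six already derived in \cref{prop:3cycle}:
\begin{align*}
CF_{ab|ab}&=(1-P(\mathcal C_A\to aa|bb))(1-P(\mathcal C_B\to aa|bb))\,P(\mathcal C_D\to ab|ab),\\
CF_{ab|ac}&=(1-P(\mathcal C_A\to aa|bc))\,P(\mathcal C_D\to ab|ac),\\
CF_{ab|bc}&=(1-P(\mathcal C_B\to ac|bb))\,P(\mathcal C_D\to ab|bc).
\end{align*}
Applying the exchangeability identities $P(\mathcal C_A\to aa|bc)=P(\mathcal C_A\to aa|bb)$ and $P(\mathcal C_B\to ac|bb)=P(\mathcal C_B\to aa|bb)$ used there, these collapse to
$$f_{abc}(N)=(1-P(\mathcal C_A\to aa|bb))(1-P(\mathcal C_B\to aa|bb))\big[\,3\,P(\mathcal C_D\to ab|ac)\,P(\mathcal C_D\to ab|bc)-P(\mathcal C_D\to ab|ab)\,\big].$$

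Next I would recognize the bracketed factor, which depends only on $D$, as $f_{abc}$ on the matching $5$-taxon network of \cref{fig:5_3cycle}. In each of $N_{5-3_1},N_{5-3_2},N'_{5-3_2},T_5$ the blocks in the roles of $A,B$ are pseudotaxa cherries of length $0$, so there $P(\mathcal C_A\to\cdot)=P(\mathcal C_B\to\cdot)=0$ and hence $CF_{ab|ac}=P(\mathcal C_D\to ab|ac)$, with the analogous identities for $CF_{ab|bc}$ and $CF_{ab|ab}$. Thus the bracket equals $f_{abc}(N_{5-3_1})$, $f_{abc}(N_{5-3_2})$, $f_{abc}(N'_{5-3_2})$, or $f_{abc}(T_5)$ according to whether the hybrid block of $D$ is $C$, $A$, or $B$, or $D$ is a tree node (the last obtained by contracting the cycle, equivalently by setting $x=h_1=h_2=0$ in the $p_i$ of \cref{sec:large3}). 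Since every non-coalescence prefactor lies in $(0,1)$ for positive edge lengths, the sign of $f_{abc}(N)$ agrees with the sign of $f_{abc}$ on the matching $5$-taxon network.

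The case distinction then follows by reading signs off \cref{prop:5tax3cyc,thm:5tax3cycle}. When $D$ is a node the bracket is $f_{abc}(T_5)=0$ because $f_{abc}\in\mathcal I(T_5)$, giving $f_{abc}(N)=0$; conversely, for a genuine $3$-cycle the bracket is $f_{abc}(N_{5-3_1})$ or $f_{abc}(N_{5-3_2})$, which lie outside $\mathcal I(N_{5-3_1})$ and $\mathcal I(N_{5-3_2})$ and so are nonzero for generic parameters, establishing the ``if and only if'' for the tree case. When $C$ is the hybrid block, \cref{prop:5tax3cyc} gives $f_{abc}(N_{5-3_1})=\ell_1\ell_2\gamma(1-\gamma)(x-1)^2/3>0$, so $f_{abc}(N)>0$; hence $f_{abc}<0$ forces the hybrid block to be $A$ or $B$, while $f_{abc}>0$ is compatible with $A$, $B$, or $C$. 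Because $f_{abc}$ is invariant under $a\leftrightarrow b$, the configurations ``$A$ below'' and ``$B$ below'' behave identically, and the final positive-measure claim is inherited verbatim: \cref{prop:5tax3cyc} provides positive-measure parameter sets on $N_{5-3_2}$ where $f_{abc}<0$ and where $f_{abc}>0$, and since the $D$-parameters (which fix the bracket) vary independently of the strictly positive $A,B$ prefactors, both signs survive on positive-measure subsets for $N$.

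The step I expect to require the most care is not any single computation but the decomposition bookkeeping of the first paragraph: confirming that the singleton block $C$ genuinely contributes no resolving factor to any of the three relevant CFs, and that the exchangeability identities collapse the $A$- and $B$-contributions uniformly, so that in every hybrid-block configuration the same prefactor $(1-P(\mathcal C_A\to aa|bb))(1-P(\mathcal C_B\to aa|bb))$ multiplies a clean copy of the $5$-taxon $f_{abc}$. Once this factorization is secured the statement is a direct specialization of \cref{prop:5tax3cyc}, exactly as \cref{thm:3cyc} specializes \cref{prop:222semi}.
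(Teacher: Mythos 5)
There is a genuine gap, and it sits exactly at the step you flagged as needing the most care: the claim that the three decompositions from \cref{prop:3cycle} hold uniformly in every hybrid-block configuration. They do hold when $A$ or $B$ is the hybrid block (then the root can be placed on $c$'s pendant edge, all four lineages of each relevant quartet flow through $D$ toward $c$, and your bracket is indeed $f_{abc}$ of $N_{5-3_2}$ or $N'_{5-3_2}$, or of $T_5$ in the degenerate case). But when the singleton $C$ is the hybrid block --- precisely the case that drives the key implication ``$f_{abc}<0 \Rightarrow$ hybrid is $A$ or $B$'' --- the root cannot lie on $c$'s side: it must sit in $A$ or $B$, say $B$. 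Then the lineages $b_1,b_2$ never enter $D$ at all, so with the paper's literal definition $P(\mathcal C_D\to ab|ab)=0$ (neither $F$ nor $G$ can occur with only two lineages in $D$), and your identity $CF_{ab|ab}=(1-P(\mathcal C_A\to aa|bb))(1-P(\mathcal C_B\to aa|bb))P(\mathcal C_D\to ab|ab)$ is false. The correct decompositions in this rooting are
\begin{align*}
CF_{ab|ab}&=(1-P(\mathcal C_A\to aa|bb))\,\ell_1 x \ell_2\, P(\mathcal C_B\to ab|ab),\\
CF_{ab|bc}&=\ell_2\bigl(\gamma x+1-\gamma\bigr)\, P(\mathcal C_B\to ab|bc),\qquad
CF_{ab|ac}=(1-P(\mathcal C_A\to aa|bb))\, P(\mathcal C_D\to ab|ac),
\end{align*}
where the $B$-factors are \emph{resolving} probabilities $P(\mathcal C_B\to\cdot)$, not non-coalescence factors $(1-P_B)$, and they are not equal to $(1-P_B)/3$ in general. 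So the ``clean copy of the 5-taxon $f_{abc}$'' does not appear with your prefactor; the structure of the factorization genuinely changes with the hybrid block.

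The conclusion nonetheless survives, because the corrected bookkeeping still factors: using exchangeability $P(\mathcal C_B\to ab|ab)=P(\mathcal C_B\to ab|bc)$ one gets
\begin{equation*}
f_{abc}(N)=3\,(1-P(\mathcal C_A\to aa|bc))\,P(\mathcal C_B\to ab|bc)\,f_{abc}\bigl(N_{5-3_1}(\theta_D)\bigr),
\end{equation*}
which is exactly the factorization the paper displays (for $\tilde f$) in the proof of the theorem generalizing \cref{thm:221twosets}, and since $f_{abc}(N_{5-3_1})=\ell_1\ell_2\gamma(1-\gamma)(1-x)^2/3>0$ by \cref{prop:5tax3cyc}, the sign conclusion $f_{abc}>0$ for a singleton hybrid block still holds. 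Your treatment of the tree case and of the $A$/$B$-hybrid cases (including the positive-measure statements, which only concern those cases) is fine. So the fix is local but necessary: split into cases according to where the root can be placed, and in the singleton-hybrid case replace your uniform prefactor $(1-P_A)(1-P_B)$ and $D$-bracket by the prefactor $3(1-P(\mathcal C_A\to aa|bc))P(\mathcal C_B\to ab|bc)$ times $f_{abc}(N_{5-3_1})$; as written, the proof asserts an identity that is false in the one case the proposition is really about.
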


For identifying the hybrid node in a $3$-cycle inducing a $(1,n_1,n_2)$ partition when there is a single descendant of the hybrid node, we generalize \cref{thm:221twosets}.

\begin{theorem}
Consider a partition of a taxon set $X$ into three blocks of sizes $1,n_1,n_2$ with $n_i\ge 2$. Then for all networks (not necessarily level-1) with a  3-cycle inducing these blocks, there are positive measure subsets of the parameter space on which the hybrid node of the 3-cycle is identifiable, and on which it is not.
\end{theorem}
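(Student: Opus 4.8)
The plan is to reduce both claims to facts already available for small networks by means of the factorization of the $CF$ map in \cref{prop:mapfactor}. Write $C$ for the singleton block (so its subgraph is a single leaf $c$), and let $A,B$ be the blocks of sizes $n_1,n_2\ge2$, with representatives $a_1,a_2$ on $A$ and $b_1,b_2$ on $B$. By \cref{prop:mapfactor} the $CF$s of any network carrying the prescribed 3-cycle are determined by the parameters on $A,B,C$ together with the six probabilities $p_1,\dots,p_6$ attached to the subnetwork $D$ consisting of the 3-cycle and its three adjacent cut edges. Two networks that differ only in which leg of the 3-cycle lies below the hybrid node have the same underlying $A,B,C$ but different parameterizations $\phi$ of $D$; they produce identical $CF$s precisely when they produce the same point of $\mathcal V_D$.

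For the non-identifiability subset I would keep the $A,B,C$ parameters fixed but arbitrary and exhibit two hybrid assignments on the 3-cycle giving the same $p_i$, hence, by \cref{prop:mapfactor}, the same $CF$s for \emph{all} quartets regardless of $A,B,C$. Dropping a suitable taxon from the $(2,2,2)$ networks $N_a,N_b$ of \cref{ex:NonID} yields a pair of $(1,2,2)$ 3-cycle networks with distinct hybrid nodes but coinciding $CF$s (dropping a non-hybrid taxon gives an $A$-versus-$B$ confusion, dropping a hybrid taxon gives a singleton-versus-large confusion), and coinciding $CF$s force coinciding $p_i$. Since \cref{prop:factorVar} gives $\mathcal V_D=\mathbb C^6$, both parameterizations $\phi$ are dominant, and a rank computation of their Jacobians at the example point, as in the discussion following \cref{ex:NonID} and in \cref{thm:222twosets}, shows each has full rank $6$; their images therefore overlap in a neighborhood of the example $p$-point with positive-measure preimages. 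Grafting the freely chosen, positive-measure $A,B,C$ parameters onto this positive-measure set of $D$-parameters produces a positive-measure subset of the parameter space on which the hybrid node of the 3-cycle is not identifiable.

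For the identifiability subset I would pass to the five taxa $a_1,a_2,b_1,b_2,c$. By \cref{lem:root} the $CF$s on this subset equal those of the induced 5-taxon network, which is $N_{5-3_1}$, $N_{5-3_2}$, or $N'_{5-3_2}$ according to whether the singleton, $A$, or $B$ is the hybrid block. By \cref{thm:221twosets} there is a positive-measure set of parameters on that 5-taxon network for which its semidirected topology, and in particular the location of its hybrid node, is distinguishable from the other three 5-taxon networks. As these five-taxon $CF$s are a subset of the full network's $CF$s, any competing large network with a different hybrid node restricts to a 5-taxon network with a different hybrid, hence to different $CF$s; thus identifiability at the 5-taxon level forces identifiability for the full network. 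Since the reduction from the full parameter space to the 5-taxon parameters is a dominant smooth map with positive-dimensional fibers, the preimage of the 5-taxon identifiability set again has positive measure.

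The main obstacle is the non-identifiability direction: one must verify that the single coincidence of $CF$s furnished by \cref{ex:NonID} persists on a full-dimensional neighborhood rather than at an isolated point, and that the factorization genuinely permits arbitrary subgraphs $A,B,C$ to be attached without disturbing the matching of the $p_i$. This is exactly where the dominance of $\phi$ from \cref{prop:factorVar} and the full-rank Jacobian check are indispensable; the identifiability direction, by contrast, is a routine pullback of \cref{thm:221twosets} through the taxon-restriction map.
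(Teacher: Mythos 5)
Your non-identifiability half is sound and is essentially the paper's own argument: take the two $D$-parameter assignments from \cref{ex:NonID} (after dropping a suitable taxon), observe that coinciding $CF$s on the cherry networks force coinciding values of the relevant $p_i$ (for a $(1,n_1,n_2)$ partition only $p_4,p_5,p_6$ and their complements actually enter, which only makes matching easier), use the factorization of \cref{prop:mapfactor} to attach arbitrary $A,B,C$ without disturbing the coincidence, and upgrade the single coincidence to positive measure via the Jacobian/dominance argument backed by \cref{prop:factorVar}.

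The identifiability half, however, has a genuine gap. You assert that the induced network on $a_1,a_2,b_1,b_2,c$ \emph{is} $N_{5-3_1}$, $N_{5-3_2}$, or $N'_{5-3_2}$, but nothing forces the blocks $A$ and $B$ to restrict to cherries: the theorem covers arbitrary (even non-level-1) networks, so the blocks may contain reticulations that survive restriction --- already with $n_1=2$ the $A$-block can itself be a 3-cycle with one of the chosen taxa below its hybrid node, and then no choice of two taxa avoids the extra cycle. More fundamentally, even granting a clean restriction of the \emph{true} network, \cref{thm:221twosets} is only a distinguishability statement within the four-element family $\{T_5,N_{5-3_1},N_{5-3_2},N'_{5-3_2}\}$, whereas the theorem you are proving requires ruling out \emph{every} competitor carrying a 3-cycle on the same blocks but a different hybrid node; such a competitor's restriction to your five taxa need not lie in that family (its blocks can carry their own cycles), so ``identifiability at the 5-taxon level'' in the sense of \cref{thm:221twosets} does not transfer. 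The paper's proof is engineered precisely to avoid this: it establishes the decomposition $\tilde f(N)=3\left(1-P(\mathcal C_A\to aa|bc)\right)P(\mathcal C_B\to ab|bc)\,\tilde f(N_{5-3_1})$ when the singleton is the hybrid block, and the analogous identity with $\tilde f(N_{5-3_2})$ when it is not, valid for \emph{arbitrary} subnetworks $A,B$, where $\tilde f$ is the function of \cref{eq:ftilde}. Combined with the fact, proved in \cref{thm:221twosets}, that $\tilde f<0$ for \emph{all} parameters on $N_{5-3_2}$, this yields a sign test that is universal over the whole competing class ($\tilde f>0$ certifies the singleton as hybrid no matter what $A$ and $B$ look like), and the anomaly argument of \cref{prop:3-cyc11} plays the same universal role when a large block is hybrid. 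To repair your proof you would need to push your restriction step through these decompositions --- as you already do in the non-identifiability half --- rather than invoking \cref{thm:221twosets} as a black box.
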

\begin{proof} Let $\tilde f$ be as defined in \cref{eq:ftilde}. We first show the result for
a general network $N$ with a 3-cycle with a single hybrid descendant. For such a network, using decompositions as in \cref{fig:decomp} but with $C$ a hybrid singleton taxon and the root in 
$B$, we find that for any choices of two taxa in the $A$ and $B$ blocks
$$\tilde f(N)=3(1-P(C_A\to aa|bc))P(C_B\to ab|bc)\tilde f(N_{5-3_1}),$$
where $N_{5-3_1}$ is given parameters from the 3-cycle and adjacent edges of $N$. Similarly,
if a non-hybrid block $C$ is the singleton
$$\tilde f(N)=3(1-P(C_A\to aa|bc))P(C_B\to ab|bc)\tilde f(N_{5-3_2}),$$
where $N_{5-3_2}$ is given parameters from the 3-cycle and adjacent edges of $N$. Thus the signs of $\tilde f$ on $N$ can be used as in the proof of \cref{thm:221twosets} to obtain the claim when the hybrid block is a singleton.

If the singleton block is not hybrid on $N$ the claim is established as for  \cref{thm:221twosets}, by passing to a subnetwork with a $3_2$-cycle and using the parameters of \cref{ex:NonID}.
\end{proof}

Finally, if a 3-cycle induces a $(1,1,n-2)$ partition then  \cref{prop:3-cyc11} applies directly to analyze identifiability.

\subsection{4-cycles}\label{ssec:4top}

\begin{figure}
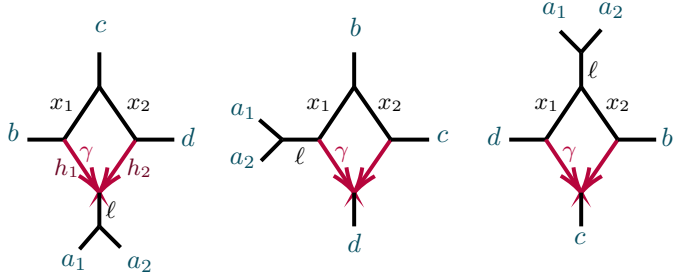

	\begin{center}
	\include{Figures/4cycles_SWN}
	\end{center} 
	\caption{The semidirected 5-taxon binary networks with a single 4-cycle,  up to taxon labelling. We denote these by $N_s$, $N_w$, $N_n$ from left to right, according to compass directions for the $a_1,a_2$ cherry when the hybrid node is  located at south. Note that $N_e$ is omitted since, up to taxon labelling, it is the same as $N_w$. Edge probabilities and the hybridization parameter $\gamma$ are shown next to edges.}
	\label{fig:4cyc}
\end{figure}
	
To study topological 4-cycle identifiability beyond the results of  \cite{Solis-Lemus2016} and \cite{Banos2019}, we consider first the networks $N_s$, $N_w$, $N_n$ on 5-taxa
of \cref{fig:4cyc}. Note that  hybrid edge probabilities  are not labeled for the networks $N_w$ and $N_n$,  since no coalescence can occur in those edges as they have only one descendant taxon.  These three networks have equal subideals of linear invariants, as described in \Cref{app:props}, $$\mathcal J(N_s)=\mathcal J(N_w)=\mathcal J(N_n),$$ since the networks have the same undirected topology. 
\Cref{prop:S,prop:W,prop:N}, with additional computation, give an identifiability result for the hybrid nodes of 4-cycles in any level-1 network with at least 5 taxa.

\begin{figure}
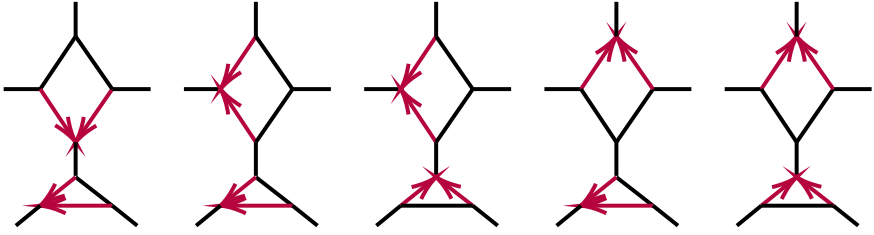

\begin{center}
\include{Figures/4-3-cycles}
\end{center}
\caption{The semidirected 5-taxon level-1 binary networks with a single 4-cycle and 3-cycle, up to taxon labelling.  }
\label{fig:4and3cyc}
\end{figure}

\begin{proposition}\label{prop:4hybrid} Consider a semidirected binary level-1 network on $n\ge5$ taxa whose topology is known up to contracting 2- and 3-cycles and the direction of  hybrid edges in 4-cycles. Then for generic numerical parameter values on the network, the 4-cycle hybrid edge directions are identifiable from  $CF$s.
\end{proposition}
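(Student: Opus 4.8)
The plan is to reduce the identifiability of a single 4-cycle's hybrid direction in the large network to the distinguishability of the 5-taxon networks $N_s,N_w,N_n$ (together with $N_e$, obtained from $N_w$ by relabeling) of \cref{fig:4cyc}, and then invoke the computational results \cref{prop:S,prop:W,prop:N}. Fix a 4-cycle $C$ of $N$; since the undirected topology is known, so is the partition of $X$ into the four nonempty blocks $B_1,B_2,B_3,B_4$ attached to the cycle nodes $v_1,v_2,v_3,v_4$. Identifying the hybrid edge directions of $C$ is exactly identifying which node $v_i$ (equivalently which block $B_i$) is hybrid. Because $n\ge 5$ and there are exactly four blocks, the pigeonhole principle guarantees some block, say $B_1$, contains at least two taxa; I would choose $a_1,a_2\in B_1$ and single taxa $b\in B_2$, $c\in B_3$, $d\in B_4$.

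First I would show that the CFs among these five taxa depend only on the induced semidirected 5-taxon network, using \cref{lem:root} together with the restriction and decomposition arguments already developed (as in \cref{prop:mapfactor}): the effect of each block on quartets spanning the cycle is summarized by coalescence probabilities, so the induced network is, up to its numerical parameters, exactly one of $N_s$ (if $v_1$ is hybrid), $N_n$ (if the opposite node $v_3$ is hybrid), or $N_w$/$N_e$ (if an adjacent node $v_2$ or $v_4$ is hybrid), now carrying the concrete labels $a_1,a_2,b,c,d$. Moreover the map from the numerical parameters of $N$ to those of the induced 5-taxon network is a dominant polynomial map, so generic parameters on $N$ yield generic parameters on the induced network; this lets me transfer any generic-parameter distinguishing statement on the small networks back to $N$.

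The algebraic core is then to certify that the (at most four) labeled 5-taxon networks sharing the undirected 4-cycle have pairwise distinct CF varieties with no containments. Using the ideals computed in \cref{prop:S,prop:W,prop:N}, and their images under the coordinate permutations induced by relabeling taxa, I would exhibit for each ordered pair $(M,M')$ of these networks a polynomial in $\mathcal I(M')\setminus\mathcal I(M)$, that is, verify $\mathcal V(M)\not\subseteq\mathcal V(M')$. Evaluating such a polynomial at the generic induced CF point then certifies that the network is not $M'$, and ruling out all wrong alternatives pins down the true labeled network, hence the hybrid node. Since every comparison reduces to an ideal-membership test among explicitly computed ideals, this is the ``additional computation'' referenced after \cref{fig:4cyc}; note that the shared linear invariants $\mathcal J(N_s)=\mathcal J(N_w)=\mathcal J(N_n)$ are exactly the ones that do not help, so genuinely higher-degree invariants are needed.

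I expect the main obstacle to be two intertwined features of this last step. First, the networks have different numbers of numerical parameters (hybrid-edge probabilities drop out of $N_w$ and $N_n$ because those hybrid edges have a single descendant), so the varieties have different dimensions and I must rule out the dangerous possibility of a lower-dimensional variety being \emph{contained} in a higher-dimensional one rather than merely being distinct. Second, $N_w$ and $N_e$ are isomorphic as abstract networks and differ only by the taxon reflection swapping $b$ and $d$; distinguishing them amounts to verifying that $\mathcal I(N_w)$ is not invariant under that coordinate permutation, which is precisely the delicate, computation-dependent assertion needed to resolve the case in which the hybrid node is adjacent to the doubled block $B_1$. Once these non-containments and non-invariances are established computationally, the reduction to five taxa and the dominant-map genericity transfer assemble into the full statement.
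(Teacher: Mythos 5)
Your overall route---pick two taxa from one block and one from each of the others, pass to the induced 5-taxon network, and decide among $N_s$, $N_w$, $N_n$ (and the relabeled $N_e$) using the ideals of \cref{prop:S,prop:W,prop:N}---is the same as the paper's. But the algebraic core of your plan fails at precisely the point you flagged as a ``danger to be ruled out'': the containments are real and cannot be ruled out. By \cref{prop:S}, $\mathcal I(N_s)=\mathcal J(N_s)$ consists only of the linear invariants, which are shared ($\mathcal J(N_s)=\mathcal J(N_w)=\mathcal J(N_n)$), while $\mathcal I(N_w)$ and $\mathcal I(N_n)$ properly contain these; hence $\mathcal I(N_s)\subseteq\mathcal I(N_w)\cap\mathcal I(N_n)$ and therefore $\mathcal V(N_w)\subset\mathcal V(N_s)$ and $\mathcal V(N_n)\subset\mathcal V(N_s)$ (dimensions $4$ and $3$ inside the $5$-dimensional $\mathcal V(N_s)$). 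Consequently, for the ordered pair $(M,M')=(N_w,N_s)$ the set $\mathcal I(M')\setminus\mathcal I(M)$ is \emph{empty}: no polynomial vanishing on $\mathcal V(N_s)$ can be nonzero at a $CF$ point arising from $N_w$, so your proposed certificate ``the network is not $N_s$'' does not exist, and the symmetric pairwise-exclusion scheme cannot be carried out. The correct logic is asymmetric: test membership only in the two small varieties. Generic parameters on $N_s$ give points off $\mathcal V(N_w)\cup\mathcal V(N_n)$, since a proper closed subvariety of the irreducible $\mathcal V(N_s)$ has measure zero in it; and the additional computation in the paper shows $\dim\bigl(\mathcal V(N_w)\cap\mathcal V(N_n)\bigr)=2$, so generic points of either small variety avoid the other. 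The decision rule is then: on $\mathcal V(N_w)$ only $\Rightarrow N_w$; on $\mathcal V(N_n)$ only $\Rightarrow N_n$; on neither $\Rightarrow N_s$. Generic identifiability tolerates containments among the candidate varieties; it does not require their absence.

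A second, patchable gap: your assertion that the induced 5-taxon network is \emph{exactly} one of the four candidates is false in general. Restricting $N$ to five taxa can turn other cycles of $N$ into 2-cycles, or into 3-cycles sitting between the 4-cycle and the doubled block (the hypothesis only fixes the topology up to contracting such cycles); this is why the paper must also consider the networks of \cref{fig:4and3cyc}. The repair uses results already in the paper: 2-cycles can be replaced by edges without changing $CF$s (\cref{lem:2cyc}), and any induced 3-cycle here yields a $(1,1,3)$ partition, so by \cref{prop:3-cyc11} it has the same $CF$ variety as its contraction, reducing the candidate list to $N_s$, $N_w$, $N_n$ up to labeling. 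Finally, your concern about distinguishing $N_w$ from its relabeled mirror $N_e$ is legitimate---this is the one comparison the paper's written proof leaves to its unspecified ``additional computations''---and your framing is sound: since both varieties are irreducible of dimension $4$, distinctness (e.g., showing the nonlinear generator of $\mathcal I(N_w)$ is not in the swapped ideal) already forces their intersection to have lower dimension, which is what the generic argument needs.
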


\begin{proof}
Suppose first a network $N$ has exactly 5 taxa, and a 4-cycle. Then after contracting 2-cycles $N$ yields  $N_s$, $N_w$, $N_n$, or one of the five networks shown in \cref{fig:4and3cyc}. Although we do not know whether $N$ has a 3-cycle, if it does then by \cref{prop:3-cyc11}
 it has the same associated variety as the network with that 3-cycle contracted, so we investigate the relationships of $\mathcal V(N_s)$, $\mathcal V(N_w),$ and  $\mathcal V(N_n)$.
 
\cref{prop:S,prop:W,prop:N} show that
$\mathcal V(N_s)$, $\mathcal V(N_w),$ and  $\mathcal V(N_n)$ have dimensions 5, 4, and 3, respectively. 
Moreover, $\mathcal V(N_s)$ contains both $\mathcal V(N_w)$ and  $\mathcal V(N_n)$. 
Additional computations show
$$\mathcal V(N_w) \cap \mathcal V(N_n)= \mathcal{V}_1 \cup \mathcal{V}_2 \cup\mathcal{V}_3$$
 has dimension 2, with three irreducible components whose ideals are
\begin{align*}
	&\mathcal{I}(\mathcal{V}_1) = \langle CF_{ab|cd} - CF_{ac|bd}, CF_{ab|ad} - CF_{ac|ad}, 3CF_{ac|ad}CF_{ac|bd} - CF_{ab|ac}  \rangle + \mathcal{I}(N_s) \\ 
	&\mathcal{I}(\mathcal{V}_2) = \langle CF_{ab|cd} + 2CF_{ac|bd} - 1, CF_{ab|ad} - CF_{ab|ac}, 3CF_{ab|ac}CF_{ac|bd} - CF_{ac|ad} \rangle\\ 
	&\hskip 4.5in+ \mathcal{I}(N_s) \\ 
	&\mathcal{I}(\mathcal{V}_3) = \langle CF_{ac|ad}, CF_{ab|ac}, CF_{ab|ad} \rangle + \mathcal{I}(N_s) 
\end{align*}

Thus generic parameters for  $N_s$ give points on neither $\mathcal V(N_w)$ nor $\mathcal V(N_n)$, 
while generic parameters for $N_w$ give points not on $\mathcal V(N_n)$, and generic parameters for $N_n$ give points not on $\mathcal V(N_w)$. Thus for generic parameters, the hybrid node in the 4-cycle can be determined by testing invariants to see whether the $CF$s lie on $\mathcal V(N_n)$ or $\mathcal V(N_w)$, or neither. 

If $N$ has more than 5 taxa, choose one taxon from each of 3 of the taxon blocks determined by a 4-cycle, and 2 from the remaining block, and pass to the induced network on these 5 taxa to apply the result for 5-taxon networks.
\end{proof}

\begin{rmk}
The components $\mathcal V_1, \mathcal V_2,$ and $\mathcal V_3$ of $\mathcal V(N_w) \cap \mathcal V(N_n)$ arise naturally from the parameterizations.
Restricting to $\gamma=1$ on $N_w$ and $\gamma=0$ on $N_n$, essentially giving the unrooted tree $((a_1,a_2),(b,c),d)$ for both, yields
$\mathcal V_1$. $\mathcal V_2$ arises from $\gamma=0$ on $N_w$ and $\gamma=1$ on $N_n$ which gives the unrooted tree $((a_1,a_2),b,(c,d))$.
$\mathcal V_3$ arises from $\ell=0$ on both $N_w$ and $N_n$,  which by corresponding to an infinite edge length, ensures $a_1,a_2$ form a cherry in any gene tree involving those two taxa, and for those involving only one $a_i$, gives $CF$s from a 4-cycle with unidentifiable hybrid node.
\end{rmk}
		
\subsection{Summary of topological identifiability} 

The results of this section combined with \cref{thm:topExcept4}  yield the following theorem.

\begin{theorem}[Topological Identifiability from quartet  $CF$s]\label{thm:mainTop}
Let  $N^+$ be a binary level-1 phylogenetic network on $n\ge 5$ taxa, with generic numerical parameters. Then no 2-cycle on the semidirected network can be identified, so let $\widetilde N$ be the topological semidirected network induced by $N^+$ with all 2-cycles
replaced with edges. Then the topological structure of $\widetilde N$, including directions of hybrid edges, is identifiable from quartet $CF$s of $N^+$,   with the following exceptions:
 \begin{enumerate}
\item If a 3-cycle induces a $(1,1,n-2)$ partition of taxa, then if the hybrid node has a single descendant taxon the network cannot be distinguished from the network in which the cycle is contracted to a node, or from the network in which the hybrid and other singleton block are interchanged. If the hybrid node has $n-2$ descendant taxa, then there are positive-measure subsets of parameters on which the semidirected 3-cycle is and is not identifiable.

\item If a 3-cycle induces a $(1,n_1,n_2)$ partition with $n_1,n_2\ge 2$ then the undirected 3-cycle can be identified. 
There are  positive measure subsets of parameters on which  the semidirected 3-cycle is and is not identifiable.

\item\label{case:nice3} If a 3-cycle induces an $(n_1,n_2,n_3)$ partition with all $n_i\ge 2$, then the undirected 3-cycle can be identified, and at least 1 of the 3-cycle nodes can be determined not to be hybrid, but there are positive measure subsets of parameters on which the semidirected 3-cycle is and is not identifiable.\end{enumerate}
\end{theorem}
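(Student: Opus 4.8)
The plan is to assemble the statement from the results proved throughout \Cref{sec:topID}, using \cref{thm:topExcept4} as scaffolding. That theorem already gives, for generic parameters, identifiability of the network $N'$ obtained from $N$ by contracting all 2- and 3-cycles and undirecting the hybrid edges of 4-cycles. Relative to $N'$, three things remain: (i) justifying that we may work with $\widetilde N$ rather than $N^+$; (ii) recovering the 4-cycle hybrid edge directions that $N'$ leaves undirected; and (iii) deciding, at each node of $N'$, whether it is a genuine tree node of $\widetilde N$ or the contraction of a 3-cycle, and extracting whatever directional information about that cycle is available. I would structure the proof around these three tasks, noting that cycles of size at least 5 already have their hybrid edges oriented in $N'$, so no further work is needed for them.

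For task (i) I would invoke the corollary to \cref{lem:2cyc}: a network with a 2-cycle has identical quartet $CF$s to the network in which that 2-cycle is replaced by an edge, so no 2-cycle is ever identifiable and passing from $N^+$ to $\widetilde N$ loses nothing. For task (ii), \cref{prop:4hybrid} directly supplies generic identifiability of the hybrid edge directions of every 4-cycle, restoring exactly the orientations suppressed in $N'$; this is where the hypothesis $n\ge 5$ is used.

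The substance is task (iii). A node $v$ of $N'$ at which a 3-cycle of $\widetilde N$ may have been contracted induces a partition $(n_1,n_2,n_3)$ of $X$ into the three taxon blocks meeting $v$, and since $n_1+n_2+n_3=n\ge 5$ these partitions fall into exactly three mutually exclusive and exhaustive cases. When two blocks are singletons, \cref{prop:3-cyc11} yields the first exception: non-identifiability when the hybrid node is the singleton of a $3_1$-cycle, and a positive-measure dichotomy otherwise. When exactly one block is a singleton, I would combine the semialgebraic test built from $f_{abc}$ with the generalization of \cref{thm:221twosets} proved in \Cref{sec:large3}, obtaining identifiability of the undirected 3-cycle and the positive-measure dichotomy for its direction. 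When all blocks have size at least 2, \cref{thm:3cyc}, resting on \cref{prop:3cycle} and the sign analysis of $G_{abc},G_{bca},G_{cab}$, gives identifiability of the undirected 3-cycle, determination of at least one non-hybrid block, and again the positive-measure dichotomy. In every case, detecting whether a 3-cycle is present at all reduces to the nonvanishing of the relevant invariant ($f_{abc}$ or one of the $G$s); this holds generically precisely because each such polynomial lies in the ideal of the contracted tree but not in the ideal of the corresponding 3-cycle network, so it vanishes at a tree node and is generically nonzero at a contracted 3-cycle.

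The main obstacle I anticipate is bookkeeping genericity consistently across the whole network. Each cited result controls its own measure-zero exceptional set, while the 3-cycle direction statements are genuinely \emph{not} generic, splitting parameter space into two positive-measure pieces. I would reconcile this with a single use of ``generic numerical parameters'' by observing that $\widetilde N$ has only finitely many cycles and nodes, so the union of the finitely many measure-zero sets arising from \cref{thm:topExcept4}, \cref{prop:4hybrid}, and the undirected-3-cycle detection is still measure zero; generic parameters avoid all of them at once. The delicate phrasing is then to ensure that ``generic'' modifies only the identifiability of the undirected structure together with the 4-cycle directions, and that the positive-measure dichotomies for 3-cycle directions are read as the content of the exceptions rather than as failures of genericity.
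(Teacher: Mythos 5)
Your proposal is correct and matches the paper's approach: the paper gives no separate proof, stating the theorem as a direct compilation of the section's results with \cref{thm:topExcept4}, and your assembly --- 2-cycles via the corollary to \cref{lem:2cyc}, 4-cycle orientations via \cref{prop:4hybrid}, and the three-way 3-cycle partition case analysis via \cref{prop:3-cyc11}, the $f_{abc}$/$\tilde f$ results of \Cref{sec:large3}, and \cref{thm:3cyc}, with genericity handled by a finite union of measure-zero exceptional sets --- is exactly that compilation made explicit. The only blemish is your closing claim that 3-cycle detection ``in every case'' reduces to nonvanishing of $f_{abc}$ or a $G$: in the $(1,1,n-2)$ case no invariant detects a $3_1$-cycle (it is genuinely indistinguishable for all parameters), but since your case analysis correctly defers that case to \cref{prop:3-cyc11}, this imprecision is harmless.
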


\section{Identifiability of numerical parameters}\label{sec:numID}

To address identifiability of numerical parameters --- both edge lengths and hybridization parameters --- we assume the network has no 2-cycles, as these are not identifiable. For the remainder of the section we thus study $\widetilde N$, the  semidirected metric binary phylogenetic network induced from a rooted  network $N^+$, with  2-cycles replaced by edges. 
In showing an edge in $\widetilde N$
has identifiable length, we are showing that if the original network did have a 2-cycle, then an ``effective" length of an edge resulting
from replacing  the cycle  as in \cref{lem:2cyc}  is identifiable.

Since we assume exactly one sample per taxon for each gene, no coalescent event can occur in pendant edges.  Thus no
pendant edge length appears in $CF$ parameterizations,  and such lengths cannot be identified from  $CF$s, yielding the following.

\begin{proposition}\label{prop:pendantNonID} Let $N$ be a semidirected phylogenetic network. Then pendant edge lengths are not identifiable from quartet $CF$s under the NMSC model with one sample per taxon.
\end{proposition}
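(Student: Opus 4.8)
The plan is to show that no pendant edge length ever enters the polynomial formula for any quartet $CF$, so that varying such a length leaves every $CF$ unchanged. Recall from the discussion preceding this proposition that, under the NMSC, each scalar $CF$ for a fixed topological network is a polynomial in the hybridization parameters $\gamma$ and the edge probabilities $\ell=\exp(-t)$, where $\ell$ is the probability that two gene lineages entering an edge of length $t$ fail to coalesce within it. The key observation is that the variable $\ell$ for a given edge can appear in a $CF$ formula only through coalescence, or its non-occurrence, between two or more lineages simultaneously present in that edge.

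First I would trace the four gene lineages sampled from $a,b,c,d$ backward in time through $N$. With exactly one sample per taxon, only these four lineages are present, and each enters $N$ along the pendant edge of its own leaf. A pendant edge leading to a taxon not among $a,b,c,d$ carries no lineage, while a pendant edge leading to one of $a,b,c,d$ carries exactly one; in either case no pendant edge ever hosts two lineages, so no coalescence can occur on it. Consequently the edge probability $\ell$ of any pendant edge contributes only a trivial factor to every coalescent history, and hence is absent from the polynomial obtained by marginalizing over histories to form $CF_{ab|cd}$. This holds for every quartet $ab|cd$, so no $CF$ depends on any pendant edge length.

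It then follows that two assignments of numerical parameters to $N$ differing only in one pendant edge length yield identical vectors of quartet $CF$s, so pendant edge lengths cannot be recovered, establishing non-identifiability.

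The argument has no serious obstacle. The only point requiring care is the claim that the pendant edge probability is genuinely absent from the $CF$ polynomial, rather than present with a vanishing coefficient; I would make this precise by appealing to the combinatorial structure of the NMSC, in which each history's probability factors as a product of per-edge terms, and an edge's term depends on its length only when at least two lineages occupy the edge. Since a pendant edge never hosts two lineages under single sampling, the pendant length drops out of every summand, and therefore out of the sum.
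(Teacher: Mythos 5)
Your argument is correct and is exactly the paper's own reasoning: the paper proves this proposition by the one-line observation that with a single sample per taxon no coalescent event can occur on a pendant edge, so pendant edge lengths never appear in the $CF$ parameterization and hence cannot be identified. Your proposal simply spells out this same argument in more detail (tracing lineages and noting each pendant edge hosts at most one), so there is nothing to add or correct.
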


\subsection{Lengths of edges defined by 4 taxa}

For some edges in $\widetilde N$ it is simple to identify the edge length. We first focus on one type of such edges. 

\begin{defi} Let $e$ be an edge in $\widetilde N$. Then we say $e$ is \emph{defined} by a set $Q=\{a,b,c,d\}$ of 4 taxa if:
\begin{enumerate}
\item Edge $e$ lies in the subnetwork $\widetilde N(Q)$ of $\widetilde N$ composed of all edges and nodes which form the induced $\widetilde N|_Q$ once degree 2 nodes are suppressed,
\item Edge $e$ is a cut edge of  $\widetilde N(Q)$ separating pairs of
taxa, say $a,b$ from $c,d$, and
\item In $\widetilde N(Q)$ there are 4 cut edges adjacent to $e$, separating each of $a,b,c,d$, respectively, from the others.  \end{enumerate}
\end{defi}

In an unrooted  tree, every internal edge is defined by some $Q$, even if the tree is not binary. 
But for a network, even if binary and level-1, as in  \cref{fig:Qdef}, this is not the case: A $k$-cycle, with $k\ge 5,$ has $k-4$ edges in it that are defined by such sets, with the hybrid edges and those adjacent to them exceptions, as will be proved in the next proposition. Edges descended from hybrid nodes are also never defined by a set $Q$.
These examples show edges defined by a set $Q$ need not be cut edges, and not all cut edges are defined by a set $Q$.

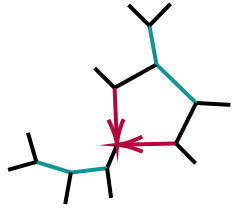
\begin{figure}
\begin{center}
\tikzset{every picture/.style={line width=0.75pt}} 

\begin{tikzpicture}[x=0.75pt,y=0.75pt,yscale=-0.9,xscale=0.9]

\draw [color={rgb, 255:red, 0; green, 0; blue, 0 }  ,draw opacity=1 ][line width=1.5]    (216.31,64.52) -- (235.5,65.57) ;
\draw [color={rgb, 255:red, 14; green, 149; blue, 148 }  ,draw opacity=1 ][line width=1.5]    (194.06,42.93) -- (216.31,64.52) ;
\draw [color={rgb, 255:red, 0; green, 0; blue, 0 }  ,draw opacity=1 ][line width=1.5]    (194.06,42.93) -- (170.58,56.02) ;
\draw [color={rgb, 255:red, 182; green, 5; blue, 60 }  ,draw opacity=1 ][line width=1.5]    (204.97,87.22) -- (174.97,87.9) ;
\draw [shift={(171.97,87.97)}, rotate = 358.68] [color={rgb, 255:red, 182; green, 5; blue, 60 }  ,draw opacity=1 ][line width=1.5]    (14.21,-4.28) .. controls (9.04,-1.82) and (4.3,-0.39) .. (0,0) .. controls (4.3,0.39) and (9.04,1.82) .. (14.21,4.28)   ;
\draw [color={rgb, 255:red, 14; green, 149; blue, 148 }  ,draw opacity=1 ][line width=1.5]    (190.1,21.09) -- (194.06,42.93) ;
\draw [color={rgb, 255:red, 0; green, 0; blue, 0 }  ,draw opacity=1 ][line width=1.5]    (171.97,87.97) -- (166.41,101.06) ;
\draw [color={rgb, 255:red, 14; green, 149; blue, 148 }  ,draw opacity=1 ][line width=1.5]    (166.41,101.06) -- (146.07,103.52) ;
\draw [line width=1.5]    (216.31,64.52) -- (204.97,87.22) ;
\draw [color={rgb, 255:red, 0; green, 0; blue, 0 }  ,draw opacity=1 ][line width=1.5]    (201.95,8.9) -- (190.1,21.09) ;
\draw [color={rgb, 255:red, 14; green, 149; blue, 148 }  ,draw opacity=1 ][line width=1.5]    (126.75,97.57) -- (146.07,103.52) ;
\draw [color={rgb, 255:red, 182; green, 5; blue, 60 }  ,draw opacity=1 ][line width=1.5]    (170.58,56.02) -- (171.84,84.98) ;
\draw [shift={(171.97,87.97)}, rotate = 267.51] [color={rgb, 255:red, 182; green, 5; blue, 60 }  ,draw opacity=1 ][line width=1.5]    (14.21,-4.28) .. controls (9.04,-1.82) and (4.3,-0.39) .. (0,0) .. controls (4.3,0.39) and (9.04,1.82) .. (14.21,4.28)   ;
\draw [color={rgb, 255:red, 0; green, 0; blue, 0 }  ,draw opacity=1 ][line width=1.5]    (190.1,21.09) -- (178.41,9.52) ;
\draw [color={rgb, 255:red, 0; green, 0; blue, 0 }  ,draw opacity=1 ][line width=1.5]    (159.38,44.92) -- (170.58,56.02) ;
\draw [color={rgb, 255:red, 0; green, 0; blue, 0 }  ,draw opacity=1 ][line width=1.5]    (146.07,103.52) -- (142.93,121.19) ;
\draw [color={rgb, 255:red, 0; green, 0; blue, 0 }  ,draw opacity=1 ][line width=1.5]    (215.99,98.36) -- (204.97,87.22) ;
\draw [color={rgb, 255:red, 0; green, 0; blue, 0 }  ,draw opacity=1 ][line width=1.5]    (122.03,81.24) -- (126.75,97.57) ;
\draw [color={rgb, 255:red, 0; green, 0; blue, 0 }  ,draw opacity=1 ][line width=1.5]    (126.75,97.57) -- (110.9,102) ;
\draw [color={rgb, 255:red, 0; green, 0; blue, 0 }  ,draw opacity=1 ][line width=1.5]    (166.41,101.06) -- (168.63,117.71) ;

\end{tikzpicture}
\vspace*{-1cm}
\end{center}
\caption{A semidirected network with edges defined by sets $Q$ of 4 taxa highlighted in blue.}\label{fig:Qdef}
\end{figure}

For a binary network, an alternate characterization of edges defined by sets $Q$ can be given.

\begin{proposition}
For a binary level-1 semidirected network $\widetilde N$, there is a set $Q$ of 4 taxa  defining an edge $e$ if, 
and only if, $e$ is an internal edge that is neither hybrid nor adjacent to a hybrid edge.
\end{proposition}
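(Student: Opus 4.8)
The plan is to reduce the proposition to two clean facts about four-taxon restrictions and then run each implication off them. The first tool is a \emph{reformulation} of the definition: for $Q=\{a,b,c,d\}$, the edge $e$ is defined by $Q$ if and only if the induced quartet network $\widetilde N|_Q$ is the quartet \emph{tree} $ab|cd$ and $e$ maps onto its unique internal edge. Conditions (2)--(3) demand that $e$ be a cut edge carrying four adjacent cut edges, and I would rule out every cyclic four-taxon level-1 network by inspection: in each such network the only internal cut edge candidate is the edge below the hybrid node, whose two upper neighbors are the non-cut hybrid edges, so condition (3) fails. Hence $\widetilde N|_Q$ must be a tree, and then its single internal edge is forced to be $e$; the converse of the reformulation is immediate. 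The second tool is the \emph{survival criterion} for cycles under restriction, which I would extract from the analysis in \cite{Banos2019}: a cycle $C$ of $\widetilde N$ survives (remains a cycle) in $\widetilde N|_Q$ if and only if $Q$ contains a taxon descended from the hybrid node of $C$; otherwise the hybrid node is pruned, both hybrid edges disappear, and $C$ \emph{opens} into a tree path.

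For the ``only if'' direction I would argue via the reformulation and the survival criterion. That $e$ is internal is clear, since a leaf endpoint cannot carry two of the four required adjacent cut edges. If $e$ were a hybrid edge, it lies on a unique cycle $C$: either a taxon below $C$'s hybrid node is chosen, so $C$ survives and $e$ stays a non-cut cycle edge, contradicting that $e$ is a tree's internal edge; or no such taxon is chosen, $C$ opens, and the hybrid edge $e$ is deleted, contradicting condition (1). Finally, suppose $e$ is incident to a hybrid edge $f$ of a cycle $C$ at a node $w$. If $w$ is the hybrid node, the two edges at $w$ other than $e$ are exactly the hybrid edges, so any taxa across $e$ lie below the hybrid, forcing $C$ to survive and leaving $e$ with two non-cut neighbors, violating (3). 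If $w$ is a parent of the hybrid, then either $C$ survives (so $f$ is a non-cut neighbor of $e$, violating (3)) or $C$ opens (so $w$ drops to degree two, is suppressed, and $e$ is absorbed into a longer edge, violating (1)). Thus $e$ is internal, non-hybrid, and not adjacent to a hybrid edge.

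For the ``if'' direction I would construct a defining $Q$. Write $e=(u,v)$; by hypothesis $u,v$ are tree nodes and the four edges adjacent to $e$ are tree edges. The decisive device is to choose \emph{all} four taxa so that none descends from the hybrid node of any cycle of $\widetilde N$; by the survival criterion every cycle then opens, so $\widetilde N|_Q$ is a tree. Concretely I select one taxon through each of the four tree edges at $u$ and $v$, obtaining $a,b$ on the $u$-side and $c,d$ on the $v$-side, and at each cycle met along a connecting path I exit through a side block rather than below the hybrid. This is always possible because every cycle of length at least $3$ has an intermediate node bearing a pendant subnetwork, and finiteness guarantees such a descent reaches a leaf. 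When $e$ itself lies on a cycle $C$, avoiding the hybrid block of $C$ opens $C$ while keeping $e$ as a tree edge with $a,b$ on $u$'s side of $e$ and $c,d$ on $v$'s side. A direct check then shows $u$ and $v$ each retain three taxon-bearing directions, hence survive suppression with $e$ between them, and the induced tree is $ab|cd$ with internal edge $e$; by the reformulation $Q$ defines $e$.

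I expect the main obstacle to be the backward construction together with the precise survival criterion that drives it. The conceptual core --- opening $e$'s cycle by omitting its hybrid block so that $e$ is preserved as an internal tree edge --- is simple, but making it rigorous requires (i) a careful statement and proof that a cycle survives a quartet restriction exactly when a chosen taxon lies below its hybrid node, and (ii) a uniform choice of the four leaves that simultaneously keeps $u$ and $v$ of degree three, places $a,b$ and $c,d$ on the correct sides of $e$, and revives no other cycle. The inspection step ruling out cyclic four-taxon networks in the reformulation is routine, so the weight of the argument rests on this combinatorial bookkeeping.
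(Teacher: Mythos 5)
Your reformulation is exactly where the argument breaks, and it breaks at the one case that carries the content of the proposition. You claim that $e$ is defined by $Q$ if and only if $\widetilde N|_Q$ is the quartet tree $ab|cd$ and $e$ ``maps onto'' its internal edge, and you call the converse of this equivalence immediate. Under the reading your proof uses, the converse is false. Take $e$ to be the cut edge incident to a parent $w$ of a hybrid node (so $e$ is adjacent to a hybrid edge), and let $Q$ consist of two taxa separated from the rest by $e$ and two taxa elsewhere, none descended from the hybrid node. The cycle then opens, $\widetilde N|_Q$ is a quartet tree, and $e$ is merged into its internal edge, so your criterion declares $e$ defined by $Q$ --- contradicting the very statement being proved. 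What actually fails is condition (3) of the paper's definition: in the unsuppressed subnetwork $\widetilde N(Q)$ the node $w$ has degree 2, so $e$ has at most three adjacent edges there, and one of them separates two taxa from two rather than one from three; four distinct adjacent cut edges, each splitting off a single taxon, cannot exist. Your forward direction compounds this in the same case: you assert that when the cycle opens, $e$ being ``absorbed into a longer edge'' violates condition (1). It does not --- condition (1) only requires that $e$ lie in $\widetilde N(Q)$, and an absorbed edge still lies there (absorption is a statement about the suppression map, not about membership). So for an edge meeting a hybrid edge at the hybrid's parent, with no hybrid descendant sampled, your proof derives no contradiction, and this is precisely the class of edges the proposition must exclude.

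The argument is repairable in two ways. Either strengthen the reformulation to ``$e$ itself survives suppression and \emph{equals} the internal edge of the quartet tree,'' and prove both directions of that equivalence (the usable direction then rests on the observation that survival of $e$ as the internal edge forces both of its endpoints to have degree 3 in $\widetilde N(Q)$, which supplies the four adjacent cut edges); or drop the reformulation and argue as the paper does, uniformly and with no case split on whether the cycle survives: since $\widetilde N$ is binary, condition (3) forces all four edges of $\widetilde N$ adjacent to $e$ to be retained in $\widetilde N(Q)$ as cut edges; if $e$ is hybrid or one of those neighbors is a hybrid edge, its retention means $Q$ contains a descendant of the corresponding hybrid node, hence the entire cycle lies in $\widetilde N(Q)$, and a hybrid edge in a retained cycle is not cut --- contradiction. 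Your backward construction (choose $Q$ avoiding all hybrid descendants so every cycle opens, exiting cycles through side blocks, then check that both endpoints of $e$ keep degree 3) is sound in outline and matches the paper's construction, but it should conclude by verifying conditions (1)--(3) in $\widetilde N(Q)$ directly rather than by invoking the unproved converse of the reformulation.
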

\begin{proof}
Suppose $e$  is defined by $Q$. If $e$ were either hybrid or adjacent to a hybrid edge, then $Q$ would contain a descendant of a hybrid node. But then $\widetilde N(Q)$ contains all edges of the cycle in which the hybrid edge lies.  This contradicts that both $e$ and its adjacent edges are cut edges in $\widetilde N(Q)$, since the hybrid edges are not cut. 

Conversely, suppose $e$ is neither hybrid nor adjacent to a hybrid edge.  
If none of these 5 edges is in a cycle in $\widetilde N$, then choosing one taxon in each component obtained by deleting $e$ and its incident nodes and adjacent edges gives a set $Q$ defining $e$. 

If any one of these edges is in a cycle, then  since $\widetilde N$ is level-1 and binary, exactly one of the following holds: a) $e$ is in a cycle, together with exactly 2 adjacent edges, one at each endpoint of $e$, b)
$e$ is not in a cycle, but
exactly one cycle contains  two edges adjacent to $e$ at the same endpoint of $e$, or c) $e$ is not in a cycle, but all 4 edges adjacent to $e$ are, with $e$ adjacent to
two different cycles.

For case (a), the 2 edges adjacent to $e$ that are not in the cycle must be cut edges, and the two adjacent to $e$ that are in the cycle
must be adjacent to 2 other distinct cut edges not in the cycle. Choosing taxa from the non-$e$ components left by deleting these 4 cut edges gives a set $Q$ defining $e$.

 In case (b), The two edges in the cycle must be adjacent to distinct cut edges other than $e$ which are not in the cycle. Choosing taxa from the non-$e$  components of the graph obtained by deleting these two edges and  the two non-cycle edges adjacent to $e$ gives a quartet defining $e$. Case (c) is similar, treating each cycle the same way. 
\end{proof}

 For any network, regardless of level or other special structure, lengths of edges defined by sets $Q$ are easily identified.

\begin{proposition}\label{prop:defQid} If an edge $e$ in a metric network $\widetilde N$ is defined by a set $Q$ of 4 taxa, then its length is identifiable from quartet $CF$s.
\end{proposition}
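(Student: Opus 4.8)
The plan is to reduce the quartet concordance factors on $\widetilde N$ to those of an ordinary four-taxon quartet tree and then read off the length of $e$ from the standard coalescent formula. First I would invoke \cref{lem:root} together with the cited result of \cite{Banos2019} that the vector $\overline{CF}_{abcd}(\widetilde N)$ equals the quartet concordance factor of the induced semidirected quartet network $\widetilde N|_Q$. It therefore suffices to show that this concordance factor is exactly that of a quartet tree $ab|cd$ whose internal edge is $e$.

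Next I would analyze $\widetilde N|_Q$ using the defining hypotheses: that $e$ is a cut edge of $\widetilde N(Q)$ separating $\{a,b\}$ from $\{c,d\}$, and that the four edges adjacent to $e$ are cut edges separating $a,b,c,d$ individually. Because those four edges are present in $\widetilde N(Q)$ and $\widetilde N$ is binary, both endpoints $u,v$ of $e$ have degree $3$ in $\widetilde N(Q)$; hence neither is suppressed when forming $\widetilde N|_Q$, and $e$ is not merged with any adjacent edge. Thus the length of $e$ in $\widetilde N$ is exactly the internal-edge length $t$ (equivalently the edge probability $\ell=e^{-t}$) of the quartet-tree backbone $ab|cd$ of $\widetilde N|_Q$. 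Any further structure of $\widetilde N|_Q$ lies beyond the four separating cut edges, in regions each containing only a single taxon of $Q$; since exactly one lineage is sampled per taxon, no coalescence can occur in those regions (this is the reasoning behind \cref{prop:pendantNonID}), so they contribute nothing to the concordance factors. Consequently $\overline{CF}_{abcd}$ coincides with the concordance factor of the quartet tree $ab|cd$ with internal edge length $t$.

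Finally I would apply the standard two-state coalescent computation for a quartet tree, $CF_{ab|cd}=1-\tfrac{2}{3}\ell$ and $CF_{ac|bd}=CF_{ad|bc}=\tfrac{1}{3}\ell$ with $\ell=e^{-t}$. This yields $\ell=3\,CF_{ac|bd}$, and hence $t=-\ln\!\big(3\,CF_{ac|bd}\big)$, an explicit expression for the length of $e$ in terms of the quartet $CF$s, which establishes identifiability.

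The hard part is the middle step: carefully confirming both that forming $\widetilde N|_Q$ preserves $e$ (so its metric length is literally the internal-edge length of the quartet backbone) and that all structure hanging below the four separating cut edges is irrelevant to the $CF$s. The degree argument handles the first point, and the single-lineage, no-coalescence observation handles the second; the latter is precisely what makes the statement hold for arbitrary, not necessarily level-1, networks. Everything after this reduction is the familiar quartet-tree calculation.
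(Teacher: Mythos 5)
Your strategy is the same as the paper's: reduce to the induced quartet network $\widetilde N|_Q$ via \cref{lem:root}, show its concordance factors equal those of a quartet tree whose internal edge is $e$, and invert $CF_{ac|bd}=e^{-t}/3$. Your degree argument that $e$ survives unmerged in $\widetilde N|_Q$ is sound, and in fact does not need binarity (which is not a hypothesis of \cref{prop:defQid}; the paper stresses the statement holds for arbitrary networks): the cut edges separating $a$ and $b$ from the rest must be incident to the endpoint $u$ of $e$ on the $\{a,b\}$ side, and those for $c,d$ to the other endpoint $v$, so both endpoints have degree at least $3$ and are never suppressed.

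The gap is your justification that the four pendant regions ``contribute nothing'': you assert that since each region contains a single taxon of $Q$, no coalescence can occur there. That inference is false in general, because the coalescent runs toward a root, and hybrid-edge directions inside a pendant region can force every valid rooting of $\widetilde N|_Q$ to lie inside that region. For example, suppose the region containing $a$ consists of a cycle whose hybrid node has as its child edge the cut edge leading down to $u$, with $a$ attached above the cycle; then any valid root lies above that cycle, inside $a$'s region, and the lineages of $b$, $c$, $d$ all travel rootward through $u$ into that region, where they can and generally do coalesce. So the single-lineage reasoning behind \cref{prop:pendantNonID}, which concerns pendant edges (leafward of everything), does not transfer to whole pendant regions. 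The conclusion you want is still correct, but for a different reason, namely the exchangeability fact recorded at the start of \cref{sec:large3}: the four cut edges guarantee that the first coalescence either occurs inside $e$ --- only the $a,b$ pair or the $c,d$ pair can be present there, so it yields the split $ab|cd$, and this happens with total probability $1-e^{-t}$ under any valid rooting --- or else occurs only after three or four lineages have reached a common node ($u$, $v$, or a root interior to $e$) with no prior coalescence, in which case each quartet topology has probability exactly $1/3$ regardless of any blobs, coalescences, or root placement beyond the cut edges. This gives $CF_{ac|bd}=e^{-t}/3$ in all cases, and the rest of your proof then goes through.
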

\begin{proof}
If $e$ is defined by $Q=\{a,b,c,d\}$ has length $t$ and in $\widetilde N|_Q$ induces the split $ab|cd$, then 
$\overline{CF}_{ac|bd}= \exp (-t)/3,$ so $t=-\log(3 CF_{ac|bd})$.
\end{proof}

\subsection{Numerical parameters associated to 3-cycles}

Edges either in or adjacent to a 3-cycle are always adjacent to a hybrid edge. Thus in binary networks, these edges are not defined by sets of 4 taxa, so
\cref{prop:defQid} does not apply.  \cref{prop:N5-3-1}(c), \cref{prop:N6-2-2-2}(c) and  \cref{prop:N5-3-2}(c)
 illustrate that, at least for specific small networks, the numerical parameters associated to 3-cycles are not identifiable. 
 More generally, we obtain the following.
 
 \begin{proposition}\label{prop:3cycNonID}
If $C$ is a 3-cycle on a semidirected binary  level-1 network $\widetilde N$, then neither the hybridization parameters nor the lengths of any edges in or adjacent to $C$ can be identified from quartet  $CF$s.
\end{proposition}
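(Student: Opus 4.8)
The plan is to exploit the factorization of the $CF$ map established in \cref{prop:mapfactor}. Writing the 3-cycle $C$, its three adjacent cut edges, and its hybrid and tree edges as the subnetwork $D$ of a decomposition $\{A,B,C,D\}$ of $\widetilde N$ as in \cref{fig:net3cycles}, every quartet $CF$ of $\widetilde N$ is a function of the six probabilities $p_1,\dots,p_6$ attached to $D$ together with the numerical parameters on $A,B,C$. I would fix the $A,B,C$ parameters once and for all and regard the $CF$s as depending on the seven cycle parameters $(\gamma,\ell_1,\ell_2,\ell_3,h_1,h_2,x)$ only through the polynomial map $\phi\colon\mathbb C^7\to\mathbb C^6$, $\phi=(p_1,\dots,p_6)$, of \cref{prop:mapfactor}. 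Non-identifiability of each cycle parameter then reduces to showing that the fibers of $\phi$ are positive-dimensional and that each of the seven coordinate functions is non-constant along them.

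First I would record that $\phi$ is dominant: by \cref{prop:factorVar} its image has Zariski closure $\mathbb C^6$, so a generic fiber has dimension $7-6=1$. Thus, holding $A,B,C$ fixed, through a generic choice of cycle parameters there passes a one-parameter family of distinct cycle parameters producing identical $p_i$, hence identical $CF$s; this already gives joint non-identifiability. To upgrade this to non-identifiability of \emph{each} individual parameter, I would examine the $6\times7$ Jacobian $D\phi$. At a generic point it has rank $6$, so its kernel --- the tangent line to the fiber --- is spanned by the vector whose $i$-th entry is, up to sign, the maximal minor of $D\phi$ obtained by deleting column $i$. A parameter $\theta_i$ fails to be identifiable exactly when this minor does not vanish identically, so the claim becomes that all seven maximal minors of $D\phi$ are nonzero polynomials.

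The verification is eased by the block structure of the formulas for $p_1,\dots,p_6$: the three probabilities $p_1,p_3,p_4$ depend only on $(\gamma,x,\ell_2,\ell_3)$, while $p_2,p_5,p_6$ involve in addition $\ell_1,h_1,h_2$ and, in the coordinates $(\ell_1,\ \ell_1\gamma^2h_1,\ \ell_1(1-\gamma)^2h_2)$, are affine-linear. So I would first check by a short Jacobian computation that $\gamma,x,\ell_2,\ell_3$ all move along the one-dimensional fiber of $(p_1,p_3,p_4)$; the remaining $\ell_1,h_1,h_2$ are then forced to move to keep $p_2,p_5,p_6$ fixed, and a final check shows they move nontrivially. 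This is a routine symbolic computation that I would carry out in {\tt Singular}/{\tt Macaulay2}, confirming that none of the seven minors is identically zero. Because all entries of $D\phi$ are real and the rank is $6$ at a generic interior point, the implicit function theorem produces a genuine real deformation within the open parameter region $(0,1)$, so the two parameter vectors are bona fide networks rather than merely complex points.

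Finally I would dispose of degenerate block sizes. If a block of $C$ is a singleton, the relevant $CF$s depend on even fewer of the $p_i$, so the fibers only grow and non-identifiability persists; alternatively one attaches extra taxa below the pertinent nodes to reach a network in which every block has size at least $2$, applies the argument above, and observes that the $CF$s of $\widetilde N$ are functions of those of the enlarged network, so non-identifiability descends. The small-network instances of \cref{prop:N5-3-1,prop:N6-2-2-2,prop:N5-3-2} furnish concrete confirmation. \textbf{The main obstacle} is the middle step: ruling out that some single cycle parameter happens to be pinned down by the $p_i$, i.e.\ that all seven maximal minors are nonzero and that a valid \emph{real, interior} deformation exists; everything else is formal once the factorization of \cref{prop:mapfactor} and the surjectivity of \cref{prop:factorVar} are in hand.
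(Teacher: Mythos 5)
Your proposal is correct and follows essentially the same route as the paper's proof: factor the $CF$ map through $\phi$ via \cref{prop:mapfactor}, use \cref{prop:factorVar} to get one-dimensional fibers over a full-dimensional image (hence joint non-identifiability), show each coordinate varies along the fiber, and handle singleton blocks by attaching pseudotaxa cherries and restricting to the sub-collection of $CF$s belonging to the original network. The only real difference is that the step you flag as the main obstacle --- verifying that no single cycle parameter is constant along the fiber --- is precisely the content of \cref{prop:factorVar}(b), whose explicit rational inversion formulas (all non-constant in the free parameter $\ell_3$) make your proposed Jacobian-minor computation unnecessary.
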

 
 \begin{proof}
Suppose first the 3-cycle induces an $(n_1,n_2,n_3)$-partition of the taxa with all $n_i\ge 2$. Then using  \cref{prop:mapfactor} and \ref{prop:factorVar}(a) we  see that the map from numerical parameters to $CF$s factors by sending the 7 numerical parameters associated to the 3-cycle and its adjacent edges 
into a 6-dimensional variety. This implies that the numerical parameters cannot all be identifiable. 
To see that no single parameter can be identified, first observe that from the factorization of maps in \cref{eq:factor}, if a single parameter were identifiable, it would have to be identifiable from a point in $\mathcal V_D$. However, \cref{prop:factorVar}(b) shows that is not the case.

If a 3-cycle induces a $(1,n_2,n_3)$- or $(1,1,n_3)$-partition of taxa, then  by considering samples of 2 individuals for each gene from the singleton taxa, we can modify the network by attaching cherries of pseudotaxa for each singleton. Since in this case we already know that numerical parameters around the 3-cycle are not identifiable from all $CF$s,   with access only to $CF$s using only one of the pseudotaxa, they are still not identifiable. But that means they are not identifiable for the original network. 
\end{proof}

\subsection{Other numerical parameters}
The remaining numerical parameters on a binary level-1 network to be considered include lengths 
of hybrid edges, lengths of edges adjacent to hybrid edges, and hybridization parameters, all 
when the relevant cycle is of size $\ge 4$. 

\begin{proposition} \label{prop:LargeCyc} Let $\widetilde N$ be a level-1 metric binary semidirected network with no 2-cycles, containing a $k$-cycle $C$ with $k\ge 5$. Then  hybridization parameters and lengths of the cycle edges adjacent to the hybrid edges in $C$ can be identified from quartet $CF$s. If the hybrid node of $C$ has at least 2 descendant taxa, the lengths of the hybrid edges can also be identified. If the hybrid node has only one descendant taxon then the lengths of the hybrid edges are not identifiable. 
\end{proposition}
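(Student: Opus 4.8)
The plan is to dispatch the non-identifiability claim (part (c)) by a direct lineage-counting argument, and to prove the two identifiability claims by restricting to carefully chosen induced five-taxon subnetworks, each containing a $4$-cycle on which $\gamma$ and the target edges appear \emph{undiluted}, so that the $4$-cycle results already established can be applied.

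For part (c), suppose the hybrid node $v$ of $C$ has a single descendant taxon. With one lineage sampled per taxon, exactly one gene lineage is ever present in the two hybrid edges of $C$: it reaches $v$, selects a hybrid edge with probability $\gamma$ or $1-\gamma$, and traverses it alone. A single lineage in an edge cannot undergo a coalescent event, so the edge probabilities $\exp(-t)$ of the two hybrid edges never enter the non-coalescence factors out of which every $CF$ formula is built. Hence no $CF$ depends on the hybrid edge lengths, and they are not identifiable, by the same reasoning that underlies \cref{prop:pendantNonID}.

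For parts (a) and (b), I would label $v$, its hybrid-edge parents $s_1,s_2$, and write $\gamma,1-\gamma$ for the hybridization parameters, $g_1,g_2$ for the hybrid edge probabilities, and $\ell_1,\ell_2$ for the probabilities of the cycle edges lying immediately above $s_1,s_2$. Since $k\ge 5$, at most one of the two arcs of $C$ from $v$ up to the top node $w$ has length two, and every node of $C$ carries a nonempty block (a degree-two cycle node would have been suppressed, so $w$ in particular has a nonempty block). To identify $\gamma$, $\ell_1$, and, when $v$ has at least two descendants, $g_1$, I would select one taxon in the hybrid block, one in the block at $s_1$, and one in the block at the cycle node just above $s_1$; for part (b) I add a second taxon in the hybrid block to form a cherry below $v$, and for part (a) I place a second taxon in any remaining block to reach five taxa. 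Passing to the induced network on these taxa (\cref{lem:root}) contracts all of $C$ except the hybrid edge $s_1v$ and the edge above $s_1$, producing a five-taxon network with a single $4$-cycle as in \cref{fig:4cyc}, in which $g_1$ and $\ell_1$ survive as genuine edges and $\gamma$ is unchanged.

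Because this induced network has five taxa, \cref{prop:4hybrid} identifies its hybrid node for generic parameters, after which the explicit $CF$ formulas of \cref{prop:S,prop:W,prop:N} can be solved for $\gamma$ and $\ell_1$, and — when the cherry lies below $v$ — for $g_1$ as well. Repeating the construction on the other side identifies $\ell_2$ and (in case (b)) $g_2$, completing the proof. I expect the main obstacle to be this final inversion step: one must verify that the relevant $CF$ formulas are generically invertible in the targeted parameters, and that the correct induced edges are matched to $g_i,\ell_i$ once the hybrid node has been located. The accompanying combinatorial bookkeeping — checking that the chosen blocks are nonempty and that $\ell_i,g_i$ remain uncontracted in every configuration of a long versus a short arc, uniformly in $k$ — is routine but must be carried out with care.
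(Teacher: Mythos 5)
Your part (c) is correct and is exactly the paper's argument: with a single descendant taxon, at most one lineage ever enters the hybrid edges, no coalescence can occur there, and so no $CF$ depends on those lengths. The identifiability claims, however, have two genuine gaps as written. The first is that your taxon selections do not produce the networks you claim. Your base set draws taxa from only three cycle blocks ($X_1$ at the hybrid node, the block at $s_1$, and the block just above $s_1$), and adding a second hybrid-block taxon (your part (b)) still spans only three blocks; in the induced network every other cycle node becomes degree two and is suppressed, so the cycle collapses to a \emph{3-cycle}, not a 4-cycle. None of \cref{prop:S,prop:W,prop:N} then applies, and by \cref{prop:3cycNonID} no parameter of a 3-cycle or its adjacent edges is identifiable, so that construction yields nothing. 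To retain a 4-cycle you must choose taxa in four distinct cycle blocks (five taxa total, with one doubled block), and the arithmetic in your selections never actually reaches this.

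The second gap is the decisive one: your fallback step --- ``\cref{prop:4hybrid} identifies the hybrid node, after which the formulas of \cref{prop:S,prop:W,prop:N} can be solved for $\gamma$ and $\ell_1$'' --- is false for two of the three target networks. \Cref{prop:S}(c) and \cref{prop:N}(c) state explicitly that $\gamma$ is \emph{not} determined by the $CF$s of $N_s$ or $N_n$: for $N_s$ the remaining parameters are recoverable only with $\gamma$ supplied as input, and for $N_n$ only $\ell$ is identifiable. Only $N_w$, where the doubled block sits at a cycle node adjacent to the hybrid node, identifies $\gamma$. So placing the pair ``in any remaining block'' fails: a pair opposite the hybrid node gives $N_n$, a pair in the hybrid block gives $N_s$, and in either case $\gamma$ cannot be extracted. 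Relatedly, the far hybrid edge is never ``undiluted'' in these reductions --- it merges with suppressed cycle edges --- so at best one hybrid edge survives per reduction (your symmetric repetition repairs this, but only once $\gamma$ is in hand). For comparison, the paper's proof sidesteps $N_w$ entirely: for the cycle parameters it passes to the 5-cycle sunlet on one taxon from each of $X_1,X_2,X_3,X_4,X_k$ (\cref{fig:5cyc}(L)), where the cycle edge opposite the hybrid node is already identified via \cref{prop:defQid}, and reads off $\gamma$ and the adjacent cycle edge from two linear $CF$ differences; for the hybrid edge lengths it chooses two taxa from $X_1$ and one from each of $X_2,X_3,X_k$, so that \emph{both} hybrid edges survive intact, and applies \cref{prop:S} with the known $\gamma$, also invoking \cref{prop:3-cyc11} when the hybrid-block pair meets in a $3_1$-cycle rather than a cherry --- a case you dismiss as routine but which genuinely requires that replacement argument.
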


\begin{proof}  From \cref{prop:defQid} we already know that the $k-4$ edges in the cycle that are not hybrid or adjacent to a hybrid edge have identifiable lengths. If the taxon blocks for the cycle are, proceeding from the hybrid around the cycle, $X_1,X_2,\dots, X_k$, then pick one taxon from
each of $X_1,X_2,X_3,X_4,$ and $X_k$  and pass to the induced subnetwork. Replacing any 2-cycles with edges, we may assume we have a 5-cycle sunlet network as in \cref{fig:5cyc}(L), in which the edge probability $y$ of the edge opposite the hybrid node is known, and the edge probability $x$ is that of the edge in $C$ which is adjacent to a hybrid edge, lying between blocks $X_2$ and $X_3$.

Using $y$  and $CF$s we can identify $\gamma$, and then $x$ through
$$CF_{ac|de}-CF_{ad|ce}=  \gamma\left(1- y\right),\ \ \ \ CF_{ab|cd}-CF_{ac|bd}=  \gamma\left(1- x\right).$$
Similarly, the other edge in $C$ adjacent to a hybrid edge has identifiable length.

\smallskip

If the hybrid node has 2 descendant taxa, then by picking two taxa from $X_1$ and one from each of $X_2,X_3,X_k$ we pass to an induced subnetwork which, after replacing 2-cycles by edges, has the form of the network of  \cref{fig:5cyc}(C) or (R) with the same hybrid edge lengths as the full network. 
 In case (C), a cherry below the hybrid node,  applying the result of \cref{prop:S} on $N_S$  identifies the hybrid edge lengths from $CF$s using the already identified $\gamma$. In case (R), a $3_1$-cycle below the hybrid node, by \cref{prop:3-cyc11} all $CF$s are unchanged if the 3-cycle is contracted to a node and the edge length above it modified appropriately. Then the identifiability of the hybrid edge lengths follows from the cherry case.

If the hybrid node has only 1 descendant taxon,  then at most 1 lineage may enter (going backwards in time) the hybrid edges of $C$, so no coalescent events may occur on the hybrid edges. Thus the $CF$s do not depend on the lengths of those edges, which are therefore not identifiable from  $CF$s.
\end{proof}

\begin{figure}
\begin{center}
\tikzset{every picture/.style={line width=0.75pt}} 

\begin{tikzpicture}[x=0.75pt,y=0.75pt,yscale=-0.9,xscale=0.9]

\draw [color={rgb, 255:red, 0; green, 0; blue, 0 }  ,draw opacity=1 ][line width=1.5]    (79.21,50.85) -- (92.75,36.5) ;
\draw [color={rgb, 255:red, 0; green, 0; blue, 0 }  ,draw opacity=1 ][line width=1.5]    (48.21,50.89) -- (79.21,50.85) ;
\draw [color={rgb, 255:red, 0; green, 0; blue, 0 }  ,draw opacity=1 ][line width=1.5]    (48.21,50.89) -- (40.5,76.64) ;
\draw [color={rgb, 255:red, 182; green, 5; blue, 60 }  ,draw opacity=1 ][line width=1.5]    (86.9,75.04) -- (65.87,96.44) ;
\draw [shift={(63.77,98.58)}, rotate = 314.49] [color={rgb, 255:red, 182; green, 5; blue, 60 }  ,draw opacity=1 ][line width=1.5]    (14.21,-4.28) .. controls (9.04,-1.82) and (4.3,-0.39) .. (0,0) .. controls (4.3,0.39) and (9.04,1.82) .. (14.21,4.28)   ;
\draw [color={rgb, 255:red, 0; green, 0; blue, 0 }  ,draw opacity=1 ][line width=1.5]    (31.25,37) -- (48.21,50.89) ;
\draw [line width=1.5]    (79.21,50.85) -- (86.9,75.04) ;
\draw [color={rgb, 255:red, 182; green, 5; blue, 60 }  ,draw opacity=1 ][line width=1.5]    (40.5,76.64) -- (61.59,96.52) ;
\draw [shift={(63.77,98.58)}, rotate = 223.32] [color={rgb, 255:red, 182; green, 5; blue, 60 }  ,draw opacity=1 ][line width=1.5]    (14.21,-4.28) .. controls (9.04,-1.82) and (4.3,-0.39) .. (0,0) .. controls (4.3,0.39) and (9.04,1.82) .. (14.21,4.28)   ;
\draw [color={rgb, 255:red, 0; green, 0; blue, 0 }  ,draw opacity=1 ][line width=1.5]    (22.25,76.5) -- (40.5,76.64) ;
\draw [color={rgb, 255:red, 0; green, 0; blue, 0 }  ,draw opacity=1 ][line width=1.5]    (102.57,75.35) -- (86.9,75.04) ;
\draw [line width=1.5]    (63.77,98.58) -- (63.77,117.49) ;
\draw [line width=1.5]    (148.3,76.64) -- (168.8,76.64) ;
\draw [line width=1.5]    (168.8,76.64) -- (188.63,47.09) ;
\draw [color={rgb, 255:red, 182; green, 5; blue, 60 }  ,draw opacity=1 ][line width=1.5]    (168.8,76.64) -- (186.97,103.89) ;
\draw [shift={(188.63,106.39)}, rotate = 236.31] [color={rgb, 255:red, 182; green, 5; blue, 60 }  ,draw opacity=1 ][line width=1.5]    (14.21,-4.28) .. controls (9.04,-1.82) and (4.3,-0.39) .. (0,0) .. controls (4.3,0.39) and (9.04,1.82) .. (14.21,4.28)   ;
\draw [color={rgb, 255:red, 182; green, 5; blue, 60 }  ,draw opacity=1 ][line width=1.5]    (209.15,76.64) -- (190.33,103.92) ;
\draw [shift={(188.63,106.39)}, rotate = 304.59] [color={rgb, 255:red, 182; green, 5; blue, 60 }  ,draw opacity=1 ][line width=1.5]    (14.21,-4.28) .. controls (9.04,-1.82) and (4.3,-0.39) .. (0,0) .. controls (4.3,0.39) and (9.04,1.82) .. (14.21,4.28)   ;
\draw [line width=1.5]    (209.15,76.64) -- (230.65,76.64) ;
\draw [line width=1.5]    (188.63,106.39) -- (188.63,125.3) ;
\draw [line width=1.5]    (209.15,142.02) -- (188.63,126.02) ;
\draw [color={rgb, 255:red, 0; green, 0; blue, 0 }  ,draw opacity=1 ][line width=1.5]    (168.8,142.02) -- (188.63,126.02) ;
\draw [line width=1.5]    (188.63,47.09) -- (209.15,76.64) ;
\draw [line width=1.5]    (188.63,27.69) -- (188.63,47.09) ;
\draw [line width=1.5]    (278.55,76.64) -- (299.05,76.64) ;
\draw [line width=1.5]    (299.05,76.64) -- (318.88,47.09) ;
\draw [color={rgb, 255:red, 182; green, 5; blue, 60 }  ,draw opacity=1 ][line width=1.5]    (299.05,76.64) -- (317.22,103.89) ;
\draw [shift={(318.88,106.39)}, rotate = 236.31] [color={rgb, 255:red, 182; green, 5; blue, 60 }  ,draw opacity=1 ][line width=1.5]    (14.21,-4.28) .. controls (9.04,-1.82) and (4.3,-0.39) .. (0,0) .. controls (4.3,0.39) and (9.04,1.82) .. (14.21,4.28)   ;
\draw [color={rgb, 255:red, 182; green, 5; blue, 60 }  ,draw opacity=1 ][line width=1.5]    (339.4,76.64) -- (320.59,103.92) ;
\draw [shift={(318.88,106.39)}, rotate = 304.59] [color={rgb, 255:red, 182; green, 5; blue, 60 }  ,draw opacity=1 ][line width=1.5]    (14.21,-4.28) .. controls (9.04,-1.82) and (4.3,-0.39) .. (0,0) .. controls (4.3,0.39) and (9.04,1.82) .. (14.21,4.28)   ;
\draw [line width=1.5]    (339.4,76.64) -- (360.9,76.64) ;
\draw [line width=1.5]    (318.88,106.39) -- (318.88,125.3) ;
\draw [color={rgb, 255:red, 182; green, 5; blue, 60 }  ,draw opacity=1 ][line width=1.5]    (337.03,140.17) -- (318.88,126.02) ;
\draw [shift={(339.4,142.02)}, rotate = 217.95] [color={rgb, 255:red, 182; green, 5; blue, 60 }  ,draw opacity=1 ][line width=1.5]    (14.21,-4.28) .. controls (9.04,-1.82) and (4.3,-0.39) .. (0,0) .. controls (4.3,0.39) and (9.04,1.82) .. (14.21,4.28)   ;
\draw [color={rgb, 255:red, 0; green, 0; blue, 0 }  ,draw opacity=1 ][line width=1.5]    (299.05,142.02) -- (318.88,126.02) ;
\draw [line width=1.5]    (318.88,47.09) -- (339.4,76.64) ;
\draw [line width=1.5]    (318.88,27.69) -- (318.88,47.09) ;
\draw [color={rgb, 255:red, 182; green, 5; blue, 60 }  ,draw opacity=1 ][line width=1.5]    (299.05,142.02) -- (336.4,142.02) ;
\draw [shift={(339.4,142.02)}, rotate = 180] [color={rgb, 255:red, 182; green, 5; blue, 60 }  ,draw opacity=1 ][line width=1.5]    (14.21,-4.28) .. controls (9.04,-1.82) and (4.3,-0.39) .. (0,0) .. controls (4.3,0.39) and (9.04,1.82) .. (14.21,4.28)   ;
\draw [line width=1.5]    (285.4,153.22) -- (299.05,142.02) ;
\draw [line width=1.5]    (353.4,153.22) -- (339.4,142.02) ;

\draw (276.38,153.55) node [anchor=north west][inner sep=0.75pt]  [font=\normalsize,color={rgb, 255:red, 6; green, 79; blue, 97 }  ,opacity=1 ]  {$a_{1}$};
\draw (347.48,153.55) node [anchor=north west][inner sep=0.75pt]  [font=\normalsize,color={rgb, 255:red, 6; green, 79; blue, 97 }  ,opacity=1 ]  {$a_{2}$};
\draw (264.8,67.66) node [anchor=north west][inner sep=0.75pt]  [font=\normalsize,color={rgb, 255:red, 6; green, 79; blue, 97 }  ,opacity=1 ]  {$b$};
\draw (314.68,12.45) node [anchor=north west][inner sep=0.75pt]  [font=\normalsize,color={rgb, 255:red, 6; green, 79; blue, 97 }  ,opacity=1 ]  {$c$};
\draw (363.13,67.16) node [anchor=north west][inner sep=0.75pt]  [font=\normalsize,color={rgb, 255:red, 6; green, 79; blue, 97 }  ,opacity=1 ]  {$d$};
\draw (232.37,67.16) node [anchor=north west][inner sep=0.75pt]  [font=\normalsize,color={rgb, 255:red, 6; green, 79; blue, 97 }  ,opacity=1 ]  {$d$};
\draw (161.63,144.05) node [anchor=north west][inner sep=0.75pt]  [font=\normalsize,color={rgb, 255:red, 6; green, 79; blue, 97 }  ,opacity=1 ]  {$a_{1}$};
\draw (183.43,12.45) node [anchor=north west][inner sep=0.75pt]  [font=\normalsize,color={rgb, 255:red, 6; green, 79; blue, 97 }  ,opacity=1 ]  {$c$};
\draw (203.23,144.05) node [anchor=north west][inner sep=0.75pt]  [font=\normalsize,color={rgb, 255:red, 6; green, 79; blue, 97 }  ,opacity=1 ]  {$a_{2}$};
\draw (134.55,67.16) node [anchor=north west][inner sep=0.75pt]  [font=\normalsize,color={rgb, 255:red, 6; green, 79; blue, 97 }  ,opacity=1 ]  {$b$};
\draw (10.8,67.81) node [anchor=north west][inner sep=0.75pt]  [font=\normalsize,color={rgb, 255:red, 6; green, 79; blue, 97 }  ,opacity=1 ]  {$b$};
\draw (48,72.51) node [anchor=north west][inner sep=0.75pt]  [font=\small,color={rgb, 255:red, 182; green, 5; blue, 60 }  ,opacity=1 ]  {$\gamma $};
\draw (29.55,53.62) node [anchor=north west][inner sep=0.75pt]  [font=\small]  {$x$};
\draw (60.55,33.62) node [anchor=north west][inner sep=0.75pt]  [font=\small]  {$y$};
\draw (57.88,119.2) node [anchor=north west][inner sep=0.75pt]  [font=\normalsize,color={rgb, 255:red, 6; green, 79; blue, 97 }  ,opacity=1 ]  {$a$};
\draw (19.68,23.1) node [anchor=north west][inner sep=0.75pt]  [font=\normalsize,color={rgb, 255:red, 6; green, 79; blue, 97 }  ,opacity=1 ]  {$c$};
\draw (94.63,19.31) node [anchor=north west][inner sep=0.75pt]  [font=\normalsize,color={rgb, 255:red, 6; green, 79; blue, 97 }  ,opacity=1 ]  {$d$};
\draw (104.13,67.31) node [anchor=north west][inner sep=0.75pt]  [font=\normalsize,color={rgb, 255:red, 6; green, 79; blue, 97 }  ,opacity=1 ]  {$e$};

\end{tikzpicture}
\vspace*{-1cm}
\end{center}		
\caption{Subnetworks used in the proof of \cref{prop:LargeCyc}.}\label{fig:5cyc}
\end{figure}

We next turn to cut edges adjacent to a single hybrid edge.

\begin{proposition} Let $\widetilde N$ be a level-1 metric binary semidirected network with no 2-cycles, containing an internal cut edge 
	$e$ adjacent to exactly one hybrid edge (at its non-hybrid node), with the hybrid edge in a $k$-cycle. If $k\ge 4$, then the length of $e$ 
	is identifiable.
\end{proposition}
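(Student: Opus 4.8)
The plan is to produce a single quartet $CF$ from which $\exp(-t_e)$ can be solved, by exploiting that a quartet avoiding the hybrid block of the $k$-cycle sees no cycle at all. Write $\ell=\exp(-t_e)$ for the edge probability of $e$, let $\nu_2$ be the non-hybrid endpoint of the hybrid edge adjacent to $e$ (so $\nu_2$ lies on the $k$-cycle $C$ with hybrid node $w$), and let $\nu_3$ be the next node of $C$ along the cycle edge incident to $e$ at $\nu_2$. Because $k\ge4$, the edge $\nu_2\nu_3$ is a tree edge adjacent to exactly one hybrid edge; denote its edge probability by $x$, matching the notation of \cref{prop:LargeCyc}. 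Since $e$ is \emph{internal}, the block it leads to contains at least two taxa, so I may choose $a_1,a_2$ in the two subtrees hanging below the lower endpoint $v'$ of $e$; in any induced network containing them, $a_1,a_2$ form a cherry at $v'$ joined to $\nu_2$ by $e$. I then take $c$ from the block $X_3$ meeting $C$ at $\nu_3$ and $d$ from any other non-hybrid block on the far side of $\nu_3$ (both exist for $k\ge4$ after passing to a suitable induced five-taxon subnetwork), and work on the induced quartet network on $Q=\{a_1,a_2,c,d\}$, replacing any $2$-cycles by edges as in \cref{lem:2cyc}, so that ``the length of $e$'' is read as the effective length.

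First I would note that none of $a_1,a_2,c,d$ descends from $w$, so $w$ and its two hybrid edges are deleted in forming the quartet network; then $\nu_2$ has degree two and is suppressed, merging $e$ and $\nu_2\nu_3$ into one edge from $v'$ to $\nu_3$ with edge probability $\ell x$ that separates $\{a_1,a_2\}$ from $\{c,d\}$. The heart of the argument is to show $CF_{a_1c|a_2d}=\ell x/3$ independently of the structure above $\nu_3$, which may still contain other cycles. Below $\nu_3$ only $a_1,a_2$ are present, so if they coalesce there (probability $1-\ell x$) the quartet is concordant. Conditioned on both reaching $\nu_3$ (probability $\ell x$), the lineages $a_1,a_2,c$ enter a common edge population and are exchangeable. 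Let $p_\alpha$ be the probability that a prescribed pair within $\{a_1,a_2,c\}$ is the first of the four lineages to coalesce and $p_{\alpha d}$ the probability that a prescribed one of $a_1,a_2,c$ first coalesces with $d$; exchangeability makes these values constant across the three intra-triple pairs and across the three pairs involving $d$, and since a first coalescence occurs, $3p_\alpha+3p_{\alpha d}=1$. As the first coalescence determines the quartet, each of the three quartet topologies has conditional probability $p_\alpha+p_{\alpha d}=1/3$. Hence $CF_{a_1c|a_2d}=CF_{a_1d|a_2c}=\ell x/3$ and $CF_{a_1a_2|cd}=1-\tfrac23\ell x$; this is exactly the relation of \cref{prop:defQid} applied to the merged internal edge, now justified even when cycles lie above $\nu_3$.

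It then remains to solve $\ell x=3\,CF_{a_1c|a_2d}$ for $\ell$, using that $x$ is already identifiable: for $k\ge5$ the edge $\nu_2\nu_3$ is one of the cycle edges adjacent to a hybrid edge identified in \cref{prop:LargeCyc}, and for $k=4$ its probability follows from the identifiability of $\gamma$ and of the $4$-cycle edges established through the analysis of $N_s,N_w,N_n$. Since $x\in(0,1]$, we recover $\ell=3\,CF_{a_1c|a_2d}/x$ and $t_e=-\log\ell$; if $\widetilde N$ has more than five taxa one applies this to the induced five-taxon subnetwork. I expect the main obstacle to be precisely this final separation: every quartet omitting the hybrid block sees $e$ and the adjacent cycle edge only in series, so $\ell$ and $x$ enter solely through the product $\ell x$ and cannot be disentangled from such quartets alone. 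The argument therefore depends on importing the value of $x$ from the earlier cycle-edge identifiability---immediate when $k\ge5$, but requiring the separate $4$-cycle results when $k=4$.
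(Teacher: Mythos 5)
Your proof is correct in substance but takes a genuinely different route from the paper's. The paper \emph{keeps} the hybrid block: it passes to an induced five-taxon subnetwork containing two taxa below $e$, a taxon descended from the cycle's hybrid node, and taxa from two further blocks, so that (after replacing 2-cycles by edges) the induced network is exactly $N_w$ of \cref{fig:4cyc}; the explicit rational formulas of \cref{prop:W}(c) then identify $\ell$ (simultaneously with $\gamma,x_1,x_2$) in one stroke, the case $k\ge 5$ having been reduced to $k=4$ by dropping blocks. You instead \emph{discard} the hybrid block, so that the quartet $\{a_1,a_2,c,d\}$ sees $e$ and the adjacent cycle edge only in series; you prove $CF_{a_1c|a_2d}=\ell x/3$ by a coalescent/exchangeability argument robust to extraneous cycles, and then divide out $x$, imported from \cref{prop:LargeCyc} when $k\ge5$ and from the 4-cycle analysis when $k=4$. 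For $k\ge 5$ your route is arguably cleaner: it needs only the sunlet identification of $x$ plus an elementary argument, and it isolates exactly why the hybrid block must enter somewhere (quartets avoiding it can never separate $\ell$ from $x$). For $k=4$, however, your route is circuitous: identifying $x$ there requires building an induced network that \emph{does} contain the hybrid block (the $N_w$ configuration, or \cref{prop:4cyc} cases b)ii) and c)), and that very computation already returns $\ell$ directly, so the product-division step adds nothing in that case, though it is not wrong.

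One point needs tightening. Your instruction to take $a_1,a_2$ ``in the two subtrees hanging below $v'$'' presumes $v'$ is a tree node. The hypotheses allow $v'$ to be the topmost node of another cycle $C'$ (necessarily of size $\ge 4$ with hybrid node not adjacent to $v'$, since $e$ is adjacent to only one hybrid edge). In that case the two sides of $C'$ both lead down to its hybrid block, and a careless choice --- say $a_1$ from that hybrid block --- retains $C'$ as a 3-cycle in the induced quartet network; then $a_1,a_2$ can coalesce below $v'$ with probability depending on 3-cycle parameters, which \cref{prop:3cycNonID} shows are not identifiable, and your key identity degrades to $CF_{a_1c|a_2d}=P\cdot\ell x/3$ for an unknown factor $P$, destroying the identification. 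The fix is to choose $a_1,a_2$ from blocks attached to interior nodes on the two \emph{different} sides of $C'$, avoiding its hybrid block, so that their lineages reach $v'$ along edge-disjoint paths and hence arrive distinct with probability one. The paper's proof is equally terse on this point (it, too, needs the cherry in $N_w$), so this is a shared imprecision to repair rather than a defect peculiar to your argument.
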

\begin{proof}
If $k\ge 4$, by passing to the induced network on a subset of the taxa, we may assume $k=4$. Since $e$ is not pendant, and not adjacent to a hybrid edge of another cycle, after again passing  to an induced subnetwork and replacing any 2-cycles with single edges, we may assume the network has the structure of $N_w$ in \cref{fig:4cyc}, with $e$ the edge joining the cherry to the 4-cycle. But then \cref{prop:W}(c)  gives the claim.
\end{proof}

If an edge is adjacent to hybrid edges at both of its endpoints, but neither endpoint is a hybrid node, as in \cref{fig:2_4cyc} (L), then the following applies.

\begin{proposition}\label{prop:join2hyb}

Let $\widetilde N$ be a level-1 metric binary semidirected network with no 2-cycles, containing an edge $e$ adjacent to exactly two hybrid edges which lie in two different cycles. If the sizes of both cycles are $\ge 4$, then the length of $e$ is identifiable. \end{proposition}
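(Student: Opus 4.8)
The plan is to reduce, as in all the preceding propositions of this section, to a single small ``core'' network that captures $e$ together with its two flanking cycles, and then to read off $\ell_e=\exp(-t_e)$ as a rational function of the $CF$s. Since passing to an induced subnetwork preserves the length of any retained internal edge, and replacing $2$-cycles by edges via \cref{lem:2cyc} leaves $e$ untouched, identifiability of $t_e$ on the core network will give identifiability on $\widetilde N$.

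First I would reduce both cycles to size $4$: because $e$ is internal and adjacent to hybrid edges of exactly two cycles, passing to the induced network on a subset of taxa (exactly as in the proof of \cref{prop:LargeCyc}) lets me assume both cycles have size $4$ without affecting whether $t_e$ is identifiable. Next, to keep both $4$-cycles intact in an induced subnetwork I must retain at least one taxon in each of the three taxon blocks of each cycle \emph{other than} the block reached through $e$; dropping such a block would leave its attaching node with degree $2$, and suppressing it would collapse the $4$-cycle. Selecting one taxon per block (six taxa in all) and replacing any $2$-cycles by single edges, I may assume $\widetilde N$ is the core network of \cref{fig:2_4cyc}(L), in which $e$ joins the two parent-of-hybrid nodes $u$ and $v$ of the two $4$-cycles.

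The identification then rests on decomposing a ``crossing'' quartet $CF$. Because $e$ is a cut edge, the four chosen taxa split into a left pair and a right pair across $e$, so the displayed resolution is the one respecting this split and the two minority resolutions can form only if lineages cross $e$ without having already coalesced inside either cycle. Conditioning on the first coalescence, such a minority $CF$ factors as a product of a term depending only on the first cycle, a term depending only on the second, and a factor carrying $\ell_e$; this is the natural generalization of the clean formula $CF_{\mathrm{minority}}=\ell/3$ of \cref{prop:defQid} and of the $CF$ differences equal to $\gamma(1-x)$ exploited in \cref{prop:LargeCyc}. The per-cycle factors are themselves expressible as ratios of other $CF$s of the core network and are generically nonzero, so solving the factored relation yields $\ell_e$, hence $t_e$; the precise identities are to be verified by the computations for the core network recorded in \Cref{app:props}.

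The main obstacle will be isolating $\ell_e$ in spite of the two hybridizations immediately flanking $e$. Unlike an edge defined by a set $Q$ of four taxa (\cref{prop:defQid}), every lineage crossing $e$ is filtered through two hybrid nodes, so $\ell_e$ enters the $CF$s entangled with both cycles' inheritance probabilities and their hybrid-adjacent edge lengths. The work is to exhibit a combination of $CF$s in which these cycle-dependent factors either cancel or can be divided out, and to confirm that the resulting denominator does not vanish for generic parameters. I expect this to require the algebraic elimination on the six-taxon core network, rather than a purely combinatorial coalescent argument, and this is the technically heaviest step of the proof.
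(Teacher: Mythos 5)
Your reduction to the six-taxon core network of \cref{fig:2_4cyc}(L) is exactly the paper's first step, and your instinct that a crossing quartet factors through $e$ is sound: for the quartet $\{b,c,d,e\}$ of that network (one taxon from each non-hybrid block), neither hybrid node is ancestral to a retained taxon, so the induced quartet network is a tree and each minority $CF$ equals $\tfrac13\, x_2\,\ell\, y_1$. The genuine gap is what comes next. You assert that the per-cycle factors ($x_2$ and $y_1$ here) are ``expressible as ratios of other $CF$s of the core network'' and that ``the precise identities are to be verified by the computations for the core network recorded in \Cref{app:props}'' --- but the appendix contains no computations for this two-4-cycle network; it records results only for $T_5$, $N_{5-3_1}$, $N_{5-3_2}$, $T_6$, $N_a$, the 3-cycle factorization, $N_s$, $N_w$, $N_n$, and $N_{sn}$. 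Since you yourself identify this as the technically heaviest step and do not carry it out, the proof is incomplete as written: the central claim --- that $\ell$ can be isolated from the cycle-dependent quantities --- is exactly what remains to be shown.

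The step can be closed with no new elimination at all, which is what the paper does. Drop taxon $f$ from the core network: the right cycle's hybrid node is no longer ancestral to any retained taxon, so that cycle dissolves, degree-2 nodes are suppressed, and the result is a 5-taxon network of the form $N_w$ in which the ``cherry edge'' carries the merged probability $\ell y_1$; \cref{prop:W}(c) then identifies $\gamma$, $x_1$, $x_2$, and $\ell y_1$ as explicit rational functions of the $CF$s. Symmetrically, dropping taxon $a$ collapses the left cycle and yields another $N_w$-form network in which $y_1$ now appears as a cycle tree edge, hence is identified by \cref{prop:W}; then $\ell = (\ell y_1)/y_1$. So the heavy computation you deferred is already available in \cref{prop:W}; the observation missing from your argument is that deleting one hybrid-descendant taxon at a time destroys one of the two cycles, merges $\ell$ into an adjacent cycle edge, and thereby reduces each identification to the single-4-cycle case already solved.
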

\begin{proof}
If both cycles are of size $\ge 4$, then the network has an induced subnetwork which, after suppressing 2-cycles has the form shown in \cref{fig:2_4cyc}(L), with the central edge arising from $e$, with edge probability $\ell$. 

Using \cref{prop:W} on the induced network after dropping taxon $f$ we may identify $\gamma, x_1,x_2$ and  the product $\ell y_1$. Similarly, dropping $a$ we may identify $y_1$, which then gives $\ell$.
\end{proof}

\begin{figure}
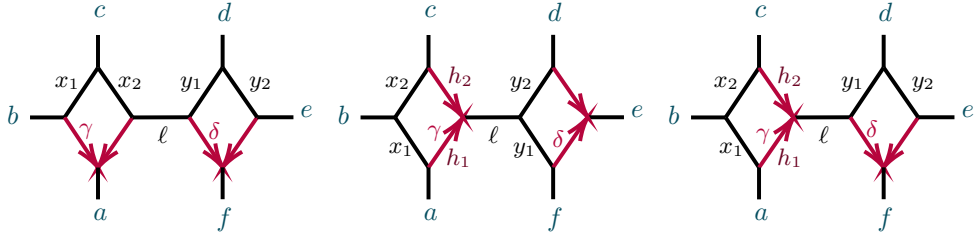

\begin{center}
\include{Figures/2_4cycles}
\end{center}
\caption{Semidirected binary networks on 6 taxa with two 4-cycles joined by an edge adjacent to two or more hybrid edges. }\label{fig:2_4cyc}
 \end{figure}

Next we consider edges adjacent to two hybrid edges at one endpoint, that is, edges with a hybrid node as an endpoint, as in \cref{fig:2_4cyc}(C,R). If the hybrid node is in a large cycle we obtain the following.

\begin{proposition}\label{prop:5cyc}
Let $\widetilde N$ be a level-1 metric binary semidirected network with no 2-cycles, containing an edge $q$ whose parent is the hybrid node of a $k$-cycle with $k\ge 5$. If $q$ has at least two descendant taxa, and the child node of $q$ is not in a 3-cycle, then the length of $q$ is identifiable.\end{proposition}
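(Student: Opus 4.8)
The plan is to isolate the edge probability $\ell=\exp(-t)$ of $q$ inside a single quartet $CF$, after pinning down every other parameter entering that $CF$ using results already established. First I would record a structural fact: writing $v$ for the hybrid node that is the parent of $q$ and $w$ for its child, the node $w$ must be a \emph{tree} node. Indeed, if $w$ were a hybrid node then $q$ would be a hybrid edge of $w$'s cycle, forcing $v$ to lie on that cycle as well as on the $k$-cycle, contradicting the level-1 (vertex-disjoint cycle) condition. A similar argument shows $w$ is either not on any cycle, or is the source (topmost) node of some cycle $C''$ whose hybrid node lies below it. Next I would identify all the numerical parameters of the $k$-cycle $C$ that can affect lineages passing through $v$: since the hybrid block of $C$ is exactly the set of descendants of $q$, it has at least two taxa, so \cref{prop:LargeCyc} identifies $\gamma$, the two hybrid edge probabilities, and the edge probabilities of the two cycle edges adjacent to the hybrid edges, while \cref{prop:defQid} handles the remaining cycle edges.

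I would then construct a quartet that forces exactly two lineages through $q$. Choose two taxa $a_1,a_2$ descended from $q$ and two taxa $b,c$ in two distinct non-hybrid blocks of $C$ (possible since $k\ge 5$). Tracing lineages backward, the concordance factor decomposes as
\[
CF_{a_1a_2|bc}=1-\rho\,\ell\,(1-g),
\]
where $\rho$ is the probability that the $a_1,a_2$ lineages reach $w$ without coalescing, $\ell$ is the probability they also fail to coalesce on $q$ itself, and $g$ is the conditional probability that $a_1a_2|bc$ is displayed given that both lineages enter $C$ at $v$. Here $g$ is a function of the cycle parameters identified above only, and $g<1$: once both $a$-lineages enter $C$ at $v$, with probability $2\gamma(1-\gamma)>0$ they traverse different hybrid edges and can end up separated, producing a displayed quartet other than $a_1a_2|bc$. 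The essential point is then to arrange $\rho=1$, which I would do by choosing $a_1,a_2$ in descendant blocks reached from $w$ along internally disjoint directed paths (so the two lineages cannot coalesce until they both reach $w$): when $w$ lies on no cycle this is just one taxon from each of its two child subtrees, and when $w$ is the source of a cycle $C''$ of size $\ge 4$ it is one taxon from a block exclusive to each of the two arcs of $C''$.

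With $\rho=1$ the identity reduces to $\ell=(1-CF_{a_1a_2|bc})/(1-g)$, which is well defined since $1-g>0$, and every quantity on the right is now known; this gives $t=-\log \ell$ and completes the argument. The hard part, and the crux of the whole statement, is exactly the routing step: such an internally disjoint pair of descendant paths to $w$ exists precisely when $w$ does \emph{not} top a $3$-cycle, because in a $3$-cycle one arc into the hybrid node is a single hybrid edge and the hybrid block is reachable from both children of $w$, so any two descendants either coalesce below $w$ or reach $v$ with a probability $\rho$ that depends on the $3$-cycle's parameters --- which are non-identifiable by \cref{prop:3cycNonID}. Thus the hypothesis that the child of $q$ is not in a $3$-cycle is what makes $\rho$ controllable; the remaining bookkeeping (replacing any intervening $2$-cycles by edges via \cref{lem:2cyc} and verifying $g$ as a function of the identified parameters) is routine once the decomposition above is in hand.
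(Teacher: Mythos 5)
Your decomposition $CF_{a_1a_2|bc}=1-\rho\,\ell\,(1-g)$ is sound, and in the cases it covers it gives a clean, more self-contained argument than the paper's (which instead passes to induced $5$-taxon subnetworks of the form $N_s$ and invokes \cref{prop:S}). But there is a genuine gap in the routing step, and it is not a corner case: your claim that when $w$ tops a cycle $C''$ of size $\ge 4$ one can take ``one taxon from a block exclusive to each of the two arcs of $C''$'' fails whenever $w$ is itself a parent of $C''$'s hybrid node, i.e.\ when one arc of $C''$ consists of a single hybrid edge. That arc has no intermediate node and hence no exclusive block, so no such pair of taxa exists. This configuration satisfies every hypothesis of the proposition (the cycle has size $\ge 4$, so ``not in a 3-cycle'' holds), and in it \emph{no} choice of $a_1,a_2$ below $w$ gives $\rho=1$: two taxa from distinct long-arc blocks share the topmost cycle edge of the long arc before reaching $w$, and a long-arc taxon paired with a hybrid-block taxon shares long-arc edges with positive probability. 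So $\rho<1$ and depends on the parameters of $C''$, which your identification step does not touch. Consequently your closing assertion that the disjoint-path construction exists ``precisely when $w$ does not top a 3-cycle'' is false; the correct dividing line is whether $w$ is incident to a hybrid edge, which is a strictly stronger condition than the hypothesis of the proposition.

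This missing case is exactly what occupies the second half of the paper's proof (the configuration of \cref{fig:2_4cyc}(R), where $q$ joins the hybrid node of the big cycle to a parent of a second cycle's hybrid node). The paper handles it by identifying two quantities separately and dividing: dropping the hybrid block of $C''$ yields an induced network of form $N_s$, which with the already-identified $\gamma$ gives the \emph{product} $\ell y_1$ via \cref{prop:S}; then dropping a taxon of the big cycle creates a 3-cycle that can be contracted by \cref{prop:3-cyc11} (at the cost of an adjusted, possibly negative, edge length), after which \cref{prop:W} identifies $y_1$ alone, whence $\ell=(\ell y_1)/y_1$. Your framework could in principle be patched in the same spirit --- e.g.\ taking $a_1,a_2$ in the two long-arc blocks adjacent to $w$ gives $\rho=y_1$, and one would then have to prove $y_1$ identifiable (via \cref{prop:4cyc} or an argument like the paper's) --- but as written the proposal neither recognizes this case nor supplies the extra identification it requires.
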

\begin{proof} 
Since the cycle is of size $\ge 5$, by \cref{prop:LargeCyc} its hybridization parameter $\gamma$ is identified.  

First suppose the child node of $q$ is not incident to a hybrid edge.
If $q$ has two descendant taxa, there is an induced subnetwork which, after replacing 2-cycles by edges, has  the form of $N_s$ of \cref{fig:4cyc}(L), with  $q$  the child edge of the hybrid node. With $\gamma$ in hand,  by \cref{prop:S}(c) the length of $q$ is identified.

If instead the child node of $q$ is incident to a hybrid edge, assume that edge lies in a cycle of size $\ge 4$. We may then pass to a network with the structure of \cref{fig:2_4cyc}(R) where $q$ is the edge joining the two cycles.
But dropping taxon $f$ again yields a network of form $N_s$, so using $\gamma$ we identify $\ell y_1$. Instead dropping $b$ from  \cref{fig:2_4cyc}(R), by \cref{prop:3-cyc11}, the 3-cycle on this can then be contracted to a node, adjusting the edge
length of $q$  (now possibly negative) so $CF$s are unchanged. Then \cref{prop:W} can be applied to identify $y_1$. Thus $\ell$ is identifiable.
\end{proof}

The remaining parameters to consider are  the edge probabilities and hybridization parameter in 4-cycles,  and the edge probability of the child edge of the hybrid node in a 4-cycle. Identifiability of these is more complicated, as it can depend on the sizes of the taxon blocks of the cycle. In handling these cases, we use the following.

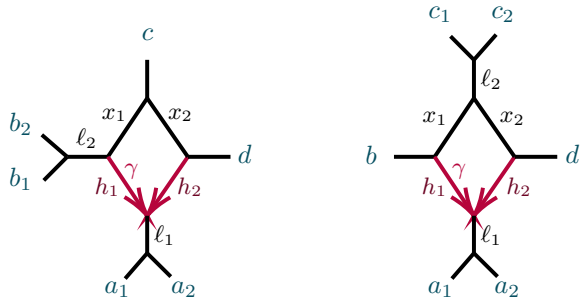
\begin{figure}
	\begin{center}
		\tikzset{every picture/.style={line width=0.75pt}} 

\begin{tikzpicture}[x=0.75pt,y=0.75pt,yscale=-1,xscale=1]

\draw [line width=1.5]    (52.05,101.79) -- (72.55,101.79) ;
\draw [line width=1.5]    (72.55,101.79) -- (92.38,72.25) ;
\draw [color={rgb, 255:red, 182; green, 5; blue, 60 }  ,draw opacity=1 ][line width=1.5]    (72.55,101.79) -- (90.72,129.04) ;
\draw [shift={(92.38,131.54)}, rotate = 236.31] [color={rgb, 255:red, 182; green, 5; blue, 60 }  ,draw opacity=1 ][line width=1.5]    (14.21,-4.28) .. controls (9.04,-1.82) and (4.3,-0.39) .. (0,0) .. controls (4.3,0.39) and (9.04,1.82) .. (14.21,4.28)   ;
\draw [color={rgb, 255:red, 182; green, 5; blue, 60 }  ,draw opacity=1 ][line width=1.5]    (112.9,101.79) -- (94.09,129.07) ;
\draw [shift={(92.38,131.54)}, rotate = 304.59] [color={rgb, 255:red, 182; green, 5; blue, 60 }  ,draw opacity=1 ][line width=1.5]    (14.21,-4.28) .. controls (9.04,-1.82) and (4.3,-0.39) .. (0,0) .. controls (4.3,0.39) and (9.04,1.82) .. (14.21,4.28)   ;
\draw [line width=1.5]    (112.9,101.79) -- (134.4,101.79) ;
\draw [line width=1.5]    (92.38,131.54) -- (92.38,150.45) ;
\draw [line width=1.5]    (92.38,72.25) -- (112.9,101.79) ;
\draw [line width=1.5]    (92.38,52.84) -- (92.38,72.25) ;
\draw [line width=1.5]    (104.28,162.26) -- (92.38,150.45) ;
\draw [line width=1.5]    (81.28,163.26) -- (92.38,150.45) ;
\draw [line width=1.5]    (216.85,101.79) -- (237.35,101.79) ;
\draw [color={rgb, 255:red, 0; green, 0; blue, 0 }  ,draw opacity=1 ][line width=1.5]    (237.35,101.79) -- (257.18,72.25) ;
\draw [color={rgb, 255:red, 182; green, 5; blue, 60 }  ,draw opacity=1 ][line width=1.5]    (237.35,101.79) -- (255.52,129.04) ;
\draw [shift={(257.18,131.54)}, rotate = 236.31] [color={rgb, 255:red, 182; green, 5; blue, 60 }  ,draw opacity=1 ][line width=1.5]    (14.21,-4.28) .. controls (9.04,-1.82) and (4.3,-0.39) .. (0,0) .. controls (4.3,0.39) and (9.04,1.82) .. (14.21,4.28)   ;
\draw [color={rgb, 255:red, 182; green, 5; blue, 60 }  ,draw opacity=1 ][line width=1.5]    (277.7,101.79) -- (258.88,129.07) ;
\draw [shift={(257.18,131.54)}, rotate = 304.59] [color={rgb, 255:red, 182; green, 5; blue, 60 }  ,draw opacity=1 ][line width=1.5]    (14.21,-4.28) .. controls (9.04,-1.82) and (4.3,-0.39) .. (0,0) .. controls (4.3,0.39) and (9.04,1.82) .. (14.21,4.28)   ;
\draw [color={rgb, 255:red, 0; green, 0; blue, 0 }  ,draw opacity=1 ][line width=1.5]    (277.7,101.79) -- (299.2,101.79) ;
\draw [color={rgb, 255:red, 0; green, 0; blue, 0 }  ,draw opacity=1 ][line width=1.5]    (257.18,131.54) -- (257.18,150.45) ;
\draw [color={rgb, 255:red, 0; green, 0; blue, 0 }  ,draw opacity=1 ][line width=1.5]    (257.18,72.25) -- (277.7,101.79) ;
\draw [color={rgb, 255:red, 0; green, 0; blue, 0 }  ,draw opacity=1 ][line width=1.5]    (257.18,52.84) -- (257.18,72.25) ;
\draw [line width=1.5]    (245.37,40.94) -- (257.18,52.84) ;
\draw [line width=1.5]    (268.38,40.11) -- (257.18,52.84) ;
\draw [line width=1.5]    (269.08,162.26) -- (257.18,150.45) ;
\draw [line width=1.5]    (246.08,163.26) -- (257.18,150.45) ;
\draw [line width=1.5]    (40.24,113.69) -- (52.05,101.79) ;
\draw [line width=1.5]    (39.24,90.69) -- (52.05,101.79) ;

\draw (301.86,94.01) node [anchor=north west][inner sep=0.75pt]  [font=\normalsize,color={rgb, 255:red, 6; green, 79; blue, 97 }  ,opacity=1 ]  {$d$};
\draw (201.04,94.61) node [anchor=north west][inner sep=0.75pt]  [font=\normalsize,color={rgb, 255:red, 6; green, 79; blue, 97 }  ,opacity=1 ]  {$b$};
\draw (244.74,104.01) node [anchor=north west][inner sep=0.75pt]  [font=\small,color={rgb, 255:red, 182; green, 5; blue, 60 }  ,opacity=1 ]  {$\gamma $};
\draw (55.05,86.24) node [anchor=north west][inner sep=0.75pt]  [font=\small,color={rgb, 255:red, 0; green, 0; blue, 0 }  ,opacity=1 ]  {$\ell _{2}$};
\draw (229.79,77.79) node [anchor=north west][inner sep=0.75pt]  [font=\small]  {$x_{1}$};
\draw (268.79,77.79) node [anchor=north west][inner sep=0.75pt]  [font=\small]  {$x_{2}$};
\draw (232.83,24.4) node [anchor=north west][inner sep=0.75pt]  [font=\normalsize,color={rgb, 255:red, 6; green, 79; blue, 97 }  ,opacity=1 ]  {$c_{1}$};
\draw (264.5,24.4) node [anchor=north west][inner sep=0.75pt]  [font=\normalsize,color={rgb, 255:red, 6; green, 79; blue, 97 }  ,opacity=1 ]  {$c_{2}$};
\draw (136.63,94.31) node [anchor=north west][inner sep=0.75pt]  [font=\normalsize,color={rgb, 255:red, 6; green, 79; blue, 97 }  ,opacity=1 ]  {$d$};
\draw (69.38,163.46) node [anchor=north west][inner sep=0.75pt]  [font=\normalsize,color={rgb, 255:red, 6; green, 79; blue, 97 }  ,opacity=1 ]  {$a_{1}$};
\draw (88.18,35.6) node [anchor=north west][inner sep=0.75pt]  [font=\normalsize,color={rgb, 255:red, 6; green, 79; blue, 97 }  ,opacity=1 ]  {$c$};
\draw (79.5,104.01) node [anchor=north west][inner sep=0.75pt]  [font=\small,color={rgb, 255:red, 182; green, 5; blue, 60 }  ,opacity=1 ]  {$\gamma $};
\draw (64.45,110.51) node [anchor=north west][inner sep=0.75pt]  [font=\small,color={rgb, 255:red, 108; green, 4; blue, 35 }  ,opacity=1 ]  {$h_{1}$};
\draw (94.38,134.94) node [anchor=north west][inner sep=0.75pt]  [font=\small,color={rgb, 255:red, 0; green, 0; blue, 0 }  ,opacity=1 ]  {$\ell _{1}$};
\draw (68.05,75.79) node [anchor=north west][inner sep=0.75pt]  [font=\small]  {$x_{1}$};
\draw (102.98,163.46) node [anchor=north west][inner sep=0.75pt]  [font=\normalsize,color={rgb, 255:red, 6; green, 79; blue, 97 }  ,opacity=1 ]  {$a_{2}$};
\draw (101.38,75.79) node [anchor=north west][inner sep=0.75pt]  [font=\small]  {$x_{2}$};
\draw (106.05,110.51) node [anchor=north west][inner sep=0.75pt]  [font=\small,color={rgb, 255:red, 108; green, 4; blue, 35 }  ,opacity=1 ]  {$h_{2}$};
\draw (232.38,163.46) node [anchor=north west][inner sep=0.75pt]  [font=\normalsize,color={rgb, 255:red, 6; green, 79; blue, 97 }  ,opacity=1 ]  {$a_{1}$};
\draw (265.98,163.46) node [anchor=north west][inner sep=0.75pt]  [font=\normalsize,color={rgb, 255:red, 6; green, 79; blue, 97 }  ,opacity=1 ]  {$a_{2}$};
\draw (21.29,78) node [anchor=north west][inner sep=0.75pt]  [font=\normalsize,color={rgb, 255:red, 6; green, 79; blue, 97 }  ,opacity=1 ]  {$b_{2}$};
\draw (21.29,105.61) node [anchor=north west][inner sep=0.75pt]  [font=\normalsize,color={rgb, 255:red, 6; green, 79; blue, 97 }  ,opacity=1 ]  {$b_{1}$};
\draw (229.45,110.51) node [anchor=north west][inner sep=0.75pt]  [font=\small,color={rgb, 255:red, 108; green, 4; blue, 35 }  ,opacity=1 ]  {$h_{1}$};
\draw (272.05,110.51) node [anchor=north west][inner sep=0.75pt]  [font=\small,color={rgb, 255:red, 108; green, 4; blue, 35 }  ,opacity=1 ]  {$h_{2}$};
\draw (259.18,134.94) node [anchor=north west][inner sep=0.75pt]  [font=\small,color={rgb, 255:red, 0; green, 0; blue, 0 }  ,opacity=1 ]  {$\ell _{1}$};
\draw (259.18,56.24) node [anchor=north west][inner sep=0.75pt]  [font=\small,color={rgb, 255:red, 0; green, 0; blue, 0 }  ,opacity=1 ]  {$\ell _{2}$};

\end{tikzpicture}
\vspace*{-1cm}
		\caption{Semidirected binary networks on 6 taxa with a 4-cycle and two cherries: (L) $N_{sw}$ and (R) $N_{sn}$. }\label{fig:WSandNS}
 	\end{center}
\end{figure}

\begin{lemma}\label{lem:4cyc2cher}Consider a 6-taxon semidirected network with a 4-cycle, a cherry below the cycle's hybrid node, and one other cherry, as shown in \cref{fig:WSandNS}. Then all numerical parameters are identifiable from quartet  $CF$s.
\end{lemma}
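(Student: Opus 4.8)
The plan is to recover all seven numerical parameters $\gamma,h_1,h_2,\ell_1,\ell_2,x_1,x_2$ of each network by explicit formulas in the $CF$s, built by deleting one taxon from a cherry and invoking the already-established results \cref{prop:S}, \cref{prop:W}, \cref{prop:N} for the $5$-taxon $4$-cycle networks $N_s,N_w,N_n$ of \cref{fig:4cyc}. The organizing observation is that no single $5$-taxon $4$-cycle network suffices: the single-descendant forms $N_w,N_n$ cannot detect the hybrid-edge probabilities $h_1,h_2$, since only one lineage enters the hybrid node and no coalescence can occur on the hybrid edges, whereas the hybrid-cherry form $N_s$ does not by itself pin down $\gamma$. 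Each of $N_{sw},N_{sn}$ contains \emph{both} an $N_s$-type and an $N_w$- or $N_n$-type induced subnetwork that share the same cycle parameters, and the lemma amounts to chaining these two.

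First I would recover the cycle parameters from a single-descendant subnetwork. For $N_{sw}$, deleting the hybrid-cherry taxon $a_2$ reduces the hybrid block to the singleton $a_1$ and leaves the induced subnetwork on $\{a_1,b_1,b_2,c,d\}$ in the form $N_w$ (second cherry adjacent to the hybrid node), so \cref{prop:W} identifies $\gamma,x_1,x_2$ (and a product involving $\ell_2$), exactly as in \cref{prop:join2hyb}. For $N_{sn}$, deleting $a_2$ gives the induced subnetwork $N_n$ on $\{a_1,c_1,c_2,b,d\}$, and \cref{prop:N} identifies $\gamma,x_1,x_2$. Since these subnetworks are induced by deleting a taxon, their $\gamma,x_1,x_2$ agree with those of the full $6$-taxon network.

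Next, with $\gamma$ in hand, I would delete one taxon from the second cherry (delete $b_2$ for $N_{sw}$, delete $c_2$ for $N_{sn}$). The resulting induced subnetwork on $\{a_1,a_2,\,\cdot\,,d\}$ is of form $N_s$ and retains the hybrid cherry $a_1,a_2$, hence the parameters $h_1,h_2,\ell_1$ unchanged; because the hybrid node now has two descendants, coalescence on the hybrid edges is possible, and \cref{prop:S}(c) recovers $h_1,h_2$ and $\ell_1$ from the $CF$s using the known $\gamma$. Finally $\ell_2$ is obtained by deleting both $a_1,a_2$: for $N_{sn}$ the $4$-cycle opens into the tree $((c_1,c_2),b,d)$ whose internal edge has probability exactly $\ell_2$, identifiable by \cref{prop:defQid}; for $N_{sw}$ the same deletion yields an internal edge carrying the product $x_1\ell_2$ (since the $b$-cherry hangs from a node adjacent to a hybrid edge), and $\ell_2$ follows by dividing by the already-identified $x_1$.

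The main obstacle is recovering $\gamma$ inside a $4$-cycle: unlike the $k\ge5$ case of \cref{prop:LargeCyc}, there is no a priori identifiable ``opposite'' edge to insert into the $\gamma$-recovery relation, so the second cherry is exactly what makes $\gamma$ accessible, either by furnishing such an edge directly ($N_{sn}$) or through a disentanglable product ($N_{sw}$). The step requiring care is verifying that \cref{prop:W} and \cref{prop:N} deliver $\gamma$ itself, and not merely $\gamma$ up to the $\gamma\leftrightarrow 1-\gamma$ symmetry; this is legitimate here because the four cycle blocks are distinctly labelled, so the two sides of the cycle are distinguishable. I would present this disentangling as the crux and relegate the explicit rational recovery formulas to the supplementary computations already recorded for $N_s$, $N_w$, and $N_n$.
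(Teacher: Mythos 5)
Your treatment of $N_{sw}$ is correct and is exactly the paper's argument: dropping $a_2$ gives an $N_w$-form subnetwork from which \cref{prop:W} recovers $\gamma, x_1, x_2, \ell_2$, and then dropping $b_2$ gives an $N_s$-form subnetwork from which \cref{prop:S}(c), with $\gamma$ known, recovers $h_1, h_2, \ell_1$. (Your worry about a $\gamma \leftrightarrow 1-\gamma$ ambiguity is moot there, since \cref{prop:W}(c) gives an explicit rational formula for $\gamma$ itself.)

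The $N_{sn}$ half of your proof, however, has a fatal gap: you assert that deleting $a_2$ yields an $N_n$-form subnetwork and that ``\cref{prop:N} identifies $\gamma, x_1, x_2$.'' \Cref{prop:N} states the opposite. For $N_n$ only $\ell$ is identifiable; $\gamma, x_1, x_2$ are \emph{not} (the variety $\mathcal V_{N_n}$ has dimension $3$ while there are $4$ parameters), and only the combinations $c_1 = \gamma(1-x_1)$ and $c_2 = \gamma + (1-\gamma)x_2$ can be determined. So your first step delivers $\ell_2$, $c_1$, $c_2$, but not $\gamma$, and your second step (\cref{prop:S} on the subnetwork obtained by dropping $c_2$) requires $\gamma$ as input.

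Moreover, this cannot be patched by combining the two $5$-taxon subnetworks, because their unidentifiable fibers are compatible. Fix the true parameters and set $c_1, c_2$ as above. For any nearby candidate value of $\gamma$, put $x_1 = 1 - c_1/\gamma$ and $x_2 = (c_2-\gamma)/(1-\gamma)$; these reproduce every $CF$ of the $N_n$-form subnetwork (all of which are functions of $\ell_2, c_1, c_2$ alone) as well as the shared quartet $\overline{CF}_{abcd}$, whose entries are $\tfrac23 c_1 + \tfrac13 c_2$, $\tfrac13(c_2-c_1)$, etc. Then \cref{prop:S}(c) supplies $h_1, h_2, \ell_1$ as functions of the $N_s$-subnetwork $CF$s and this same $\gamma$, keeping all $CF$s of that subnetwork fixed as well. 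Hence there is a one-parameter family of parameter choices, indexed by $\gamma$, giving identical $CF$s on \emph{both} $5$-taxon subnetworks: $\gamma$ is genuinely not identifiable from their union. What breaks the degeneracy is the quartet $\{a_1,a_2,c_1,c_2\}$, which uses both cherries simultaneously and lies in no $5$-taxon subnetwork. This is precisely why the paper does not chain $5$-taxon results for $N_{sn}$ but instead invokes a separate direct computation on the $6$-taxon network, \cref{prop:SN4cycle}, whose recovery formula for $\gamma$ indeed involves $CF_{ac|ac} = CF_{a_1c_1|a_2c_2}$. To repair your proof you would need to replace the $N_{sn}$ paragraph with such a $6$-taxon computation (or an equivalent argument using the cross-cherry quartet).
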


\begin{proof}
Consider \cref{fig:WSandNS}(L), $N_{sw}$. Then the subnetwork obtained by dropping taxon $a_2$ has the form of $N_w$, and \cref{prop:W} shows $\gamma,x_1,x_2,\ell_2$ are identifiable. But the network obtained by dropping taxon $b_2$ has the form of $N_s$, so using \cref{prop:S} and the known value of $\gamma$ identifies $h_1,h_2,\ell_1$.

The identifiability of all  parameters for \cref{fig:WSandNS}(R), $N_{sn}$, follows from another computation, presented as \cref{prop:SN4cycle}. 
\end{proof}

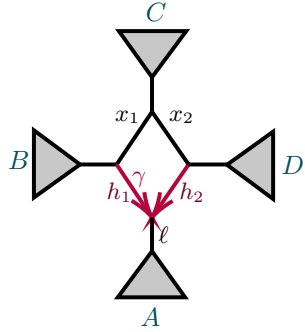
\begin{figure}
 \begin{center} 
  \tikzset{every picture/.style={line width=0.75pt}} 

\begin{tikzpicture}[x=0.75pt,y=0.75pt,yscale=-0.9,xscale=0.9]

\draw [line width=1.5]    (90.05,101.79) -- (110.55,101.79) ;
\draw [line width=1.5]    (110.55,101.79) -- (130.38,72.25) ;
\draw [color={rgb, 255:red, 182; green, 5; blue, 60 }  ,draw opacity=1 ][line width=1.5]    (110.55,101.79) -- (128.72,129.04) ;
\draw [shift={(130.38,131.54)}, rotate = 236.31] [color={rgb, 255:red, 182; green, 5; blue, 60 }  ,draw opacity=1 ][line width=1.5]    (14.21,-4.28) .. controls (9.04,-1.82) and (4.3,-0.39) .. (0,0) .. controls (4.3,0.39) and (9.04,1.82) .. (14.21,4.28)   ;
\draw [color={rgb, 255:red, 182; green, 5; blue, 60 }  ,draw opacity=1 ][line width=1.5]    (150.9,101.79) -- (132.09,129.07) ;
\draw [shift={(130.38,131.54)}, rotate = 304.59] [color={rgb, 255:red, 182; green, 5; blue, 60 }  ,draw opacity=1 ][line width=1.5]    (14.21,-4.28) .. controls (9.04,-1.82) and (4.3,-0.39) .. (0,0) .. controls (4.3,0.39) and (9.04,1.82) .. (14.21,4.28)   ;
\draw [line width=1.5]    (150.9,101.79) -- (172.4,101.79) ;
\draw [line width=1.5]    (130.38,131.54) -- (130.38,150.45) ;
\draw [line width=1.5]    (130.38,72.25) -- (150.9,101.79) ;
\draw [line width=1.5]    (130.38,52.84) -- (130.38,72.25) ;
\draw  [fill={rgb, 255:red, 155; green, 155; blue, 155 }  ,fill opacity=0.55 ][line width=1.5]  (172.4,101.97) -- (198.25,83.42) -- (198.25,121.14) -- cycle ;
\draw  [fill={rgb, 255:red, 155; green, 155; blue, 155 }  ,fill opacity=0.55 ][line width=1.5]  (90.05,101.79) -- (64.2,120.34) -- (64.2,82.62) -- cycle ;
\draw  [fill={rgb, 255:red, 155; green, 155; blue, 155 }  ,fill opacity=0.55 ][line width=1.5]  (130.38,52.84) -- (111.83,26.99) -- (149.56,26.99) -- cycle ;
\draw  [fill={rgb, 255:red, 155; green, 155; blue, 155 }  ,fill opacity=0.55 ][line width=1.5]  (130.38,150.45) -- (148.93,176.3) -- (111.21,176.3) -- cycle ;

\draw (201.34,95.31) node [anchor=north west][inner sep=0.75pt]  [font=\normalsize,color={rgb, 255:red, 6; green, 79; blue, 97 }  ,opacity=1 ]  {$D$};
\draw (122.24,180.51) node [anchor=north west][inner sep=0.75pt]  [font=\normalsize,color={rgb, 255:red, 6; green, 79; blue, 97 }  ,opacity=1 ]  {$A$};
\draw (125.18,9.46) node [anchor=north west][inner sep=0.75pt]  [font=\normalsize,color={rgb, 255:red, 6; green, 79; blue, 97 }  ,opacity=1 ]  {$C$};
\draw (117.5,104.01) node [anchor=north west][inner sep=0.75pt]  [font=\small,color={rgb, 255:red, 182; green, 5; blue, 60 }  ,opacity=1 ]  {$\gamma $};
\draw (103.45,110.51) node [anchor=north west][inner sep=0.75pt]  [font=\small,color={rgb, 255:red, 108; green, 4; blue, 35 }  ,opacity=1 ]  {$h_{1}$};
\draw (132.38,134.94) node [anchor=north west][inner sep=0.75pt]  [font=\small,color={rgb, 255:red, 0; green, 0; blue, 0 }  ,opacity=1 ]  {$\ell $};
\draw (108.05,69.79) node [anchor=north west][inner sep=0.75pt]  [font=\small]  {$x_{1}$};
\draw (138.38,69.79) node [anchor=north west][inner sep=0.75pt]  [font=\small]  {$x_{2}$};
\draw (144.05,110.51) node [anchor=north west][inner sep=0.75pt]  [font=\small,color={rgb, 255:red, 108; green, 4; blue, 35 }  ,opacity=1 ]  {$h_{2}$};
\draw (48.14,92.43) node [anchor=north west][inner sep=0.75pt]  [font=\normalsize,color={rgb, 255:red, 6; green, 79; blue, 97 }  ,opacity=1 ]  {$B$};

\end{tikzpicture}
\vspace*{-1cm}
  \caption{A 4-cycle in a larger network, partitioning the taxa into 4 blocks $A,B,C,D$.}\label{fig:4groups}
 \end{center}
\end{figure}

\begin{proposition}\label{prop:4cyc}
Let $\widetilde N$ be a level-1 metric binary semidirected network on $n\ge 5$ taxa with no 2-cycles, containing  a 4-cycle, as shown in \cref{fig:4groups},  with taxon blocks $A,B,C,D$ of size $n_A$, $n_B$, $n_C$, $n_D$ and edge probabilities and hybridization parameters on and below the cycle as shown. Then the parameters $x_1,x_2,h_1,h_2,\gamma, \ell$ are identifiable according to the following cases, at least one of which must hold.

\begin{itemize}
\item[a)] $n_B=n_C=n_D=1$: none identifiable
\item[b)] $n_A=1$ and
\begin{itemize}
\item[i)] $n_B=n_D=1$: none identifiable
\item[ii)] $n_B$ or $n_D\ge 2$: $x_1,x_2,\gamma$ identifiable, $h_1,h_2,\ell$ not identifiable
\end{itemize}
\item[c)] $n_A\ge 2$; $n_B$, $n_C$, or $n_D\ge 2$ and
\begin{enumerate}
\item[i)] the child of the edge with probability $\ell$ is not in a 3-cycle:  all identifiable
\item[ii)] the child of the edge with probability $\ell$ is in a 3-cycle:  $x_1,x_2,h_1,h_2,\gamma$ identifiable, $\ell$ not identifiable
\end{enumerate}
\end{itemize}

\end{proposition}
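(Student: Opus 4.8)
The overall strategy is a case analysis on the block sizes $(n_A,n_B,n_C,n_D)$, in each case passing to an induced subnetwork on five or six taxa whose $4$-cycle carries exactly the parameters $x_1,x_2,h_1,h_2,\gamma,\ell$ of \cref{fig:4groups}, and then quoting the computational identifiability results for $N_s,N_w,N_n$ (\cref{prop:S,prop:W,prop:N}) and \cref{lem:4cyc2cher}, together with the non-identifiability tools \cref{prop:3-cyc11} and \cref{prop:3cycNonID}. First I would check exhaustiveness by splitting on $n_A$: if $n_A=1$ we are in case (b), subdivided according to whether $n_B=n_D=1$ (giving (b.i)) or $\max(n_B,n_D)\ge 2$ (giving (b.ii)); if $n_A\ge 2$ we are in (a) exactly when $n_B=n_C=n_D=1$, and otherwise in (c), which is in turn split by whether the child of the $\ell$-edge lies in a $3$-cycle. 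Because $n\ge 5$ forbids all four blocks from being singletons, every admissible configuration lands in exactly one listed case.

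For the positive assertions in (c) I would reduce to \cref{lem:4cyc2cher}. Since $n_A\ge 2$, choose two taxa from $A$; since one of $n_B,n_C,n_D\ge 2$, choose two taxa from that block and one from each of the remaining two. The resulting six-taxon induced network has the form $N_{sw}$ (if $B$ or $D$ is the large block) or $N_{sn}$ (if $C$ is), as in \cref{fig:WSandNS}. The one point requiring care is that the cherry formed from the two $A$-taxa must sit directly below the $4$-cycle's hybrid node with the edge above it equal to $\ell$: when the child $w$ of $\ell$ is an ordinary tree node one picks one taxon from each of its two descendant subgraphs, and when $w$ is the top of a cycle of size $\ge 4$ (permitted in (c.i)) one instead picks one taxon from each of the two cycle blocks adjacent to $w$, so that the hybrid block and the remainder of that cycle prune away and a cherry at $w$ with edge $\ell$ remains. \cref{lem:4cyc2cher} then identifies all six parameters, giving (c.i). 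In (c.ii) the child $w$ lies in a $3$-cycle, so $\ell$ is adjacent to that $3$-cycle and is not identifiable by \cref{prop:3cycNonID}; the remaining five parameters are recovered by the same reduction after first using \cref{prop:3-cyc11} to contract the $3$-cycle below $w$ to a node without changing the $CF$s, which produces a genuine south cherry to which \cref{lem:4cyc2cher} applies (the recovered values of $x_1,x_2,h_1,h_2,\gamma$ being unaffected by the contraction).

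Case (b) has $n_A=1$, so a single lineage traverses the hybrid edges and the edge $\ell$; no coalescence can occur there, these parameters do not appear in any $CF$ formula, and hence $h_1,h_2,\ell$ are never identifiable in (b). For (b.ii), with $n_B\ge 2$ (the case $n_D\ge 2$ being symmetric), I pick two taxa from $B$ and one each from $A,C,D$ to induce $N_w$, and \cref{prop:W} identifies $\gamma,x_1,x_2$. For (b.i) the only large block is $C$ (forced, since $n\ge 5$), so the only cross-cycle quartets are captured by the induced $N_n$; by \cref{prop:N} the map for $N_n$ does not recover $\gamma,x_1,x_2$, so in (b.i) none of the six parameters is identifiable. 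In case (a), with $B,C,D$ all singletons and $n_A\ge 2$, every quartet that meets the cycle is a restriction of the induced $N_s$; by \cref{prop:S} the $CF$ map for $N_s$ has positive-dimensional generic fibers along which all of $\gamma,x_1,x_2,h_1,h_2,\ell$ vary (the hybrid parameters and $\ell$ being pinned only once $\gamma$ is known, while $\gamma$ itself remains free), so again none of the parameters is identifiable.

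The main obstacle is the family of ``none identifiable'' claims in (a) and (b.i). Ruling out identifiability of a single parameter is not implied merely by a deficiency in variety dimension; one must exhibit a positive-dimensional fiber of the $CF$ map along which that specific parameter genuinely changes. This is where the explicit parametrizations recorded in \cref{prop:S} and \cref{prop:N} do the real work, and confirming that each of the six parameters is non-constant along the relevant fiber is the delicate step. The secondary difficulty is the bookkeeping needed to guarantee that each induced five- or six-taxon network inherits precisely the cycle parameters $x_1,x_2,h_1,h_2,\gamma$ (and, in (c.i), $\ell$) of the ambient network, in particular the cherry-isolation construction for $\ell$ when $w$ tops a larger cycle.
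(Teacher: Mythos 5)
Your case division, the reductions to induced $5$- and $6$-taxon subnetworks, and the supporting results you invoke (\cref{prop:S,prop:W,prop:N}, \cref{lem:4cyc2cher}, \cref{prop:3-cyc11}, \cref{prop:3cycNonID}) are the same as in the paper's proof, and your treatments of cases a), b)i), b)ii) and c)ii) are essentially the paper's (for a) and b)i) the paper argues via an explicit decomposition of every $CF$ into factors depending only on the cycle parameters, which is the rigorous version of your ``every quartet that meets the cycle is a restriction of $N_s$/$N_n$'' step, and is what justifies passing from non-identifiability on the small networks to non-identifiability on $\widetilde N$). The genuine problem is in case c)i), at exactly the step you single out as requiring care: the identification of $\ell$ when the child $v$ of the $\ell$-edge lies in a cycle of size $\ge 4$.

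Your prescription --- pick one taxon from each of the two cycle blocks adjacent to $v$, so that ``the hybrid block and the remainder of that cycle prune away'' --- is valid only when neither cycle-neighbor of $v$ is that cycle's hybrid node. But case c)i) also allows the lower cycle's hybrid node to be adjacent to $v$, the configuration of \cref{fig:2_4cyc}(R). Then one of the two blocks adjacent to $v$ \emph{is} the hybrid block, and any taxon chosen from it has both paths around the lower cycle among its ancestors, so the whole cycle survives in the induced network; the result is not of the form $N_{sw}$ or $N_{sn}$, and \cref{lem:4cyc2cher} does not apply. Choosing instead the two non-hybrid blocks does not help: $v$ then has degree $2$ in the induced network and is suppressed, so the $\ell$-edge merges with the adjacent cycle edge and only the product of the two edge probabilities is recovered. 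The paper closes this case with a different argument, the one used for \cref{prop:5cyc}: with $\gamma$ already identified, drop one taxon to reach a network of form $N_s$ and identify the product $\ell y_1$ (where $y_1$ is the lower-cycle edge adjacent to $v$); then drop a different taxon, contract the resulting $3$-cycle by \cref{prop:3-cyc11} (allowing a possibly negative adjusted length), and apply \cref{prop:W} to identify $y_1$ alone, whence $\ell$. You would need to add this argument, or an equivalent, to complete c)i). A smaller omission of the same kind: the two taxa you take from the large non-hybrid block in c)i) and b)ii) need not form a cherry either --- they may join at a cycle, producing a $3$-cycle in place of that cherry in the induced network --- and the paper first contracts such $3$-cycles by \cref{prop:3-cyc11} before applying \cref{lem:4cyc2cher} or \cref{prop:W}.
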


Simple instances of the 5 cases in the proposition may be helpful to consider. The network $N_s$ falls under case a), $N_n$ under b)i), $N_w$ under b)ii), and $N_{sw}$ and $N_{sn}$ under c)i). Examples for case c)ii) are obtained from  $N_{sw}$ and $N_{sn}$ by replacing the cherry below the hybrid edge with a 3-cycle. The proof of the proposition leverages computational results for these to obtain more general statements.

\begin{proof}
That at least one of these cases must hold is most easily seen by noting that  case c) is the complement of the union of
a) and b). We consider each case to establish its claim.

\smallskip\noindent{\bf Case a):} The 4-cycle determines a hybrid  block of taxa $A$ and  three taxa, $b,c,d$, in singleton blocks. The only $CF$s dependent on the parameters $\theta=(x_1,x_2,h_1,h_2,\gamma, \ell)$ are those involving at most two elements of $A$, since with 3 or 4 elements of $A$ either a coalescence has occurred below the hybrid node, or at least 3 lineages reach it and are then exchangeable, giving probabilities $1/3$ for each quartet tree. Those $CF$s dependent on $\theta$  decompose into sums of products of expressions involving only parameters outside of $\theta$ or only parameters in $\theta$, similar to the approach in  \Cref{sec:large3}. The expressions involving only parameters in $\theta$ can even be chosen from the $CF$s for the network $N_s$ of \ref{prop:S}. But that Proposition shows the parameters in $\theta$ are not identifiable from the $CF$s for $N_s$, so they cannot be identified from those for $\widetilde N$.

\smallskip\noindent 
{\bf Case b)i):} The 4-cycle determines a hybrid  singleton  $a$, two adjacent singleton blocks of $b$ and $d$, and a larger subnetwork $C$ opposite the hybrid. Viewing the network as rooted in $C$, the $CF$s for $\widetilde N$ depend on parameters $x_1,x_2,h_1,h_2,\gamma, \ell$ only through the various probabilities of first coalescent events among subsets of $\{a,b,d\}$ determining the quartet tree before lineages leave the 4-cycle and enter $C$ . Using $D$ to denote the subnetwork below $C$ which contains the 4-cycle, these are
\begin{align*}
p_1=P(\mathcal C_D\to ab|cc)&=\gamma(1-x_1) \\
p_2=P(\mathcal C_D\to ad|cc)&= (1-\gamma)(1-x_2)\\
P(\mathcal C_D\to bd|cc)&= 0\\
P(\mathcal C_D\to bd|ac)&=  (\gamma x_1 +(1-\gamma)x_2)/3=(1-p_1-p_2)/3\\
P(\mathcal C_D\to ab|dc)&= \gamma\left (1- 2 x_1/3\right )+(1-\gamma) x_2/3=(1+2p_1-p_2)/3\\
P(\mathcal C_D\to ad|bc)&= \gamma x_1/3+(1-\gamma)\left (1-2 x_2/3\right )=(1-p_1+2p_2)/3
\end{align*}
Since these probabilities are linear functions of $p_1,p_2$, and none of $\gamma,x_1,x_2$ are identifiable from $p_1,p_2$, none of the parameters are identifiable from $CF$s for $\widetilde N$.

\smallskip\noindent 
{\bf Case b)ii):} Pick two taxa in one of the blocks adjacent to the hybrid one, and one taxon in all others.  Passing to the induced subnetwork and removing 2-cycles yields either a network with the form $N_w$ or one where the cherry in $N_w$ is replaced by a 3-cycle. Using 
\cref{prop:3-cyc11}, we may replace such a 3-cycle with a node without changing $CF$s,  (provided we modify the edge length leading to the 4-cycle, including allowing  for a possibly negative branch length. But then the network has the form $N_w$ and applying \cref{prop:W} shows $\gamma, x_1,x_2$ can be identified. 

Since there is only one taxon descended from the hybrid node, there can be no coalescent event in either of the hybrid edges or their descendant, and thus these edge lengths do not appear in the formulas for the $CF$s for $\widetilde N$. Therefore these parameters cannot be identifiable.

\smallskip\noindent 
{\bf Case c)i):}  Pick two taxa in one of the non-hybrid blocks, two taxa in the hybrid block, and one taxon from each of the others. Passing to the induced subnetwork on these 6 taxa, and removing any 2-cycles, we obtain a network of one of the forms in \cref{fig:WSandNS}, or ones where 3-cycles appear in place of one or both cherry nodes.   If there are 3-cycles, by \cref{prop:3-cyc11} we may replace them with nodes without changing $CF$s (provided we modify edge lengths leading to the 4-cycle). Then using \cref{lem:4cyc2cher}  we can identify $\gamma,x_1,x_2,h_1,h_2$.

To identify $\ell$, let $v$ be the child node of the edge  with this probability. If $v$ is not in a cycle in $\widetilde N$, then picking one taxon descended from each of its child edges and passing to an induced subnetwork, $\ell$ is identifiable by \cref{lem:4cyc2cher}.

If $v$ is in a cycle, it is of size $\ge 4$. Passing to an induced subnetwork, we may assume that $v$ is in a 4-cycle. Note that $v$ cannot be the hybrid node of that cycle, else the semidirected network would not be rootable. If $v$ is opposite the hybrid node, then we may pass to an induced subnetwork which, after replacing 2-cycles with edges, has a cherry below $v$ and follow the previous argument. If $v$ is adjacent to the hybrid node, then the subnetwork has the form of \cref{fig:2_4cyc}(R).  Since $\gamma$ is identified, the argument used in \cref{prop:5cyc} then shows $\ell$ is identifiable.

\smallskip\noindent
{\bf Case c)ii):} The argument of  the first paragraph for Case c)i) shows $\gamma,x_1,x_2,h_1,h_2$ are identifiable. Since the edge descending from the hybrid node of the 4-cycle is incident to a 3-cycle, its length is not identifiable by \cref{prop:3cycNonID}.
\end{proof}

\subsection{Summary of numerical parameter identifiability}

We summarize this section's results with the following.

\begin{theorem} [Numerical parameter identifiability from quartet  $CF$s]\label{thm:mainNum}
Let  $\widetilde N$ be a level-1 metric binary semidirected network with no 2-cycles. Then from quartet $CF$s under the NMSC with one sample per taxon all numerical parameters on $\widetilde N$ are identifiable except
for the following, which are not identifiable:
\begin{enumerate}
\item\label{it:pendant} pendant edge lengths,
\item for 3-cycles, hybridization parameters and the lengths of the six edges in and adjacent to the cycle,
\item\label{it:4cyc} for 4-cycles, the hybridization parameter and edge lengths in the cycle  and descended from the hybrid node, as stated in \cref{prop:4cyc}.
\end{enumerate}
\end{theorem}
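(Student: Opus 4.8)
The plan is to prove this theorem as a corollary that assembles the preceding propositions of this section through an exhaustive case analysis over the numerical parameters of $\widetilde N$. Every such parameter is either a pendant edge length, an internal (non-pendant) edge length, or a hybridization parameter $\gamma$, of which there is exactly one per cycle. Pendant edge lengths are dispatched immediately by \cref{prop:pendantNonID}, giving \cref{it:pendant}. For the rest, I would exploit that $\widetilde N$ is level-1, so its cycles are vertex-disjoint; consequently each hybridization parameter belongs to a unique cycle, and each internal edge is either disjoint from every cycle, interior to a single cycle, or adjacent to the hybrid edges of one or two cycles. The argument then branches according to the size of the relevant cycle.

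The two listed cycle exceptions come directly from earlier results. Every hybridization parameter and all six edges in or adjacent to a 3-cycle are non-identifiable by \cref{prop:3cycNonID}, which is precisely the second exception. The parameters attached to a 4-cycle --- its hybridization parameter together with the edge lengths in the cycle and descended from its hybrid node --- are governed by the detailed case statement of \cref{prop:4cyc}, which is exactly \cref{it:4cyc}. So it remains to show that all other parameters are identifiable, and this is where the bulk of the verification lies.

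For the remaining edges I would proceed as follows. Any internal edge that is neither hybrid nor adjacent to a hybrid edge is, by the characterization proposition preceding \cref{prop:defQid}, defined by some $4$-taxon set $Q$, so its length is identifiable by \cref{prop:defQid}; this covers all tree edges away from cycles as well as the $k-4$ interior edges of each $k$-cycle. For a cycle $C$ of size $k\ge 5$, its hybridization parameter and the two cycle edges adjacent to the hybrid edges are identifiable by \cref{prop:LargeCyc}; the edge descending from the hybrid node is identifiable by \cref{prop:5cyc} when the hybrid node has at least two descendant taxa and its child is not in a $3$-cycle, and otherwise falls under the $3$-cycle exception via \cref{prop:3cycNonID}; an internal cut edge adjacent to exactly one hybrid edge is handled by the proposition on such edges (for incident cycle size $\ge 4$), an edge joining the hybrid edges of two distinct cycles of size $\ge 4$ by \cref{prop:join2hyb}, and the hybrid edges of $C$ themselves by \cref{prop:LargeCyc} whenever the hybrid node has at least two descendant taxa.

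The main obstacle is verifying that this case analysis is genuinely exhaustive --- that the vertex-disjointness of level-1 cycles lets one assign each parameter to exactly one branch, with no edge unaccounted for --- and, in particular, reconciling the clean statement with the residual subtlety of \cref{prop:LargeCyc}: when the hybrid node of a $\ge 5$-cycle has a single descendant taxon, no coalescence can occur on its hybrid edges, so those lengths never enter the $CF$ parameterization and are not identifiable. I would argue that this is of the same nature as the pendant-edge phenomenon of \cref{it:pendant} --- a length on which the NMSC distribution simply does not depend --- and record it accordingly so that the enumerated exceptions, read together with this remark, account for every non-identifiable parameter while \cref{prop:defQid,prop:LargeCyc,prop:5cyc,prop:join2hyb} cover all the rest.
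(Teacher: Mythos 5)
Your proposal is correct and takes essentially the same route as the paper: the paper offers no explicit proof of \cref{thm:mainNum} beyond presenting it as a summary of the section's results, and your exhaustive assembly of \cref{prop:pendantNonID}, \cref{prop:defQid}, \cref{prop:3cycNonID}, \cref{prop:LargeCyc}, \cref{prop:5cyc}, \cref{prop:join2hyb}, and \cref{prop:4cyc}, organized by the vertex-disjointness of level-1 cycles, is precisely that assembly. The subtlety you flagged is real and worth keeping: by \cref{prop:LargeCyc}, the hybrid edge lengths of a $k$-cycle with $k\ge 5$ whose hybrid node has a single descendant taxon are not identifiable, yet this case is covered by none of the theorem's three enumerated exceptions, so your remark recording it alongside the pendant-edge phenomenon (the cause is the same --- the $CF$ parameterization simply does not involve those lengths) is a needed amendment to the statement rather than a defect of your argument.
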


If two individuals are sampled in some taxon $x$, as discussed earlier this can be modeled by attaching a cherry of pseudotaxa $x_1,x_2$ at the leaf $x$, Doing so for all taxa resolves the non-identifiability issues of \cref{it:pendant,it:4cyc}, yielding the following.

\begin{corollary} Let  $\widetilde N$ be a level-1 metric binary semidirected network with no 2-cycles. Then from quartet $CF$s under the NMSC with two or more samples for all taxa, all numerical parameters on $\widetilde N$ are identifiable except  hybridization parameters and lengths of edges in and adjacent to 3-cycles.
\end{corollary}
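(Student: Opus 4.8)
The plan is to derive the corollary from \cref{thm:mainNum} by encoding the two-sample scheme through pseudotaxa. First I would build an auxiliary network $\widehat N$ from $\widetilde N$ by attaching, at each leaf $x$, a new apex node carrying a cherry of pseudotaxa $x_1,x_2$. As recorded in the pseudotaxa discussion of \cref{sec:NMSC}, the $CF$s for $\widehat N$ (one sample per pseudotaxon) that involve at most one pseudotaxon from each original taxon equal the $CF$s for $\widetilde N$ under two samples per taxon, and no edge lengths are needed on the cherries since no coalescence can occur there. I would then record the structural bookkeeping: $\widehat N$ is again binary, level-1, and free of 2-cycles; its $3$- and $4$-cycles are in bijection with those of $\widetilde N$; the only pendant edges of $\widehat N$ are the length-free cherry edges; and every formerly pendant edge of $\widetilde N$ is now an \emph{internal} edge of $\widehat N$ joining the body of the network to a cherry apex.

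Next I would apply \cref{thm:mainNum} to $\widehat N$ and translate each of its three exception classes back to $\widetilde N$. The pendant-edge exception becomes vacuous, since the pendant edges of $\widehat N$ are exactly the cherry edges, which carry no length. The $3$-cycle exception transfers verbatim, as the $3$-cycles of $\widehat N$ are those of $\widetilde N$, and these are precisely the parameters the corollary permits to remain unidentifiable. The substantive point is the $4$-cycle exception: because each original taxon has been replaced by a cherry of two pseudotaxa, every one of the four taxon blocks of any $4$-cycle now contains at least two taxa, so \cref{prop:4cyc} applies only through its case c). In subcase c)i) all of $x_1,x_2,h_1,h_2,\gamma,\ell$ are identifiable, while in subcase c)ii) the only survivor is the length $\ell$ of the edge descending from the hybrid node whose child lies in a $3$-cycle; but that edge is \emph{adjacent} to a $3$-cycle, so its non-identifiability is already covered by the permitted exception. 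Hence no $4$-cycle-specific non-identifiability remains.

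Finally I would confirm that each formerly pendant edge $e$ of $\widetilde N$, now internal in $\widehat N$, is identifiable unless it is adjacent to a $3$-cycle. Since $e$ joins the network to a cherry apex it is a tree edge, and it is adjacent to a hybrid edge only if its upper endpoint is. If $e$ is neither hybrid nor adjacent to a hybrid edge it is defined by a quartet $Q$ and is identifiable by \cref{prop:defQid}; if it is adjacent to a hybrid edge of a cycle of size $\ge 4$ it is identifiable by the earlier propositions on cut edges adjacent to hybrid edges and on edges below a hybrid node (e.g.\ \cref{prop:5cyc}), where the attached cherry furnishes the needed second descendant taxon; and if it is adjacent to a $3$-cycle it falls within the allowed exception, consistent with \cref{prop:3cycNonID}. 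Combining these observations, the only parameters of $\widetilde N$ left unidentifiable are the hybridization parameters and the lengths of edges in or adjacent to $3$-cycles, as claimed.

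I expect the main obstacle to be the case-bookkeeping of this last step: one must check that passing to pseudotaxa genuinely pushes every previously non-identifiable pendant edge and every singleton-block $4$-cycle into a configuration reached by the earlier propositions, without accidentally producing a residual degeneracy the corollary does not allow — in particular that no hybrid node is left with a single descendant and that no new $3$-cycle adjacency is introduced beyond those already present in $\widetilde N$.
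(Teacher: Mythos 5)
Your proposal is correct and follows essentially the same route as the paper's (very terse) argument: the paper likewise attaches a cherry of pseudotaxa at each leaf and observes that this removes the pendant-edge and 4-cycle exceptions of \cref{thm:mainNum}, since every 4-cycle block then contains at least two taxa so case c) of \cref{prop:4cyc} applies, and the only residual non-identifiability (case c)ii), the child edge of the hybrid node entering a 3-cycle) is already covered by the 3-cycle exception. Your extra bookkeeping --- that the auxiliary network stays binary, level-1, and 2-cycle-free with the same 3- and 4-cycles, and that formerly pendant edges become identifiable internal edges --- is precisely the verification the paper leaves implicit.
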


\section{Implications for data analysis}\label{sec:imp}

Attempting to infer the non-identifiable can either be misleading (unless all possible alternatives are reported), or very slow (spending computational time considering equally good possibilities), so our results here should inform development and use of $CF$-based inference methods.

The nature of 3-cycle identifiability from quartet $CF$s poses particular issues for likelihood and pseudolikelihood approaches. Quartet CFs carry signals of undirected 3-cycles, so ignoring the possibility of such cycles could have unknown consequences.
But even if only the network topology is sought, these criteria require optimization over numerical parameters, so it seems necessary to include 3-cycle parameters in a search.
Since for some parameter values there is a signal of a 3-cycle's hybrid node in the $CF$s, the search cannot be limited to undirected 3-cycles. However, these numerical parameters are not themselves identifiable, so considering them will be slow.
Reducing the over-parameterization at 3-cycles (from 7 parameters to 3) would be desirable, but it is unclear how to do so while maintaining the same range of $CF$s.
As the numerical parameters vary, the semidirected topology may pass between identifiable and non-identifiable regimes, and the boundaries of these are not known.
 
SNaQ, with its default settings, ignores the possibility of a 3-cycle and it is unclear how this might affect its optimization. Exactly what information in $CF$s is extracted by maximizing the pseudolikelihood function is difficult to analyze theoretically. Using simulation, the impact of 3-cycles on inference needs to be studied thoroughly, both for SNaQ and  for PhyloNet's similar inference from rooted triples.

NANUQ does not suffer from these problems, as its inference goal is more modest, providing a statistically consistent estimate only of larger cycle topology, without any search over the numerical parameter space. Whether NANUQ can be supplemented to extract $CF$ information on the existence of 3-cycles should also be explored.

Finally, identifiability theorems needed to justify 
network inference methods from data types other than $CF$s are largely lacking.   Studies of the parameter identifiability question for 
these are also needed.

\appendix
\section{Context for this work}\label{app:appeasement}

\subsection{Earlier work} 
Here we discuss how results in this work fit into and complement earlier work. In what follows, all statements should be restricted to the class of level-1 binary networks.

\medskip

Questions of identifiability of both network topology and numerical parameters from quartet $CF$s were 
first considered in \cite{Solis-Lemus2016}, a work which also introduced the pseudolikelihood inference of networks from empirical $CF$s via the SNaQ software. 
With the publication directed primarily to a biological audience, much of the presentation of the mathematical results 
appeared in the supplementary material, without formal statements of propositions and theorems, making it difficult to 
discern exactly what has been fully established. Subsequently, \cite{Banos2019} reconsidered the identifiability of 
level-1 network topology from quartet $CF$s with more mathematical formality.
A later  preprint \cite{Solis-Lemus2020} gives more details on the work behind \cite{Solis-Lemus2016}. 

\medskip

As does this paper, \cite{Solis-Lemus2016} uses algebraic computations to study what information quartet $CF$s might carry about a network. 
However, the computations concerning network topology are focused not on the full question of  identifiability,  but on a more restricted question 
the authors called \emph{detectability} of  hybridizations. This notion is an instance of the broader concept of \emph{distinguishability} presented 
subsequently in \cite{Degnan2018}. 
Distinguishability for a specified collection of possible models using some specified information source means that 
	a model $x$ that is known to be in the collection can be determined from that information for $x$. 
In \cite{Solis-Lemus2016} the  models 
correspond to certain networks, and the information is expected $CF$s arising from them under the NMSC.
As stated in \cite{Solis-Lemus2016}, 
\begin{quote}\dots for networks with $n\ge  4$ taxa, we restrict our focus to the case when $N'$ is the network topology obtained
	 from $N$ by removing a single hybrid edge of interest.\dots The presence of the hybridization of interest can be detected if the quartet $CF$s from
$N'$ cannot all be equal to the quartet $CF$s from $N$ simultaneously. 
\end{quote}
In other words, that work focused on distinguishability of the 2-element set $\{N,N'\}$ using quartet  $CF$s. 
While investigating this question was certainly a strong first contribution to the broader question of identifiability of level-1 network topology from $CF$s,  addressing the full question  would require showing distinguishability of the set of \emph{all} possible level-1 networks on a taxon set $X$, including networks with completely unrelated topologies. More fully, one would need to show this for sets $X$ of arbitrary size. The approach taken to prove detectability by \cite{Solis-Lemus2016}, however, depends on equating formulas for $CF$s in terms of numerical parameters on the two networks and solving the system, and it is unclear how this could be applied to the full identifiability question. 

Note that while \cite{Solis-Lemus2016} states its detectability results extend to level-1 networks with many cycles (where multiple hybrid edges may be removed to get the set of distinguishable networks), a justification for this claim was only given in \cite{Solis-Lemus2020}, with Lemma 3 of that work being key. Unfortunately, that lemma is incorrect as stated. (See \cref{sec:Lem3counter} of this appendix for a counterexample.) While it might be possible to obtain a correct justification using similar ideas, doing so seems unnecessary given the results of  \cite{Banos2019}, which we describe next.

After presenting a detailed study of all level-1 networks on 4 taxa and their $CF$s,  \cite{Banos2019} used combinatorial arguments 
to show that information about larger networks could be obtained through induced quartet networks, and hence  $CF$s. In particular, 
the topological identifiability result stated in this paper as \cref{thm:topExcept4} was established. This work also clarified several points 
that were either implicit or unaddressed in \cite{Solis-Lemus2016}: First, the semidirected network was formally defined, highlighting 
that network structure above the LSA needed to be excised. Second, identifiability of large cycles (more than 4 edges) was explicitly 
addressed (although \cite{Solis-Lemus2020} later provided an argument for detectability).
Third, identifiability for  networks with multiple cycles was shown. However, because it considered only one $CF$ at a time to deduce 
information about an induced quartet network, without exploring whether additional information might be found in the relationship of $CF$s 
for overlapping sets of taxa, it  did not obtain the strongest possible result. As an example, while \cite{Solis-Lemus2016} showed the 
distinguishability of the hybrid node of a 4-cycle in certain sufficiently large networks, \cite{Banos2019} left all 4-cycles undirected in its 
network identifiability result. 

While \cite{Solis-Lemus2016} had shown in some cases 3-cycles were detectable, the main result of \cite{Banos2019} omits 3-cycles in its identifiability result. It did contain a theorem, though, suggesting a $3_2$-cycle on a 4-taxon network may be identifiable for some parameter values. In both works, then, questions about 3-cycle identifiability were left open.

In short, the general question, in arbitrary level-1 networks, of topological identifiability of both directed and undirected  3-cycles and of directed 4-cycles, the focus of  \cref{ssec:3top,ssec:4top} of this work, remained. 

\medskip

Although not considered by \cite{Banos2019}, the study of identifiability of numerical parameters was also initiated by \cite{Solis-Lemus2016}. Using calculations of Gr\"obner bases of ideals  of polynomial relationships between $CF$s, arguments in \cite{Solis-Lemus2016} (with a technical matter corrected in \cite{Solis-Lemus2020}) investigated whether the dimension of the associated algebraic variety allowed for all numerical parameters to be identifiable. If this dimension is less than the number of parameters, then not all parameters can be identified, though it is possible that a subset are. If the dimension and number of parameters are equal, then the parameterization map must be generically finite-to-1. For specific networks  \cite{Solis-Lemus2016} showed whether or not the parameterization was finite-to-1, but passing to networks with more than 1 cycle again depended on the faulty Lemma 3 of \cite{Solis-Lemus2020}.
Moreover, the calculations in \cite{Solis-Lemus2016} are focused on numerical  parameters associated to cycles (cycle edge lengths, hybridization parameters, and possibly lengths of  edges adjacent to a cycle). Although the arguments do not seem to cover edges not adjacent to any cycles, this omission is easily overcome (for instance as in our \cref {prop:defQid}). On the other hand, it is unclear how the computations can be applied to answer whether the edges adjacent to hybrid edges in two different cycles have identifiable lengths.
Moreover, when a dimension computation leads to a valid conclusion that a full set of numerical parameters cannot be identified, it still is possible that some subset of the parameters could be. This issue was not explored.

Finally, establishing that there can only be finitely many choices of parameters that yield the $CF$s on a network is 
	a \emph{local identifiability} result, and \cite{Solis-Lemus2016} emphasized it leaves open the question of \emph{global identifiability}: 
Does this finite set actually contain only one such choice?  
This cannot be settled by dimension computations of the sort in \cite{Solis-Lemus2016}. There are in fact simple statistical models (outside of phylogenetics) 
where parameterizations are finite-to-1 but not 1-1 in surprising ways \cite{ARSV2015}. Investigating  global identifiability is thus highly desirable.

Several questions of identifiability of numerical parameters questions thus remained open: identifiability of certain edge lengths, 
especially in multicycle networks; identifiability of subsets of parameters when a full set was not identifiable;  and the broad question of global identifiability. 
These questions are the focus of \Cref{sec:numID}.

\smallskip

As the manuscript for this work was being completed, the preprint \cite{TileySolisLemus2023} appeared. This includes work on distinguishability 
of several different 6-taxon networks with 4-cycles and several cherries, but not on the full network identifiability questions even for 6-taxon networks 
with 4-cycles. It does however include simulation work to investigate \emph{practical identifiability}, the extent to which with finite data sets one can 
infer such cycles, using several pseudolikelihood methods.

\subsection{An example}  \label{sec:Lem3counter} Consider the two networks shown in \cref{fig:counterex}, 
where the right network is obtained from the left by removing a hybrid edge in the top 4-cycle. This is an instance 
of the construction implied in the statement of Lemma 3 of \cite{Solis-Lemus2020}, which is used to justify
analyzing $CF$s only of level-1 networks with a single cycle to understand those with multiple cycles. 
We show here that the statement of Lemma 3 is not correct for this pair of networks. 

We first note that this lemma states that the $CF$'s for the two networks which depend on parameters associated to the cycle are equal. This is not strictly true as stated:
Focusing on the lower cycle of the left network, for instance, $CF_{ab|ce}$ depends on $\gamma, x_1$ as well as on $\delta$  for the left network, but for the right network has no dependence on $\delta$. 
Other interpretations of this statement are that the described set of $CF$s for the 
two networks produce the same values as parameters vary, or that the varieties defined by the parameterized $CF$s are the same. As this last interpretation 
is the broadest, we show it is also false.

Consider $CF_{ac|df }$ and $CF_{ac|ef}$ for the two networks. For the left network
\begin{align*}
CF_{ac|df}&=\frac 13 \ell y_1( \gamma+(1-\gamma)x_2),\\
CF_{ac|ef}&= \frac 13 \ell  ( \gamma+(1-\gamma)x_2)(\delta+(1-\delta)y_1),
\end{align*}
which are not equal for generic parameter values. However, for the right network
$$CF_{ac|df }= CF_{ac|ef}  = \frac 13 \ell y_1( \gamma+(1-\gamma)x_2)$$
since $d,e$ can be exchanged in the graph because they form a cherry. Thus the variety for the right network has a defining equation that does not hold for the left.

This example illustrates an important point that in passing to smaller induced networks a cycle other than the one of immediate focus may have an impact on  $CF$s. In particular, a more detailed treatment of networks with multiple cycles, such as our work in \Cref{ssec:4top}, seems to be a necessary part of addressing the full identifiability question.

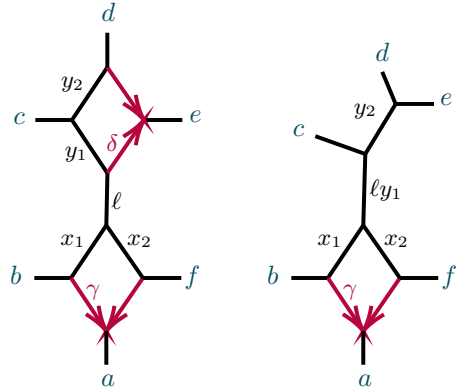
\begin{figure}
	\centering
		\tikzset{every picture/.style={line width=0.75pt}} 

\begin{tikzpicture}[x=0.75pt,y=0.75pt,yscale=-0.9,xscale=0.9]

\draw [line width=1.5]    (87.05,167.67) -- (107.55,167.67) ;
\draw [line width=1.5]    (107.55,167.67) -- (127.38,138.13) ;
\draw [color={rgb, 255:red, 182; green, 5; blue, 60 }  ,draw opacity=1 ][line width=1.5]    (107.55,167.67) -- (125.72,194.93) ;
\draw [shift={(127.38,197.42)}, rotate = 236.31] [color={rgb, 255:red, 182; green, 5; blue, 60 }  ,draw opacity=1 ][line width=1.5]    (14.21,-4.28) .. controls (9.04,-1.82) and (4.3,-0.39) .. (0,0) .. controls (4.3,0.39) and (9.04,1.82) .. (14.21,4.28)   ;
\draw [color={rgb, 255:red, 182; green, 5; blue, 60 }  ,draw opacity=1 ][line width=1.5]    (147.9,167.67) -- (129.09,194.95) ;
\draw [shift={(127.38,197.42)}, rotate = 304.59] [color={rgb, 255:red, 182; green, 5; blue, 60 }  ,draw opacity=1 ][line width=1.5]    (14.21,-4.28) .. controls (9.04,-1.82) and (4.3,-0.39) .. (0,0) .. controls (4.3,0.39) and (9.04,1.82) .. (14.21,4.28)   ;
\draw [line width=1.5]    (127.38,197.42) -- (127.38,216.33) ;
\draw [line width=1.5]    (127.38,138.13) -- (147.9,167.67) ;
\draw [line width=1.5]    (128,108.69) -- (127.38,138.13) ;
\draw [color={rgb, 255:red, 0; green, 0; blue, 0 }  ,draw opacity=1 ][line width=1.5]    (147.9,167.67) -- (169.4,167.67) ;
\draw [color={rgb, 255:red, 0; green, 0; blue, 0 }  ,draw opacity=1 ][line width=1.5]    (108.11,79.07) -- (127.94,49.52) ;
\draw [color={rgb, 255:red, 0; green, 0; blue, 0 }  ,draw opacity=1 ][line width=1.5]    (108.11,79.07) -- (127.94,108.82) ;
\draw [color={rgb, 255:red, 182; green, 5; blue, 60 }  ,draw opacity=1 ][line width=1.5]    (146.75,81.54) -- (127.94,108.82) ;
\draw [shift={(148.46,79.07)}, rotate = 124.59] [color={rgb, 255:red, 182; green, 5; blue, 60 }  ,draw opacity=1 ][line width=1.5]    (14.21,-4.28) .. controls (9.04,-1.82) and (4.3,-0.39) .. (0,0) .. controls (4.3,0.39) and (9.04,1.82) .. (14.21,4.28)   ;
\draw [color={rgb, 255:red, 0; green, 0; blue, 0 }  ,draw opacity=1 ][line width=1.5]    (148.46,79.07) -- (169.96,79.07) ;
\draw [color={rgb, 255:red, 182; green, 5; blue, 60 }  ,draw opacity=1 ][line width=1.5]    (127.94,49.52) -- (146.74,76.6) ;
\draw [shift={(148.46,79.07)}, rotate = 235.22] [color={rgb, 255:red, 182; green, 5; blue, 60 }  ,draw opacity=1 ][line width=1.5]    (14.21,-4.28) .. controls (9.04,-1.82) and (4.3,-0.39) .. (0,0) .. controls (4.3,0.39) and (9.04,1.82) .. (14.21,4.28)   ;
\draw [color={rgb, 255:red, 0; green, 0; blue, 0 }  ,draw opacity=1 ][line width=1.5]    (127.94,30.12) -- (127.94,49.52) ;
\draw [line width=1.5]    (87.61,79.07) -- (108.11,79.07) ;
\draw [line width=1.5]    (231.05,167.67) -- (251.55,167.67) ;
\draw [line width=1.5]    (251.55,167.67) -- (271.38,138.13) ;
\draw [color={rgb, 255:red, 182; green, 5; blue, 60 }  ,draw opacity=1 ][line width=1.5]    (251.55,167.67) -- (269.72,194.93) ;
\draw [shift={(271.38,197.42)}, rotate = 236.31] [color={rgb, 255:red, 182; green, 5; blue, 60 }  ,draw opacity=1 ][line width=1.5]    (14.21,-4.28) .. controls (9.04,-1.82) and (4.3,-0.39) .. (0,0) .. controls (4.3,0.39) and (9.04,1.82) .. (14.21,4.28)   ;
\draw [color={rgb, 255:red, 182; green, 5; blue, 60 }  ,draw opacity=1 ][line width=1.5]    (291.9,167.67) -- (273.09,194.95) ;
\draw [shift={(271.38,197.42)}, rotate = 304.59] [color={rgb, 255:red, 182; green, 5; blue, 60 }  ,draw opacity=1 ][line width=1.5]    (14.21,-4.28) .. controls (9.04,-1.82) and (4.3,-0.39) .. (0,0) .. controls (4.3,0.39) and (9.04,1.82) .. (14.21,4.28)   ;
\draw [line width=1.5]    (271.38,197.42) -- (271.38,216.33) ;
\draw [line width=1.5]    (271.38,138.13) -- (291.9,167.67) ;
\draw [line width=1.5]    (272.5,98.13) -- (271.38,138.13) ;
\draw [color={rgb, 255:red, 0; green, 0; blue, 0 }  ,draw opacity=1 ][line width=1.5]    (291.9,167.67) -- (313.4,167.67) ;
\draw [color={rgb, 255:red, 0; green, 0; blue, 0 }  ,draw opacity=1 ][line width=1.5]    (272.5,98.13) -- (289.5,70.13) ;
\draw [color={rgb, 255:red, 0; green, 0; blue, 0 }  ,draw opacity=1 ][line width=1.5]    (244.5,88.13) -- (272.5,98.13) ;
\draw [color={rgb, 255:red, 0; green, 0; blue, 0 }  ,draw opacity=1 ][line width=1.5]    (289.5,70.07) -- (311,70.07) ;
\draw [color={rgb, 255:red, 0; green, 0; blue, 0 }  ,draw opacity=1 ][line width=1.5]    (282,52.07) -- (289.5,70.07) ;

\draw (172.36,72.14) node [anchor=north west][inner sep=0.75pt]  [font=\normalsize,color={rgb, 255:red, 6; green, 79; blue, 97 }  ,opacity=1 ]  {$e$};
\draw (122.88,220.08) node [anchor=north west][inner sep=0.75pt]  [font=\normalsize,color={rgb, 255:red, 6; green, 79; blue, 97 }  ,opacity=1 ]  {$a$};
\draw (73.68,72.98) node [anchor=north west][inner sep=0.75pt]  [font=\normalsize,color={rgb, 255:red, 6; green, 79; blue, 97 }  ,opacity=1 ]  {$c$};
\draw (114.5,169.89) node [anchor=north west][inner sep=0.75pt]  [font=\small,color={rgb, 255:red, 182; green, 5; blue, 60 }  ,opacity=1 ]  {$\gamma $};
\draw (100.05,141.67) node [anchor=north west][inner sep=0.75pt]  [font=\small]  {$x_{1}$};
\draw (169.48,159.08) node [anchor=north west][inner sep=0.75pt]  [font=\normalsize,color={rgb, 255:red, 6; green, 79; blue, 97 }  ,opacity=1 ]  {$f$};
\draw (137.38,141.67) node [anchor=north west][inner sep=0.75pt]  [font=\small]  {$x_{2}$};
\draw (71.79,159.14) node [anchor=north west][inner sep=0.75pt]  [font=\normalsize,color={rgb, 255:red, 6; green, 79; blue, 97 }  ,opacity=1 ]  {$b$};
\draw (128.68,118.32) node [anchor=north west][inner sep=0.75pt]  [font=\small,color={rgb, 255:red, 0; green, 0; blue, 0 }  ,opacity=1 ]  {$\ell $};
\draw (122.98,11.84) node [anchor=north west][inner sep=0.75pt]  [font=\normalsize,color={rgb, 255:red, 6; green, 79; blue, 97 }  ,opacity=1 ]  {$d$};
\draw (125.84,85.39) node [anchor=north west][inner sep=0.75pt]  [font=\small,color={rgb, 255:red, 182; green, 5; blue, 60 }  ,opacity=1 ]  {$\delta $};
\draw (102.39,92.67) node [anchor=north west][inner sep=0.75pt]  [font=\small]  {$y_{1}$};
\draw (100.39,53.17) node [anchor=north west][inner sep=0.75pt]  [font=\small]  {$y_{2}$};
\draw (312.86,62.64) node [anchor=north west][inner sep=0.75pt]  [font=\normalsize,color={rgb, 255:red, 6; green, 79; blue, 97 }  ,opacity=1 ]  {$e$};
\draw (266.88,220.08) node [anchor=north west][inner sep=0.75pt]  [font=\normalsize,color={rgb, 255:red, 6; green, 79; blue, 97 }  ,opacity=1 ]  {$a$};
\draw (230.18,79.98) node [anchor=north west][inner sep=0.75pt]  [font=\normalsize,color={rgb, 255:red, 6; green, 79; blue, 97 }  ,opacity=1 ]  {$c$};
\draw (258.5,169.89) node [anchor=north west][inner sep=0.75pt]  [font=\small,color={rgb, 255:red, 182; green, 5; blue, 60 }  ,opacity=1 ]  {$\gamma $};
\draw (244.05,141.67) node [anchor=north west][inner sep=0.75pt]  [font=\small]  {$x_{1}$};
\draw (313.48,159.08) node [anchor=north west][inner sep=0.75pt]  [font=\normalsize,color={rgb, 255:red, 6; green, 79; blue, 97 }  ,opacity=1 ]  {$f$};
\draw (281.38,141.67) node [anchor=north west][inner sep=0.75pt]  [font=\small]  {$x_{2}$};
\draw (215.79,159.14) node [anchor=north west][inner sep=0.75pt]  [font=\normalsize,color={rgb, 255:red, 6; green, 79; blue, 97 }  ,opacity=1 ]  {$b$};
\draw (273.18,110.82) node [anchor=north west][inner sep=0.75pt]  [font=\small,color={rgb, 255:red, 0; green, 0; blue, 0 }  ,opacity=1 ]  {$\ell y_{1}$};
\draw (276.48,33.84) node [anchor=north west][inner sep=0.75pt]  [font=\normalsize,color={rgb, 255:red, 6; green, 79; blue, 97 }  ,opacity=1 ]  {$d$};
\draw (264.39,68.17) node [anchor=north west][inner sep=0.75pt]  [font=\small]  {$y_{2}$};

\end{tikzpicture}

\vspace*{-1cm}
		\caption{(L) A network with a 4-cycle of interest at bottom, and (R) a network with a single cycle obtained by removing a hybrid edge from each pair not in the cycle of interest. }\label{fig:counterex}
 	\end{figure}

\section{Propositions from Computations}\label{app:props}

\subsection{Ideals of $CF$ invariants}\label{sec:ideals}

Several observations simplify investigating the ideals $\mathcal I(N)$ described in \Cref{sec:NMSC}. First, since the entries of any vector concordance factor add to 1, 
at most two of the 3 entries of a $\overline{CF}$ are linearly independent. We consider the polynomials expressing that $\overline{CF}$ 
entries add to 1 to be \emph{trivial invariants}, as they hold for all network topologies. If only 4 taxa $a,b,c,d$ are considered, then the  trivial invariant is
$$ CF_{ab|cd} +CF_{ac|bd} + CF_{ad|bc}-1,$$
while for $n$ taxa there are $n\choose {4}$ such invariants, one for each possible quartet.

Second, if there is a cut edge on an induced quartet network separating two of the taxa, $a,b$ from the others $c,d$, then the $CF$s 
associated with the discordant topologies $ac|bd$ and $ad|bc$ must be equal. This was shown in the level-1 case in \cite{Banos2019} 
and for general networks in \cite{AllmanEtAl2022}. The invariant expressing this is
$$CF_{ac|bd}-CF_{ad|bc}.$$ 
We call such polynomials  \emph{cut invariants}. 

Third, when a network has one or more cherries (2 taxa joined by pendant edges to a common node), we will label the taxa in each cherry 
by the same subscripted letter, such as $a_1,a_2$. (See for instance \cref{fig:5_3cycle,fig:222nets}.) In $CF$s involving such taxa, 
we may then suppress the subscripts, since under the NMSC model on a suitably rooted version of the semidirected network 
the taxa $a_1$ and $a_2$ are exchangeable, giving the same $CF$ values when they are interchanged. Thus, for example, on a 
network with a  cherry formed by $a_1,a_2$,
$$CF_{ab|cd}=CF_{a _1b|cd}=CF_{a _2b|cd},$$
and
$$CF_{ab|ac}=CF_{a_1b|a_2c }=CF_{a_2b|a_1c}.$$
We call these \emph{exchange invariants}, but note that some of these these are also cut invariants. For computations, our notational simplification of surpressing subscripts allows for their omission.

\begin{defi}
For a topological phylogenetic network $N,$ the ideal generated by all trivial, cut, and exchange invariants of $N$ is denoted $\mathcal J( N)$.
\end{defi}

Note that $\mathcal J( N)$ depends on the topology  of the  network $N$ due to the cut and exchange invariants. However, it only depends on the undirected topology. Also, while $\mathcal J( N)\subseteq \mathcal I(N)$ these ideals are typically not equal. Since the generators of $\mathcal J(N)$ are linear, and simple to enumerate, we use them
for removing some $CF$s from calculations of $\mathcal I(N)$, and for stating results on ideal generators more succinctly. 

\subsection{Propositions from 3-cycle computations}

\begin{proposition}\label{prop:T5}
Let $T_5=((a_1,a_2),c,(b_1,b_2))$ be a $5$-taxon tree, as shown in \cref{fig:5_3cycle}(L). Then 
\begin{description}
\item[(a)]	The quartet concordance factors of $T_5$ are		
{\tiny
\begin{align*}
		 		\overline{CF}_{aabc} &= \left(1 -\frac{2}{3}\ell_1,\ \frac{1}{3}\ell_1,\ \frac{1}{3}\ell_1 \right), \\
		 		\overline{CF}_{acbb} &= \left(1 -\frac{2}{3}\ell_2,\ \frac{1}{3}\ell_2,\ \frac{1}{3}\ell_2 \right), \\
		 		\overline{CF}_{aabb} &= \left(1 -\frac{2}{3}\ell_2\ell_1,\ \frac{1}{3}\ell_2\ell_1,\  \frac{1}{3}\ell_2\ell_1 \right).
\end{align*}
}
\item[(b)] The ideal defining $\mathcal{V}_{T_5}\subset\mathbb{C}^{9}$ is
	 			$\mathcal I(T_5) = \mathcal J(T_5)  +  \langle f_{abc} \rangle,$ where 			
		{\tiny \begin{equation*}
		 f_{abc}= 3CF_{ab|ac}CF_{ba|bc} - CF_{ab|ab}.\end{equation*}}

        The variety $\mathcal{V}_{T_5}$ has dimension $2$.
\item[(c)] The numerical parameters $\ell_1$ and $\ell_2$ can be determined from the quartet $CF$s  by
	 		{\tiny \begin{equation*}
	 			\ell_1 = 3CF_{ab|ac}, \qquad \ell_2 = 3CF_{ab|cb}.
	 		\end{equation*}}
\end{description}
\end{proposition}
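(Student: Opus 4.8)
The plan is to establish part (a) by direct coalescent computation and then to read off parts (c) and (b) from those formulas, deferring only one step to symbolic computation. For part (a), I would compute each of the three vector concordance factors from the NMSC on $T_5$ by passing to the relevant induced quartet. Two standard facts do the work: first, for a quartet tree with a single internal edge of length $t$ coalescent units, the concordant topology has probability $1-\tfrac23 e^{-t}$ and each discordant topology has probability $\tfrac13 e^{-t}$; second, under the reparameterization $\ell=e^{-t}$, edge probabilities along a path multiply, since $e^{-(t_1+t_2)}=e^{-t_1}e^{-t_2}$. The induced quartet on $\{a_1,a_2,b,c\}$ displays $a_1a_2|bc$ with internal edge probability $\ell_1$, giving $\overline{CF}_{aabc}$; the quartet on $\{a,c,b_1,b_2\}$ displays $b_1b_2|ac$ with internal edge probability $\ell_2$, giving $\overline{CF}_{acbb}$; and the quartet on $\{a_1,a_2,b_1,b_2\}$ displays $a_1a_2|b_1b_2$, whose internal path consists of the edges $\ell_1,\ell_2$ in series (the edge to $c$ being irrelevant), so its internal edge probability is $\ell_1\ell_2$, giving $\overline{CF}_{aabb}$. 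That no pendant length appears reflects the single-sample assumption, and exchangeability of the cherry lineages justifies suppressing the subscripts on $a$ and $b$.

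Part (c) is then immediate: $CF_{ab|ac}=\tfrac13\ell_1$ and $CF_{ab|cb}=\tfrac13\ell_2$ are discordant entries of $\overline{CF}_{aabc}$ and $\overline{CF}_{acbb}$, so $\ell_1=3CF_{ab|ac}$ and $\ell_2=3CF_{ab|cb}$. This also settles the dimension claim: the parameterization $(\ell_1,\ell_2)\mapsto CF$ is generically injective because $\ell_1,\ell_2$ are recovered rationally from the $CF$s, so the image variety $\mathcal V_{T_5}$ is the generically one-to-one image of a two-dimensional parameter space and hence has dimension exactly $2$.

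For part (b), I would first check the containment $\mathcal J(T_5)+\langle f_{abc}\rangle\subseteq\mathcal I(T_5)$. The trivial, cut, and exchange generators of $\mathcal J(T_5)$ vanish on any parameterization of the given topology, and substituting the formulas from part (a) gives $f_{abc}=3\cdot\tfrac13\ell_1\cdot\tfrac13\ell_2-\tfrac13\ell_1\ell_2=0$, so $f_{abc}\in\mathcal I(T_5)$ as well. The reverse containment, that these generators already cut out the entire ideal, is the computational heart of the statement: I would reduce the ambient polynomial ring modulo the linear generators of $\mathcal J(T_5)$ down to a small set of independent $CF$ coordinates, and then verify by Gröbner basis and elimination (as carried out in \texttt{Singular} and \texttt{Macaulay2}) that the resulting ideal is generated by the image of $f_{abc}$, with associated variety of dimension $2$ consistent with part (c). I expect this last equality of ideals to be the only nonroutine step; everything preceding it is direct substitution into the closed-form $CF$s, whereas confirming that no further invariants are needed genuinely requires the symbolic computation rather than a hand argument.
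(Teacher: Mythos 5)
Your proposal is correct and matches the paper's approach: the paper presents this proposition as a computational result (\Cref{app:props}), with the $CF$ formulas in (a) being the standard MSC quartet probabilities you derive, parts (c) and the dimension following from them, and the ideal equality in (b) established by exactly the Gr\"obner-basis implicitization in {\tt Singular}/{\tt Macaulay2} that you defer to. Your explicit hand derivation of (a), (c), the easy containment $\mathcal J(T_5)+\langle f_{abc}\rangle\subseteq\mathcal I(T_5)$, and the dimension-via-injectivity argument simply makes transparent which parts are routine, while correctly identifying the reverse containment as the one step that genuinely requires the symbolic computation.
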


\begin{proposition}\label{prop:N5-3-1}
 			Let $N_{5-3_1}$ be a $5$-taxon level-$1$ network with a central $3_1$-cycle, as in \cref{fig:5_3cycle}(C). Then 
\begin{enumerate}
\item[(a)] The quartet concordance factors of $N_{5-3_1}$ are
 {\tiny				
 \begin{align*}
 			\overline{CF}_{aabc} &= \left(1 - \frac{2}{3}\ell_1(\gamma + x-\gamma x),\ \frac{1}{3}\ell_1(\gamma + x-\gamma x),\ \frac{1}{3}\ell_1(\gamma + x-\gamma x) \right), \\
 			\overline{CF}_{acbb} &= \left( 1  - \frac{2}{3}\ell_2(1-\gamma  + \gamma x),\ \frac{1}{3}\ell_2(1-\gamma + \gamma x),\ \frac{1}{3}\ell_2(1-\gamma + \gamma x)  \right), \\
 			\overline{CF}_{aabb} &= \left(1 -\frac{2}{3}\ell_2\ell_1x,\ \frac{1}{3}\ell_2\ell_1x,\ \frac{1}{3}\ell_2\ell_1x \right).
\end{align*}
 }
\item[(b)] The ideal  defining $\mathcal{V}({N_{5-3_1}})\subset\mathbb{C}^{9}$ is $\mathcal I(N_{5-3_1})=\mathcal J(N_{5-3_1})=\mathcal J(T_5)$, The variety $\mathcal{V}_{N_{5-3_1}}$ has dimension $3$.

\item[(c)] 			The numerical parameters $\gamma$, $x$, $\ell_1$ and $\ell_2$ cannot be determined from the quartet $CF$s. They can be determined with one degree of freedom, for instance if $\ell_1$ is known:
{\tiny
 			\begin{equation*}
 			\begin{gathered}
 				\ell_2 = \frac{3\ell_1CF_{ab|cb} - 3CF_{ab|ab}}{\ell_1 - 3CF_{ab|ac}}, \quad		
 				x = \frac{\ell_1CF_{ab|ab} - 3CF_{ab|ab}CF_{ab|ac}}{\ell_1^2CF_{ab|cb} - \ell_1CF_{ab|ab}}, \\
 				\gamma = \frac{3\ell_1CF_{ab|cb}CF_{ab|ac} - \ell_1CF_{ab|ab}}{\ell_1^2CF_{ab|cb} - 2\ell_1CF_{ab|ab}  +  3CF_{ab|ab}CF_{ab|ac}}.
 			\end{gathered}
 			\end{equation*}	
}
\end{enumerate}	
\end{proposition}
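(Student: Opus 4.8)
The plan is to prove the three parts in order, with part~(a) --- the explicit coalescent computation of the quartet $CF$s --- supplying the data on which (b) and (c) rest. For part~(a), I would compute each vector $CF$ by passing to the induced quartet network (as justified by \cref{lem:root}) and tracing the four sampled lineages backward in time under the NMSC. Because $a_1,a_2$ and $b_1,b_2$ form cherries and only $c$ descends from the hybrid node, every induced quartet network is either a tree or a $3_1$-cycle network on which the concordant topology forms deterministically unless the two relevant internal lineages fail to coalesce before reaching a node where three or four lineages are present and hence exchangeable. This produces the characteristic shape $\overline{CF}=(1-\tfrac23 q,\tfrac13 q,\tfrac13 q)$, where $q$ is the non-coalescence probability of the relevant internal structure. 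The only delicate point is evaluating $q$ across the 3-cycle: conditioning on which hybrid edge the ancestral lineage traverses yields the convex combinations $\gamma+(1-\gamma)x$ and $(1-\gamma)+\gamma x$ appearing in the formulas, while for $\overline{CF}_{aabb}$ both the $a$- and $b$-lineages must traverse the cycle, giving the product $\ell_1\ell_2 x$. Since at most one lineage ever enters a hybrid edge, no coalescence can occur there, which is precisely why $h_1,h_2$ are absent from every formula.

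For part~(b), I would introduce the effective coordinates $u=\ell_1(\gamma+(1-\gamma)x)$, $v=\ell_2((1-\gamma)+\gamma x)$, and $w=\ell_1\ell_2 x$, so that the three nontrivial $CF$s are $CF_{ab|ac}=u/3$, $CF_{ab|cb}=v/3$, and $CF_{ab|ab}=w/3$. The variety is then the Zariski closure of the image of $(\gamma,x,\ell_1,\ell_2)\mapsto(u,v,w)$, together with the linear relations fixing the remaining coordinates. Computing the Jacobian of this map at a generic point and checking it has rank $3$ shows the image is dense in $\mathbb{C}^3$, giving $\dim\mathcal V_{N_{5-3_1}}=3$. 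Consequently the only possible invariants beyond those in $\mathcal J$ would be polynomial relations among $u,v,w$; as these are algebraically independent there are none, so $\mathcal I(N_{5-3_1})=\mathcal J(N_{5-3_1})$, and since $\mathcal J$ depends only on the undirected topology, $\mathcal J(N_{5-3_1})=\mathcal J(T_5)$. To separate this from $T_5$ I would record $f_{abc}=\tfrac13(uv-w)=\tfrac13\ell_1\ell_2\,\gamma(1-\gamma)(1-x)^2$, which is nonzero for generic parameters, confirming $f_{abc}\notin\mathcal I(N_{5-3_1})$ (consistent with \cref{prop:5tax3cyc,thm:5tax3cycle}). These ideal and dimension assertions are the ones verified symbolically in \texttt{Singular}/\texttt{Macaulay2}; the $(u,v,w)$ description merely makes the outcome transparent.

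For part~(c), the same coordinates show that the $CF$s determine only the point $(u,v,w)$ of the $3$-dimensional image, while the parameter space is $4$-dimensional, so the fibers are one-dimensional and not all of $\gamma,x,\ell_1,\ell_2$ can be identified. Fixing $\ell_1$ generically cuts the fiber to a point, and I would recover the remaining parameters by elimination: writing $u=\ell_1 s$ with $s=\gamma+(1-\gamma)x$ and using the relation $s+t=1+x$ for $t=(1-\gamma)+\gamma x$, one solves $v=\ell_2 t$ and $w=\ell_1\ell_2 x$ to obtain $\ell_2=(\ell_1 v-w)/(\ell_1-u)$, and then $x$ and $\gamma$ in turn, reproducing the stated rational formulas after substituting $u=3CF_{ab|ac}$, $v=3CF_{ab|cb}$, and $w=3CF_{ab|ab}$.

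The genuinely substantive step is part~(a): once the $CF$ formulas are in hand, (b) reduces to a rank-and-elimination check (carried out by computer in the paper's framework) and (c) to elementary algebra. The main obstacle is therefore the careful coalescent bookkeeping for $q$ on the $3_1$-cycle --- in particular confirming that the hybrid-edge probabilities genuinely drop out, and that the two-lineage traversal in $\overline{CF}_{aabb}$ contributes exactly the product $\ell_1\ell_2 x$ rather than a more complicated expression.
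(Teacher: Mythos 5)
Your proposal is correct, and for parts (b) and (c) it takes a genuinely different route from the paper. The paper treats this as a ``Proposition from Computations'': the $CF$ formulas are derived from the coalescent (as you do in part (a)), but the ideal membership, the dimension count, and the parameter-recovery formulas are obtained and verified by Gr\"obner-basis computations in {\tt Singular}/{\tt Macaulay2}, with no human-readable argument recorded. You instead reparameterize by the effective coordinates $u=\ell_1(\gamma+(1-\gamma)x)$, $v=\ell_2((1-\gamma)+\gamma x)$, $w=\ell_1\ell_2 x$, check that the Jacobian of $(\gamma,x,\ell_1,\ell_2)\mapsto(u,v,w)$ has rank $3$ (so the map is dominant onto $\mathbb{C}^3$, giving dimension $3$ and no invariants beyond $\mathcal J$), and then recover $\ell_2,x,\gamma$ by elimination; one can verify directly that $\ell_1 v-w=\ell_1\ell_2(1-\gamma)(1-x)$ and $\ell_1-u=\ell_1(1-\gamma)(1-x)$, so your $\ell_2=(\ell_1v-w)/(\ell_1-u)$ reproduces the stated formula exactly, and similarly for $x$ and $\gamma$, using $uv-w=\ell_1\ell_2\gamma(1-\gamma)(1-x)^2$. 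What your approach buys is transparency and hand-checkability; what the paper's buys is a mechanical procedure that scales to networks (e.g.\ $N_{5\text{-}3_2}$, $N_{6\text{-}3_2}$, the 4-cycle networks) where no such clean closed-form reparameterization presents itself.

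Two small points should be tightened. First, your justification that $\mathcal J(N_{5\text{-}3_1})=\mathcal J(T_5)$ ``since $\mathcal J$ depends only on the undirected topology'' is not quite right as stated: the undirected topologies of $N_{5\text{-}3_1}$ and $T_5$ are different (one contains a cycle). The correct reason is that the two networks induce identical trivial, cut, and exchange invariants --- for each of the three quartet classes both networks have a cut edge separating the same pairs and the same cherries --- so the two ideals have literally the same generating set. Second, the dimension count alone shows only that the parameter \emph{vector} is unidentifiable (``not all'' parameters), whereas the proposition asserts that none of $\gamma,x,\ell_1,\ell_2$ can be determined; this follows from your formulas once you observe that each of $\ell_2$, $x$, $\gamma$ is a non-constant function of $\ell_1$ along a generic fiber (e.g.\ $\partial\ell_2/\partial\ell_1=-(uv-w)/(\ell_1-u)^2\neq 0$ since $uv-w=3f_{abc}\neq 0$ for generic parameters on this network), so all four coordinates vary along the one-dimensional fiber. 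With these two remarks added, your argument is a complete and self-contained proof of the proposition.
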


\begin{proposition}\label{prop:N5-3-2}
	 		Let $N_{5-3_2}$ be a $5$-taxon level-$1$ semidirected network with a central $3_2$-cycle, as in \cref{fig:5_3cycle}(R). Then, 
\begin{enumerate}
\item[(a)]			
			The quartet concordance factors of $N_{5-3_2}$ are
{\tiny
			\begin{align*}
			\overline{CF}_{aabc} =& \left( 1-\frac{2}{3}\ell_1u,\ \frac{1}{3}\ell_1u,\ \frac{1}{3}\ell_1u \right), \text{ with } u = \gamma^2h_1 + \gamma(1-\gamma)(3-x)+ (1-\gamma)^2h_2, \\
			\overline{CF}_{acbb} =& \left(1-\frac{2}{3}\ell_2v,\ \frac{1}{3}\ell_2v,\ \frac{1}{3}\ell_2v \right), \text{ with } v=\gamma + (1-\gamma) x ,\\
			\overline{CF}_{aabb} =& \left( 1-\frac{2}{3}\ell_1\ell_2w,\ \frac{1}{3}\ell_1\ell_2w,\ \frac{1}{3}\ell_1\ell_2 w \right), \text{ with } w = \gamma^2h_1 + 2\gamma(1-\gamma)+ (1-\gamma)^2 h_2x .
			\end{align*}
}	 		
\item[(b)]
	 		The ideal defining $\mathcal{V}_{N_{5-3_2}}\subset\mathbb{C}^{9}$ is $\mathcal I(N_{5-3_2})=\mathcal J(N_{5-3_2})=\mathcal J(T_5)$. The variety $\mathcal{V}_{N_{5-3_2}}$ has dimension $3$.

	\item[(c)] The $6$ numerical parameters $\gamma$, $h_1$, $h_2$, $x$, $\ell_1$ and $\ell_2$ cannot be determined from the quartet $CF$s. They can be determined with three degrees of freedom. If $\gamma$, $\ell_1$ and $\ell_2$ are known then:
	{\tiny
	\begin{equation*}
		\begin{gathered}
			x = \frac{\gamma\ell_2 - 3CF_{ab|bc}}{\gamma\ell_2 - \ell_2}, \quad		
			h_2 = \frac{\gamma\ell_1\ell_2 - 3\gamma\ell_1CF_{ab|bc} - 3\ell_2CF_{ab|ac} + 3CF_{ab|ab}}{\ell_1\left(\ell_2 - 3CF_{ab|bc}\right)(\gamma - 1)}, \\
			h_1 = \frac 1{\gamma^2\ell_1\ell_2\left (\ell_2-3CF_{ab|bc}\right )}		
			\Big[ 3\left(\gamma\ell_1\left (-3CF_{ab|bc}-\gamma\ell_2+3\ell_2\right )-3\ell_2CF_{ab|ac}\right)CF_{ab|bc} + 3\gamma\ell_2^2CF_{ab|ac} +\\
			 \ \hfill 3\ell_2CF_{ab|ab}(1-\gamma) - \gamma\ell_1\ell_2^2(2-\gamma) \Big].
		\end{gathered}
		\end{equation*}	
	}
\end{enumerate}
\end{proposition}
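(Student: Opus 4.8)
For part (a), I would compute the concordance factors directly from the NMSC by tracing the four sampled gene lineages backward through a rooted version of $N_{5-3_2}$. By the trivial, cut, and exchange invariants, the only CFs not already determined are those in which both $a$-lineages together with at least one other lineage must traverse the cycle, so it suffices to obtain $\overline{CF}_{aabc}$, $\overline{CF}_{acbb}$, and $\overline{CF}_{aabb}$. For each I would condition on the routing at the hybrid node (each $a$-lineage independently takes the edge with probability $h_1$-side with probability $\gamma$ and the $h_2$-side with probability $1-\gamma$), on whether the two $a$-lineages coalesce within a hybrid edge (possible only when both chose the same parent, with non-coalescence probability $h_i$), and on non-coalescence across the remaining edges governed by $\ell_1,\ell_2,x$. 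Summing over these histories produces the polynomials $u,v,w$, with the $\tfrac13$ weights and the $1-\tfrac23(\cdot)$ pattern arising from exchangeability once three or four lineages reach a common node uncoalesced. I would cross-check by specializing to the degenerate cycle $x=h_1=h_2=1$, where $u=v=w=1$ and the formulas must collapse to those of $T_5$ in \cref{prop:T5}, and verify all expressions symbolically in \texttt{Singular}/\texttt{Macaulay2}.

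For part (b), I would use the formulas of (a) to define the parameterization $\mathbb{C}^6\to\mathbb{C}^9$ and compute the vanishing ideal of its Zariski-closed image by Gr\"obner-basis elimination. One first checks directly that every generator of $\mathcal J(N_{5-3_2})$ vanishes on the image, giving $\mathcal J(N_{5-3_2})\subseteq\mathcal I(N_{5-3_2})$; the identity $\mathcal J(N_{5-3_2})=\mathcal J(T_5)$ is immediate since $\mathcal J$ depends only on the undirected topology, which $N_{5-3_2}$ and $T_5$ share. The substance is the reverse inclusion. Rather than reading it off the computed generators alone, I would establish it by showing that $\mathcal J(T_5)$ is prime of dimension $3$ and that the image itself has dimension $3$ (via the rank of the Jacobian of the parameterization at a generic point); since $\mathcal V(\mathcal I)\subseteq\mathcal V(\mathcal J)$ are then irreducible of equal dimension they coincide, and primality forces $\mathcal I=\mathcal J$. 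In particular this records that $f_{abc}\notin\mathcal I(N_{5-3_2})$, in contrast to $T_5$.

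For part (c), non-identifiability is an immediate consequence of the dimension count in (b): the map $\mathbb{C}^6\to\mathcal V_{N_{5-3_2}}$ has $3$-dimensional image, so its generic fibers have dimension $6-3=3$, and the six parameters cannot be recovered from the CFs. To exhibit the three degrees of freedom explicitly, I would fix $\gamma,\ell_1,\ell_2$ and invert the system given by the informative scalar CFs: solve $v=\gamma+(1-\gamma)x$ for $x$ from $CF_{ab|bc}$, then recover $h_2$ from $CF_{ab|ac}$, and finally $h_1$ from $CF_{ab|ab}$. Because each of $v,u,w$ is affine in the unknown it solves for, back-substitution yields the displayed rational formulas, which I would confirm by composing them with the parameterization.

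The main obstacle I anticipate is the reverse inclusion in (b): certifying that the eliminated ideal is \emph{exactly} the linear ideal $\mathcal J$, with no surviving higher-degree invariant. A raw Gr\"obner computation produces candidate generators, but rigorously ruling out additional relations — equivalently, confirming the genuine absence of $f_{abc}$ and every other nonlinear invariant — is precisely what separates the $3_2$-cycle network from the tree $T_5$ and underlies the algebraic detectability of $3$-cycles elsewhere in the paper. I expect the cleanest route is the primality-plus-dimension argument above, which converts a potentially delicate ideal-membership question into the checkable facts that $\mathcal J(T_5)$ is prime and that the parameterization has the predicted generic rank.
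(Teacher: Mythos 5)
Your proposal is correct, and at bottom it follows the same route as the paper: \cref{prop:N5-3-2} is presented in \cref{app:props} precisely as a result of symbolic computation (coalescent-derived $CF$ formulas, ideal and dimension computations verified in \texttt{Singular}/\texttt{Macaulay2}, and explicit rational inversion of the parameterization), which is what you propose for (a) and (c). Where you genuinely depart is the certification of part (b): rather than reading $\mathcal I(N_{5-3_2})=\mathcal J(T_5)$ off a Gr\"obner elimination, you deduce it from three checkable facts --- the linear invariants vanish on the image of the parameterization; $\mathcal J(T_5)$ is the ideal of an affine-linear subspace of dimension $3$ in $\mathbb C^9$ (nine coordinates, minus three trivial and three cut invariants), hence prime; and the parameterization has generic Jacobian rank $3$. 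The closure of the image is then an irreducible $3$-dimensional subvariety of an irreducible $3$-dimensional variety, so the two coincide, and primality together with the Nullstellensatz upgrades equality of varieties to equality of ideals. This is a legitimate and arguably more robust certificate: it converts a delicate ideal-membership question into a rank computation, and it makes transparent why no nonlinear invariant such as $f_{abc}$ of \cref{eq:f} can survive on $\mathcal V_{N_{5-3_2}}$. What direct elimination buys instead is that it produces the generators without needing to know the dimension or primality in advance; the two certificates are complementary.

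Two small imprecisions, neither fatal. First, in part (c), $h_2$ is not recoverable from $CF_{ab|ac}$ alone: both $u$ and $w$ involve both $h_1$ and $h_2$, so after solving for $x$ one must solve the $2\times 2$ linear system coming from $CF_{ab|ac}$ and $CF_{ab|ab}$ jointly (the $h_1$-coefficients agree, so subtracting the two equations isolates $h_2$, after which $h_1$ follows). This is visible in the paper's displayed formula for $h_2$, which involves both of those concordance factors. Second, in part (a), the reason the three vector $CF$s suffice is simply exchangeability of $a_1,a_2$ and of $b_1,b_2$ under the NMSC, which collapses the five quartets of the taxon set into the three stated classes; it is not a statement about which lineages must traverse the cycle (indeed $\overline{CF}_{acbb}$ involves only one $a$-lineage).
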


\begin{proposition}\label{prop:T6}
	Let $T_6$ be the $6$-taxon tree with a central node and 3 cherries, as in \cref{fig:222nets} (R). Then, 
\begin{enumerate}
\item[(a)]			
	   The quartet concordance factors of $T_6$ are
{\tiny
	   \begin{align*}
	        \overline{CF}_{aabb}=&\left( 1-\frac{2}{3}\ell_2\ell_1,\frac{1}{3}\ell_2\ell_1,\frac{1}{3}\ell_2\ell_1\right), \\
		\overline{CF}_{aabc}=&\left(1- \frac{2}{3}\ell_1 , \frac{1}{3}\ell_1, \frac{1}{3}\ell_1\right),\\
		\overline{CF}_{abbc}=&\left(\frac{1}{3}\ell_2,\frac{1}{3}\ell_2 ,1-\frac{2}{3}\ell_2\right), \\
		\overline{CF}_{abcc}=&\left(1-\frac{2}{3}\ell_3 , \frac{1}{3}\ell_3,\frac{1}{3}\ell_3 \right), \\
		\overline{CF}_{aacc}=&\left(1-\frac{2}{3}\ell_3\ell_1 ,\frac{1}{3}\ell_3\ell_1,\frac{1}{3}\ell_3\ell_1\right), \\
		\overline{CF}_{bbcc}=&\left(1- \frac{2}{3}\ell_3\ell_2 , \frac{1}{3}\ell_3\ell_2, \frac{1}{3}\ell_3\ell_2 \right).
            \end{align*}
}
 \item[(b)]

	The ideal defining $\mathcal{V}_{T_6}\subset\mathbb C^{18}$ is 
	
	$\mathcal I(T_6) = \mathcal J(T_6)  +  \langle f_{abc}, f_{bca}, f_{cab}\rangle$ where 
		{\tiny \begin{align*}
			&f_{abc} = 3CF_{ab|ac}CF_{ab|bc} - CF_{ab|ab},\\
			&f_{bca} = 3CF_{ab|bc}CF_{ac|bc} - CF_{bc|bc},\\
			&f_{cab} = 3CF_{ab|ac}CF_{ac|bc} - CF_{ac|ac}. \nonumber 
		\end{align*} }
	The variety $\mathcal{V}_{T_6}$ has dimension $3$.

\item[(c)]The numerical parameters  $\ell_1$, $\ell_2$, $\ell_3$ can be determined from the quartet $CF$s by: 
{\tiny \begin{equation*}
	\ell_1 = 3CF_{ab|ac},\ \ \
	\ell_2 = 3CF_{ab|bc},\ \ \ 	
	\ell_3 = 3CF_{ac|bc}.
\end{equation*} }

\end{enumerate}
\end{proposition}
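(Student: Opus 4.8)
The plan is to handle the three parts in sequence, using (a) as the explicit parameterization that drives (b) and (c). For (a), I would first apply \cref{lem:root} to reduce each scalar $CF$ to a computation on the induced quartet tree of $T_6$ on the four relevant taxa. Because $T_6$ is a tree, each induced object is again a metric quartet tree: suppressing the central degree-two node merges the edges along the path joining the two relevant cherry nodes into a single internal edge whose length is their sum, so its edge probability is the \emph{product} of the corresponding $\ell_i$. Applying the standard coalescent quartet formula (concordant probability $1-\tfrac23 e^{-t}$, each discordant $\tfrac13 e^{-t}$) then yields every entry. Running through the six symmetry classes $aabb$, $aabc$, $abbc$, $abcc$, $aacc$, $bbcc$ --- which by the exchangeability of each cherry exhaust all $\binom 64 = 15$ quartets --- reproduces the listed vectors, the only care being to track which of the three coordinates is concordant.

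For (b) I would pass to the normalized coordinates $u_1=3CF_{ab|ac}$, $u_2=3CF_{ab|bc}$, $u_3=3CF_{ac|bc}$ and $w_1=3CF_{bc|bc}$, $w_2=3CF_{ac|ac}$, $w_3=3CF_{ab|ab}$. The trivial, cut, and exchange invariants in $\mathcal J(T_6)$ are precisely the linear relations identifying every other $CF$ coordinate with an affine function of these six, so modulo $\mathcal J(T_6)$ the coordinate ring becomes a polynomial ring in $u_1,u_2,u_3,w_1,w_2,w_3$. In these coordinates part (a) says the parameterization is $(\ell_1,\ell_2,\ell_3)\mapsto(\ell_1,\ell_2,\ell_3,\ell_2\ell_3,\ell_1\ell_3,\ell_1\ell_2)$, i.e.\ $u_i=\ell_i$ and $w_1=u_2u_3$, $w_2=u_1u_3$, $w_3=u_1u_2$. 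The image is thus the \emph{graph} of the morphism $g\colon\mathbb C^3\to\mathbb C^3$, $g(u)=(u_2u_3,u_1u_3,u_1u_2)$, whose vanishing ideal is the prime complete-intersection ideal $\langle w_1-u_2u_3,\,w_2-u_1u_3,\,w_3-u_1u_2\rangle$, of dimension $3$. Since $3f_{bca}=u_2u_3-w_1$, $3f_{cab}=u_1u_3-w_2$, and $3f_{abc}=u_1u_2-w_3$, these three polynomials cut out exactly the graph, giving $\mathcal I(T_6)=\mathcal J(T_6)+\langle f_{abc},f_{bca},f_{cab}\rangle$ and $\dim\mathcal V_{T_6}=3$. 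This graph identification is the conceptual core, and it is confirmed independently by the Gr\"obner-basis elimination in \texttt{Singular}/\texttt{Macaulay2}.

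Part (c) is then immediate: the formulas of (a) give $CF_{ab|ac}=\tfrac13\ell_1$, $CF_{ab|bc}=\tfrac13\ell_2$, $CF_{ac|bc}=\tfrac13\ell_3$, which invert to $\ell_1=3CF_{ab|ac}$, $\ell_2=3CF_{ab|bc}$, $\ell_3=3CF_{ac|bc}$, valid on the whole variety. The step I expect to be the real obstacle is the completeness half of (b): showing the three $f$'s (with $\mathcal J(T_6)$) generate the \emph{entire} vanishing ideal and not merely lie in it. Checking that each $f$ vanishes on the image is a one-line substitution; excluding further higher-degree invariants is exactly what the graph-ideal observation delivers, once the model is recognized as the graph of a polynomial map in the normalized coordinates. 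The genuinely delicate point is therefore combinatorial rather than algebraic --- verifying that $\mathcal J(T_6)$ really does reduce the full $45$-variable $CF$ ring to the six coordinates $u_i,w_j$ with no leftover linear relations --- and this is where I would cross-check most carefully against the symbolic computation.
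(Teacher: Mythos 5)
Your proposal is correct, and it takes a genuinely different route from the paper: \cref{prop:T6} lives in the appendix of ``Propositions from Computations,'' so the paper's own justification is a Gr\"obner-basis computation in {\tt Singular}/{\tt Macaulay2} with no human-readable argument printed, whereas you supply a conceptual proof. For part (a) you reduce via \cref{lem:root} to induced quartet trees, use additivity of internal path lengths (so edge probabilities multiply) and the standard MSC quartet formula; this is sound, and the pigeonhole observation that every $4$-subset contains a full cherry correctly reduces the $15$ quartets to the six listed symmetry classes. For part (b), your key move --- quotienting by the linear ideal $\mathcal J(T_6)$ to reach the six coordinates $u_i, w_j$, and then recognizing the model as the graph of $g(u)=(u_2u_3,\,u_1u_3,\,u_1u_2)$ --- is exactly what makes the completeness half of the claim provable by hand: the graph ideal $\langle w_1-u_2u_3,\ w_2-u_1u_3,\ w_3-u_1u_2\rangle$ is the kernel of the substitution map onto $\mathbb C[u_1,u_2,u_3]$, hence prime and equal to the full vanishing ideal of the (closed) image, and since $3f_{bca}=u_2u_3-w_1$, $3f_{cab}=u_1u_3-w_2$, $3f_{abc}=u_1u_2-w_3$, this gives $\mathcal I(T_6)=\mathcal J(T_6)+\langle f_{abc},f_{bca},f_{cab}\rangle$ and $\dim \mathcal V_{T_6}=3$ with no machine computation. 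Part (c) is immediate from (a) in either treatment. What your approach buys is an explanation of \emph{why} the three $f$'s generate (and a proof that certifies primality and dimension transparently); what the paper's computational approach buys is uniformity, since the same pipeline handles the appendix's other networks ($N_{5\text{-}3_1}$, $N_a$, $N_s$, \dots) whose varieties admit no such clean graph structure, and it independently confirms the one step you rightly single out for care --- that the trivial, cut, and exchange invariants leave exactly the six free coordinates and no further linear relations, a routine but necessary bookkeeping check.
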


\begin{proposition}\label{prop:N6-2-2-2}
		Let $N_{6-3_2}=N_a$ be a $6$-taxon level-$1$ semidirected network with a central $3_2$-cycle and 3 cherries, as shown in \cref{fig:222nets}(L). Then, 
\begin{enumerate}
\item[(a)]			
	   The quartet concordance factors of $N_{6-3_2}$ are
	   {\tiny
	   \begin{align*}
	    \overline{CF}_{aabb}=&( 1-2\alpha,\alpha,\alpha), \text{ with }
	            \alpha= \frac{1}{3}\ell_1\ell_2(\gamma^2h_1 +2\gamma(1-\gamma)+ (1-\gamma)^2h_2x ) ,\\
	\overline{CF}_{aabc}=&(1-2\beta ,\beta ,\beta ), \text{ with }
		\beta = \frac{1}{3}\ell_1(\gamma^2h_1 + \gamma(1-\gamma)(3-x)+ (1-\gamma)^2h_2 ),\\
	\overline{CF}_{abbc}=&(\gamma,\gamma ,1-2\gamma ), \text{ with }
		\gamma= \frac{1}{3}\ell_2(x + \gamma-\gamma x), \\
	\overline{CF}_{abcc}=&(1-2\delta , \delta,\delta ), \text{ with }
		\delta = \frac{1}{3}\ell_3(1-\gamma + \gamma x), \\
	\overline{CF}_{aacc}=&(1-2\epsilon ,\epsilon, \epsilon ), \text{ with }
		\epsilon  = \frac{1}{3}\ell_1\ell_3(\gamma^2h_1x +2\gamma(1-\gamma)+ (1-\gamma)^2h_2), \\
	\overline{CF}_{bbcc}=&(1-2\zeta ,\zeta ,\zeta ), \text{ with }
		\zeta= \frac{1}{3}x\ell_2\ell_3.
	    \end{align*}
	    }	    
 \item[(b)]

	The ideal  defining $\mathcal{V}_{N_{6-3_2}}\subset\mathbb{C}^{18}$ is $\mathcal I(N_{6-3_2})=\mathcal J(N_{6-3_2})=\mathcal J(T_6)$.
	The variety $\mathcal{V}_{N_{6-3_2}}$ has dimension {$6$}.
	
\item[(c)]The 7 numerical parameters $\gamma$, $x$, $\ell_1$, $\ell_2$, $\ell_3$, $h_1$, $h_2$ cannot be determined from the quartet $CF$s. They can be determined with one degree of freedom, for instance if $\ell_3$ is given:
{\tiny
\begin{equation*}
\begin{gathered}
	\gamma  = \frac{(\ell_3-3CF_{ac|bc})(\ell_3CF_{ab|bc}-CF_{bc|bc})}{(\ell_3^2CF_{ab|bc}-2\ell_3CF_{bc|bc} + 3CF_{bc|bc}CF_{ac|bc})}, \quad
	x = \frac{CF_{bc|bc}(\ell_3-3CF_{ac|bc})}{\ell_3(\ell_3CF_{ab|bc}-CF_{bc|bc})}, \\
	\ell_1  = \frac{-g}{(CF_{bc|bc}-3CF_{ab|bc}CF_{ac|bc})(\ell_3-3CF_{ac|bc})} ,\quad
	\ell_2 = \frac{3(\ell_3CF_{ab|bc}-CF_{bc|bc})}{(\ell_3-3CF_{ac|bc})}, \\
	h_1 = \frac{g_{h_1}(CF_{bc|bc}-3CF_{ab|bc}CF_{ac|bc})}{g(\ell_3-3CF_{ac|bc})(\ell_3CF_{ab|bc}-CF_{bc|bc})},\quad
	h_2 = \frac{g_{h_2}(\ell_3-3CF_{ac|bc})}{g\ell_3(CF_{bc|bc}-3CF_{ab|bc}CF_{ac|bc})}. \\
\end{gathered} 
\end{equation*}	

\noindent
where 
\begin{align*}
	g =\ & \ell_3^2CF_{ab|ab} - 3\ell_3^2CF_{ab|ac}CF_{ab|bc} - 3\ell_3CF_{ab|ab}CF_{ac|bc} + 3\ell_3CF_{ac|ac}CF_{ab|bc} \\
	& - 3CF_{bc|bc}CF_{ac|ac} + 9CF_{bc|bc}CF_{ab|ac}CF_{ac|bc}, \\
	g_{h_1} =\ & \ell_3^3CF_{ab|ab} - 6\ell_3^3CF_{ab|ac}CF_{ab|bc} + 6\ell_3^2CF_{bc|bc}CF_{ab|ac} - 3\ell_3^2CF_{ab|ab}CF_{ac|bc} \\
	&+ 6\ell_3^2CF_{ac|ac}CF_{ab|bc} - 9\ell_3CF_{bc|bc}CF_{ac|ac} + 9CF_{bc|bc}CF_{ac|ac}CF_{ac|bc}, \\
	g_{h_2} =\ & 2\ell_3^3CF_{ab|ab}CF_{ab|bc} - 6\ell_3^3CF_{ab|ac}CF_{ab|bc}^2 - 3\ell_3^2CF_{bc|bc}CF_{ab|ab} \\
	& + 12\ell_3^2CF_{bc|bc}CF_{ab|ac}CF_{ab|bc} - 6\ell_3^2CF_{ab|ab}CF_{ab|bc}CF_{ac|bc} + 3\ell_3^2CF_{ac|ac}CF_{ab|bc}^2 \\
	& - 6\ell_3CF_{bc|bc}^2CF_{ab|ac} + 12\ell_3CF_{bc|bc}CF_{ab|ab}CF_{ac|bc} - 6\ell_3CF_{bc|bc}CF_{ac|ac}CF_{ab|bc} \\
	& + 3CF_{bc|bc}^2CF_{ac|ac} - 9CF_{bc|bc}CF_{ab|ab}CF_{ac|bc}^2.
\end{align*}
}
\end{enumerate}
\end{proposition}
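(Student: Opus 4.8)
The plan is to establish (a) by a direct coalescent computation, deduce (b) from (a) via an irreducibility-plus-dimension argument that avoids certifying the reverse ideal containment by elimination alone, and then obtain (c) by solving the resulting polynomial system. First I would compute the six vector concordance factors in (a) directly from the NMSC, tracing the (at most four) relevant gene lineages backward in time and using the reparameterization $\ell=\exp(-t)$ of \Cref{sec:NMSC}, so that each factor $1-\ell$ records a coalescence on the corresponding edge. At the hybrid node, each lineage arriving from the $a$-cherry independently takes the left hybrid edge with probability $\gamma$ and the right with probability $1-\gamma$; conditioning on these choices and on coalescences along the cycle edges $x,h_1,h_2$ and cherry edges $\ell_1,\ell_2,\ell_3$ yields the stated polynomials. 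Since at most two lineages from any single cherry are ever relevant, and three- or four-lineage meetings contribute $1/3$ to each topology by exchangeability, only boundedly many histories arise, so this is finite (if careful) bookkeeping. This is where arithmetic slips are most likely, so I would cross-check each $\overline{CF}$ against the tree limit of \cref{prop:T6} obtained by setting $x=h_1=h_2=0$.

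For (b) I would first prove the equality $\mathcal{J}(T_6)=\mathcal{J}(N_{6-3_2})$. Every $4$-taxon subset of the six taxa contains two taxa from a common cherry, whose separating edge is a cut edge in both networks, and the $3$-cycle creates no additional pair-separating cut edges (its internal edges are never cut); together with the three common cherries this forces the cut and exchange invariants, and hence the generated ideals, to coincide. The forward containment $\mathcal{J}(N_{6-3_2})\subseteq\mathcal{I}(N_{6-3_2})$ is automatic from the general validity of the trivial, cut, and exchange invariants. The crux is the reverse containment. Rather than certifying it by an elimination computation, I would exploit that $\mathcal{J}(T_6)$ is generated by affine-linear forms, so it is prime and its variety $\mathcal{V}(\mathcal{J}(T_6))$ is an irreducible linear subspace. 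Since $\overline{\mathrm{im}\,\phi}$ is irreducible (the closure of the image of an irreducible parameter space) and is contained in $\mathcal{V}(\mathcal{J}(T_6))$, it suffices to show both have dimension $6$: I would verify $\dim\mathcal{V}(\mathcal{J}(T_6))=6$ by counting independent linear generators in the ambient $\mathbb{C}^{18}$, and $\dim\overline{\mathrm{im}\,\phi}\ge 6$ by evaluating the Jacobian rank of the parameterization of (a) at a single rational point in exact arithmetic (as in \cref{ex:NonID}). Containment then bounds the dimension above by $6$, and an irreducible variety contained in another of the same dimension must equal it, giving $\mathcal{I}(N_{6-3_2})=\mathcal{J}(T_6)$.

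Part (c) follows from (b). Non-identifiability is immediate from the dimension count: a $7$-parameter map with $6$-dimensional image has generic fibers that are positive-dimensional curves, along which every parameter—in particular $\gamma$—varies. To produce the explicit reconstruction I would treat $\ell_3$ as the free coordinate and solve the equations of (a) successively: first $\gamma$ and $x$ from the relations among $\overline{CF}_{bbcc}$, $\overline{CF}_{abbc}$, and $\overline{CF}_{abcc}$, then $\ell_1$ and $\ell_2$, and finally $h_1,h_2$ from the remaining equations, arriving at the displayed rational formulas (with the auxiliary polynomials $g,g_{h_1},g_{h_2}$). Each formula is confirmed by back-substitution into the parameterization.

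The main obstacle is the reverse containment in (b); the irreducibility-plus-dimension strategy reduces it to one exact Jacobian rank evaluation, which is the essential computational certificate, while the formula-solving in (c) is routine but algebraically heavy, and the coalescent bookkeeping in (a) is the most error-prone step rather than a conceptual difficulty.
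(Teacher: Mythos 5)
Your overall strategy is sound, and it diverges from the paper in a genuinely interesting way on part (b). The paper treats \cref{prop:N6-2-2-2} as a purely computational result: as explained in \Cref{sec:NMSC}, the $CF$ formulas, the ideal computation, the dimension, and the recovery formulas are all certified by symbolic computation (implicitization) in {\tt Singular} and {\tt Macaulay2}, and no textual proof is given. Your replacement of the elimination step is a legitimate alternative: since $\mathcal J(T_6)$ is generated by affine-linear forms it is prime and cuts out an irreducible affine subspace of $\mathbb C^{18}$ of dimension $18-12=6$ (six trivial invariants plus six constraints equating the two discordant coordinates of each $\overline{CF}$); the Zariski closure of the image of the parameterization is a closed subvariety of this subspace, so a single exact Jacobian-rank-$6$ evaluation at a rational parameter point forces the two varieties to coincide, and primality of $\mathcal J(T_6)$ then gives $\mathcal I(N_{6-3_2})=\mathcal J(T_6)$. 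This buys a much lighter computation (one rank evaluation instead of a Gr\"obner elimination), at the cost of needing the combinatorial identification $\mathcal J(N_{6-3_2})=\mathcal J(T_6)$ up front, which your cherry/cut-edge argument correctly supplies. Your successive-solving scheme for (c) (first $x,\gamma,\ell_2$ from $\overline{CF}_{bbcc},\overline{CF}_{abbc},\overline{CF}_{abcc}$ given $\ell_3$, then $\ell_1,h_1,h_2$ from a system that is linear in $\ell_1\gamma^2h_1$, $\ell_1(1-\gamma)^2h_2$, $\ell_1$) does reproduce the displayed formulas.

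Two corrections. First, your sanity check for (a) uses the wrong degeneration: in the paper's convention $\ell=\exp(-t)\in(0,1]$, a contracted edge has probability $1$, not $0$, so the tree $T_6$ is recovered by setting $x=h_1=h_2=1$ (then $\gamma^2h_1+2\gamma(1-\gamma)+(1-\gamma)^2h_2x=1$, etc., and all six formulas collapse to those of \cref{prop:T6}). Setting $x=h_1=h_2=0$ instead gives, e.g., $CF_{bc|bc}=0\neq\tfrac13\ell_2\ell_3$, so the check as you describe it would spuriously reject correct formulas. Second, in (c) the claim that \emph{every individual} parameter varies along the generic one-dimensional fiber is not immediate from the fiber-dimension count alone; a single coordinate of the parameter vector could in principle be constant on fibers and hence identifiable even though the full vector is not. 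This is repaired by your own formulas: for instance $\ell_2$ is a M\"obius function of $\ell_3$ which is non-constant exactly when $CF_{bc|bc}-3CF_{ab|bc}CF_{ac|bc}\neq 0$, i.e.\ when $f_{bca}\neq 0$, and this holds generically on $N_a$ precisely because $f_{bca}\notin\mathcal I(N_{6-3_2})$, which your part (b) establishes. So the non-identifiability of each parameter should be argued from the explicit formulas together with this genericity observation, not from the dimension count alone.
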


\begin{proposition}\label{prop:factorVar}
Let the parameterized variety $\mathcal V_D$ be as in \cref{prop:mapfactor}.
\begin{enumerate}
\item[(a)]
$\mathcal V_D$ is defined by the 0 ideal, and thus $\mathcal V_D =\mathbb C^6$.
\item[(b)]
From a point $(p_1,p_2,p_3,p_4,p_5,p_6)$ in the image of the parameterization of $\mathcal V_D$, parameters can be recovered with 1 degree of freedom. For instance, if $\ell_3$ is given,
{\tiny
	\begin{align*}
		\gamma &= -\frac{3p_4\ell_3^2+(-3p_1p_4+p_3-3p_4-1)\ell_3+p_1p_3-p_1-p_3+1}{3p_4\ell_3^2+(2p_3-2)\ell_3+p_1p_3-p_1-p_3+1},\\
		x &= -\frac{(p_3-1)(\ell_3+p_1-1)}{\ell_3(3p_4\ell_3+p_3-1)},\\
		\ell_1 &= \big [ (9p_4p_5+3p_6)\ell_3^2+(-3p_2p_4+3p_1p_6+3p_4-3p_6)\ell_3+3p_1p_3p_5-p_2p_3-3p_1p_5-3p_3p_5\\
		&\ \ \ \ \ \ \ \ \ +p_2+p_3+3p_5-1 \big]/{(3p_1p_4-p_3-3p_4+1)(\ell_3+p_1-1)},\\
			\ell_2 &= \frac{3p_4\ell_3-p_3+1}{\ell_3+p_1-1},\\
		h_1 &= \frac{f_1}{g_1}, \quad h_2 = \frac{f_2}{g_2},
	\end{align*}
	where
	\begin{align*}
		f_1 =& -(54p_4^2p_5-18p_4p_6)\ell_3^4+(-54p_1p_4^2p_5+9p_2p_4^2+36p_3p_4p_5-54p_4^2p_5-36p_1p_4p_6-9p_4^2-36p_4p_5\\
		&\ -9p_3p_6+36p_4p_6+9p_6)\ell_3^3 +(9p_1p_2p_4^2+36p_1p_3p_4p_5-18p_1^2p_4p_6+6p_2p_3p_4-9p_1p_4^2-9p_2p_4^2\\
		&\ +6p_3^2p_5-36p_1p_4p_5-36p_3p_4p_5 -21p_1p_3p_6+36p_1p_4p_6-6p_2p_4-6p_3p_4+9p_4^2-12p_3p_5\\
		&\ +36p_4p_5+21p_1p_6+21p_3p_6-18p_4p_6 +6p_4+6p_5-21p_6)\ell_3^2\\
		&\ +(6p_1p_2p_3p_4+6p_1p_3^2p_5-15p_1^2p_3p_6p_2p_3^2-6p_1p_2p_4-6p_1p_3p_4-6p_2p_3p_4-12p_1p_3p_5-6p_3^2p_5\\
		&\ +15p_1^2p_6+30p_1p_3p_6-2p_2p_3-p_3^2+6p_1p_4+6p_2p_4+6p_3p_4+6p_1p_5+12p_3p_5-30p_1p_6\\
		&\ -15p_3p_6+p_2+2p_3-6p_4-6p_5+15p_6-1)\ell_3 -3p_1^3p_3p_6+p_1p_2p_3^2+3p_1^3p_6+9p_1^2p_3p_6\\
		&\ -2p_1p_2p_3-p_1p_3^2-p_2p_3^2-9p_1^2p_6-9p_1p_3p_6 +p_1p_2+2p_1p_3+2p_2p_3+p_3^2+9p_1p_6+3p_3p_6\\
		&\ -p_1-p_2-2p_3-3p_6+1,\\
		g_1 =& \ell_3(3p_1p_4-p_3-3p_4+1) \big((9p_4p_5-3p_6)\ell_3^2+(3p_2p_4-3p_1p_6-3p_4+3p_6)\ell_3\\
		&\ -3p_1p_3p_5+p_2p_3+3p_1p_5+3p_3p_5-p_2-p_3-3p_5+1\big),\\	
		\end{align*}
		\begin{align*}
		f_2 =& +(54p_1p_4^2p_5+18p_3p_4p_5+54p_4^2p_5+9p_1p_4p_6-18p_4p_5-3p_3p_6-9p_4p_6+3p_6)\ell_3^3\\
		&\ +(-18p_1p_2p_4^2-18p_1p_3p_4p_5+9p_1^2p_4p_6+6p_2p_3p_4+18p_1p_4^2+18p_2p_4^2+6p_3^2p_5 \\ 
		&\ +18p_1p_4p_5+18p_3p_4p_5-3p_1p_3p_6-18p_1p_4p_6-6p_2p_4-6p_3p_4-18p_4^2-12p_3p_5\\
		&\ -18p_4p_5+3p_1p_6+3p_3p_6+9p_4p_6+6p_4+6p_5-3p_6)\ell_3^2 \\
		&\ +(-9p_1p_2p_3p_4+3p_2p_3^2+9p_1p_2p_4+9p_1p_3p_4+9p_2p_3p_4-6p_2p_3-3p_3^2-9p_1p_4\\
		&\ -9p_2p_4-9p_3p_4+3p_2+6p_3+9p_4-3)\ell_3 \\
		&\ -3p_1^2p_2p_3p_4+p_1p_2p_3^2+3p_1^2p_2p_4+3p_1^2p_3p_4+6p_1p_2p_3p_4-2p_1p_2p_3-p_1p_3^2-p_2p_3^2\\
		&\ -3p_1^2p_4-6p_1p_2p_4-6p_1p_3p_4-3p_2p_3p_4+p_1p_2+2p_1p_3+2p_2p_3+p_3^2+6p_1p_4+3p_2p_4 \\
		&\ +3p_3p_4-p_1-p_2-2p_3-3p_4+1,\\
		g_2 =& (\ell_3+p_1-1)(3p_4\ell_3+p_3-1)\cdot\big((9p_4p_5-3p_6)\ell_3^2 +(3p_2p_4-3p_1p_6-3p_4+3p_6)\ell_3\\
		&\ -3p_1p_3p_5+p_2p_3+3p_1p_5+3p_3p_5-p_2-p_3-3p_5+1\big).\\
	\end{align*}
}
\end{enumerate}
\end{proposition}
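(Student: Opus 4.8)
The plan is to prove both parts simultaneously by exhibiting an explicit rational right inverse to $\phi$ once the free coordinate $\ell_3$ is held fixed; producing such a section shows at once that the image is Zariski-dense in $\mathbb{C}^6$ (part~(a)) and that the generic fibre is the one-parameter family swept out by $\ell_3$ (part~(b)). The starting observation is that the six coordinate functions split according to which parameters they involve: $p_1,p_3,p_4$ depend only on $\gamma,x,\ell_2$ (and the fixed $\ell_3$), whereas $p_2,p_5,p_6$ bring in $\ell_1,h_1,h_2$.

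First I would invert the ``easy'' block. From $p_3=1-x\ell_2\ell_3$ one reads off $x\ell_2=(1-p_3)/\ell_3$; from $p_1=1-\ell_3\bigl(1-\gamma(1-x)\bigr)$ one gets $\gamma(1-x)=(\ell_3+p_1-1)/\ell_3$; and since $3p_4=\ell_2\bigl(x+\gamma(1-x)\bigr)=\ell_2 x+\ell_2\gamma(1-x)$, substituting the first two relations yields $\ell_2$ as a rational function of $p_1,p_3,p_4,\ell_3$, and then $x$ and $\gamma$ in turn. This is a straightforward triangular elimination producing the displayed formulas for $\gamma,x,\ell_2$.

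The second block is the crux, and the key structural point is that it becomes \emph{linear} after the right change of variables. Writing $H_1=\ell_1 h_1$ and $H_2=\ell_1 h_2$, the three equations coming from $p_5$, $p_6/\ell_2$, and $(1-p_2)/\ell_3$ become linear in the unknowns $(\ell_1,H_1,H_2)$, with coefficients polynomial in the already-recovered $\gamma,x$. The coefficient matrix is
\[
\begin{pmatrix}
\gamma(1-\gamma)(3-x) & \gamma^2 & (1-\gamma)^2\\
2\gamma(1-\gamma) & \gamma^2 & (1-\gamma)^2 x\\
2\gamma(1-\gamma) & \gamma^2 x & (1-\gamma)^2
\end{pmatrix},
\]
whose determinant factors as $-\gamma^3(1-\gamma)^3(1-x)^3$ and is therefore nonzero on the generic locus $\gamma\neq0,1$, $x\neq1$. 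Cramer's rule then recovers $\ell_1$, and hence $h_1=H_1/\ell_1$ and $h_2=H_2/\ell_1$, as rational functions; the unwieldy expressions $f_1/g_1$ and $f_2/g_2$ in the statement are precisely the output of this linear solve after clearing the nested denominators from the first block.

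With the section in hand, part~(a) follows immediately: the formulas define a rational map $\mathbb{C}^6\dashrightarrow\mathbb{C}^7$ splitting $\phi$ over a dense open set, so $\overline{\operatorname{im}\phi}=\mathbb{C}^6$ and its vanishing ideal is $(0)$. Part~(b) follows because fixing $\ell_3$ determines the remaining six parameters uniquely, so the generic fibre is one-dimensional (consistently with $\dim\mathbb{C}^7-\dim\mathbb{C}^6=1$), and reading the inversion in the chosen order gives the explicit formulas. The hard part will be bookkeeping rather than conceptual: the coefficients of the linear system are themselves rational in $p_1,p_3,p_4,\ell_3$, so honestly deriving and --- more importantly --- \emph{verifying} the closed forms requires symbolic computation (substituting the candidate inverse back into the definitions of $p_1,\dots,p_6$ and confirming the identities in \texttt{Singular} or \texttt{Macaulay2}), together with pinning down the exceptional locus where the denominators $g,g_1,g_2$ vanish, so that the ``generic'' qualifier in the statement is justified.
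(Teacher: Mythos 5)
Your proposal is correct, and it takes a genuinely different route from the paper's. The paper offers no human-readable argument for this proposition: it lives in the appendix of ``Propositions from Computations,'' whose contents are stated as outputs of symbolic elimination performed and verified in {\tt Singular} and {\tt Macaulay2}. You instead exploit the structure of the parameterization of \cref{prop:mapfactor}: the block $(p_1,p_3,p_4)$ is triangular in $(\gamma,x,\ell_2)$ once one isolates $x\ell_2=(1-p_3)/\ell_3$ and $\gamma(1-x)=(\ell_3+p_1-1)/\ell_3$, and the block $(p_2,p_5,p_6)$ becomes linear in $(\ell_1,\ell_1h_1,\ell_1h_2)$; your coefficient matrix and its determinant $-\gamma^3(1-\gamma)^3(1-x)^3$ are both correct (I verified the expansion), so Cramer's rule completes the inversion. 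One refinement: you do not actually need machine computation to know $\phi\circ\psi=\mathrm{id}$ generically, since each elimination step is reversible on the locus where the denominators are nonzero (the first block solves its three equations exactly, and the linear solve satisfies the remaining three by construction); the computer is only needed to confirm that the printed closed forms $f_i/g_i$ agree with the output of the solve. What your approach buys is transparency --- the exceptional locus ($\gamma\in\{0,1\}$, $x=1$, $\ell_1=0$, plus vanishing of the block-one denominators) is read off from the factored determinant rather than hidden inside a Gr\"obner computation --- and, notably, error detection: carrying out the back-substitution check you propose shows that the paper's printed formulas for $\gamma$ and $\ell_2$ are inconsistent with its (correct) printed formula for $x$ and with the parameterization itself. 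For instance, $\gamma=x=\ell_2=\ell_3=1/2$ gives $(p_1,p_3,p_4)=(5/8,7/8,1/8)$, on which the displayed $\gamma$-formula evaluates to $29/2$ and the displayed $\ell_2$-formula to $5/2$; your derivation yields the corrected expressions $\ell_2=(3p_4\ell_3+p_3-1)/(\ell_3+p_1-1)$ and $\gamma$ with numerator $3p_4\ell_3^2+(3p_1p_4-3p_4+p_3-1)\ell_3+(p_1-1)(p_3-1)$ over the same denominator and no leading sign, which do return $1/2$. So your route not only proves parts (a) and (b) but corrects sign errors in the statement's formulas.
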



\subsection{Propositions from 4-cycle computations}\label{sec:props4cyc}

\begin{proposition}\label{prop:S}
 		Consider the $5$-taxon level-1 network $N_s$ with a single $4$-cycle shown in \cref{fig:4cyc}(L). Then 
\begin{enumerate}		
   \item[(a)] The quartet concordance factors of $N_s$ are
{\tiny
\begin{align*}
\overline{CF}_{abcd} &= \left(-\frac{1}{3}(2\gamma x_1  +  \gamma x_2 - 3\gamma - x_2), \frac{1}{3}(\gamma x_1 - \gamma x_2  +  x_2), 1  +  \frac{1}{3}(\gamma x_1  +  2\gamma x_2 - 3\gamma - 2x_2) \right),\\
\overline{CF}_{xyzw}&=\left(1-\frac{2}{3}\ell u_{xyzw}, \frac{1}{3} \ell u_{xyzw}, \frac{1}{3} \ell u_{xyzw} \right) \text{ for } xyzw=aabc,\ aabd,\ aacd,
\end{align*}
where 
\begin{align*}
u_{aabc} &= \gamma^2h_2x_2  +  \gamma^2h_1  +  \gamma^2x_1 - \gamma h_2x_2 - 3\gamma^2 - 2\gamma x_1  +  3\gamma  +  x_1, \\
 u_{aabd} &= \gamma^2x_1x_2  +  \gamma^2h_1  +  \gamma^2h_2 - 2\gamma x_1x_2 - 3\gamma^2 - \gamma h_2  +  x_1x_2  +  3\gamma, \\
 u_{aacd}&= \gamma^2h_1x_1  +  \gamma^2h_2  +  \gamma^2x_2 - 3\gamma^2 - \gamma h_2 - 2\gamma x_2  +  3\gamma  +  x_2.
\end{align*}
 }	
\item[(b)] The ideal defining $\mathcal{V}_{N_s}\subset\mathbb{C}^{12}$ is $\mathcal I(N_s) =\mathcal J(N_s)$.
	The variety $\mathcal V_{N_s}$ has dimension $5$.
 	
\item[(c)]
 		The edge probabilities $x_1$, $x_2$, $h_1$, $h_2$ and $\ell$ can be expressed in terms of the quartet $CF$s and $\gamma$:
 {\tiny		
     		\begin{equation*}	
 		\begin{gathered}
	 		x_1 = \frac{\gamma-CF_{ab|cd} + CF_{ac|bd}}{\gamma}, \quad x_2 = \frac{\gamma - CF_{ab|cd} - 2CF_{ac|bd}}{\gamma - 1}, \\ 
			h_1 = \frac{n_{h_1}}{d_h}, \qquad h_2 = \frac{n_{h_2}}{d_h,} \qquad \ell = \frac{n_{\ell}}{d_{\ell}},
 		\end{gathered}
		\end{equation*}
 		where 
		 \begin{align*}
			n_{h_1} = &\Big((CF_{ab|ac} - 2CF_{ab|ad})CF_{ab|cd}  +  (CF_{ab|cd} - 1)CF_{ac|ad} - (CF_{ab|ac}  +  CF_{ab|ad}\\
			 &\ - 2CF_{ac|ad})CF_{ac|bd}  +  CF_{ab|ad}\Big)\gamma^3 -  \Big((CF_{ab|ac}  +  CF_{ab|ad})CF_{ab|cd}^2 - 2(CF_{ab|ac}\\
			 &\  - 2CF_{ab|ad})CF_{ac|bd}^2 +  2(CF_{ab|ac} - 2CF_{ab|ad})CF_{ab|cd} -  3(CF_{ab|cd}^2 - CF_{ab|cd})CF_{ac|ad}\\
			 &\   +  ((CF_{ab|ac}  +  4CF_{ab|ad})CF_{ab|cd} - 6CF_{ab|cd}CF_{ac|ad} - 2CF_{ab|ac}  +  CF_{ab|ad})CF_{ac|bd}\Big)\gamma^2\\
			 &\  +   \Big(4CF_{ac|ad}CF_{ac|bd}^3  +  (2CF_{ab|ac}  +  CF_{ab|ad})CF_{ab|cd}^2 - 4(CF_{ab|ac} -  CF_{ab|ad})CF_{ac|bd}^2\\
		         &\ +  (CF_{ab|ac} - 3CF_{ab|ad})CF_{ab|cd} -  (CF_{ab|cd}^3  +  3CF_{ab|cd}^2 - 3CF_{ab|cd} - 1)CF_{ac|ad}\\
		         &\ +  (2(CF_{ab|ac}  +  2CF_{ab|ad})CF_{ab|cd} - 3(CF_{ab|cd}^2  +  2CF_{ab|cd}  +  1)CF_{ac|ad} - CF_{ab|ac}\\
		         &\   +  3CF_{ab|ad})CF_{ac|bd}   - CF_{ab|ad}\Big)\gamma -4CF_{ac|ad}CF_{ac|bd}^3 - CF_{ab|ac}CF_{ab|cd}^2  +  2CF_{ab|ac}CF_{ac|bd}^2\\
		         &\   +   CF_{ab|ad}CF_{ab|cd}  +  (CF_{ab|cd}^3  - CF_{ab|cd})CF_{ac|ad} - (CF_{ab|ac}CF_{ab|cd} - (3CF_{ab|cd}^2  +  1)CF_{ac|ad}\\
		         &\   +  CF_{ab|ad})CF_{ac|bd},  \\
		   n_{h_2} = &\Big((CF_{ab|ac} - 2CF_{ab|ad})CF_{ab|cd}  +  (CF_{ab|cd} - 1)CF_{ac|ad} - (CF_{ab|ac}  +  CF_{ab|ad}\\
		   &\  - 2CF_{ac|ad})CF_{ac|bd}  +  CF_{ab|ad}\Big)\gamma^3  +   \Big((3CF_{ab|ac} - CF_{ab|ad})CF_{ab|cd}^2 - (3CF_{ab|ac}  +  CF_{ab|ad}\\
		   &\  - 2CF_{ac|ad})CF_{ac|bd}^2 -  CF_{ab|ac}CF_{ab|cd} - (CF_{ab|cd}^2 - 1)CF_{ac|ad} + (2CF_{ab|ad}CF_{ab|cd}\\
		   &\  - (CF_{ab|cd}  +  3)CF_{ac|ad} +  CF_{ab|ac}  +  3CF_{ab|ad})CF_{ac|bd} - CF_{ab|ad}\Big)\gamma^2-  \Big(CF_{ab|ac}CF_{ab|cd}^3\\
		   &\  +  2CF_{ab|ac}CF_{ac|bd}^3  +  (3CF_{ab|ac}   -  2CF_{ab|ad})CF_{ab|cd}^2- (3CF_{ab|ac}CF_{ab|cd}  +  3CF_{ab|ac}\\
		   &\   +  2CF_{ab|ad} - 2CF_{ac|ad})CF_{ac|bd}^2  -  2CF_{ab|ad}CF_{ab|cd} - (CF_{ab|cd}^2 - CF_{ab|cd})CF_{ac|ad}\\
		   &\   +  (4CF_{ab|ad}CF_{ab|cd} - (CF_{ab|cd}  +  1)CF_{ac|ad} +  2CF_{ab|ad})CF_{ac|bd}\Big)\gamma +   CF_{ab|ac}CF_{ab|cd}^3\\
		   &\   +  2CF_{ab|ac}CF_{ac|bd}^3 - CF_{ab|ad}CF_{ab|cd}^2 +  2CF_{ab|ad}CF_{ab|cd}CF_{ac|bd} - (3CF_{ab|ac}CF_{ab|cd}\\
		   &\  +  CF_{ab|ad})CF_{ac|bd}^2, \\
		   d_h = &\Big((CF_{ab|ac} - 2CF_{ab|ad})CF_{ab|cd}  +  (CF_{ab|cd} - 1)CF_{ac|ad} - (CF_{ab|ac}  +  CF_{ab|ad}\\
		   &\  - 2CF_{ac|ad})CF_{ac|bd} +  CF_{ab|ad}\Big)\gamma^3+ \Big(CF_{ab|ad}CF_{ab|cd}^2 - 2CF_{ab|ad}CF_{ac|bd}^2\\
		   &\  - CF_{ab|ac}CF_{ab|cd}  +  (CF_{ab|ad}CF_{ab|cd}  +  CF_{ab|ac})CF_{ac|bd}\Big)\gamma^2, \\
\end{align*}
\begin{align*}	
			n_{\ell} =  &+   3\Big((CF_{ab|ac} - 2CF_{ab|ad})CF_{ab|cd}  +  (CF_{ab|cd} - 1)CF_{ac|ad} - (CF_{ab|ac}  +  CF_{ab|ad}\\
			&\   - 2CF_{ac|ad})CF_{ac|bd}+  CF_{ab|ad}\Big)\gamma^2+   3\Big(CF_{ab|ad}CF_{ab|cd}^2 - 2CF_{ab|ad}CF_{ac|bd}^2\\
			&\   - CF_{ab|ac}CF_{ab|cd}  +  (CF_{ab|ad}CF_{ab|cd} +  CF_{ab|ac})CF_{ac|bd}\Big)\gamma,\\
			d_{\ell} =  &-\Big(4CF_{ab|cd}^3 - 6CF_{ab|cd}CF_{ac|bd}^2 - 4CF_{ac|bd}^3 - 3CF_{ab|cd}^2\\
			&\   +  (6CF_{ab|cd}^2  +  3CF_{ab|cd}  +  1)CF_{ac|bd} - CF_{ab|cd}\Big)\gamma^2  \\
			 &\ +  \Big (CF_{ab|cd}^4 - 4(CF_{ab|cd}  +  1)CF_{ac|bd}^3  +  4CF_{ac|bd}^4  +  4CF_{ab|cd}^3\\
			 &\    - (3CF_{ab|cd}^2  +  6CF_{ab|cd}  +  1)CF_{ac|bd}^2 - 4CF_{ab|cd}^2  \\
			 &\ +   (2CF_{ab|cd}^3  +  6CF_{ab|cd}^2  +  5CF_{ab|cd}  +  1)CF_{ac|bd} - CF_{ab|cd}\Big)\gamma -CF_{ab|cd}^4  +  4CF_{ab|cd}CF_{ac|bd}^3  \\
			&\   - 4CF_{ac|bd}^4+  (3CF_{ab|cd}^2  +  1)CF_{ac|bd}^2  +  CF_{ab|cd}^2 - 2(CF_{ab|cd}^3  +  CF_{ab|cd})CF_{ac|bd}.
	\end{align*}
		}
\end{enumerate}
\end{proposition}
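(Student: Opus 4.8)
The plan is to treat this as a direct computation under the NMSC, using \cref{lem:root} to work entirely with the semidirected network $N_s$, followed by symbolic verification in Singular or Macaulay2. For part (a), I would compute each vector $CF$ by coalescent bookkeeping. Tracing the (at most two) $a$-lineages backward through the cherry edge of probability $\ell$ and then through the hybrid node, I would condition on which of the two hybrid edges each lineage traverses, weighting by $\gamma$ and $1-\gamma$ and contributing factors $h_1,h_2$ for non-coalescence on those edges. Once the lineages reach the two parents of the hybrid node, I would track the first coalescence among the four sampled lineages along the cycle edges of probability $x_1,x_2$ and above, invoking the two facts recalled in \cref{sec:large3}: the first coalescence fixes the displayed quartet, and once three or four lineages meet uncoalesced at a node each quartet topology has probability $1/3$. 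This yields the stated expressions $u_{aabc}, u_{aabd}, u_{aacd}$ and the entries of $\overline{CF}_{abcd}$, with each vector summing to $1$ as a consistency check (and with $\overline{CF}_{abcd}$ correctly free of $\ell,h_1,h_2$, since a single $a$-lineage cannot coalesce on those edges).

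For part (b), the inclusion $\mathcal J(N_s)\subseteq\mathcal I(N_s)$ is immediate, as the trivial, cut, and exchange invariants all vanish on the parameterized $CF$s by construction. The substantive direction, $\mathcal I(N_s)\subseteq\mathcal J(N_s)$, asserts that no further invariants (in particular no nonlinear ones) exist, and I would establish it computationally: form the ideal of the graph of the parameterization $\phi\colon\mathbb C^6\to\mathbb C^{12}$ in the variables $(\gamma,x_1,x_2,h_1,h_2,\ell)$ together with the twelve $CF$ coordinates, and eliminate the parameters via a Gröbner basis to obtain $\mathcal I(N_s)$. The dimension claim $\dim\mathcal V_{N_s}=5$ I would confirm from this computation and cross-check by evaluating the Jacobian of $\phi$ at a generic point, whose rank $5$ (one less than the six parameters) is consistent with the $1$-parameter fibers of part (c).

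For part (c), given $\gamma$ the remaining five parameters are to be recovered as explicit rational functions of the $CF$s. I would derive these by adjoining $\gamma$ and performing a lexicographic elimination that solves the $CF$ equations successively: the two simplest relations give $x_1$ and $x_2$ linearly in $CF_{ab|cd},CF_{ac|bd},\gamma$, after which back-substitution and further elimination produce the common denominators $d_h,d_\ell$ and the numerators $n_{h_1},n_{h_2},n_\ell$. Rather than present that derivation, I would verify the closed forms by substituting them into the part-(a) formulas and checking the resulting identities symbolically, recording the loci where $d_h$, $d_\ell$, and $\gamma$ are nonzero as the conditions for validity.

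The main obstacle I anticipate is part (c): the numerators $n_{h_1},n_{h_2}$ and the denominator $d_h$ are high-degree polynomials in six $CF$ variables, so both the elimination that produces them and the verification that substitution collapses to an identity are heavy symbolic computations. A secondary point in part (b) is to confirm that $\mathcal J(N_s)$, generated only by linear forms, is itself prime of dimension $5$; combined with the easy containment $\mathcal J(N_s)\subseteq\mathcal I(N_s)$ and the matching dimension, this forces $\mathcal I(N_s)=\mathcal J(N_s)$ without a direct comparison of generator lists, and it is the cleanest route to the ideal equality.
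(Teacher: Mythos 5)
Your proposal is correct and takes essentially the same route as the paper: \cref{prop:S} is one of the appendix's computational results, obtained exactly as you describe---coalescent bookkeeping for the $CF$ formulas in (a), then symbolic elimination (Gr\"obner bases in {\tt Singular}/{\tt Macaulay2}) for the ideal, its dimension, and the parameter-recovery formulas in (b) and (c). Your refinement for part (b)---noting that $\mathcal J(N_s)$ is generated by independent linear forms, hence prime of dimension $5$, so that $\mathcal J(N_s)\subseteq\mathcal I(N_s)$ together with generic Jacobian rank $5$ forces $\mathcal I(N_s)=\mathcal J(N_s)$ without comparing generator lists---is a clean certificate that sits comfortably within the same computational framework.
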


\begin{proposition}\label{prop:W}
 Let $N_w$ be the $5$-taxon level-1 network with a single $4$-cycle shown in \cref{fig:4cyc}(C). Then 
 \begin{enumerate}
\item[(a)]
The quartet concordance factors of $N_w$ are
{\tiny
 \begin{align*} 
\overline{ CF}_{abcd} &= \left(\frac{1}{3}(-2\gamma x_1-\gamma x_2 + 3\gamma  + x_2), \frac{1}{3}(\gamma x_1-\gamma x_2 + x_2), 1 + \frac{1}{3} (\gamma x_1 + 2\gamma x_2-3\gamma -2x_2) \right),\\
\overline{CF}_{xyzw} &= \left(1 - \frac{2}{3}\ell u_{xyzw}, \frac{1}{3}\ell u_{xyzw}, \frac{1}{3}\ell u_{xyzw} \right) \text{ for } xyzw=aabc,\ aabd,\ aacd,
\end{align*}
 where 
 \begin{align*}
 u_{aabc}&=x_1,\\
 u_{aabd}&= -\gamma x_1 + \gamma  + x_1,\\
 u_{aacd}&=-\gamma x_1x_2  +  x_1x_2 + \gamma.
 \end{align*}
} 
\item[(b)]
 		The ideal defining $\mathcal{V}_{N_w}\subset\mathbb{C}^{12}$ is
 	{\tiny 
		\begin{equation*}
 		\mathcal I(N_w) = \mathcal J(N_w)  +  \langle CF_{ab|cd}CF_{ab|ac} - CF_{ac|bd}CF_{ab|ac} - CF_{ab|ad}  +  CF_{ac|ad} \rangle, 
 		\end{equation*}
 	}
		and $\mathcal{V}_{N_w}$ has dimension $4$.

 \item[(c)]
 		The numerical parameters $x_1$, $x_2$, $\ell$, and $\gamma$ can be determined from the quartet $CF$s:
 		{\tiny
		\begin{align*}
 		x_1 =& \frac{3CF_{ab|ac}CF_{ac|bd} - CF_{ab|ac}  +  CF_{ab|ad} - CF_{ac|ad}}{CF_{ab|ac} - CF_{ab|ad}} ,\\
 		x_2 =&- \big [ CF_{ab|ad}CF_{ab|cd} - (CF_{ab|cd} - 2)CF_{ac|ad} - (CF_{ab|ad}  +  2CF_{ac|ad})CF_{ac|bd}  +  CF_{ab|ac}\\
		&\ \ - 2CF_{ab|ad} \big ]/{CF_{ab|ad}CF_{ab|cd}  +  2CF_{ab|ad}CF_{ac|bd} - CF_{ab|ac}}, \\
 		\ell =& \frac{3(CF_{ab|ac} - CF_{ab|ad})}{CF_{ab|cd}  +  2CF_{ac|bd} - 1}, \\
 		\gamma =& -\frac{CF_{ab|ad}CF_{ab|cd} - (3CF_{ab|ac} - 2CF_{ab|ad})CF_{ac|bd}  +  CF_{ab|ac} - 2CF_{ab|ad}  +  CF_{ac|ad}}{3CF_{ab|ac}CF_{ac|bd} - 2CF_{ab|ac}  +  2CF_{ab|ad} - CF_{ac|ad}}.
 		\end{align*}
		}
		\end{enumerate}
\end{proposition}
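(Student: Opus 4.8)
The plan is to establish the three parts in sequence, with part~(a) supplying the $CF$ formulas on which both (b) and (c) rest. For part~(a) I would compute each vector concordance factor of $N_w$ directly from the NMSC. By \cref{lem:root} the $CF$s depend only on the semidirected network, so I may work with $N_w$ as drawn in \cref{fig:4cyc}(C), tracing the at most four gene lineages backward in time and recording, edge by edge, the probability $\ell=\exp(-t)$ that two lineages sharing an edge fail to coalesce on it. Because the only reticulation is the single hybrid node of the 4-cycle, each lineage reaching it passes to one parental edge with probability $\gamma$ and to the other with probability $1-\gamma$, independently; conditioning on these choices decomposes the computation into a weighted sum over parental histories. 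Exchangeability of the cherry taxa $a_1,a_2$ lets me suppress their subscripts and reduces the distinct quartets to the four listed, $\overline{CF}_{aabc},\overline{CF}_{aabd},\overline{CF}_{aacd},\overline{CF}_{abcd}$.

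First I would handle the three cherry-containing quartets. For each, the two $a$-lineages either coalesce inside the cycle or remain exchangeable on reaching the hybrid node, so that after conditioning on the two hybrid-edge choices and averaging, the resulting $CF$ vector takes the symmetric form $\left(1-\tfrac{2}{3}\ell u_{xyzw},\tfrac{1}{3}\ell u_{xyzw},\tfrac{1}{3}\ell u_{xyzw}\right)$; computing the non-coalescence polynomials $u_{aabc}=x_1$, $u_{aabd}=-\gamma x_1+\gamma+x_1$, and $u_{aacd}=-\gamma x_1x_2+x_1x_2+\gamma$ is then short bookkeeping. The quartet $\overline{CF}_{abcd}$ is the only genuinely reticulate case: here the 4-cycle survives in the induced quartet network, and I would obtain its three coordinates by summing, over the $\gamma$/$(1-\gamma)$ weighted histories, the first-coalescence probabilities of the displayed trees, matching the pattern already recorded for $N_s$ in \cref{prop:S}(a). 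I regard this step as the main obstacle: one must correctly separate the events in which the two lineages entering the cycle choose the same versus opposite hybrid edges before computing coalescence probabilities, and a sign or exchangeability error here propagates into both later parts.

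With the polynomial $CF$ formulas of part~(a) in hand, parts (b) and (c) are symbolic-algebra verifications. For (b) I would form the parameterization map sending $(x_1,x_2,\ell,\gamma)$ into the $CF$ coordinates (pendant and hybrid-edge probabilities do not appear, since no coalescence occurs on them), reduce the ambient ring modulo the linear generators of $\mathcal J(N_w)$ to retain only independent coordinates, and compute the kernel by Gröbner-basis elimination in {\tt Singular} or {\tt Macaulay2}. The assertion is that this elimination ideal is principal modulo $\mathcal J(N_w)$, generated by $CF_{ab|cd}CF_{ab|ac}-CF_{ac|bd}CF_{ab|ac}-CF_{ab|ad}+CF_{ac|ad}$; I would confirm both that this polynomial lies in the ideal, by substituting the part-(a) formulas and checking it vanishes identically, and that no further independent invariant appears, then read off the variety dimension $4$ from the Gröbner basis.

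For part~(c) I would solve the system given by the part-(a) formulas for the four parameters. The cut and exchange relations make $\gamma$, $x_1$, $x_2$, and $\ell$ each recoverable as an explicit rational function of the $CF$s, and I would verify the displayed expressions by back-substitution into the part-(a) formulas, checking that each reduces to the intended parameter. The one remaining point, which I would settle by inspecting the denominators, is that these rational expressions are defined for generic parameters, so the recovery is valid off a measure-zero set; since the formulas are single-valued, this simultaneously establishes global, not merely local, identifiability of $x_1,x_2,\ell,\gamma$ on $N_w$.
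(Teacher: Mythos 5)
Your proposal matches the paper's approach: the paper gives no written proof of \cref{prop:W}, stating it as one of the ``Propositions from Computations'' whose $CF$ formulas come from the NMSC coalescent analysis and whose parts (b) and (c) were established by symbolic computation (Gr\"obner-basis elimination and parameter solving) in {\tt Singular} and {\tt Macaulay2}, which is exactly the pipeline you describe. Your plan is correct, including the observation that the single-valued rational recovery formulas in part (c) yield global rather than merely local identifiability.
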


\begin{proposition}\label{prop:N}		
Let $N_n$ be the $5$-taxon level-1 network with a single $4$-cycle shown in \cref{fig:4cyc}(R). Then 		
\begin{enumerate}
\item[(a)]
The quartet concordance factors of $N_n$ are
{\tiny 
\begin{align*}
CF_{abcd} &= \left(\frac{1}{3}(-2\gamma x_1-\gamma x_2 + 3\gamma + x_2), \frac{1}{3}(\gamma x_1-\gamma x_2 + x_2), 1  +  \frac{1}{3}(\gamma x_1 + 2\gamma x_2-3\gamma -2x_2) \right),\\
CF_{xyzw} &= \left(1 - \frac{2}{3}\ell u_{xyzw}, \frac{1}{3}\ell u_{xyzw}, \frac{1}{3}\ell u_{xyzw} \right) \text{ for } xyzw=aabc,\ aabd,\ aacd,
\end{align*} 
where 
 \begin{equation*}
 u_{aabc} =\gamma  +  x_2 - \gamma x_2, \ \ \ 
u_{aabd}=1,\ \ \ 
u_{aacd} = 1 - \gamma  + \gamma x_1.
\end{equation*}
}		
\item[(b)]
	the ideal defining $\mathcal{V}_{N_n}\subset\mathbb{C}^{12}$ is
		{\tiny \begin{align*}
		\mathcal I(N_n) = \mathcal J(N_n)  &+  \langle  3CF_{ac|bd}CF_{ad|ab}  +  CF_{ad|ab} - CF_{ac|ab} - CF_{ac|ad}, \\
		&CF_{ab|cd}CF_{ac|ab}  +  CF_{ab|cd}CF_{ac|ad} - CF_{ac|bd}CF_{ac|ab}  +  2CF_{ac|bd}CF_{ac|ad} - CF_{ac|ab}, \\
		&3CF_{ab|cd}CF_{ad|ab} - 2CF_{ad|ab} - CF_{ac|ab}  +  2CF_{ac|ad}  \rangle, 
		\end{align*}
		}
				and $\mathcal{V}_{N_n}$ has dimension $3$. 

\item[(c)] The  parameter $\ell$ can be identified from the quartet $CF$s:
{\tiny $$\ell = 3CF_{a_1ba_2d},$$}
but $x_1$, $x_2$, and $\gamma$ cannot be. However, the quantities $x_1\gamma - \gamma$ and $x_2(\gamma-1)-\gamma$ can be determined from the quartet $CF$s, so if $\gamma$ is known
{\tiny
\begin{equation*}
x_1 = \frac{\gamma-CF_{abcd} + CF_{acbd}}{\gamma}, \quad x_2 = \frac{\gamma - CF_{abcd} - 2CF_{acbd}}{\gamma - 1}. 
\end{equation*}
}
\end{enumerate}
\end{proposition}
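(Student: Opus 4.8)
The plan is to prove the three parts in sequence: a direct coalescent computation for the parameterization in (a), a reparameterization argument combined with a Gröbner basis elimination for (b), and elementary algebra reading off the formulas of (a) for (c). Throughout I would use \cref{lem:root} to work on a conveniently rooted version of $N_n$, and the exchange invariants of \cref{app:props} to reduce the $15$ scalar $CF$s to the $12$ coordinates recorded in the statement.

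First I would derive part (a) by tracing gene lineages backward in time under the NMSC on $N_n$. For each four-taxon subset I would condition on the two routes, with probabilities $\gamma$ and $1-\gamma$, that a lineage entering the $4$-cycle at its hybrid node may take, and on coalescence or non-coalescence on each traversed edge, recorded by the edge probabilities $\ell,x_1,x_2$. Applying the two standard facts used throughout the paper---that the first coalescence between a pair fixes the displayed quartet, and that three or four lineages reaching a common node uncoalesced contribute $1/3$ to each quartet---collects the contributions into the displayed polynomials. Since the hybrid node of $N_n$ has a single descendant taxon, at most one lineage ever enters its hybrid edges, so no coalescence can occur there and their lengths do not appear, consistent with their being unlabeled in \cref{fig:4cyc}.

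The key structural observation, which I would isolate next, is that every formula in (a) depends on $\gamma,x_1,x_2$ only through the two combinations $A=\gamma x_1-\gamma$ and $B=x_2(\gamma-1)-\gamma$. Indeed a short substitution rewrites $\overline{CF}_{abcd}=\tfrac13(-2A-B,\,A-B,\,3+A+2B)$, while $u_{aabc}=-B$, $u_{aacd}=1+A$, and $u_{aabd}=1$. Thus the parameterization $\phi$ factors through the map $(\gamma,x_1,x_2,\ell)\mapsto(A,B,\ell)\in\mathbb C^3$, whose image is generically $3$-dimensional, and the induced map $(A,B,\ell)\mapsto \overline{CF}$ is generically finite since $A,B,\ell$ are recovered rationally from the $CF$s (below). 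This yields $\dim\mathcal V_{N_n}=3$ directly, and I would confirm it independently by checking that the Jacobian of $\phi$ has generic rank $3$. To certify the exact generators of $\mathcal I(N_n)$ in part (b), I would then run the elimination as for the other propositions of \cref{app:props}: in a ring adjoining the parameters and the $12$ coordinate variables, form the ideal generated by the differences $CF_{\bullet}-\phi_{\bullet}$, eliminate the parameters in {\tt Singular} or {\tt Macaulay2}, and verify the result equals $\mathcal J(N_n)$ enlarged by the three listed quadratic generators.

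Finally, part (c) follows from the factorization. The discordant $CF$ on the cherry quartet gives $\ell=3\,CF_{a_1ba_2d}$ since $u_{aabd}=1$. From $\overline{CF}_{abcd}$ one solves the two independent linear equations $CF_{ac|bd}-CF_{ab|cd}=A$ and $-2A-B=3\,CF_{ab|cd}$, recovering the combinations $A=\gamma x_1-\gamma$ and $B=x_2(\gamma-1)-\gamma$; substituting a known $\gamma$ then yields the stated rational formulas for $x_1$ and $x_2$. Non-identifiability of $\gamma,x_1,x_2$ individually is immediate: the map $(\gamma,x_1,x_2)\mapsto(A,B)$ has one-dimensional fibers, so positive-dimensional families of parameters give identical $CF$s. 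The genuinely essential and computationally heavy step is the elimination in part (b)---certifying the precise three generators beyond $\mathcal J(N_n)$---whereas the coalescent bookkeeping in (a) is routine if error-prone, and (c) is elementary once the reparameterization is in hand.
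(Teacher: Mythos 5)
Your proposal is correct and follows essentially the same route as the paper: part (a) by direct coalescent bookkeeping, part (b) by Gr\"obner-basis elimination in {\tt Singular}/{\tt Macaulay2} (the paper states these appendix propositions precisely as verified computational results), and part (c) by elementary algebra. Your explicit factorization through $(A,B,\ell)=(\gamma x_1-\gamma,\ x_2(\gamma-1)-\gamma,\ \ell)$ is a nice structural supplement---it makes the dimension-$3$ claim and the non-identifiability of $\gamma,x_1,x_2$ transparent rather than purely computational---but it is the same reparameterization the paper leaves implicit in the identifiable combinations stated in part (c), so the overall argument is the same.
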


\begin{proposition}\label{prop:SN4cycle} 
 			Let $N_{sn}$ be a $6$-taxon level-$1$ network with a central $4$-cycle as shown in \cref{fig:WSandNS}(R). Then 
\begin{enumerate}
\item[(a)] The quartet concordance factors of $N_{sn}$ are
 {\tiny				
\begin{align*}
\overline{CF}_{aabc} &=(1-2u,u,u) ,\text{ with } u=\ell_1\left[  \frac 13 \gamma^2h_1+\frac 13 (1-\gamma)^2 h_2x_2+ \gamma(1-\gamma)\left(1-\frac13 x_1\right)  \right],\\
\overline{CF}_{aabd} &=(1-2u,u,u) ,\text{ with } u=\ell_1\left[  \frac 13 \gamma^2h_1+\frac 13 (1-\gamma)^2 h_2+ \gamma(1-\gamma)\left(1-\frac13 x_1x_2\right)  \right], \\
\overline{CF}_{aacc} &=(1-2u,u,u) ,\text{ with } u= \ell_1\ell_2\left [  \frac 13 \gamma^2h_1x_1+\frac 13 (1-\gamma)^2 h_2x_2+ \frac 23\gamma(1-\gamma) \right ], \\
\overline{CF}_{aacd} &= (1-2u,u,u) ,\text{ with } u= \ell_1\left [  \frac 13 \gamma^2h_1x_1+\frac 13 (1-\gamma)^2 h_2+ \gamma(1-\gamma)\left (1-\frac13 x_2\right ) \right ], \\
\overline{CF}_{abcc} &=  (u,u,1-2u), \text{ with } u=   \ell_2\frac 13 \left ( \gamma x_1 +(1-\gamma) \right ) , \\
\overline{CF}_{abcd} &= \left(  \gamma\left(1-\frac23 x_1 \right) +(1-\gamma)\frac 13 x_2, \frac 13 (\gamma x_1+(1-\gamma)x_2),\gamma \frac 13  x_1+(1-\gamma)\left ( 1-\frac 23 x_2\right )                   \right), \\
\overline{CF}_{accd} &=  (u,u,1-2u), \text{ with } u=\frac13\ell_2(\gamma +(1-\gamma)x_2), \\
\overline{CF}_{bccd} &=  (u,u,1-2u), \text{ with } u=\frac13\ell_2.
\end{align*}
 }
\item[(b)] The ideal  defining $\mathcal{V}({N_{sn}})\subset\mathbb{C}^{9}$ is
{\tiny \begin{align*}
	\mathcal I(N_{sn}) = \mathcal J(N_{sn})  +  \langle
		&C_{ab|cd}C_{ac|cd+}C_{ab|cd}C_{ac|bc}-C_{ac|cd}C_{ac|bd}-C_{ac|cd}+2C_{ac|bd}C_{ac|bc},\\
		&\ \ 3C_{bc|cd}C_{ac|bd}+C_{bc|cd}-C_{ac|cd}-C_{ac|bc},\\
		&\ \ 3C_{bc|cd}C_{ab|cd}-2C_{bc|cd}-C_{ac|cd}+2C_{ac|bc} \rangle.
	\end{align*}
}
The variety $\mathcal{V}_{N_{sn}}$ has dimension $7$.
\item[(c)] 	The numerical parameters $\gamma$, $h_1$, $h_2$, $x_1$, $x_2$, $\ell_1$ and $\ell_2$ can be determined from the quartet $CF$s. 
{\tiny
\begin{equation*}	
	\begin{gathered}
		\gamma = \frac{n_{\gamma}}{d_{\gamma}}, \qquad h_1 = \frac{n_{h_1}}{d_{h_1}}, \qquad h_2 = \frac{n_{h_2}}{d_{h_2}}, \qquad x_1 = \frac{n_{x_1}}{d_{x_1}}, \qquad x_2 = \frac{n_{x_2}}{d_{x_2}},\\
		\ell_1 = \frac{-3C_{ab|ac}C_{bc|cd}+3C_{ab|ad}C_{bc|cd}-3C_{bc|cd}C_{ac|ad}+C_{ac|ac}}{-C_{ab|cd}C_{ac|bc}-2CacbdC_{ac|bc}-C_{bc|cd}+C_{ac|cd}+C_{ac|bc}}, \qquad
		\ell_2 = 3C_{bc|cd}. 
	\end{gathered}
   \end{equation*}
	where 
    \begin{align*}
		n_{\gamma}  = &-C_{ac|ac}C_{ab|cd}^2-C_{ac|ac}C_{ab|cd}Cacbd+2C_{ac|ac}Cacbd^2-3C_{ab|ac}C_{ab|cd}C_{ac|bc}\\
			&-3C_{ab|cd}C_{ac|ad}C_{ac|bc}-6C_{ab|ac}CacbdC_{ac|bc}-6C_{ac|ad}CacbdC_{ac|bc}-3C_{ab|ad}C_{bc|cd}\\
			&-C_{ac|ac}C_{ab|cd}+3C_{bc|cd}C_{ac|ad}+3C_{ab|ac}C_{ac|cd}+3C_{ac|ad}C_{ac|cd}+C_{ac|ac}Cacbd\\
			&+3C_{ab|ad}C_{ac|bc}-3C_{ac|ad}C_{ac|bc},\\
        d_{\gamma}  = &-3C_{ab|ac}C_{bc|cd}-4C_{ac|ac}C_{ab|cd}+3C_{bc|cd}C_{ac|ad}+6C_{ab|ac}C_{ac|cd}-3C_{ab|ad}C_{ac|cd}\\
			&+3C_{ac|ad}C_{ac|cd}-2C_{ac|ac}Cacbd-3C_{ab|ac}C_{ac|bc}+3C_{ab|ad}C_{ac|bc}-6C_{ac|ad}C_{ac|bc}+2C_{ac|ac},\\
		n_{h_1} = &(-3C_{ac|ad}C_{ac|cd}+C_{ac|ac})x_1+6C_{ab|ac}C_{bc|cd}-3C_{ab|ad}C_{bc|cd}+3C_{ac|ac}C_{ab|cd}\\
		    &-6C_{ab|ac}C_{ac|cd}+3C_{ab|ad}C_{ac|cd}+3C_{ac|ac}Cacbd+3C_{ac|ad}C_{ac|bc}-3C_{ac|ac},\\
		d_{h_1} = &+3C_{ab|ac}C_{bc|cd}-3C_{ab|ad}C_{bc|cd}-C_{ac|ac}C_{ab|cd}+3C_{bc|cd}C_{ac|ad}+C_{ac|ac}Cacbd\\
			&-3C_{ab|ac}C_{ac|bc}+3C_{ab|ad}C_{ac|bc}-3C_{ac|ad}C_{ac|bc},\\
		n_{h_2} = &(3C_{ab|ac}C_{ac|bc}-C_{ac|ac})x_2+3C_{ab|ad}C_{bc|cd}+3C_{ac|ac}C_{ab|cd}-6C_{bc|cd}C_{ac|ad}\\
			&-3C_{ab|ac}C_{ac|cd}-3C_{ab|ad}C_{ac|bc}+6C_{ac|ad}C_{ac|bc},\\			
        d_{h_2} = &-3C_{ab|ac}C_{bc|cd}+3C_{ab|ad}C_{bc|cd}-C_{ac|ac}C_{ab|cd}-3C_{bc|cd}C_{ac|ad}+3C_{ab|ac}C_{ac|cd}\\
		    &-3C_{ab|ad}C_{ac|cd}+3C_{ac|ad}C_{ac|cd}-2C_{ac|ac}Cacbd+C_{ac|ac},\\
        n_{x_1} = &+3C_{ab|ac}C_{bc|cd}-3C_{ab|ad}C_{bc|cd}+3C_{ac|ac}C_{ab|cd}-3C_{ab|ac}C_{ac|cd}\\
		d_{x_1} = &-3C_{ab|ad}C_{bc|cd}-C_{ac|ac}C_{ab|cd}+3C_{bc|cd}C_{ac|ad}+3C_{ab|ac}C_{ac|cd}\\
			&+3C_{ac|ad}C_{ac|cd}-2C_{ac|ac}Cacbd-C_{ac|ac},\\
			\end{align*}
\begin{align*}	
		n_{x_2} = &+3C_{ab|ad}C_{bc|cd}+3C_{ac|ac}C_{ab|cd}-3C_{bc|cd}C_{ac|ad}-6C_{ab|ac}C_{ac|cd}\\
					&+3C_{ab|ad}C_{ac|cd}+3C_{ab|ac}C_{ac|bc}-3C_{ab|ad}C_{ac|bc}+6C_{ac|ad}C_{ac|bc}-3C_{ac|ac},\\
			&+3C_{ab|ad}C_{ac|cd}-3C_{ac|ad}C_{ac|cd}+3C_{ac|ac}Cacbd-3C_{ab|ad}C_{ac|bc}+3C_{ac|ad}C_{ac|bc},\\
	    d_{x_2} = &-3C_{ab|ac}C_{bc|cd}+3C_{ab|ad}C_{bc|cd}-C_{ac|ac}C_{ab|cd}+C_{ac|ac}Cacbd\\
			&-3C_{ab|ac}C_{ac|bc}-3C_{ac|ad}C_{ac|bc}+2C_{ac|ac}.\\
           \end{align*}
}
\end{enumerate}	
\end{proposition}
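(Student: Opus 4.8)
The plan is to establish all three parts by explicit computation from the NMSC, the essential content being to set up the parameterization map and then invert it symbolically; note that this cannot appeal to \cref{lem:4cyc2cher}, whose case (R) is precisely what this proposition supplies. First I would derive the $CF$ formulas of part (a) directly from the coalescent. Tracing the four sampled lineages backward through $N_{sn}$, I would use the first-coalescence bookkeeping of \cref{sec:large3}: on an edge with probability $\ell$ two lineages fail to coalesce with probability $\ell$, at the hybrid node a lineage ascends one hybrid edge with probability $\gamma$ or $1-\gamma$, and whenever three or four lineages reach a common node uncoalesced each quartet receives probability $1/3$ by exchangeability. Conditioning on which lineages enter the $4$-cycle and summing over coalescent histories yields each listed polynomial in $\gamma,h_1,h_2,x_1,x_2,\ell_1,\ell_2$, with the cherry exchangeability letting me suppress duplicated subscripts as in \cref{sec:ideals}.

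With the resulting parameterization $\phi\colon\mathbb{C}^7\to\mathbb{C}^9$ in hand, part (b) is the computation of its vanishing ideal. I would first observe that all trivial, cut, and exchange invariants hold on $N_{sn}$ by its undirected topology, so $\mathcal J(N_{sn})\subseteq\mathcal I(N_{sn})$, and then check by substitution that the three additional listed generators vanish identically under $\phi$. To prove these together generate all of $\mathcal I(N_{sn})$ and that $\dim\mathcal V_{N_{sn}}=7$, I would compute a Gr\"obner basis of the elimination ideal of the graph of $\phi$ in {\tt Singular} or {\tt Macaulay2}, match it against $\mathcal J(N_{sn})$ plus the three generators, and read off the dimension; that the dimension equals the number of parameters signals that $\phi$ is generically finite-to-one.

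For part (c) the goal is generic \emph{global} identifiability, which I would obtain constructively rather than merely from the dimension count. Starting from $CF=\phi(\theta)$, I would solve the system one parameter at a time: the cut-type $CF$s give $\ell_2=3CF_{bc|cd}$ immediately, and successive elimination then expresses $\gamma,\ell_1,x_1,x_2,h_1,h_2$ as the stated rational functions of the $CF$s. The decisive verification is that composing these rational maps with $\phi$ returns the identity on a dense open subset of parameter space; this is a finite set of polynomial identities I would confirm symbolically, simultaneously checking that each displayed denominator is not in $\mathcal I(N_{sn})$, so the formulas are defined generically. Because the recovery is then single-valued wherever the denominators are nonzero, this upgrades the finite-to-one conclusion of part (b) to generic injectivity of $\phi$, giving identifiability of all seven parameters.

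The main obstacle I anticipate is computational scale rather than conceptual difficulty: the elimination ideal and the inversion formulas are large, so the Gr\"obner basis and the identity checks must be organized carefully, working over $\mathbb{Q}$, clearing denominators, and certifying denominator nonvanishing on the image, to keep the computation tractable and its output trustworthy. A secondary point requiring care lies in part (a): because one cherry sits below the hybrid node and another opposite it, several coalescent histories contribute to the same $CF$, so I must apply the exchangeability reductions and the conditioning on lineages entering the cycle distinctly consistently across all eight listed vector $CF$s.
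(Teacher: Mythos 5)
Your proposal is correct and matches the paper's own treatment: the paper establishes this proposition purely as a symbolic computation (coalescent derivation of the $CF$ polynomials, then Gr\"obner-basis/elimination computations in {\tt Singular} and {\tt Macaulay2} for the ideal, its dimension, and the rational inversion formulas), which is exactly the workflow you describe, including the correct observation that \cref{lem:4cyc2cher} cannot be invoked without circularity. Your added checks --- verifying the recovery maps compose with the parameterization to the identity and that the denominators do not lie in $\mathcal I(N_{sn})$ --- are precisely what certifies the stated global (not merely finite-to-one) identifiability.
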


 	
 	
\bibliographystyle{siamplain}
\bibliography{Hybridization}

\end{document}